\numberwithin{equation}{section}
\theoremstyle{plain}
\newtheorem{theorem}{Theorem}[section]
\newtheorem{proposition}[theorem]{Proposition}
\newtheorem{lemma}[theorem]{Lemma}
\newtheorem{corollary}[theorem]{Corollary}
\theoremstyle{definition}
\newtheorem{definition}{Definition}[section]
\newtheorem{example}[definition]{Example}
\theoremstyle{remark}
\newtheorem{remark}{Remark}[section]
\DeclareMathOperator{\tr}{Tr}
\DeclareMathOperator{\im}{Im}
\def\Z{\mathbb{Z}}	
\def\C{\mathbb{C}}	
\def\R{\mathbb{R}}	
\newcommand\cS{{\mathcal S}}
\def\r{{\rm r}}
\def\l{{\rm l}}
\def\c{{\rm c}}
\def\a{{\rm a}}
\renewcommand{\leq}{\leqslant} 		
\newcommand{\ldb}{\{\!\{}
\newcommand{\rdb}{\}\!\}}
\newcommand{\lsb}{[\![}
\newcommand{\rsb}{]\!]}
\def\ud{\mathrm{d}}
\newcommand{\dev}{\partial}
\let\ker\relax
\DeclareMathOperator{\ker}{Ker}
\def\A{\mathcal{A}}
\def\hA{\hat{\A}}
\def\F{\mathcal{F}}
\def\hF{\hat{\F}}
\def\hcA{\hat{\mathcal{A}}}
\def\ldb{\{\!\{ }
\def\rdb{\}\!\} }
\begin{document}

	\title{Hamiltonian Structures for Integrable Nonabelian Difference Equations}
	\author[1, 2]{Matteo Casati}
	\author[2]{Jing Ping Wang}
	\affil[1]{School of Mathematics and Statistics, Ningbo University, Ningbo, 315211, China}
	\affil[2]{School of Mathematics, Statistics and Actuarial Science, University of Kent,
		Canterbury CT2 7FS, United Kingdom}

\date{}

\maketitle

	\begin{abstract}
	In this paper we extensively study the notion of Hamiltonian structure for nonabelian differential-difference systems, exploring the link between the different algebraic (in terms of double Poisson algebras and vertex algebras) and geometric (in terms of nonabelian Poisson bivectors) definitions.  We introduce multiplicative double Poisson vertex algebras (PVAs) as the suitable noncommutative counterpart to multiplicative PVAs, used to describe Hamiltonian differential-difference equations in the commutative setting, and prove that these algebras are in one-to-one correspondence with the Poisson structures defined by difference operators, providing a sufficient condition for the fulfilment of the Jacobi identity. Moreover, we define nonabelian polyvector fields and their Schouten brackets, for both finitely generated noncommutative algebras and infinitely generated difference ones: this allows us to provide a unified characterisation of Poisson bivectors and double quasi-Poisson algebra structures. Finally, as an application we obtain some results towards the classification of local scalar Hamiltonian difference structures and construct the Hamiltonian structures for the nonabelian Kaup, Ablowitz--Ladik and Chen--Lee-Liu integrable lattices.
	\end{abstract}

	\tableofcontents

	\section{Introduction}
Nonabelian systems of ordinary and partial differential equations have been previously studied, for example in \cite{Kup00,man76,ms00,ors12,OS98}. In our recent paper \cite{cw19-2} we have studied nonabelian differential-difference integrable system and described a number of examples lifting, by means of their Lax representation, well-known Abelian systems to their nonabelian counterparts. While we provided recursion operators for all those examples, hence establishing a way to produce their higher symmetries (or, equivalently, integrable hierarchies), at the time we could not construct the appropriate Hamiltonian description for some of them despite the very simple structure they possess in the Abelian case. In this paper, we answer this open question by thoroughly investigating noncommutative Hamiltonian structures.
	
In Section \ref{sec:Ham_formalism}, we review the language and the notion of Hamiltonian structure widely adopted among the Integrable Systems community. Here, the focus is the notion of Poisson bracket: an operator is called Hamiltonian if it can be used to define a Poisson bracket; the same operator defines a functional bivector whose prolongation along its flow vanishes. The latter is equivalent to the Poisson property. 

In Section \ref{sec:dPA} we present a more recent viewpoint: a suitable algebraic description of the Hamiltonian structures of nonabelian ordinary differential equations can be given in terms of double Poisson algebras \cite{vdb}. While this theory is reasonably well-established, the use of a similar formalism to go beyond ordinary differential equations and describe partial differential equations is much more recent \cite{dskv15} and encoded in the theory of double Poisson \emph{vertex} algebras (double PVAs). The main object of these theories is the so-called double (lambda) bracket, defined on associative (differential) algebras, essentially replacing the action of the Hamiltonian operators.  In this paper, we focus on differential-difference equations (D$\Delta$Es), a class of systems where the time is a continuous variable, while the spatial one takes values on a lattice. We follow the lines of multiplicative PVAs \cite{dskvw19} to define the analogue algebraic structures for noncommutative difference algebras, and hence to apply it to differential-difference nonabelian systems. We call such a structure \emph{double multiplicative Poisson vertex algebra} and we define it in Section \ref{ssec:dmPVA}, see Definition \ref{def:dmpva}, and prove that it is equivalent to the ``usual'' notion of Hamiltonian (\emph{rectius} Poisson) operator.

In Section \ref{sec:geom} we follow the spirit of classical Poisson geometry, which has been successfully adopted in the study of (Abelian) Hamiltonian PDEs and D$\Delta$Es, to find a counterpart of these algebraic structures as suitably defined functional bivector fields;  this allows us to define the whole Poisson-Lichnerowicz complex and its cohomology, and to provide a new interpretation of the double quasi-Poisson structure.
	
To do so, we first need a notion of Schouten bracket for nonabelian systems and an operational way of computing it. We define it in Section \ref{ssec:sch-ul} for systems of ODEs and Section \ref{ssec:sch-l} for D$\Delta$Es, and prove that it satisfies the graded skewsymmetry and Jacobi properties that characterise a Gerstenhaber algebra. In Section \ref{ssec:pois} we show how the Poisson bracket and its properties, as well as all the standard objects in the theory of Poisson manifolds, can be defined in terms of a Poisson bivector.

In Section \ref{sec:quasi} we present, starting from the Kontsevich system \cite{WE12}, the notion of quasi-Poisson algebras, introduced by Van Den Bergh \cite{vdb}, from which one can define a Poisson bracket even if the corresponding bivector is not Poisson. We investigate the equations that a bivector must satisfy in order to be called ``Hamiltonian'' and find a simple interpretation for this property.

Finally, in Section \ref{sec:Ham-ex} we provide several examples of scalar ultralocal (which coincides with those for ordinary differential equations, or with some double Poisson algebras) and local nonabelian Hamiltonian operators. However, many local (and even ultralocal) Hamiltonian operators that produce Abelian integrable systems do not have a local counterpart in the nonabelian case, but correspond to nonlocal difference operators. The ``missing'' structures of our recent work fall within this category; we describe for the first time the two-component Hamiltonian structure for the nonabelian Kaup, Ablowitz-Ladik and Chen-Lee-Liu lattices (see \cite{cw19-2} for their recursion operators and \cite{kmw13} for the Abelian case).		 


	\section{Hamiltonian structures and Poisson bivectors}\label{sec:Ham_formalism}
	In this section, we review the formalism of nonabelian Hamiltonian equations we used in our previous work \cite{cw19-2}. It is the straightforward generalisation of the language that P.~Olver and V.~Sokolov developed to describe Hamiltonian partial differential equations on an associative (but not commutative) algebra \cite{OS98} and, subsequently, adapted (actually, simplified) by Mikhailov and Sokolov to noncommutative ODEs \cite{ms00}. Kuperschmidt, in his vast book \cite{Kup00}, discusses partial differential, differential-difference and difference-difference noncommutative systems, including a similar formalism for Hamiltonian operators. His approach and what we present here, extending Olver and Sokolov's notation, are equivalent when applied to local Hamiltonian operators. The extension to nonlocal ones is, at the best of our knowledge, a new feature introduced in \cite{cw19-2}.
	
In particular, we want to stress the distinction between \emph{Hamiltonian} operators (characterised by their ability to define a Poisson bracket between local functionals) and \emph{Poisson} bivectors (identified by the vanishing of their Schouten torsion).	

While we refer to our previous work for the precise definitions, we summarize hereinafter the main objects necessary to state our results. The main feature of this language is the introduction of functional $0$-, $1$-, and $2$-vector fields to describe, respectively, observables, evolutionary equations and brackets, closely following the treatment that Olver systematised for partial differential equations \cite{O93}.

	\begin{definition}[Difference Laurent polynomials]
	The space $\A$ of noncommutative \emph{difference (Laurent) polynomials} is a linear associative algebra with unit
	$$\A=\frac{\R\left\langle u^i_n,\left(u^i_n\right)^{-1}\right\rangle}{\left\langle u^i_n\left(u^i_n\right)^{-1}\!-1,\enspace\left(u^i_n\right)^{-1}\!u^i_n-1\right\rangle},\qquad i=1,\ldots,\ell,\; n\in\Z,$$
		endowed with an automorphism $\cS\colon\A\to\A$ given by $\cS u^i_n=u^i_{n+1}$. 
	\end{definition}
	Note that taking the quotient with respect to the two-sided ideal allows us to have that, in $\A$, $u^i_n\left(u^i_n\right)^{-1}=\left(u^i_n\right)^{-1}u^i_n=1$ so that these variables are multiplicative inverses of each other. In the spirit of the formal calculus of variations, the variables $u^i:=u^i_0$ generate the algebra $\A$ and represent the $\ell$ components of the dependent variable of the differential-difference equations. Note that the product in $\A$ is, in general, non-commutative. Let us denote by $[-,-]$ the commutator on $\A$, i.e. $[a,b]=ab-ba$.
	\begin{definition}[Local functionals]
	The elements of the quotient space
	\begin{equation}
	\F=\frac{\A}{(\cS-1)\A+[\A,\A]}
	\end{equation}
	are called \emph{local functionals}. We denote the projection from $\A$ to $\F$ as $\int \tr -$, which satisfies
	\begin{equation}
	\int\tr \cS f=\int\tr f,\qquad\qquad\int\tr fg=\int\tr gf
	\end{equation}
	for all $f,g\in\A$.	
	\end{definition}
	In our notation, the integral sign denotes the quotient operation with respect to $(\cS-1)$ and $\tr$ (``trace'', since in the standard example the generators of $\A$ are elements of $\mathfrak{gl}_N$ with the canonical matrix product) is the quotient with respect to the commutator.

A derivation of $\A$, denoted by $D(a)$, is a linear map satisfying the Leibniz's property $D(ab)=D(a)b+aD(b)$. Note that, because of the noncommutativity of the product, for a monomial $abc$ the property becomes 
$$D(abc)=D(a)bc+aD(b)c+abD(c),$$
and so on, until $D$ acts on the single generators of $\A$. For the inverse generators, we have $D(a^{-1})=-a^{-1}D(a)a^{-1}$; this follows from $D(1)=0$.

	\begin{definition}
		An \emph{evolutionary difference vector field} $X$ is a derivation of $\A$ that commutes with the shift operator $\cS$.
	\end{definition}
	The necessary and sufficient condition for $X$ to be an evolutionary vector field is that it satisfies the property $X(u^i_n)=\cS^n X^i$ where $(X^1,\ldots,X^\ell) \in \A^\ell$ is called the \emph{characteristics} of the vector field. 

Note that, exactly as in the commutative case, a differential-difference system
	\begin{equation}\label{eq:evsyst}
	u^i_t=X^i(\ldots,\cS^{-1}\mathbf{u},\mathbf{u},\cS\mathbf{u},\ldots) \qquad i=1,\ldots,\ell,\quad \mathbf{u}=\{u^j\}_{j=1}^\ell
	\end{equation}
	can be identified with an evolutionary vector field of characteristic $\{X^i\}_{i=1}^\ell$.	

	\begin{definition}
		A \emph{local scalar difference operator} is a linear map $K\colon\A\to\A$ that can be written as a finite linear combination 
		of terms of the form $\r_f \l_g \cS^p$ for $p\in\Z$, $f,g$ in $\A$, where $\r$ and $\l$ denote, respectively, the 
		multiplication on the right and the multiplication on the left. Namely, we have
		\begin{equation}\label{eq:diffop-def}
		K h = \sum_{p=N}^M\sum_{\alpha_p}\r_{f^{(\alpha_p)}}\l_{g^{(\alpha_p)}}\cS^p h=\sum_{p=N}^M\sum_{\alpha_p} g^{(\alpha_p)} \left(\cS^p h\right) f^{(\alpha_p)}.
		\end{equation}
	We call $(N,M)$, with $N\leq M$, respectively, the minimal and maximal power of $\cS$ appearing in the expansion, the \emph{order} of the difference operator. With the notation adopted in \eqref{eq:diffop-def} we want to stress that in the linear combination there are, in general, several terms with the same number of shifts. In the rest of this paper, where there is no ambiguity, we drop the indices and the double sum to represent the difference operators.
	\end{definition}
	\begin{definition}\label{def:ul}
	We call a difference operator \emph{ultralocal} if its order is $(0,0)$, namely if it does not contains shift operators.
	\end{definition}
	The multiplication operators have the properties
	\begin{equation}
	\l_f \l_g=\l_{fg}\qquad\qquad \r_f\r_g=\r_{gf}\qquad\qquad \r_f\l_g=\l_g\r_f.
	\end{equation}
	Moreover, we define the \emph{commutator} $\c_f:=\l_f-\r_f$, that is, $[f,g]=\c_f g$, and the \emph{anticommutator} $\a_f=\l_f+\r_f$. Note that $\c_f$ is a derivation.	
	
	The formal adjoint of the scalar difference operator $K=\sum\l_f \r_g \cS^p$ is 
	\begin{equation}\label{eq:adj}
	K^\dagger:=\sum\cS^{-p}\r_f\l_g.
	\end{equation}
In the multi-component case, namely when $\ell>1$, we consider $\ell\times\ell$ matrices $(K)_{ij}$ whose entries are scalar difference operators. To avoid making the notation too heavy, we denote the entry $(K)_{ij}$ as $K^{ij}$. The formal adjoint of (multi-component) $K$ is $(K^\dagger)_{ij}=(K^{ji})^\dagger$.  We say that a difference operator in \emph{skewsymmetric} if $K^\dagger=-K$.
	
	We define the variational derivative of a local functional $F=\int\tr f$ using a generic evolutionary vector field $X$ of characteristic $\{X^i\}_{i=1}^\ell$. We have 
	\begin{equation}\label{eq:defvarder}
	\int\tr\sum_{i=1}^\ell \frac{\delta F}{\delta u^i}X^i:=\int\tr X(f).
	\end{equation}

Providing a closed formula for the variational derivative of a local functional (or density) needs some more work, because of the way in which an evolutionary vector fields acts on the single generators of the algebra $\A$, ``splitting'' the density $f$ around them. This is the crux of the matter when dealing with derivations on a noncommutative space; a possible solution is ``doubling'' the space: the theory of double Poisson algebras \cite{vdb,dskv15} stems from it, and we will discuss it at large in Section \ref{sec:dPA}.

	The operation described in \eqref{eq:defvarder} can be regarded as a pairing between (evolutionary) vector fields and (variational) 1-forms; we use as a shorthand notation for such a paring $\langle \delta F, X\rangle$. 
	
The definition of local $p$-vector fields (see \cite{CW19} for the difference Abelian case) must be postponed to Section \ref{sec:dPA}; however, it is possible to adopt a tailored version of the so-called $\theta$ formalism following Olver and Sokolov's treatment \cite{OS98}.
 
We introduce the basic uni-vectors $\theta_{i,n}$, where $\theta_{i,n}=\cS^n\theta_i$; these objects (contrasting with the commutative case, where they are Grassmann variables) do not have any special parity with respect to the product. However, they are odd with respect to the permutations under the trace operation.

\begin{definition}\label{def:thetadens}
	The elements of the space $$\hA:=\A\langle\{\theta_{i,n}\}_{i=1,n\in\Z}^\ell\rangle$$ are called \emph{densities of (functional) polyvector fields}. The space $\hA$ is a graded algebra where $\deg_\theta\theta_{i,n}=1$, $\deg_\theta u^i_n=0$. Homogeneous elements of $\hA$ of degree $p$ in $\theta$ are densities of $p$-vector fields.
\end{definition}
\begin{definition}\label{def:thetafield}
A local \emph{functional polyvector field} is an element of the quotient space

	\begin{equation}
	\hF=\frac{\hA}{(\cS-1)\hA+[\hA,\hA]},
	\end{equation}
	where the commutator $[-,-]$ is $[a,b]=a b -(-1)^{|a||b|}b a$ and we denote $|a|:=\deg_\theta a$ and $|b|:=\deg_\theta b$. This commutator coincides with the standard commutator on $\A$, since $\deg_\theta\A=0$.
\end{definition}

	The trace form (and as a consequence the quotient operation $\hA\twoheadrightarrow\hF$) is then graded commutative, namely
	\begin{equation}\label{eq:tr}
	\tr \left(a\,b \right)=(-1)^{|a||b|}\tr \left(b\,a\right).
	\end{equation}

We denote as $\hA^p$ (respectively, $\hF^p$) the homogeneous component of degree $p$ in $\hA$ (resp., $\hF$).

Take notice of the abuse of language in Definition \ref{def:thetadens} and \ref{def:thetafield}: the original definition of functional polyvector field does not require the $\theta$ formalism, and one must normally prove that this formalism induces an isomorphism between densities of (classically defined) polyvector fields and polynomials in $\theta$. We leave it to Section \ref{sec:dPA} and exploit the formalism for our computations, following Olver and Sokolov's lead.
 	
For simplicity, we denote $\theta_i=\theta_{i,0}$ in the multi-component case, and -- in the scalar $\ell=1$ case -- $\theta_n=\theta_{1,n}$, $\theta=\theta_{1,0}$. To avoid confusion, in the following sections we will introduce different Latin ($u,v,\dots$) and Greek ($\theta,\zeta,\dots$) letters denoting, respectively, different $u^i$'s and $\theta_j$'s.
	
The formal evolutionary vector field of characteristics $K\Theta=(\sum_{j}K^{ij}\theta_j)_{i=1}^\ell$, where $K$ is a difference operator with entries $K^{ij}$,  is denoted $\mathbf{pr}_{K\Theta}$ and it is a graded derivation of degree 1. We have
$$\mathbf{pr}_{K\Theta}(u^i)=\sum_j K^{ij}\theta_j,\qquad\mathbf{pr}_{K\Theta}(\theta_i)=0,$$ and $$\mathbf{pr}_{K\Theta} (ab)=\mathbf{pr}_{K\Theta}(a)b+(-1)^{|a|}a\,\mathbf{pr}_{K\Theta}(b).$$

Moreover, we can associate to any difference operator (in particular, skewsymmetric) $K$ the functional \emph{bivector}
	\begin{equation}\label{eq:defBiv}
	P=\frac12\int\tr\left(\sum_{i,j=1}^\ell \theta_i \,K^{ij}\theta_j\right).
	\end{equation}
	Similarly, for $K$ a difference operator we can define a bracket between local functionals
	\begin{equation}\label{eq:defBra}
	\{F,G\}:=\langle \delta F,K\delta G\rangle=\int\tr \left(\sum_{i,j=1}^\ell\frac{\delta F}{\delta u^i} K^{ij} \frac{\delta G}{\delta u^j}\right).
	\end{equation}

\begin{definition}
A skewsymmetric difference operator $K$ is \emph{Hamiltonian} if and only if the bracket \eqref{eq:defBra} endows the space of local functionals with the structure of a Lie algebra, namely if and only if the bracket is skewsymmetric and satisfies the Jacobi identity
\begin{equation}
\{A,\{B,C\}\}+\{B,\{C,A\}\}+\{C,\{A,B\}\}=0,\qquad \forall A,B,C\in\F.
\end{equation}
\end{definition} 
\begin{definition}\label{def:Pois}
We say that a bivector $P$, associated to the skewsymmetric operator $K$, defined as in \eqref{eq:defBiv} is a \emph{Poisson} bivector if and only if
	\begin{equation}\label{eq:HamProp}
		\mathbf{pr}_{K\Theta}P=\frac12 \int\tr\left(\sum_{i,j=1}^\ell \mathbf{pr}_{K\Theta}(\theta_i K^{ij}\theta_j)		\right)=0.
		\end{equation}
We also call an operator whose associated bivector is Poisson a \emph{Poisson operator}, or Poisson structure.\end{definition}
\begin{example}
The skewsymmetric operators $K=\cS-\cS^{-1}$ and $H=\r_u-\l_u $ are Poisson. For $K$, we have that
\begin{equation}
P=\frac12\int\tr\left(\theta\left(\cS-\cS^{-1}\right)\theta\right)=\int\tr\theta\theta_1.
\end{equation}
The bivector does not depend on any generators $u$, and hence $\mathbf{pr}_{K\Theta}P=0$.
For $H$ we have $H\theta=\theta u-u\theta$ and $P=\int\tr u\theta\theta$. Then condition \eqref{eq:HamProp} is
\begin{equation}
\int\tr\left(\mathbf{pr}_{\theta u}\left(u\theta\theta\right)-\mathbf{pr}_{u \theta}\left(u\theta\theta\right)\right)=\int\tr\left(\theta u\theta\theta-u\theta\theta\theta\right)=0.
\end{equation}
\end{example}
The relation between Poisson geometry (the geometry of manifolds endowed with a Poisson bivector) and Hamiltonian systems is well known; Poisson bivectors always define Hamiltonian structures: for ODEs, PDEs and differential-difference systems, both Abelian and nonabelian (see for instance \cite{O93,s20} and references therein).
	
However, a remarkable difference between (standard) Abelian operators and the nonabelian ones is that, while in the former case the notion of Hamiltonian operator and of Poisson bivector are equivalent (namely, identity \eqref{eq:HamProp} holds for all Hamiltonian operators and all the Poisson bivectors are defined by Hamiltonian operators, see \cite{D94} for reference), in the noncommutative setting the Poisson property \eqref{eq:HamProp} is a sufficient but not necessary condition for an operator to be Hamiltonian: such is the case for operators defined in terms of double quasi-Poisson algebras (see Section \ref{sec:quasi}). This is why we argue that the terms ``Hamiltonian'' and ``Poisson'' should cease to be used interchangeably.

The identity \eqref{eq:HamProp} is essentially due to Olver (it is used for the Abelian differential case in \cite{O93} and for the nonabelian differential case in \cite{OS98}); we call bivectors for which it holds true ``Poisson'' because the left hand side of the identity is equivalent to the Schouten torsion of $P$, in analogy to the finite dimensional and commutative frameworks. We will show the equivalence of the notions in Theorem \ref{thm:PoisBiv}. 

Finally, we say that an evolutionary system \eqref{eq:evsyst} is a \emph{Hamiltonian system} if and only if
		\begin{equation}\label{eq:HamSyst_expl}
		u^i_t=X^i=\sum_{j=1}^\ell H^{ij}\frac{\delta}{\delta u^j}\left(\int\tr h\right)
		\end{equation}
		with $H$ a Hamiltonian operator and for a local functional $\int\tr h$ which is called ``the Hamiltonian'' of the system.

\section{Double Poisson algebras and Hamiltonian operators}\label{sec:dPA}
Van Den Bergh gave an axiomatization of noncommutative Poisson geometry in terms of \emph{double Poisson algebras} \cite{vdb}; in analogy with the connection between (classical) Poisson geometry and (commutative) Hamiltonian ODEs, they provide an effective framework to study noncommutative ODEs.

The theory of double Poisson vertex algebras is a formalism for noncommutative PDEs developed by De Sole, Kac and Valeri \cite{dskv15} that closely follows Van De Bergh's approach; the same formalism for noncommutative differential-difference system has not been discussed yet, at the best of our knowledge; we extend it along the lines of multiplicative PVAs \cite{dskvw19} and present its axioms in Section \ref{ssec:dmPVA}.

In this section, without claiming to be exhaustive, we illustrate how the formalism we use, and in particular the defining property for Poisson operators \eqref{eq:HamProp}, is equivalent to the notion of double Poisson (vertex) algebras. 

For simplicity, we start with the ultralocal case, which coincides with the original Van Den Bergh's notion of double Poisson algebras. For our short exposition of double Poisson algebras we broadly follow \cite{vdb}; since it is easier to generalise it to the vertex case using some of the notation of \cite{dskv15}, we take some definition from that paper. In our exposition we omit proofs and technical lemmas, that can be found in the aforementioned \cite{dskv15,vdb}.
\subsection{Double derivations and brackets}\label{ssec:double}
Let us consider the linear associative algebra $\A$, obtained as the quotient of the free algebras $\R\langle u^i_n,(u^i_n)^{-1}\rangle$, $i=1,\ldots\ell$, $n\in\Z$ by the two-sided ideals $\langle u^i_n (u^i_n)^{-1}-1\rangle$ and $\langle (u^i_n)^{-1}u^i_n -1\rangle$. This is tantamount to considering the symbol $(u^i_n)^{-1}$ as left and right inverse of the symbol $u^i_n$. We regard elements of $\A$ as noncommutative Laurent polynomials. The product, associative but not commutative, in $\A$ is denoted by simple juxtaposition. We endow $\A^{\otimes n}$ with the structure of \emph{outer} bimodule
\begin{equation}
b(a_1\otimes\cdots\otimes a_n)c=ba_1\otimes \cdots\otimes a_nc
\end{equation}
and $(n-1)$ left and right module structures
\begin{gather}
b\star_i(a_1\otimes a_2\otimes\cdots\otimes a_n)=a_1\otimes \cdots\otimes a_i\otimes ba_{i+1}\otimes\cdots\otimes a_n,\\
(a_1\otimes a_2\otimes\cdots\otimes a_n)\star_i c=a_1\otimes \cdots \otimes a_{n-i}c\otimes \cdots\otimes a_n.
\end{gather}
Note in particular that for $\A^{\otimes 2}$ we have $a\star_1(b\otimes c)=b\otimes ac$ and $(a\otimes b)\star_1c=ac\otimes b$, so that $\star_1=\star$ endows $\A\otimes \A$ with the structure of inner bimodule. Choosing a similar notation for the product $\A\times \A^{\otimes n}\to \A^{\otimes(n+1)}$ we have
\begin{gather}
b\otimes(a_1\otimes\cdots\otimes a_n)=b\otimes a_1\otimes\cdots\otimes a_n,\\
(a_1\otimes\cdots\otimes a_n)\otimes c=a_1\otimes\cdots\otimes a_n\otimes c,\\
b\otimes_i(a_1\otimes a_2\otimes\cdots\otimes a_n)=a_1\otimes \cdots\otimes a_i\otimes b\otimes a_{i+1}\otimes\cdots\otimes a_n,\\
(a_1\otimes a_2\otimes\cdots\otimes a_n)\otimes_i c=a_1\otimes \cdots \otimes a_{n-i}\otimes c\otimes \cdots\otimes a_n.
\end{gather}

An element of $\A\otimes\A$, in general, is of the form $B=\sum_i B'_{(i)}\otimes B''_{(i)}$, for $B_{(i)}^{\prime,\prime\prime}\in\A$. We adopt the so-called Sweedler's notation (see for instance \cite[pag.~35]{u11}), which keeps the summation implicit, and write $B=B'\otimes B''$. This allows us to define an associative product $\bullet$ in $\A\otimes \A$ given by
\begin{equation}\label{eq:bullproduct}
B\bullet C=B'C'\otimes C''B''.
\end{equation} 
We then introduce the \emph{multiplication map} $m\colon \A\otimes \A\to \A$, $m(a\otimes b):=ab$. We also denote by $\circ$ the application (or composition) of operators from the left in $\A\otimes\A$, by $(A\otimes B)\circ(C\otimes D)=A(C)\otimes B(D)$.

Moreover, let us denote the permutation of the factors in $\A\otimes \A$ by $\sigma$, namely $(B'\otimes B'')^\sigma=B''\otimes B'$; it is an antiautomorphism of the bullet product:
\begin{equation}
(B\bullet C)^\sigma=C^\sigma\bullet B^\sigma.
\end{equation}

Similarly, we denote the cyclic permutations of the factors of an element of $A^{\otimes n}$ with $\tau$:
\begin{equation}
\tau(a_1\otimes\a_2\otimes \cdots\otimes a_n)=a_n\otimes a_1 \otimes \cdots \otimes a_{n-1}.
\end{equation}
A \emph{$n$-fold} (double, triple, \dots) \emph{derivation} is a linear map $\A\to \A^{\otimes n}$ fulfilling the Leibniz property $D(ab)=D(a)b+aD(b)$. In particular, we define a noncommutative version of the partial derivative that is a double derivation:
\begin{align}\label{eq:doubleder}
\frac{\dev}{\dev u^l}\left(u^{i_1}u^{i_2}\cdots u^{i_p}\right)&=\sum_{k=1}^p\delta_{i_k,l} u^{i_1}\cdots u^{i_{k-1}}\otimes u^{i_{k+1}}\cdots u^{i_p},\\
\frac{\dev }{\dev u^l}(u^l)^{-1}&=-(u^l)^{-1}\otimes (u^l)^{-1}.
\end{align}
Using the Sweedler's notation, we denote the (sum of the) two factors produced by the double derivative as
\begin{equation}
\frac{\dev f}{\dev u^l}=\left(\frac{\dev f}{\dev u^l}\right)'\otimes\left(\frac{\dev f}{\dev u^l}\right)''.
\end{equation}

Note that $m\circ\dev_u$ is a derivation on $\A$. Indeed, we can read the action of an evolutionary vector field using a formula involving this double partial derivative, which closely resembles the standard formulae in the theory of evolutionary equations.

Let $\{X^i\}_{i=1}^\ell$ be the characteristics of an evolutionary vector field. Then its action on a difference polynomial $f$ is given by
\begin{equation}\
X(f)=\sum_{i,n}m\left(\left(\cS^n X^i\right)\star \frac{\dev f}{\dev u^i_n}\right).
\label{eq:evvfield-action}
\end{equation}
The validity of the formula can be easily checked; it is more interesting comparing equation \eqref{eq:evvfield-action} with formula \eqref{eq:defvarder}, holding true in $\F$. We have
\begin{equation}
\begin{split}
\int\tr X(f)&=\int\tr\sum_{i,n} m\left(\left(\cS^n X^i\right)\star\frac{\dev f}{\dev u^i_n}\right)=\int\tr\sum_{i,n} m\left(X^i\star\cS^{-n}\frac{\dev f}{\dev u^i_n}\right)\\
&=\int\tr\sum_{i,n}\left(\cS^{-n}\frac{\dev f}{\dev u^i_n}\right)'X^i\left(\cS^{-n}\frac{\dev f}{\dev u^i_n}\right)''\\
&=\int\tr\sum_{i,n}\left(\cS^{-n}\frac{\dev f}{\dev u^i_n}\right)''\left(\cS^{-n}\frac{\dev f}{\dev u^i_n}\right)'X^i,
\end{split}\label{eq:varderproof}
\end{equation}
from which one can read an explicit formula for the variational derivative, namely
\begin{equation}\label{eq:varder_expl}
\frac{\delta f}{\delta u^i}=\sum_{n}\cS^{-n}m\left(\frac{\dev f}{\dev u^i_n}\right)^\sigma.
\end{equation}
We simply wrote $\delta f\,/\,\delta u$ to denote the RHS of \eqref{eq:varder_expl}, without direct reference to the local functional $\int\tr f$, because, both explicitly and as a consequence of the well-posedness of the definition of variational derivative, we have
\begin{align}
\frac{\delta}{\delta u^i}(\cS-1)f&=0,& 
\frac{\delta}{\delta u^i}[f,g]&=0. \label{eq:varder12}
\end{align}
Note that the Sweedler's notation in \eqref{eq:varderproof} leaves a further sum implicit (e.g., $\dev_u u^3=1\otimes u^2+u\otimes u+u^2\otimes 1$).

\begin{lemma}\label{lem:comm_vf}
The commutator of two evolutionary vector fields is a vector field with characteristics
\begin{equation}\label{eq:comm_vf}
[X,Y]^i=\sum_{n,j}\left[\left(\frac{\dev Y^i}{\dev u^j_n}\right)'\left(\cS^nX^j\right)\left(\frac{\dev Y^i}{\dev u^j_n}\right)''-\left(\frac{\dev X^i}{\dev u^j_n}\right)'\left(\cS^nY^j\right)\left(\frac{\dev X^i}{\dev u^j_n}\right)''\right].
\end{equation}
\end{lemma}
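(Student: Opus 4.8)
The plan is to show that $[X,Y] := X\circ Y - Y\circ X$ is again an evolutionary vector field, and then to read off its characteristic by evaluating it on the generators $u^i = u^i_0$. Recall that an evolutionary vector field is precisely a derivation of $\A$ that commutes with $\cS$, and that such a field is completely determined by its characteristic through $Z(u^i_n) = \cS^n Z^i$ with $Z^i = Z(u^i)$. Thus it suffices to (i) check that $[X,Y]$ belongs to this class, and (ii) compute $[X,Y](u^i)$ explicitly.

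For (i), both properties are formal. The commutator of two derivations is again a derivation: expanding $X(Y(ab))$ and $Y(X(ab))$ by the Leibniz rule, the four mixed terms of the form $X(a)Y(b)$, $Y(a)X(b)$, etc.\ occur in both expressions and cancel upon subtraction, leaving $[X,Y](ab) = [X,Y](a)\,b + a\,[X,Y](b)$; crucially this computation never invokes commutativity, so it remains valid in the associative algebra $\A$. Moreover, since $X\cS = \cS X$ and $Y\cS = \cS Y$, we obtain $[X,Y]\cS = XY\cS - YX\cS = \cS XY - \cS YX = \cS[X,Y]$, so $[X,Y]$ commutes with the shift. Hence $[X,Y]$ is an evolutionary vector field and is determined by the components $[X,Y]^i = [X,Y](u^i)$.

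For (ii), I evaluate $[X,Y](u^i) = X(Y(u^i)) - Y(X(u^i)) = X(Y^i) - Y(X^i)$, using $Y(u^i) = Y^i$ and $X(u^i) = X^i$. Since $Y^i \in \A$ is a difference polynomial and $X$ is evolutionary, its action is governed by \eqref{eq:evvfield-action}; unwinding the $\star$-action, namely $m\bigl((\cS^n X^j)\star (B' \otimes B'')\bigr) = B'\,(\cS^n X^j)\, B''$, gives
\[
X(Y^i) = \sum_{n,j}\left(\frac{\dev Y^i}{\dev u^j_n}\right)'\bigl(\cS^n X^j\bigr)\left(\frac{\dev Y^i}{\dev u^j_n}\right)''.
\]
Applying the same formula to $Y(X^i)$ with the roles of $X$ and $Y$ interchanged, and subtracting, yields \eqref{eq:comm_vf} verbatim.

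The computation presents essentially no obstacle; the only point requiring care is the bookkeeping of Sweedler's notation together with the inner bimodule product $\star$, so as to confirm that \eqref{eq:evvfield-action} indeed produces the split form $(\,\cdot\,)'\,(\cS^n X^j)\,(\,\cdot\,)''$ rather than, say, the opposite ordering of the two Sweedler factors. Once this is matched against the definition of $\star$, the identity follows directly.
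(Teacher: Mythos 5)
Your proof is correct, and it takes a genuinely different route from the paper's. The paper works directly on an arbitrary $f\in\A$: it expands $X(Y(f))$ via \eqref{eq:evvfield-action}, splits the result into terms containing first derivatives and terms containing second derivatives of $f$, shows (after commuting $\dev_{u^i_n}$ past the shifts and relabelling indices) that the first-derivative part is precisely the action of the evolutionary field with characteristic \eqref{eq:comm_vf}, and then must prove that the second-derivative terms cancel between $X(Y(f))$ and $Y(X(f))$ --- the genuinely nontrivial step, which the paper handles by recasting them in the form \eqref{eq:comm_vf_pf5} and invoking the commutation of double derivatives from \cite[Lemma 3.7]{dskv15}. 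You avoid that step altogether: since the commutator of two derivations is again a derivation (your cancellation of the mixed terms indeed never uses commutativity) and commutes with $\cS$, it is an evolutionary vector field by the paper's very definition, hence determined by its values on the generators $u^i$; evaluating $[X,Y](u^i)=X(Y^i)-Y(X^i)$ with \eqref{eq:evvfield-action} then involves only first derivatives and yields \eqref{eq:comm_vf} at once, with your $\star$-bookkeeping $m\bigl(a\star(B'\otimes B'')\bigr)=B'aB''$ matching the paper's conventions. What your argument buys is brevity and independence from the external double-derivative commutation lemma (the principle that a derivation commuting with $\cS$ is determined by its values on the $u^i$ does that work for free); what the paper's computation buys is an explicit verification of how the second-derivative terms cancel when the commutator acts on an arbitrary element, information your argument never produces but also never needs.
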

\begin{proof}
Let us omit the summation symbol for repeated indices with their natural boundaries ($j=1,\ldots,\ell$ and $n\in\Z$), and introduce a shorthand notation for the derivative $\dev_{u^i_p}f:=\dev f/\dev u^i_p$.

A straightforward computation leads to
\begin{equation}\label{eq:comm_vf_pf1}
X(Y(f))=m\left(\left(\cS^nX^i\right)\star\frac{\dev}{\dev u^i_n}\left[\left(\dev_{u^j_m} f\right)'\left(\cS^mY^j\right)\left(\dev_{u^j_m} f\right)''\right]\right).
\end{equation}
When the partial derivative acts on the inside of the square bracket it produces two kinds of terms, with first and second derivatives. We need to prove that the expression with the second derivatives vanishes and that the remaining part acts as an evolutionary vector field.
Let us consider the terms with first derivatives only. We have
\begin{equation}
\left(\frac{\dev}{\dev u^i_n}(\cS^mY^j)\right)'\left(\cS^n X^i\right)\left(\frac{\dev}{\dev u^j_n}(\cS^m Y^j)\right)''\star\dev_{u^j_m} f,
\end{equation}
that, since the partial derivatives have the following commutation rule with the shift operator
$$
\dev_{u^i_n}(\cS f)=\cS\left(\dev_{u^i_{n-1}} f\right),
$$
can be rewritten as
\begin{equation}
\cS^m\left[\left(\dev_{u^i_{n-m}} Y^j\right)'(\cS^{n-m}X^i)\left(\dev_{u^i_{n-m}} Y^j\right)''\right]\star\dev_{u^j_m} f.
\end{equation}
Subtracting the same expression with $X$ and $Y$ exchanged we get \eqref{eq:comm_vf}.

We must now prove that the terms with the second derivatives vanish. They appear from \eqref{eq:comm_vf_pf1} when the derivative acts on one of the factors produced by the double derivative of $f$ -- when we subtract the same expression with $X$ and $Y$ exchanged, this produces two families of terms. One of these is
\begin{multline}\label{eq:comm_vf_pf4}
\left(\frac{\dev (\dev_{u^j_m} f)'}{\dev u^i_n}\right)'\left(\cS^n X^i\right)\left(\frac{\dev (\dev_{u^j_m} f)'}{\dev u^i_n}\right)''\left(\cS^mY^j\right)(\dev_{u^j_m} f)''\\
-(\dev_{u^i_n} f)'\left(\cS^nX^i\right)\left(\frac{\dev (\dev_{u^i_n} f)''}{\dev u^j_m}\right)'\left(\cS^m Y^j\right)\left(\frac{\dev (\dev_{u^i_n} f)''}{\dev u^j_m}\right)''.
\end{multline}
Note that the expression \eqref{eq:comm_vf_pf4} can be written as
\begin{equation}\label{eq:comm_vf_pf5}
m\left((m\otimes 1)\left(\left(1\otimes \cS^n X^i\otimes \cS^m Y^j\right)\left(\left(\frac{\dev}{\dev u^i_n}\otimes 1\right)\circ\frac{\dev f}{\dev u^j_m}-\left(1\otimes\frac{\dev}{\dev u^j_m}\right)\circ\frac{\dev f}{\dev u^i_n}\right)\right)\right),
\end{equation}
where the product from the left acts on each factor of $\A^{\otimes 3}$ and we apply the multiplication map twice to obtain an element in $\A$. From \cite[Lemma 3.7]{dskv15} we have that the expression in the innermost bracket vanishes (the double derivatives act on an algebra of differential polynomials, but the definition itself is the same as \eqref{eq:doubleder}). The same happens to the other terms involving second derivatives.
\end{proof}

\subsection{Double Poisson algebras and ultralocal Poisson brackets}\label{ssec:dPA}
In this section we briefly recall the notion of double Poisson algebra \cite{dskv15,vdb}, introduced to describe the Hamiltonian structure of noncommutative ODEs. Our aim is establishing its equivalence with the notion of ultralocal Poisson operators (see Definition \ref{def:Pois}) and the noncommutative Poisson bracket they define.

Let us now focus on the \emph{ultralocal} case, namely we disregard the shift operation and drop the shifted variables. Let $\A_0\subset\A$ be the space of the ultralocal Laurent polynomials, generated by the variables $u^i_0:=u^i$ only. We introduce the main object in the theory of double Poisson algebras, the \emph{double} bracket.
\begin{definition}
A \emph{$n$-fold (double, triple, \dots) bracket} is a $n$-linear map 
$$\ldb -,-,\cdots,-\rdb\colon\underbrace{\A_0 \times \cdots\times \A_0}_{n \text{ times}} \to \A_0^{\otimes n},$$
 which is a $n$-fold derivation in the last entry
\begin{equation}
\ldb a_1,\ldots,a_{n-1},bc\rdb=b\ldb a_1,\ldots,c\rdb+\ldb a_1,\ldots,b \rdb c
\end{equation}
and it is cyclically skewsymmetric, namely
\begin{equation}
\tau \ldb -,-,\ldots,-\rdb \tau^{-1}=(-1)^{n+1}\ldb -,-,\ldots,-\rdb.
\end{equation}
\end{definition}
For example, we have $\ldb a ,b\rdb=-\ldb b,a\rdb^\sigma$, $ \ldb a,b,c\rdb=\tau \ldb b,c,a\rdb$. In particular, a double bracket is also a derivation in the first entry for the inner bimodule structure of $\A_0\otimes \A_0$, namely
\begin{equation}
\ldb ab,c\rdb=a\star\ldb b,c\rdb+\ldb a,c\rdb\star b.
\end{equation}
To write the double bracket between elements of $\A_0$ in terms of the brackets between its generators we have the explicit formula (called \emph{master formula} in \cite{dskv15})
\begin{equation}\label{eq:master-ul}
\ldb a,b\rdb=\sum_{i,j=1}^\ell \frac{\dev b}{\dev u^j}\bullet \ldb u^i,u^j\rdb \bullet \left(\frac{\dev a}{\dev u^i}\right)^\sigma.
\end{equation}

Now let $a \in \A_0$, $B=b_1 \otimes  b_2 \in \A_0^{\otimes 2}$. We introduce the additional notation
\begin{align}
\ldb a, B\rdb_L&=\ldb a, b_1\rdb \otimes b_2, &\ldb a, B\rdb_R&=b_1\otimes \ldb a, b_2\rdb;\\
\ldb B, a\rdb_L&=\ldb b_1 ,a\rdb\otimes_1 b_2, & \ldb B,a\rdb_R&=b_1\otimes_1\ldb b_2,a\rdb. 
\end{align}

\begin{definition}
A linear associative algebra $\A_0$, endowed with a double bracket $\ldb-,-\rdb$, is a \emph{double Poisson algebra} if the triple bracket
\begin{equation}\label{eq:tripleb}
\begin{split}
\ldb a,b,c\rdb&:=\ldb a,\ldb b,c\rdb\rdb_L+\tau\ldb b,\ldb c,a\rdb\rdb_L+\tau^2\ldb c,\ldb a,b\rdb\rdb_L\\
&=\ldb a,\ldb b,c\rdb\rdb_L-\ldb b,\ldb a,c\rdb\rdb_R-\ldb\ldb a,b\rdb,c\rdb_L
\end{split}
\end{equation}
vanishes for any $a,b,c \in \A_0$.
\end{definition}
From the definition and properties of a triple bracket \cite{vdb}, this is equivalent to the vanishing of the brackets for all the triples of generators of $\A_0$.

We associate an ultralocal operator (see Definition \ref{def:ul}) to a double Poisson bracket and vice versa. We observe that the bullet product \eqref{eq:bullproduct} has the same structure of the composition of multiplication operators: we can then identify a multiplication operator as an element of $\A_0\otimes \A_0$, by
\begin{equation}
\l_f\mapsto f\otimes 1,\quad\r_g\mapsto 1\otimes g,\quad \l_f\r_g\mapsto f\otimes g,
\end{equation}
and the composition of such operators by the multiplication \eqref{eq:bullproduct} on the left.
Ultralocal Hamiltonian operators are compositions and linear combinations of left and right multiplication operators only. In the scalar case, $\A_0$ has a single generator $u$, hence all the double brackets are defined by
\begin{equation}\label{eq:dbrack}
\ldb u,u\rdb=\sum_\alpha f_\alpha\otimes g_\alpha
\end{equation} for some $f_\alpha, g_\alpha\in \A_0$. From a double bracket as in \eqref{eq:dbrack} we define the multiplication operator $K=\sum_\alpha \l_{f_\alpha}\r_{g_\alpha}$, which is skewsymmetric. Indeed, from the skewsymmetry of the double bracket we have
\begin{equation}
\ldb u,u\rdb=-\ldb u,u\rdb^\sigma=-\sum_{\alpha}g_\alpha\otimes f_\alpha
\end{equation}
corresponding to the operator $-K^\dagger=-\sum_\alpha \l_{g_\alpha}\r_{f_\alpha}$. Similarly, for an algebra $\A_0$ with $\ell$ generators we can define an $\ell\times\ell$ matrix of operators, whose entries $K^{ij}$ correspond to the brackets between $u^j$ and $u^i$ (note the exchange of indices, which is the convention adopted in \cite{dskv15}). Equivalently, given a skewsymmetric multiplication operator $K$ we can define a double bracket, identifying the bracket between the generators with its entries and extending it to the full algebra $\A_0$ by means of \eqref{eq:master-ul}. Indeed, the bracket on $\A_0$ is uniquely determined by the bracket between generators.

Observe the striking analogy between the double bracket structure and the bivector defined by $K$. If we denote $K^{ij}$ as $\sum_\alpha\l_{K^{(\alpha)ij}_L}\r_{K^{(\alpha)ij}_R}$ we have on one hand
\begin{align}
P&=\frac12 \tr\sum_{i,j}\sum_\alpha \theta_i K^{(\alpha)ij}_L\theta_j K^{(\alpha)ij}_R\\&=-\frac12 \tr \sum_{i,j}\sum_\alpha\theta_j K^{(\alpha)ij}_R\theta_iK^{(\alpha)ij}_L=-\frac12\tr\sum_{i,j}\sum_\alpha \theta_i K^{(\alpha)ji}_R\theta_jK^{(\alpha)ji}_L,
\end{align}
(note that in the ultralocal case we have dropped the integral operation) and on the other hand
\begin{equation}\label{eq:brackbiv}
\ldb u^j,u^i\rdb = \sum_\alpha K^{(\alpha)ij}_L\otimes K^{(\alpha)ij}_R=-\sum_\alpha K^{(\alpha)ji}_R\otimes K^{(\alpha)ji}_L.
\end{equation}
We can read one expression from the other by replacing the tensor product in \eqref{eq:brackbiv} by $\theta_j$ and multiplying the result on the left by $\theta_i$. 

After these preliminary observations, we are ready to address the equivalence between double \emph{Poisson} brackets and ultralocal Poisson operators.
\begin{proposition}\label{thm:poisdouble-ul}
Let $H$ be an ultralocal operator with entries
\begin{equation}
H^{ij}=\sum_{\alpha}\l_{H^{(\alpha)ij}_{L}}\r_{H^{(\alpha)ij}_{R}}
\end{equation} 
Then $H$ is Poisson if and only if the double bracket defined on generator as
\begin{equation}
\ldb u^i,u^j\rdb =\sum_\alpha H^{(\alpha)ji}_L\otimes H^{(\alpha)ji}_R
\end{equation}
is the bracket of a double Poisson algebra.
\end{proposition}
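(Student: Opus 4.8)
The plan is to prove the two conditions coincide by computing the prolongation $\mathbf{pr}_{H\Theta}P$ explicitly and recognising it as the $\theta$-image of the triple bracket \eqref{eq:tripleb}, so that the equivalence becomes an equality of expressions. The dictionary is already available: applying the bivector construction \eqref{eq:defBiv} to $K=H$ and using \eqref{eq:brackbiv} with the remark following it, one has $P=\frac12\tr\sum_{i,j}\theta_i\,\ldb u^j,u^i\rdb'\,\theta_j\,\ldb u^j,u^i\rdb''$, that is, $P$ is obtained from the double bracket $\ldb u^j,u^i\rdb$ by replacing the tensor sign with $\theta_j$ and multiplying on the left by $\theta_i$.

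The engine of the computation is the identity, valid for every $f\in\A_0$,
$$\mathbf{pr}_{H\Theta}(f)=\sum_{l}\ldb u^l,f\rdb'\,\theta_l\,\ldb u^l,f\rdb'',$$
that I would establish first. In the ultralocal case the action formula \eqref{eq:evvfield-action} reads $\mathbf{pr}_{H\Theta}(f)=\sum_k m\bigl(V^k\star\tfrac{\dev f}{\dev u^k}\bigr)$ with $V^k=\sum_l H^{kl}\theta_l=\mathbf{pr}_{H\Theta}(u^k)$; on the other hand the master formula \eqref{eq:master-ul} gives $\ldb u^l,f\rdb=\sum_k\tfrac{\dev f}{\dev u^k}\bullet\ldb u^l,u^k\rdb$. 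Since $H^{kl}$ corresponds to $\ldb u^l,u^k\rdb$, the very same double derivative $\dev f/\dev u^k$ enters both expressions, and the bullet product reorganises the factors so that the two sides agree; for $f=u^k$ the identity reduces to $\mathbf{pr}_{H\Theta}(u^k)=V^k$, as it must.

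With this identity I would expand $2\,\mathbf{pr}_{H\Theta}P=-\tr\sum_{i,j}\theta_i\,\mathbf{pr}_{H\Theta}(H^{ij}\theta_j)$. As $\mathbf{pr}_{H\Theta}$ annihilates $\theta_j$ and obeys the graded Leibniz rule, it acts only on the two factors $H^{ij}_L$, $H^{ij}_R$ of each summand $H^{ij}_L\theta_j H^{ij}_R$, producing an $L$-family and an $R$-family of terms. Applying the identity above to each factor and using $\ldb u^l,\ldb u^j,u^i\rdb\rdb_L=\ldb u^l,H^{ij}_L\rdb\otimes H^{ij}_R$ together with its $R$-analogue, each family becomes the trace of a $\theta$-insertion of one of the nested brackets in \eqref{eq:tripleb}. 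The graded-cyclic symmetry of the trace \eqref{eq:tr} plays the role of the cyclic operator $\tau$: it permutes factors under the trace and thereby converts $R$-type contributions into $L$-type ones and reassembles the family of terms into $\tfrac12\sum_{i,j,k}\tr$ of the $\theta$-image of the cyclically symmetric triple bracket $\ldb u^i,u^j,u^k\rdb$.

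The equivalence then follows in both directions simultaneously. By the isomorphism between densities of polyvector fields and $\theta$-polynomials (the identification deferred to this section) and the fact, recorded after the definition of a double Poisson algebra, that the triple bracket vanishes identically iff it vanishes on all triples of generators, the condition $\mathbf{pr}_{H\Theta}P=0$ in $\hF^3$ is equivalent to $\ldb u^i,u^j,u^k\rdb=0$ for all $i,j,k$. The main obstacle is the bookkeeping in the previous step: keeping track of the graded signs contributed by the odd $\theta$'s under both the Leibniz rule and the graded-cyclic trace, pairing each inserted $\theta_l$ with the correct tensor slot, and checking that the two families produced by the prolongation recombine --- only after using trace cyclicity --- into the manifestly three-term, cyclically symmetric expression \eqref{eq:tripleb}. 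One must also verify that passing to the trace quotient $\hF$ loses no information, i.e. that $\theta$-insertion followed by $\tr$ is injective on triple brackets; this is guaranteed precisely because the cyclic covariance built into \eqref{eq:tripleb} matches the cyclic invariance detected by the trace.
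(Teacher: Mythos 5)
Your proposal is correct and follows essentially the same route as the paper's own proof: both expand $\mathbf{pr}_{H\Theta}P$ via the graded Leibniz rule, recognise the resulting terms (through the master formula \eqref{eq:master-ul}) as $\theta$-insertions of the nested brackets $\ldb u^l,\ldb u^j,u^i\rdb\rdb_{L,R}$, use graded trace cyclicity together with the skewsymmetry of $H$ to reassemble them into the cyclically symmetric triple bracket \eqref{eq:tripleb}, and conclude by reducing its vanishing to triples of generators. Your intermediate identity $\mathbf{pr}_{H\Theta}(f)=\ldb u^l,f\rdb'\,\theta_l\,\ldb u^l,f\rdb''$ is a clean repackaging of the computation the paper performs directly in terms of the partial derivatives $\dev_l H^{(\beta)ij}_{L}$, $\dev_l H^{(\beta)ij}_{R}$, not a genuinely different argument.
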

\begin{proof}
The bracket of a double Poisson algebra is, in particular, skewsymmetric; the skewsymmetry of the bracket defined by $H$ is then equivalent to
\begin{equation}
\sum_\alpha H^{(\alpha)ji}_L\otimes H^{(\alpha)ji}_R=-\sum_{\alpha} H^{(\alpha)ij}_R\otimes H^{(\alpha)ij}_L,
\end{equation}
where the LHS and RHS correspond, respectively, to the entries of the bracket defined  by $\sum_{\alpha}\l_{H^{(\alpha)}_L}\r_{H^{(\alpha)}_R}$ and minus its adjoint. 
On the other hand, the property defining Poisson bivectors \eqref{eq:HamProp} and the vanishing of the triple bracket \eqref{eq:tripleb} are not linear identities. Indeed, taken an operator $H=\sum_\alpha H^{(\alpha)}$ (and the corresponding double bracket $\sum \ldb -,-\rdb_{(\alpha)}$), we have
\begin{equation}\label{eq:JacLin_pf1}
\begin{split}
\mathbf{pr}_{H\Theta} \tr(\theta H \theta)&=\sum_{\alpha,\beta}\mathbf{pr}_{H^{(\alpha)}\Theta} \tr(\theta H^{(\beta)} \theta)
\end{split}
\end{equation}
and
\begin{align}
\ldb a,b,c\rdb&=\sum_{\alpha,\beta}\left[\ldb a,\ldb b,c\rdb_{(\alpha)}\rdb_{(\beta),L}+\tau\left(\ldb b,\ldb c,a\rdb_{(\alpha)}\rdb_{(\beta),L}\right)+\tau^2\left(\ldb c,\ldb a,b\rdb_{(\alpha)}\rdb_{(\beta),L}\right)\right]\\ \label{eq:JacLin_pf2}
&=\sum_{\alpha,\beta}\ldb a,b,c\rdb_{(\alpha,\beta)}.
\end{align}

Each term of the summation \eqref{eq:JacLin_pf1} is
\begin{equation}\label{eq:HamPropUL}
\mathbf{pr}_{H^{(\alpha)}\Theta}\tr\left(\theta_i H^{(\beta)ij}\theta_j\right)=\mathbf{pr}_{H^{(\alpha)}\Theta}\tr\left(\theta_i H^{(\beta)ij}_L\theta_jH^{(\beta)ij}_R\right).
\end{equation}
A direct computation for $\text{\eqref{eq:HamPropUL}}$ gives, up to a sign and denoting the characteristics of $\mathbf{pr}_{H^{(\alpha)}\Theta}$ by $H^{(\alpha)lk}\theta_k=H^{(\alpha)lk}_L\theta_kH^{(\alpha)lk}_R$ and the partial derivative with respect to $u^l$ as $\dev_l$,
\begin{multline}\label{eq:HamPropULpf-1}
\tr\left[\theta_i\left(\dev_l H^{(\beta)ij}_L\right)'H^{(\alpha)lk}_L\theta_kH^{(\alpha)lk}_R\left(\dev_l H^{(\beta)ij}_L\right)''\theta_jH^{(\beta)ij}_R-\theta_i H^{(\beta)ij}_L\theta_j\left(\dev_l H^{(\beta)ij}_R\right)'H^{(\alpha)lk}_L\theta_kH^{(\alpha)lk}_R\left(\dev_l H^{(\beta)ij}_R\right)''\right]\\
=\tr\left[\theta_i\left(\dev_l H^{(\beta)ik}_L\right)'H^{(\alpha)lj}_L\theta_jH^{(\alpha)lj}_R\left(\dev_l H^{(\beta)ik}_L\right)''\theta_kH^{(\beta)ik}_R+\theta_i H^{(\beta)ji}_R\theta_j\left(\dev_l H^{(\beta)ji}_L\right)'H^{(\alpha)lk}_L\theta_kH^{(\alpha)lk}_R\left(\dev_l H^{(\beta)ji}_L\right)''\right],
\end{multline}
where we exchanged indices in the first term and used the skewsymmetry of $H^{(\beta)}$ in the second one.

Because of the trace operation, the expression does not vary (up to a constant) if we replace the RHS with an expression obtained taking the cyclic sums over the factors of the form $(\theta_i f,\theta_k g,\theta_j h)$, when we denote each term of the RHS of \eqref{eq:HamPropULpf-1} as $\tr[\theta_if\theta_jg\theta_kh]$. Moreover, in the result we can change the summation indices so that the relative position of $(\theta_i,\theta_j,\theta_k)$ is the same in each monomial. The terms we get can be summed in pairs, so that we obtain
\begin{multline}\label{eq:HamProof}
-\frac32\mathbf{pr}_{H^{(\alpha)}\Theta}\tr(\theta H^{(\beta)}\theta)=\tr\left[\theta_i\left(\dev_l H^{(\beta)ik}_L\right)'H^{(\alpha)lj}_L\theta_jH^{(\alpha)lj}_R\left(\dev_l H^{(\beta)ik}_L\right)''\theta_kH^{(\beta)ik}_R\right.\\
\left.+\theta_iH^{(\beta)ji}_R\theta_j\left(\dev_l H^{(\beta)ji}_L\right)'H^{(\alpha)lk}_L\theta_kH^{(\alpha)lk}_R\left(\dev_l H^{(\alpha)ji}_L\right)''+\theta_iH^{(\alpha)li}_R\left(\dev_l H^{(\beta)kj}_L\right)''\theta_jH^{(\beta)kj}_R\theta_k\left(\dev_l H^{(\beta)kj}_L\right)'H^{(\alpha)li}_L\right].
\end{multline}

On the other hand, let us compare \eqref{eq:HamProof} with $\ldb u^j,u^k,u^i\rdb_{(\beta,\alpha)}$ as given in \eqref{eq:JacLin_pf2}. For the first term we have
\begin{align}
\ldb u^j,\ldb u^k,u^i\rdb_{(\beta)}\rdb_{(\alpha)L}&=\ldb u^j,H^{(\beta)ik}_L\rdb_{(\alpha)}\otimes H^{(\beta)ik}_R=\left(\dev_l H^{(\beta)ik}_L\right)\bullet \left(H^{(\alpha)lj}_L\otimes H^{(\alpha)lj}_R\right)\otimes H^{(\beta)ik}_R\\\label{eq:dJac-1}&=
\left(\dev_l H^{(\beta)ik}_L\right)'H^{(\alpha)lj}_L\otimes H^{(\alpha)lj}_R\left(\dev_l H^{(\beta)ik}_L\right)''\otimes H^{(\beta)ik}_R.
\end{align}
The three factors of \eqref{eq:dJac-1} are exactly the three factors $f,g,h$ of $\theta_i f \theta_j g \theta_k h$ in the first summand of \eqref{eq:HamProof}; computing the full triple bracket produces two more terms that reproduce the second and third summand.

We have then established the equivalence between each term of \eqref{eq:JacLin_pf1} and \eqref{eq:JacLin_pf2} for the triple of generators $(u^j,u^i,u^k)$. The summation over all the pairs $(\alpha,\beta)$ is then equivalent, too. In particular, identity \eqref{eq:HamProof} includes an implicit sum over all the triples of indices $(i,j,k)$. Then the vanishing of each individual term is a sufficient and necessary condition for
\begin{equation}\label{eq:HamProp-UL-pf2}
\mathbf{pr}_{H\Theta}\tr\left(\theta_i H^{ij}\theta_j\right)=0
\end{equation}
and \eqref{eq:HamProp-UL-pf2} is equivalent to the vanishing of \eqref{eq:tripleb} for all the triples of generators.
\end{proof}
\begin{example}[\cite{vdb}]
The operator $H=\l_u^2\r_u-\l_u\r_u^2$ (which is the operator presented in Theorem \ref{thm:ul-scal} with $\alpha=\beta=0$, $\gamma=1$) is Poisson. We can easily show it with the language of double Poisson algebras: the operator corresponds to the double bracket $\ldb u,u\rdb=u^2\otimes u-u\otimes u^2$. The triple bracket in the scalar case is
\begin{align}
\ldb u,u,u\rdb&=(1+\tau+\tau^2)\ldb u, \ldb u,u\rdb\rdb_L\\
&=(1+\tau+\tau^2)\left[\ldb u,u^2\rdb \otimes u-\left(u^2\otimes u-u\otimes u^2\right)\otimes u^2\right]\\
&=(1+\tau+\tau^2)\left[u\left(u^2\otimes u-u\otimes u^2\right)\otimes u+\left(u^2\otimes u-u\otimes u^2\right)u\otimes u\right.\\&\quad\left.-\left(u^2\otimes u-u\otimes u^2\right)\otimes u^2\right]\\
&=(1+\tau+\tau^2)\left(u^3\otimes u\otimes u-u\otimes u^3\otimes u-u^2\otimes u\otimes u^2+u\otimes u^2\otimes u^2\right)\\
&=0.
\end{align}
Since the operator is Poisson, in particular it is Hamiltonian.
\end{example}
The language of double Poisson algebras can be used not only to characterise the operators, but to replace the whole Hamiltonian formalism. The Hamiltonian equations defined by the Hamiltonian structure $H$ and Hamiltonian functional $\tr f$ is, with this language,
\begin{equation}
u^i_t=m\left(\ldb  f,u^i\rdb\right).
\end{equation}
Similarly, the Poisson bracket between two local functionals $\tr f$ and $\tr g$ is
\begin{equation}\label{eq:PoisB-PA-def}
\{\tr f,\tr g\}=-\tr m\left(\ldb f, g \rdb\right).
\end{equation}
The two statements can be easily verified by a direct computation. For instance, let us consider the operator $H^{ij}=\sum_\alpha\l_{H^{(\alpha)ij}_L}\r_{H^{(\alpha)ij}_R}$ and the functional $\tr f$. From \eqref{eq:HamSyst_expl}  and the formula for the variational derivative \eqref{eq:varder_expl}, we have that the characteristics of the Hamiltonian vector field has the same form prescribed by  the master formula \eqref{eq:master-ul}, namely
\begin{equation}
\begin{split}
u^i_t&=m\left(\sum (H^{(\alpha)ij}_L\otimes H^{(\alpha)ij}_R)\bullet \left(\frac{\dev f}{\dev u^j}\right)^\sigma\right)=m\left(\sum H^{(\alpha)ij}_L\left(\frac{\dev f}{\dev u^j}\right)''\otimes\left(\frac{\dev f}{\dev u^j}\right)'H^{(\alpha)ij}_R\right)\\
&=\sum H^{(\alpha)ij}_L\left(\frac{\dev f}{\dev u^j}\right)''\left(\frac{\dev f}{\dev u^j}\right)'H^{(\alpha)ij}_R.
\end{split}
\end{equation}

\subsection{Difference operators, multiplicative double Poisson vertex algebras and Poisson bivectors} \label{ssec:dmPVA}

In this section we present the axioms of the what we call \emph{double multiplicative Poisson vertex algebras} (double multiplicative PVAs). While (standard) double PVAs \cite{dskv15} are tailored for nonabelian PDEs, double multiplicative PVAs are the structures corresponding to the Hamiltonian formalism for nonabelian D$\Delta$Es. Their axioms are modelled on those of multiplicative Poisson vertex algebras, \cite{dskvw19, dskvw20} which describe the structure of \emph{Abelian} differential-difference equations.

\begin{definition}\label{def:dlbracket}
Let $\A$ be the space of difference (Laurent) polynomials with $\ell$ generators as in Section \ref{sec:Ham_formalism}. A multiplicative $\lambda$ bracket is a bilinear operation
\begin{equation}
\ldb-_{\lambda}-\rdb\colon \A\times\A\to\A\otimes\A\lsb\lambda,\lambda^{-1}\rsb
\end{equation}
such that the following properties hold:
\begin{enumerate}
\item $\ldb \cS a_{\lambda} b\rdb=\lambda^{-1}\ldb a_{\lambda} b\rdb$ and $\ldb a_{\lambda}\cS b\rdb=\lambda\cS\ldb a_{\lambda} b \rdb$ (sesquilinearity)
\item $\ldb a_{\lambda} bc\rdb=\ldb a_{\lambda}b\rdb c +b\ldb a_{\lambda}c\rdb$ (Left Leibniz property)
\item $\ldb ab_{\lambda}c\rdb=\ldb a_{\lambda \cS}c\rdb_\to\star b+ ({}_\to a)\star\ldb b_{\lambda\cS}c\rdb$ (Right Leibinz property)
\end{enumerate}
\end{definition}
The notation used for the right Leibniz property means that, for $\ldb a_{\lambda}b\rdb=\sum B(a,b)_{p}'\otimes B(a,b)_{p}''\lambda^p$, we have
\begin{align}
\ldb a_{\lambda \cS} c\rdb_\to \star b&=\sum B(a,c)_p'\left(\cS^p b\right)\otimes B(a,c)_p''\lambda^p,\\
({}_\to a)\star\ldb b_{\lambda\cS}c\rdb&=\sum B(b,c)_p'\otimes\left(\cS^p a\right)B(b,c)''_p\lambda^p.
\end{align}

\begin{definition}\label{def:dmpva}
A double multiplicative PVA is an algebra of difference polynomials $\A$ endowed with a multiplicative $\lambda$ bracket satisfying the additional properties
\begin{enumerate}
\item Skewsymmetry: $\ldb b_{\lambda} a\rdb=-{}_{\to}\ldb a_{(\lambda\cS)^{-1}}b\rdb^\sigma$;
\item Double Jacobi identity $\ldb a_{\lambda}\ldb b_{\mu} c\rdb\rdb_L-\ldb b_{\mu}\ldb a_{\lambda}c\rdb\rdb_R=\ldb \ldb a_{\lambda}b\rdb{}_{\lambda\mu}c\rdb_L$.
\end{enumerate}
The notation for the skewsymmetry property is
\begin{equation}\label{eq:skew-detail}
{}_\to\ldb a_{(\lambda \cS)^{-1}}b \rdb^\sigma:=\sum\cS^{-p}\left(B(a,b)_p''\otimes B(a,b)_p'\right)\lambda^{-p},
\end{equation}
namely the arrow on the left of the bracket denotes that the shift operator acts on the terms of the bracket themselves.
Similarly to the definitions used for double Poisson algebras in Section \ref{ssec:dPA} and adopting the same notation we used for the Leibniz property, the entries of the double Jacobi identity are
\begin{align}
\ldb a_\lambda b\otimes c\rdb_L&:=\ldb a_\lambda b\rdb\otimes c,&\ldb a_\lambda b\otimes c\rdb_R&:=b\otimes\ldb a_\lambda c\rdb,\\
\ldb a\otimes b_\lambda c\rdb_L&:=\ldb a_{\lambda S}c\rdb_\to\otimes_1 b,&\ldb a\otimes b_\lambda c\rdb_R&:=\ldb b_{\lambda S}c\rdb_\to\otimes_1 a.
\end{align}
\end{definition}

Note that the skewsymmetry property, together with the left Leibniz property, implies the right Leibniz property and generalises the notion of double bracket.

The \emph{master formula} for the double multiplicative $\lambda$ bracket is
\begin{equation}\label{eq:master}
\ldb a_\lambda b\rdb=\sum_{i=1}^\ell\sum_{n,m\in \Z} \frac{\dev b}{\dev u^j_m}(\lambda\cS)^m\bullet \ldb u^i{}_{\lambda\cS} u^j\rdb\bullet (\lambda\cS)^{-n}\left(\frac{\dev a}{\dev u^i_n}\right)^\sigma,
\end{equation}
where $\ldb a_{\lambda\cS}b\rdb\bullet (c\otimes d)=B(a,b)'_p(\cS^{p} c)\otimes (\cS^p d)B(a,b)''_p\lambda^p$.

Exactly as we showed for the ultralocal case in Proposition \ref{thm:poisdouble-ul}, there is a one-to-one correspondence between $\lambda$ brackets and difference operators; the skewsymmetry property of the bracket is equivalent to the skewsymmetry of the operator and the double Jacobi identity is equivalent to condition \eqref{eq:HamProp}.

A generic scalar difference operator has the form \eqref{eq:diffop-def}; we also want to consider nonlocal operators, allowing the summation of \eqref{eq:diffop-def} to either run from $N$ to $\infty$, as a Taylor series, or from $-\infty$ to $M$ as a Laurent series. As an example, let us consider an operator of the form $K=J(\cS-1)^{-1}L$ for $J$ and $L$ ultralocal difference operators (this will allow us to focus on a single expansion). Its corresponding double $\lambda$ bracket can be written as
\begin{equation}
\ldb u_{\lambda}u\rdb=\left(J'\otimes J''\right)\bullet \frac{1}{\lambda\cS-1}\left(K'\otimes K''\right),
\end{equation}
which we can expand, alternatively, as
\begin{align}
\ldb u_{\lambda}u\rdb&=\sum_{k=-\infty}^{-1} J'(\cS^k K')\otimes(\cS^k K'')J''\lambda^k& \text{or}&&\ldb u_{\lambda}u\rdb&=\sum_{k=0}^\infty -J'(\cS^k K')\otimes(\cS^k K'')J''\lambda^k.
\end{align}
Note that some difficulties with the definition of skewsymmetry arise from the choice of a direction for the expansion, since the one prescribed in \eqref{eq:skew-detail} is in the opposite direction than the one chosen for $\ldb b_\lambda a\rdb$. For the class of nonlocal operators we consider, as detailed in the proof of the following Theorem \ref{thm:doublePVA-Poiss}, the skewsymmetry of the double $\lambda$ bracket follows from the expansion in the same direction of the bracket associated to its  adjoint operator. For a more detailed account regarding the manipulation of nonlocal operators as double $\lambda$ brackets (including their skewsymmetry) the interested reader shall consult the recent preprint \cite{FV21}, which appeared while this manuscript was under revision.

Let
\begin{equation}\label{eq:matrdiffop-compo}
H^{ij}=\sum_{p,\alpha_p} \l_{H^{(\alpha_p)ij}_L}\r_{H^{(\alpha_p)ij}_R}\cS^p
\end{equation} be the entries of a difference operator-valued matrix for the $\ell$-components case. To such a difference operator we associate a double $\lambda$ bracket defined on the generator $\{u^i\}$ of $\A$
\begin{equation}\label{eq:fromOptoBr}
\ldb u^i_{\lambda }u^j\rdb=\sum_{p,\alpha_p} H^{(\alpha_p)ji}_L\otimes H^{(\alpha_p)ji}_R\lambda^p;
\end{equation}
conversely, given a double $\lambda$ brackets on the generators of $\{u^i\}$ of $\A$, we can define the matrix of difference operators
\begin{equation}
H^{ij}=\ldb u^j_{\lambda}u^i\rdb\Big|_{\lambda=\cS}.
\end{equation}
\begin{theorem}\label{thm:doublePVA-Poiss}
The $\lambda$ bracket associated to a skewsymmetric difference operator according to \eqref{eq:fromOptoBr} is the bracket of a double multiplicative PVA if and only if the operator $H=\left(H^{ij}\right)_{i,j=1}^\ell$ with entries of the form \eqref{eq:matrdiffop-compo} is Poisson.
\end{theorem}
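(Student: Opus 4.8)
The plan is to follow the blueprint of Proposition \ref{thm:poisdouble-ul}, promoting every step from the ultralocal setting to the full difference one by carrying the shift operators $\cS^p$ through the computation. Since the hypothesis already fixes $H$ to be skewsymmetric, and ``Poisson'' presupposes skewsymmetry by Definition \ref{def:Pois}, I first record that the skewsymmetry of the $\lambda$ bracket (property 1 of Definition \ref{def:dmpva}) is equivalent to $H^\dagger=-H$. This is a direct verification: substituting \eqref{eq:fromOptoBr} into both sides of \eqref{eq:skew-detail}, relabelling $p\mapsto-p$, and matching powers of $\lambda$ reproduces exactly the componentwise identity $(H^{ji})^\dagger=-H^{ij}$ obtained from the adjoint formula \eqref{eq:adj}. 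Thus the whole content of the theorem is the equivalence between the double Jacobi identity (property 2) and the Poisson condition \eqref{eq:HamProp}.

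Both sides reduce to statements about triples of generators. By the master formula \eqref{eq:master}, the double $\lambda$ bracket is determined by its values on the generators $u^i$, so the double Jacobi identity holds on all of $\A$ if and only if it holds for every triple $(u^i,u^j,u^k)$; this is the difference-algebra analogue of the reduction used right after \eqref{eq:tripleb}. On the other side, expanding $\mathbf{pr}_{H\Theta}P$ produces, exactly as in \eqref{eq:HamProof}, an implicit sum over triples of indices whose summands must vanish independently. Moreover, writing $H=\sum_\alpha H^{(\alpha)}$ makes both objects bilinear in the terms of $H$, precisely as in \eqref{eq:JacLin_pf1}--\eqref{eq:JacLin_pf2}, so it suffices to match a single term of one expansion with a single term of the other.

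The core of the argument is the explicit computation of $\mathbf{pr}_{H\Theta}P$. Now the characteristics $\sum_j H^{ij}\theta_j=\sum_{j,p,\alpha_p}H^{(\alpha_p)ij}_L(\cS^p\theta_j)H^{(\alpha_p)ij}_R$ carry shifted uni-vectors $\cS^p\theta_j$, and the prolongation, acting through $\mathbf{pr}_{H\Theta}(u^l_n)=\cS^n\sum_k H^{lk}\theta_k$, deposits a third $\theta$ at a position dictated by the shift index of the generator it differentiates. One thus obtains a three-$\theta$ expression in which each $\theta$ sits at some $\cS$-shift. The key manoeuvre is to use the translation invariance of $\int\tr$ (which annihilates $(\cS-1)\hA$) to normalise one of the three uni-vectors to position $0$; the residual shifts on the other two are precisely the exponents that, under the identification \eqref{eq:fromOptoBr}, become the powers $\lambda^p$ and $\mu^q$ in the double Jacobi identity. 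Summing in pairs and invoking the graded-cyclic property \eqref{eq:tr} of the trace should then match the result, term by term, with the generator-level double Jacobi expression $\ldb u_\lambda\ldb u_\mu u\rdb\rdb_L-\ldb u_\mu\ldb u_\lambda u\rdb\rdb_R-\ldb\ldb u_\lambda u\rdb_{\lambda\mu}u\rdb_L$, read off using the master formula \eqref{eq:master} and the Leibniz and sesquilinearity rules of Definition \ref{def:dlbracket}.

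The main obstacle is exactly this bookkeeping of shifts and their translation into $\lambda$- and $\mu$-powers, which is absent in the ultralocal proof. The sesquilinearity axioms are designed to encode the shift structure, so the delicate point is to verify that the three $\cS$-shifted $\theta$'s produced by the prolongation align with the three nested-bracket terms: the single shifts on the two \emph{outer} uni-vectors should correspond to $\lambda$ and $\mu$, while the composite shift appearing when the prolongation differentiates inside an already-shifted factor of $H$ should correspond to the combined argument $\lambda\mu$ in the third term $\ldb\ldb u_\lambda u\rdb_{\lambda\mu}u\rdb_L$. Once this correspondence is pinned down, the vanishing of each triple-indexed summand of \eqref{eq:HamProp} is equivalent to the double Jacobi identity on the corresponding triple of generators, and the theorem follows from the one-to-one correspondence \eqref{eq:fromOptoBr} together with the master formula \eqref{eq:master}.
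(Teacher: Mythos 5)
Your proposal follows essentially the same route as the paper's proof: skewsymmetry is handled by directly matching \eqref{eq:fromOptoBr} against the adjoint formula \eqref{eq:adj}, and the Jacobi/Poisson equivalence is established by expanding $\mathbf{pr}_{H\Theta}P$, normalising one $\theta$ to zero shift using the translation invariance of $\int\tr$, pairing terms under the graded-cyclic trace (using skewsymmetry of $H$ and $\cS\dev_m=\dev_{m+1}\cS$), and identifying the residual shifts of the other two $\theta$'s with the powers of $\lambda$ and $\mu$ in the master-formula expansion of the double Jacobi identity --- which is precisely the computation the paper carries out in Appendix \ref{app:equivPVAJac} for the scalar case. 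The only difference is one of completeness, not of method: you state the shift bookkeeping and term-matching as a plan where the paper executes it explicitly.
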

\begin{proof}
The equivalence between the skewsymmetry of $H$ and the double bracket is due to an elementary computation. The double bracket associated to the entry $H^{ij}$ is
\begin{equation}
\sum H^{(\alpha_p)ji}_{R}\otimes H^{(\alpha_p)ji}_{L}\lambda^p=-\sum (\lambda\cS)^{-p}\left[H^{(\alpha_p)ij}_{L}\otimes H^{(\alpha_p)ij}_{R}\right],
\end{equation}
which indeed corresponds to the difference operator
\begin{equation}
-\sum\cS^{-p}\,\l_{H^{(\alpha_p)ji}_{L}}\r_{H^{(\alpha_p)ji}_{R}}=-(H^{ji})^\dagger.
\end{equation}
Note that, if the difference operator is nonlocal, then the skewsymmetry property should be checked by expanding $\ldb a_{\lambda}b\rdb$ and ${}_\to\ldb b_{(\lambda\cS)^{-1}}a\rdb$ in the same direction.

Establishing the equivalence between the double Jacobi identity and the Poisson condition for a difference operator requires some long but straightforward computations. They have the same structure as in the ultralocal case: we exhibit them for a scalar difference operator of the form
\begin{equation}\label{eq:diffop-dmpva-pf1}
H=\sum\l_{H_L^{(\alpha_p)}}\r_{H_R^{(\alpha_p)}}\cS^p
\end{equation}
in Appendix \ref{app:equivPVAJac}. The multi-component case behaves in the same way.
\end{proof}

As for the ultralocal case, we can use the double $\lambda$ brackets to describe the Hamiltonian action and the Poisson brackets defined by the operator $H$, by the identities
\begin{equation}
u^i_t=m\left(\ldb f_\lambda u^i\rdb\right)\Big|_{\lambda=1}
\end{equation}
and
\begin{equation}\label{eq:dmpva-bracket}
\left\{\smallint\tr f,\smallint\tr g\right\}=-\int\tr m\left(\ldb f_\lambda g\rdb\right)\big|_{\lambda=1}.
\end{equation}
The well-posedness of the two identities above, and in particular that the RHS of \eqref{eq:dmpva-bracket} defines a Poisson bracket on $\F$ can be easily proved, following the same lines of the proof of the similar result for double Poisson vertex algebras (namely, the partial differential case) given in \cite[Theorem 3.6]{dskv15}.

We sketch here the proof of the well-posedness of the definition of the bracket with respect to the integral operation (namely that replacing $f$ (or $g$) with $\cS f$ (or $\cS g$) in the RHS of \eqref{eq:dmpva-bracket} does not affect the result, and show that the formula \eqref{eq:defBra} coincides with the one obtained starting from the double $\lambda$ bracket.
\begin{proposition}
Let $\ldb-_\lambda -\rdb$ be the $\lambda$ bracket of a multiplicative double Poisson vertex algebra. Then
\begin{equation}\label{eq:wellJac}
\int\tr m\left(\ldb \cS f_{\lambda} g\rdb\right)\Big|_{\lambda=1}=
\int\tr m\left(\ldb  f_{\lambda}\cS g\rdb\right)\big|_{\lambda=1}=\int\tr m\left(\ldb  f_{\lambda} g\rdb\right)\big|_{\lambda=1}.
\end{equation}
\end{proposition}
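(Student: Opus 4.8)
The plan is to reduce both equalities to the two sesquilinearity axioms of the multiplicative $\lambda$ bracket, combined with the two defining properties of the projection $\int\tr\colon\A\to\F$, namely $\int\tr\cS h=\int\tr h$ and $\int\tr fg=\int\tr gf$. First I would fix the expansion $\ldb f_\lambda g\rdb=\sum_p B_p\lambda^p$ with $B_p\in\A\otimes\A$, so that applying the multiplication map and evaluating at $\lambda=1$ gives $m\left(\ldb f_\lambda g\rdb\right)\big|_{\lambda=1}=\sum_p m(B_p)$. Everything below is then controlled termwise in $p$, so no cancellation among the $B_p$ is required.

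For the first equality I use the left sesquilinearity property $\ldb \cS f_\lambda g\rdb=\lambda^{-1}\ldb f_\lambda g\rdb=\sum_p B_p\lambda^{p-1}$. Applying $m$ and setting $\lambda=1$ merely shifts the exponents by $-1$, which has no effect at $\lambda=1$, whence $m\left(\ldb \cS f_\lambda g\rdb\right)\big|_{\lambda=1}=\sum_p m(B_p)=m\left(\ldb f_\lambda g\rdb\right)\big|_{\lambda=1}$; taking $\int\tr$ yields the first identity immediately. For the second equality I use the right sesquilinearity property $\ldb f_\lambda\cS g\rdb=\lambda\cS\ldb f_\lambda g\rdb$, where the outer shift $\cS$ acts diagonally on the two tensor factors. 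Since $\cS$ is an algebra automorphism of $\A$ it commutes with the multiplication map, $m(\cS(a\otimes b))=(\cS a)(\cS b)=\cS(ab)=\cS\,m(a\otimes b)$; evaluating at $\lambda=1$ then gives $m\left(\ldb f_\lambda\cS g\rdb\right)\big|_{\lambda=1}=\cS\,m\left(\ldb f_\lambda g\rdb\right)\big|_{\lambda=1}$. Applying $\int\tr$ and invoking $\int\tr\cS h=\int\tr h$ removes the shift and produces the second identity.

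The only point requiring genuine care is the interaction, in the relation $\ldb a_\lambda\cS b\rdb=\lambda\cS\ldb a_\lambda b\rdb$, between the outer shift and the multiplication map: one must verify that $\cS$ distributes over both tensor factors \emph{before} $m$ is applied, so that the automorphism property $m\circ\cS=\cS\circ m$ can be invoked. I would also remark that the precise bookkeeping of the powers of $\lambda$ produced by the combined operator $\lambda\cS$ is irrelevant here, because the whole expression is evaluated at $\lambda=1$, where every monomial $\lambda^p$ collapses to $1$; this is what makes the argument robust and reduces the claim to the elementary properties of $\int\tr$. Once these observations are in place, both equalities are immediate.
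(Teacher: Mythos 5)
Your proof is correct and follows essentially the same route as the paper: both equalities are reduced to the two sesquilinearity axioms, the shift is absorbed using the automorphism property of $\cS$ (so that it commutes with $m$) together with the invariance of $\int\tr$ under $\cS$, and the powers of $\lambda$ become irrelevant upon evaluation at $\lambda=1$. The only difference is that you make explicit the commutation $m\circ\cS=\cS\circ m$ and the termwise bookkeeping, which the paper leaves implicit.
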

\begin{proof}
Using the sesquilinearity properties we have
\begin{align}
\int\tr m\left(\ldb \cS f_{\lambda} g\rdb\right)\Big|_{\lambda=1}&=\int\tr m\left(\lambda^{-1}\ldb f_{\lambda} g\rdb\right)\Big|_{\lambda=1},\\
\int\tr m\left(\ldb  f_{\lambda}\cS g\rdb\right)\Big|_{\lambda=1}&=\int\tr m\left(\cS\lambda\ldb  f_{\lambda} g\rdb\right)\Big|_{\lambda=1}=\int\tr m\left(\lambda\ldb  f_{\lambda} g\rdb\right)\Big|_{\lambda=1}.
\end{align}
Setting $\lambda=1$ we have \eqref{eq:wellJac}.
\end{proof}
\begin{proposition}\label{thm:bracketsame}
Let $H$ be a Poisson difference operator of the form \eqref{eq:matrdiffop-compo} and let $\ldb u^i_{\lambda} u^j\rdb$ the corresponding $\lambda$ bracket of a multiplicative double Poisson vertex algebra as in  \eqref{eq:fromOptoBr}. Then the Poisson bracket defined by $H$ as in \eqref{eq:defBra} is equal to \eqref{eq:dmpva-bracket}.
\end{proposition}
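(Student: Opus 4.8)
The plan is to evaluate the right-hand side of \eqref{eq:dmpva-bracket} by inserting the master formula \eqref{eq:master} with $a=f$ and $b=g$, applying the multiplication map $m$, the projection $\int\tr$ and the specialisation $\lambda=1$, and then reorganising the outcome until the variational derivatives \eqref{eq:varder_expl} and the operator entries resurface. This is the difference-operator analogue of the direct computation that establishes the ultralocal identity \eqref{eq:PoisB-PA-def}, and I would set it up to run parallel to (the relevant half of) the proof of Proposition \ref{thm:poisdouble-ul}, the only genuinely new ingredient being the shift bookkeeping forced by the sesquilinearity factors $(\lambda\cS)^m$ and $(\lambda\cS)^{-n}$ appearing in \eqref{eq:master}.

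Concretely, setting $\lambda=1$ collapses every $(\lambda\cS)^{\pm k}$ to $\cS^{\pm k}$ and turns the powers $\lambda^p$ carried by $\ldb u^i{}_{\lambda\cS}u^j\rdb$ into $1$, while $m$ converts each tensor product into an ordinary (noncommutative) product. A single generic summand then reads $[\,\dev g/\dev u^j_m\text{-factors}\,]\,H^{(\alpha_p)ji}_L\,[\,\dev f/\dev u^i_n\text{-factors}\,]\,H^{(\alpha_p)ji}_R$, with assorted shifts $\cS^m,\cS^{-n},\cS^p$ distributed over the factors; here $H^{(\alpha_p)ji}_L\otimes H^{(\alpha_p)ji}_R$ are the components of $\ldb u^i_\lambda u^j\rdb$ prescribed by \eqref{eq:fromOptoBr}. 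I would now invoke the two defining invariances of $\int\tr$, namely $\int\tr\cS h=\int\tr h$ and $\int\tr ab=\int\tr ba$. Cyclicity brings the two Sweedler components of $\dev g/\dev u^j_m$ adjacent, and the freedom to apply an overall shift realigns them, together with the sesquilinearity factor, into $\sum_m\cS^{-m}m\big((\dev g/\dev u^j_m)^\sigma\big)=\delta g/\delta u^j$ via \eqref{eq:varder_expl}; the same manoeuvre collects the $f$-components into $\delta f/\delta u^i$, while the relative shift $\cS^p$ surviving between the central factors rebuilds the operator $H^{ji}=\sum_{p,\alpha_p}\l_{H^{(\alpha_p)ji}_L}\r_{H^{(\alpha_p)ji}_R}\cS^p$ acting on $\delta f/\delta u^i$. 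Since $\cS$ is an automorphism, the uniformly shifted Sweedler pairs multiply back correctly, so no spurious shifts remain. The upshot is
\[
\int\tr m\big(\ldb f_\lambda g\rdb\big)\big|_{\lambda=1}=\int\tr\sum_{i,j}\frac{\delta g}{\delta u^j}\,H^{ji}\!\left(\frac{\delta f}{\delta u^i}\right).
\]

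It remains to match this with \eqref{eq:defBra}. Relabelling the dummy indices $i\leftrightarrow j$ and using the adjunction identity $\int\tr\big(a\,K(b)\big)=\int\tr\big(K^\dagger(a)\,b\big)$ — an immediate consequence of the definition \eqref{eq:adj} of the formal adjoint together with the shift- and cyclic-invariance of $\int\tr$ — as well as the skewsymmetry $(H^{ij})^\dagger=-H^{ji}$ (available because a Poisson operator is in particular skewsymmetric), I would rewrite the displayed expression as $-\int\tr\sum_{i,j}\frac{\delta f}{\delta u^i}H^{ij}\big(\frac{\delta g}{\delta u^j}\big)$. The overall minus sign in \eqref{eq:dmpva-bracket} then cancels, yielding exactly the bracket \eqref{eq:defBra}, as claimed.

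I expect the main obstacle to be entirely combinatorial: confirming that, after $m$, the shifts $\cS^m,\cS^{-n},\cS^p$ emitted by the master formula distribute over the factors in precisely the pattern that shift-invariance can rebalance into the two variational derivatives and into the single $\cS^p$ reconstituting $H^{ji}$, with no residual shift left uncancelled. This is the step that genuinely uses the D$\Delta$E (rather than merely ultralocal) structure, and where one must be most careful; once it is checked, the multi-component case follows verbatim and the well-posedness with respect to the integral operation is already guaranteed by the preceding Proposition. Alternatively, one could bypass the bookkeeping by appealing to the partial-differential prototype \cite[Theorem 3.6]{dskv15}, of which this statement is the multiplicative counterpart.
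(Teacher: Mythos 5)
Your proposal is correct and follows essentially the same route as the paper's proof: insert the master formula \eqref{eq:master} into \eqref{eq:dmpva-bracket}, use the shift- and cyclic-invariance of $\int\tr$ to reassemble the Sweedler factors into the variational derivatives, obtaining $-\{F,G\}=\int\tr\,\frac{\delta g}{\delta u^j}\,H^{ji}\bigl(\frac{\delta f}{\delta u^i}\bigr)$, and then use the skewsymmetry of $H$ to convert this into \eqref{eq:defBra}. The only (cosmetic) difference is in the last step: the paper expands the skewsymmetry explicitly in components and shifts the whole integrand by $\cS^{p}$, whereas you invoke the adjunction identity $\int\tr\bigl(a\,K(b)\bigr)=\int\tr\bigl(K^\dagger(a)\,b\bigr)$ together with $(H^{ij})^\dagger=-H^{ji}$ abstractly — the same computation packaged differently.
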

\begin{proof}
Using the master formula \eqref{eq:master} with \eqref{eq:dmpva-bracket} we obtain
\begin{equation}
\begin{split}
-\left\{\smallint\tr f,\smallint\tr g\right\}&=\int\tr \left[\left(\dev_{u^j_n} g\right)'\left(\cS^{m} H^{(\alpha_p)ji}_L\right)\left[\cS^{m+p-n}\left(\dev_{u^i_n} f\right)''\left(\dev_{u^i_n} f\right)'\right]\left(\cS^m H^{(\alpha_p)ji}_R\right)\left(\dev_{u^j_n} g\right)''\right]\\
&=\int\tr \left[\left(\dev_{u^j_n} g\right)'\left(\cS^{m} H^{(\alpha_p)ji}_L\right)\left(\cS^{m+p}\frac{\delta f}{\delta u^i}\right)\left(\cS^m H^{(\alpha_p)ji}_R\right)\left(\dev_{u^j_n} g\right)''\right]\\
&=\int\tr\left[\left[\cS^{-m}\left(\dev_{u^j_n} g\right)''\left(\dev_{u^j_n} g\right)'\right]H^{(\alpha_p)ji}_L\left(\cS^{p}\frac{\delta f}{\delta u^i}\right) H^{(\alpha_p)ji}_R\right]\\
&=\int\tr\left[\frac{\delta g}{\delta u^j}H^{(\alpha_p)ji}_L\left(\cS^p\frac{\delta f}{\delta u^i}\right)H^{(\alpha_p)ji}_R\right].
\end{split}
\end{equation}
From the skewsymmetry of the operator $H$, this turns out to be the same as
\begin{equation}
\left\{\smallint\tr f,\smallint\tr g\right\}=\int\tr\left[\left(\cS^{-p}\frac{\delta f}{\delta u^i}\right)\left(\cS^{-p}H^{(\alpha_p)ij}_L\right)\frac{\delta g}{\delta u^j}\left(\cS^{-p}H^{(\alpha_p)ij}_R\right)\right].
\end{equation}
Finally, by shifting all the integrand by $p$, we obtain
\begin{equation}
\left\{\smallint\tr f,\smallint\tr g\right\}=\int\tr\frac{\delta f}{\delta u^i}H^{(\alpha_p)ij}_L\left(\cS^p\frac{\delta g}{\delta u^j}\right)H^{(\alpha_p)ij}_R=\int\tr \frac{\delta f}{\delta u^i}H^{ij}\left(\frac{\delta g}{\delta u^j}\right)
\end{equation}
as in the definition we gave in \eqref{eq:defBra}.
\end{proof}
Proposition \ref{thm:bracketsame} establishes the equivalence between the language of Poisson operators and double multiplicative PVAs not only at the level of the operators (namely, between Poisson operators and $\lambda$ brackets), but also at the level of the Poisson brackets themselves.
\section{A path to noncommutative Poisson geometry}\label{sec:geom}
The $\theta$ formalism, introduced as a mere computational tool in Section \ref{sec:Ham_formalism}, deserves to be investigated in further detail. In this section we  find a characterisation of Poisson bivectors in terms of a noncommutative version of the Schouten torsion -- this allows us to exploit the standard machinery of Poisson geometry to treat Hamiltonian and integrable systems, even defining the Poisson-Lichnerowicz complex for nonabelian Poisson manifolds and its cohomology.

\subsection{$\theta$ formalism and functional polyvector fields}\label{ssec:theta}
The complex of functional polyvector fields, which is presented in \cite{CW19,Kup85} for the difference Abelian case, generalises naturally to the nonabelian case we are dealing with; the $\theta$ formalism for the nonabelian differential case is presented in \cite{OS98} and the one for the difference case in \cite{cw19-2}.

The aim of this section is reviewing the $\theta$ formalism used in Section \ref{sec:Ham_formalism} and, in analogy with the well-established theory for the commutative case, establishing the isomorphism between polynomials in $\theta$ and polyvector fields we implicitly use when representing skewsymmetric operators with degree 2 polynomials (as in Equation \eqref{eq:defBiv}), providing explicit formulae for their Schouten brackets. 

\begin{definition}A local $p$-vector field is a $p$-alternating map from $\F^p$ to $\F$; it is then of the form
\begin{equation}
B(F_1,\dots,F_p)=\int\tr\left[B^{i_1,\dots,i_p}_{(1)n_1,\ldots,n_p}\left(\cS^{n_1}\frac{\delta F_1}{\delta u^{i_1}}\right)B^{i_1,\dots,i_p}_{(2)n_1,\ldots,n_p}\cdots B^{i_1,\dots,i_p}_{(p)n_1,\ldots,n_p}\left(\cS^{n_p}\frac{\delta F_p}{\delta u^{i_p}}\right)\right]\label{eq:defpolyv}
\end{equation}
with $B_{(r)}$ in $\A$ such that $B(\sigma(F_1),\ldots,\sigma(F_p))=(-1)^{|\sigma|}B(F_1,\dots,F_p)$.
\end{definition}
A $0$-vector field is clearly just an element of $\F$, namely a local functional. A $1$-vector field has the form
$$
X(F)=\int\tr X^i_n\cS^n\frac{\delta F}{\delta u^i},\qquad F\in\F
$$
which is equivalent, under the integral and cyclic permutation, to the action of the (sum of) evolutionary vector field of characteristics $\cS^{-n}X^i_n$ on $F$ (see Equation \eqref{eq:varderproof}): the notion of local $1$-vector fields coincides with the notion of evolutionary vector fields on $\F$.
More interestingly, let us consider a local 2-vector field:
\begin{equation}
\begin{split}
B(F,G)&=\int\tr B^{ij}_{(1)mn}\left(\cS^{m}\frac{\delta F}{\delta u^i}\right)B^{ij}_{(2)mn}\left(\cS^{n}\frac{\delta G}{\delta u^j}\right)\\
&=\int\tr\frac{\delta F}{\delta u^i}\tilde{B}^{ij}_{(2)n'} \left(\cS^{n'}\frac{\delta G}{\delta u^j}\right)\tilde{B}^{ij}_{(1)n'}
\end{split}
\end{equation}
where $n'=n-m$, $\tilde{B}^{ij}_{(1)n'}=\cS^{-m}B^{ij}_{(1)mn}$ and $\tilde{B}^{ij}_{(2)n'}=\cS^{-m}B^{ij}_{(2)mn}$, by the cyclic property of the trace and integrating by parts to get rid of $\cS^{m}$. It matches the definition of a bracket given in \eqref{eq:defBra}, for an operator $K$ of the form $\r_{\tilde{B}_{(1)n'}}\l_{\tilde{B}_{(2)n'}}\cS^{n'}$. We also observe that the skewsymmetry property for a bivector field,
$B(F,G)=-B(G,F)$, can be written as
\begin{equation}
\begin{split}
\int\tr\frac{\delta F}{\delta u^i}B^{ij}_{(2)n}\left(\cS^{n}\frac{\delta G}{\delta u^j}\right)B^{ij}_{(1)n}&=-\int\tr\frac{\delta G}{\delta u^j}B^{ji}_{(2)n'}\left(\cS^{n'}\frac{\delta F}{\delta u^i}\right)B^{ji}_{(1)n'}\\
&=-\int\tr \frac{\delta F}{\delta u^i}\cS^{-n'}\left(B^{ji}_{(1)n'}\frac{\delta G}{\delta u^j}B^{ji}_{(2)n'}\right),
\end{split}
\end{equation}
which is exactly the skewsymmetry of the aforementioned operator $K$. One can read how the requirement that a polyvector field is an alternating map affects the coefficients in the same fashion. Moreover, since in $\F$ one can always integrate by parts to get rid of the shift operator acting on one (``the first'') argument of the polyvector field, it is apparent that a local $p$-vector field is defined by a totally skewsymmetric difference operator with $p-1$ arguments.

Let us now consider the spaces $\hA$ and $\hF$ introduced in Definition \ref{def:thetadens} and \ref{def:thetafield}; the $\theta$ \emph{variational derivative} can be defined, in analogy with the standard one given in \eqref{eq:varder_expl}, as
\begin{equation}
\frac{\delta f}{\delta \theta_i}:=\cS^{-n}m\left(\frac{\dev f}{\dev \theta_{i,n}}\right)^\sigma
\end{equation}
and it satisfies the same properties \eqref{eq:varder12}.

\begin{proposition}
The space $\hF$ is isomorphic to the space of local polyvector fields; in particular, an element of $\hF^p$ is in a one-to-one correspondence with a local $p$-vector field.
\end{proposition}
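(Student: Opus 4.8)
The plan is to exhibit mutually inverse linear maps between $\hF^p$ and the space of local $p$-vector fields \eqref{eq:defpolyv}, and to check that both are well defined and compatible with the $\theta$-grading. In one direction, to a density $\omega=B_{(1)}\theta_{i_1,n_1}B_{(2)}\cdots B_{(p)}\theta_{i_p,n_p}\in\hA^p$ I associate the antisymmetrised evaluation
\[
\Phi([\omega])(F_1,\dots,F_p)=\frac{1}{p!}\sum_{\pi\in S_p}(-1)^{|\pi|}\int\tr\Big[\omega\big|_{\theta_{i_r,n_r}\mapsto \cS^{n_r}\frac{\delta F_{\pi(r)}}{\delta u^{i_r}}}\Big],
\]
where the $r$-th factor $\theta_{i_r,n_r}$ in the chosen reading order of $\omega$ is replaced by $\cS^{n_r}\delta F_{\pi(r)}/\delta u^{i_r}$; this is manifestly $p$-alternating. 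In the opposite direction, to a local $p$-vector field written as in \eqref{eq:defpolyv} I associate (call this map $\Psi$) the class in $\hF^p$ of the density obtained by reinstating $\theta_{i_r,n_r}$ in place of $\cS^{n_r}\delta F_r/\delta u^{i_r}$; equivalently, the coefficients $B_{(r)}$ can be recovered from $\omega$ by repeated application of the $\theta$ variational derivatives $\delta/\delta\theta_i$ introduced above. The low-degree instances $p=1$ (evolutionary vector fields) and $p=2$ (the bracket \eqref{eq:defBra} and its skewsymmetric operator) were already discussed; the proposition extends them to all $p$.

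First I would prove that $\Phi$ descends to the quotient $\hF=\hA/((\cS-1)\hA+[\hA,\hA])$. Because the substitution intertwines the shift on $\hA$, where $\cS\theta_{i,n}=\theta_{i,n+1}$, with the shift on $\A$, any two representatives differing by an element of $(\cS-1)\hA$ evaluate to a total shift, which is annihilated by $\int$; this is the $\theta$ analogue of \eqref{eq:varder12}. For the ideal $[\hA,\hA]$ the crucial point is that after substitution all factors land in the even algebra $\A$, where the trace is cyclic, and the sign $(-1)^{|a||b|}$ defining the graded commutator in Definition \ref{def:thetafield} is exactly compensated by the antisymmetrisation over $S_p$ together with the graded commutativity \eqref{eq:tr} of the trace form. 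Concretely, moving one $\theta$-factor cyclically past another in a representative changes both the reading order, hence the permutation $\pi$, and produces the sign prescribed by \eqref{eq:tr}; a short computation, already visible in the $p=2$ skewsymmetry identity established earlier in this subsection, shows that these two effects cancel, so $\Phi$ is constant on each class of $\hF^p$.

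The main work, and the main obstacle, is bijectivity. Surjectivity reduces to the identity $\Phi\circ\Psi=\mathrm{id}$, which amounts to the reproducing property of $\delta F_r/\delta u^{i_r}$ under the pairing \eqref{eq:varderproof}. Injectivity is the delicate step: I must show that a density whose associated alternating map vanishes identically already lies in $(\cS-1)\hA+[\hA,\hA]$. The strategy is to reduce $\omega$ to a normal form -- using $(\cS-1)\hA$ to set $n_1=0$ on the first $\theta$, and using the graded commutators to fix a standard ordering of the remaining $\theta$'s -- and then to invoke a nondegeneracy (``fundamental lemma of the calculus of variations'') argument: since the variational derivatives $\delta F_r/\delta u^{i_r}$ can be prescribed arbitrarily and independently, the vanishing of $\Phi([\omega])$ for all $F_1,\dots,F_p$ forces each normal-form coefficient to vanish modulo $(\cS-1)$ and $[\A,\A]$, that is $[\omega]=0$ in $\hF^p$. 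The difficulty here is genuinely noncommutative: because we quotient by commutators one cannot read off coefficients pointwise, and the reduction to normal form must be carried out compatibly with the $\theta$-oddness; controlling this interplay is where I expect to spend most of the effort. Finally, since the substitution preserves $\theta$-degree, $\Phi$ and $\Psi$ restrict to mutually inverse bijections $\hF^p\leftrightarrow\{\text{local }p\text{-vector fields}\}$, which assemble into the claimed isomorphism of $\hF$ with the full complex of local polyvector fields.
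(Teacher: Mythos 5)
Your proposal follows essentially the same route as the paper's proof: the paper, too, establishes the correspondence by exhibiting two mutually inverse maps, its forward map (written intrinsically via iterated derivatives $\dev/\dev\theta_{i,n}$, the $\star$-product and the multiplication map $m$) being exactly your antisymmetrised substitution $\Phi$ --- the signed sum over $S_p$ is generated automatically by the graded Leibniz rule --- and its backward map being your $\Psi$, the only difference being where the $1/p!$ normalisation sits (the paper puts it in \eqref{eq:pvect-gen}, you put it in $\Phi$). The paper then delegates all verifications to the commutative analogue \cite[Proposition 2]{CW19}; your discussion of descent to the quotient and of $\Phi\circ\Psi=\mathrm{id}$ is sound and more explicit than what the paper records.

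There is, however, a genuine flaw in your injectivity sketch. You assert that ``the variational derivatives $\delta F_r/\delta u^{i_r}$ can be prescribed arbitrarily and independently''. This is false: the image of the variational derivative is a proper subspace of $\A^\ell$, cut out by Helmholtz-type integrability conditions. Already in the scalar case, $u_1+u_{-1}=\tfrac{\delta}{\delta u}\int\tr uu_1$ is a variational derivative while $u_1$ alone is not; indeed, from \eqref{eq:varder_expl} one sees that, in degree one in $u$, the image of $\delta/\delta u$ is spanned by $u$ and the symmetric combinations $u_n+u_{-n}$. Hence your concluding ``fundamental lemma'' step, as stated, does not go through: the vanishing of $\Phi([\omega])$ only constrains the pairing of the normal-form coefficients against this restricted subspace, not against arbitrary tuples in $\A^\ell$. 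What is actually needed --- and what the commutative argument cited by the paper supplies --- is the nondegeneracy of the pairing of normal-form densities against tuples of genuine variational derivatives, equivalently (by \eqref{eq:varderproof}) against prolongations evaluated on arbitrary local functionals, chosen from a sufficiently rich family (e.g.\ monomial functionals) to separate coefficients modulo $(\cS-1)\A+[\A,\A]$. This statement is true and your normal-form reduction is the correct first step, but closing the argument requires replacing the arbitrariness claim with that nondegeneracy result; as written, the step would fail.
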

\begin{proof}
The lines of the proof of the analogue theorem for the commutative case \cite[Proposition 2]{CW19} are still valid in the noncommutative setting. Given an element $\int \tr B\in\hF^p$, the corresponding $p$-vector field is given by
\begin{equation}
\tilde{B}(F_1,\ldots,F_p)=\int\tr m\left[\frac{\dev}{\dev\theta_{i_p,n_p}}\cdots m\left(\frac{\dev}{\dev\theta_{i_1,n_1}}B\star\cS^{n_1}\frac{\delta F_1}{\delta u^{i_1}}\right)\star\cdots\cS^{n_p}\frac{\delta F_p}{\delta u^{i_p}}\right].
\end{equation}
Conversely, given a $p$-vector field of the form \eqref{eq:defpolyv}, the corresponding element in $\hF^p$ is
\begin{equation}\label{eq:pvect-gen}
\frac{1}{p!}\int\tr\left[B^{i_1,\dots,i_p}_{(1)n_1,\ldots,n_p}\theta_{i_1,n_1}B^{i_1,\dots,i_p}_{(2)n_1,\ldots,n_p}\cdots B^{i_1,\dots,i_p}_{(p)n_1,\ldots,n_p}\theta_{i_p,n_p}\right].
\end{equation}
\end{proof}
Note that a $p$-vector of the form \eqref{eq:pvect-gen} can always be integrated by parts and cyclically permuted in such a way that it can be written as 
$$
\int\tr\theta_{i_1}K\left(\theta_{i_2},\ldots,\theta_{i_p}\right),
$$
with $K$ a $(p-1)$ difference operator acting on $\theta$'s. 

As anticipated in Section \ref{sec:Ham_formalism}, the existence of the isomorphism between the space of local polyvector fields and $\hF$ allows us to perpetrate an abuse of language and identify the former with the latter one. This leads us to introduce a tailored notion of Schouten bracket on $\hF$. In Section \ref{ssec:sch-ul} we present the construction for double Poisson algebras, namely in the case of ODEs; in Section \ref{ssec:sch-l} we repeat the construction for multiplicative double Poisson vertex algebras, namely those tailored on nonabelian differential-difference systems. In both cases, we start defining a double bracket on $\hA$, and then use it to obtain a well-defined bracket on $\hF$ satisfying the axioms of a Gerstenhaber algebra, namely proving that it is indeed the Schouten bracket for the corresponding local polyvector fields.
\subsection{Schouten brackets for nonabelian ODEs}\label{ssec:sch-ul}
Let us consider the ``$\theta$ version" of an ultralocal algebra $\A_0$, as the one introduced in Section \ref{ssec:dPA}. We have  $\hat{\A_0}=\A_0[\theta_1,\ldots,\theta_\ell]$. Similarly we have the space $\hat{\F_0}=\hat{\A_0}/[\hat{\A_0},\hat{\A_0}]$.

First, we need to introduce a graded version of the operations we have defined on $\A^{\otimes n}$ (since the shift operation is not involved, the following are the same for both $\hat{\A_0}$ and $\hat{\A}$). We have
\begin{gather}\label{eq:superswap}
(a\otimes b)^\sigma=(-1)^{|a||b|}b\otimes a,\\
\tau\left(a_1\otimes\a_{n-1}\otimes a_n\right)=(-1)^{|a_n|(|a_1|+\cdots+|a_{n-1}|)}a_n\otimes a_1 \cdots \otimes a_{n-1},\\
a \otimes b\otimes_1 c=(-1)^{|b||c|}a\otimes c\otimes b,\\\label{eq:superinner}
(a\otimes b)\star c=(-1)^{|b||c|}ac\otimes b,\quad a\star(b\otimes c)=(-1)^{|a||b|}b\otimes ac,\\ \label{eq:superbullet}
(a \otimes b)\bullet(c\otimes d)=(-1)^{|b|(|c|+|d|)}ac\otimes db.
\end{gather}
Moreover,
$$
\frac{\dev}{\dev z}(ab)=\frac{\dev a}{\dev z}b+(-1)^{|z||a|}a\frac{\dev b}{\dev z}
$$
and, by the graded version of the trace,
\begin{equation}\label{eq:supertr}
\int\tr ab=(-1)^{|a||b|}\int\tr ba.
\end{equation}

Our aim is to define the Schouten bracket among ultralocal polyvector fields, namely in $\hat{\F_0}$. For this, we introduce a degree -1 double bracket $\hat{\A}_0\times \hat{\A}_0\to \hat{\A}_0\otimes \hat{\A}_0$ given by the formula
\begin{equation}\label{eq:ds_master_ul}
\lsb a,b\rsb :=\sum_{i=1}^\ell\left(\frac{\dev b}{\dev u^i}\bullet\left(\frac{\dev a}{\dev \theta_i}\right)^\sigma+(-1)^{|b|}\frac{\dev b}{\dev \theta_i}\bullet\left(\frac{\dev a}{\dev u^i}\right)^\sigma\right).
\end{equation}

\begin{proposition}\label{thm:ds_prop}
The bracket \eqref{eq:ds_master_ul} satisfies the following basic properties:
\begin{enumerate}[label=(\roman*)]
\item $\lsb b,a\rsb=-(-1)^{(|a|-1)(|b|-1)}\lsb a,b\rsb^\sigma$ (graded skewsymmetry);
\item $\lsb a,bc\rsb=b\lsb a,c\rsb+(-1)^{(|a|-1)|c|}\lsb a,b\rsb c$ (graded Leibniz property).
\end{enumerate}
\end{proposition}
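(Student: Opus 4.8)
The plan is to prove both properties by a direct computation from the defining formula \eqref{eq:ds_master_ul}, carefully tracking the Koszul signs produced by the graded operations \eqref{eq:superswap}--\eqref{eq:superbullet}. Throughout I abbreviate $\dev_{u^i}a:=\dev a/\dev u^i$ and $\dev_{\theta_i}a:=\dev a/\dev\theta_i$, and record the degrees $|\dev_{u^i}a|=|a|$ and $|\dev_{\theta_i}a|=|a|-1$, so that each summand of \eqref{eq:ds_master_ul} has degree $|a|+|b|-1$, consistent with the bracket being of degree $-1$. The two structural facts I will use repeatedly are the graded Leibniz rule for the double derivatives and the interaction of $\bullet$ with the outer bimodule multiplication: left multiplication by an element factors out of $\bullet$ with no sign, while right multiplication by $c$ passes through $\bullet$ at the cost of a Koszul sign $(-1)^{|c||C|}$, where $C$ denotes the second bullet factor.

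For the graded Leibniz property (ii), I would expand $\dev_{u^i}(bc)=(\dev_{u^i}b)c+b(\dev_{u^i}c)$ and $\dev_{\theta_i}(bc)=(\dev_{\theta_i}b)c+(-1)^{|b|}b(\dev_{\theta_i}c)$ inside \eqref{eq:ds_master_ul}. The two contributions in which $b$ sits on the left reassemble, after pulling $b$ out of the bullet products, into $b\lsb a,c\rsb$. The two contributions in which $c$ sits on the right reassemble into $\lsb a,b\rsb c$: here $c$ must be commuted past the $\sigma$-transposed derivative of $a$, which has degree $|a|-1$ in the $\dev_{u^i}$ summand and degree $|a|$ in the $\dev_{\theta_i}$ summand; once the various prefactors are folded in, the net sign works out in both cases to the uniform $(-1)^{(|a|-1)|c|}$ after reducing exponents modulo $2$ (using $|a|+1\equiv|a|-1$). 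Collecting the four pieces yields exactly $b\lsb a,c\rsb+(-1)^{(|a|-1)|c|}\lsb a,b\rsb c$.

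For the graded skewsymmetry (i), the key preliminary step is the graded antiautomorphism identity for $\sigma$ with respect to $\bullet$,
\[
(B\bullet C)^\sigma=(-1)^{|B||C|}\,C^\sigma\bullet B^\sigma,
\]
which differs from its ungraded counterpart by the Koszul factor $(-1)^{|B||C|}$ and follows from \eqref{eq:superswap} and \eqref{eq:superbullet}; I will also use that $\sigma^2=\mathrm{id}$. Applying $\sigma$ to each of the two summands of $\lsb a,b\rsb$ with this identity, and recording $|\dev_{u^i}b|=|b|$, $|\dev_{\theta_i}b|=|b|-1$, rewrites $\lsb a,b\rsb^\sigma$ as a sum of the two terms $\dev_{\theta_i}a\bullet(\dev_{u^i}b)^\sigma$ and $\dev_{u^i}a\bullet(\dev_{\theta_i}b)^\sigma$ with explicit signs $(-1)^{|b|(|a|-1)}$ and $(-1)^{|b|+(|b|-1)|a|}$. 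On the other hand, $\lsb b,a\rsb$ is by definition the sum of those same two terms with coefficients $(-1)^{|a|}$ and $1$. The proof then reduces to checking that multiplying the former by $-(-1)^{(|a|-1)(|b|-1)}$ reproduces these coefficients, i.e. to the two parity identities $(|a|-1)(|b|-1)+|b|(|a|-1)\equiv|a|-1$ and $(|a|-1)(|b|-1)+|b|+(|b|-1)|a|\equiv 1\pmod 2$, both of which hold.

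The Leibniz expansions and the bimodule manipulations are routine; the genuine obstacle is the sign bookkeeping in (i), and specifically getting the graded antiautomorphism law for $\sigma$ right, since the ungraded form carries no $(-1)^{|B||C|}$ and an error there would silently corrupt the final comparison of coefficients. I would therefore isolate and verify that identity first on pure tensors $B=a_1\otimes a_2$, $C=c_1\otimes c_2$, and only then assemble the term-by-term comparison against $\lsb b,a\rsb$.
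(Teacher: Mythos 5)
Your proposal is correct and follows essentially the same route as the paper: both properties are established by direct computation from \eqref{eq:ds_master_ul}, with (ii) obtained from the graded Leibniz expansion of the double derivatives (pulling $b$ out of the bullet product on the left at no cost and commuting $c$ out on the right with the Koszul sign), and (i) hinging on the graded antiautomorphism law $(B\bullet C)^\sigma=(-1)^{|B||C|}\,C^\sigma\bullet B^\sigma$, which is exactly the paper's identity \eqref{eq:superswapbull}. The only organizational difference is that you apply $\sigma$ to $\lsb a,b\rsb$ and match coefficients against the definition of $\lsb b,a\rsb$ working with total degrees, whereas the paper rewrites $\lsb b,a\rsb$ directly and tracks Sweedler-component signs (such as $b_\theta'b_\theta''$) that cancel at the end --- a cosmetic, not substantive, distinction.
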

Note that the signs for the Leibniz property imply that this bracket is a derivation \emph{from the right}, i.e. $D(ab)=aD(b)+(-1)^{|D||b|}D(a)b$.
\begin{proof}
(i). A direct computations using the definition \eqref{eq:ds_master_ul} shows that
\begin{equation}\label{eq:skew_dsb}
\begin{split}
\lsb b,a\rsb&=\sum(-1)^{b_\theta' b_\theta''}\left(\left(\frac{\dev a}{\dev u^i}\right)'\otimes\left(\frac{\dev a}{\dev u^i}\right)''\right)\bullet\left(\left(\frac{\dev b}{\dev \theta_i}\right)''\otimes\left(\frac{\dev b}{\dev \theta_i}\right)'\right)\\
&\quad+(-1)^{|a|+b_u' b_u''}\left(\left(\frac{\dev a}{\dev \theta_i}\right)'\otimes\left(\frac{\dev a}{\dev \theta_i}\right)''\right)\bullet\left(\left(\frac{\dev b}{\dev u^i}\right)''\otimes\left(\frac{\dev b}{\dev u^i}\right)'\right),
\end{split}
\end{equation}
where for shorthand we denote $\deg_\theta (\dev_ub)'=b_u'$ and so on. Then, by combining \eqref{eq:superbullet} and \eqref{eq:superswap} we get
\begin{equation}\label{eq:superswapbull}
\left[(a\otimes b)\bullet (c\otimes d)\right]^\sigma=(-1)^{(|a|+|b|)(|c|+|d|)}(c\otimes d)^\sigma\bullet(a\otimes b)^\sigma.
\end{equation}
This means that for \eqref{eq:skew_dsb} we have
\begin{equation}
\begin{split}
\lsb b,a\rsb&=\sum(-1)^{b_\theta'b_\theta''+(a_u'+a_u'')(b_\theta'+b_\theta'')+b_\theta'b_\theta''+a'_ua''_u}\left[\left(\left(\frac{\dev b}{\dev \theta_i}\right)'\otimes\left(\frac{\dev b}{\dev \theta_i}\right)''\right)\bullet \left(\left(\frac{\dev a}{\dev u^i}\right)''\otimes\left(\frac{\dev a}{\dev u^i}\right)'\right)\right]^\sigma\\
&\quad+(-1)^{|a|+b_u' b_u''+(a_\theta'+a_\theta'')(b_u'+b_u'')+b'_ub''_u+a'_\theta a''_\theta}\left[\left(\left(\frac{\dev b}{\dev u^i}\right)'\otimes\left(\frac{\dev b}{\dev u^i}\right)''\right)\bullet\left(\left(\frac{\dev a}{\dev \theta_i}\right)''\otimes\left(\frac{\dev a}{\dev \theta_i}\right)'\right)\right]^{\sigma}\\
&=\sum\left[(-1)^{(a_u'+a_u'')(b_\theta'+b_\theta'')}\frac{\dev b}{\dev \theta_i}\bullet\left(\frac{\dev a}{\dev u^i}\right)^\sigma+(-1)^{|a|+(a_\theta'+a_\theta'')(b_u'+b_u'')}\frac{\dev b}{\dev u^i}\bullet\left(\frac{\dev a}{\dev \theta_i}\right)^\sigma\right]^\sigma.
\end{split}
\end{equation}
We finally observe that $(z'_u+z''_u)=|z|$ and $(z'_\theta+z''_\theta)=|z|-1$ for $z=a,b$, so that the two exponents are respectively $|a|(|b|-1)$ and $|a|+(|a|-1)|b|$. By rearranging the two terms of the sum and collecting common factors we obtain $\lsb b,a\rsb=(-1)^{|a||b|+|a|+|b|}\lsb a,b\rsb^\sigma$ as claimed.

(ii) From a straightforward application of the formula \eqref{eq:ds_master_ul} we have
\begin{equation}
\begin{split}
\lsb a, bc\rsb&=\left(\frac{\dev b}{\dev u^i}c+b\frac{\dev c}{\dev u^i}\right)\bullet\left(\frac{\dev a}{\dev\theta_i}\right)^\sigma+(-1)^{|b|+|c|}\left(\frac{\dev b}{\dev\theta_i}c+(-1)^{|b|}b\frac{\dev c}{\dev\theta_i}\right)\bullet\left(\frac{\dev a}{\dev u^i}\right)^\sigma\\
&=b\left[\frac{\dev c}{\dev u^i}\bullet\left(\frac{\dev a}{\dev \theta_i}\right)^\sigma+(-1)^{|c|}\frac{\dev c}{\dev\theta_i}\bullet\left(\frac{\dev a}{\dev u^i}\right)^\sigma\right]\\
&\quad+\left(\left(\frac{\dev b}{\dev u^i}\right)'\otimes \left(\frac{\dev b}{\dev u^i}\right)''c\right)\bullet\left(\frac{\dev a}{\dev\theta_i}\right)^\sigma+(-1)^{|b|+|c|}\left(\left(\frac{\dev b}{\dev \theta_i}\right)'\otimes \left(\frac{\dev b}{\dev \theta_i}\right)''c\right)\bullet\left(\frac{\dev a}{\dev u^i}\right)^\sigma\\
&=b\lsb a,c\rsb +(-1)^{|c|(|a|-1)}\left[\frac{\dev b}{\dev u^i}\bullet\left(\frac{\dev a}{\dev \theta_i}\right)^\sigma\right]c+(-1)^{|b|+|c|+|a||c|}\left[\frac{\dev b}{\dev \theta_i}\bullet\left(\frac{\dev a}{\dev u^i}\right)^\sigma\right]c.
\end{split}
\end{equation}
Collecting the common factor in the second and third summands we obtain $(-1)^{|c|(|a|-1)}\lsb a,b\rsb c$ as claimed.
\end{proof}

\begin{proposition} The bracket \eqref{eq:ds_master_ul} enjoys the following ``right Leibniz property'':
\begin{equation}\label{eq:ds_leib_l}
\lsb ab,c\rsb=\lsb a,c\rsb\star b +(-1)^{|a|(|c|-1)}a\star\lsb b,c\rsb.
\end{equation}
\end{proposition}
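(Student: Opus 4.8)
The plan is to obtain \eqref{eq:ds_leib_l} as a formal consequence of the two properties already proved in Proposition \ref{thm:ds_prop}, graded skewsymmetry (i) and the graded (left) Leibniz property (ii), rather than by expanding the master formula \eqref{eq:ds_master_ul} directly. The idea is to transport the product $ab$ out of the first argument of $\lsb-,-\rsb$ into the second argument using skewsymmetry, differentiate there with (ii), and then transport everything back, again by skewsymmetry.

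First I would use (i) to write
\begin{equation}
\lsb ab,c\rsb=-(-1)^{(|c|-1)(|a|+|b|-1)}\lsb c,ab\rsb^\sigma,
\end{equation}
and then expand the inner bracket with (ii) to get $\lsb c,ab\rsb=a\lsb c,b\rsb+(-1)^{(|c|-1)|b|}\lsb c,a\rsb b$. The bridge between the two sides is a pair of auxiliary identities relating $\sigma$, the outer bimodule action and the inner action $\star$: for $B=B'\otimes B''$,
\begin{equation}
(aB)^\sigma=a\star B^\sigma,\qquad(Bb)^\sigma=B^\sigma\star b.
\end{equation}
Both follow by a one-line comparison of signs from the definitions \eqref{eq:superswap} of $\sigma$ and \eqref{eq:superinner} of $\star$; for instance $(aB)^\sigma$ and $a\star B^\sigma$ both equal $(-1)^{(|a|+|B'|)|B''|}B''\otimes aB'$. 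Since $\sigma$ is linear and $\sigma^2=\mathrm{id}$, applying these identities, and then a second round of skewsymmetry to rewrite $\lsb c,b\rsb^\sigma$ and $\lsb c,a\rsb^\sigma$ in terms of $\lsb b,c\rsb$ and $\lsb a,c\rsb$, converts the right-hand side into a linear combination of $a\star\lsb b,c\rsb$ and $\lsb a,c\rsb\star b$.

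The only real content is the sign bookkeeping, which I expect to be the main (and essentially the only) obstacle. After all substitutions the coefficient of $a\star\lsb b,c\rsb$ carries the exponent $(|c|-1)(|a|+|b|-1)+(|b|-1)(|c|-1)=(|c|-1)(|a|+2|b|-2)\equiv|a|(|c|-1)\pmod 2$, while the coefficient of $\lsb a,c\rsb\star b$ carries $(|c|-1)(|a|+|b|-1)+(|c|-1)|b|+(|a|-1)(|c|-1)\equiv 0\pmod 2$; the overall minus sign produced by the first use of skewsymmetry is cancelled by the minus signs produced by the second use, so both terms come out with exactly the signs claimed in \eqref{eq:ds_leib_l}. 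The alternative of expanding \eqref{eq:ds_master_ul} directly, using that $\dev/\dev u^i$ and $\dev/\dev\theta_i$ are graded derivations, would also work but generates many more terms and a correspondingly heavier sign analysis, so the skewsymmetry route is preferable.
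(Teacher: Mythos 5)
Your proposal is correct and takes essentially the same route as the paper's own proof: the paper likewise converts $\lsb ab,c\rsb$ via skewsymmetry, expands the resulting $\lsb c,ab\rsb$ with the graded left Leibniz property, and transports back with a second use of skewsymmetry, handling the identities $(aB)^\sigma=a\star B^\sigma$ and $(Bb)^\sigma=B^\sigma\star b$ inline in Sweedler notation rather than as separately stated lemmas. Your sign bookkeeping (exponents $\equiv|a|(|c|-1)$ and $\equiv 0$ modulo $2$) agrees exactly with the paper's computation.
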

\begin{proof}
Using skewsymmetry and the graded Leibniz property of Proposition \ref{thm:ds_prop} we have
\begin{equation}
\lsb ab,c\rsb=-(-1)^{(|a|+|b|-1)(|c|-1)}\left(a\lsb c,b\rsb\right)^\sigma-(-1)^{(|a|-1)(|c|-1)}\left(\lsb c,a\rsb b\right)^\sigma.
\end{equation}
Denoting $(cb)'=|\lsb c,b\rsb'|$, and similarly for $(cb)''$, $(ca)'$, and $(ca)''$, then
\begin{align}
\lsb ab,c\rsb&=-(-1)^{(|a|+|b|-1)(|c|-1)+(cb)'(cb)''}a\star\lsb c,b\rsb''\otimes\lsb c,b\rsb'\\
&\quad-(-1)^{(|a|-1)(|c|-1)+(ca)'(ca)''}\lsb c,a\rsb''\otimes\lsb c,a\rsb'\star b\\
&=(-1)^{|a|(|c|-1)}a\star\lsb b,c\rsb+\lsb a,c\rsb\star b.\qedhere
\end{align}
\end{proof}
The bracket \eqref{eq:ds_master_ul} satisfies graded versions of skewsymmetry property, graded Leibniz property and, as we prove for Proposition \ref{thm:dJac} in Appendix \ref{app:Jacobi}, double Jacobi identity. Since these properties are analogue to the properties that a (standard) Schouten bracket enjoys, we call $\lsb -,-\rsb$ the \emph{double Schouten bracket}. It is the fundamental building block for the Schouten bracket in the nonabelian setting.
\begin{remark}
The structure of double Schouten bracket we have presented is the same as the double Schouten--Nijenhuis bracket on the derivations of $A$, with $A=\A_0$, defined by Van Den Bergh in \cite[Section 3.2]{vdb} after the identification
\begin{equation}
\theta_i:=\frac{\dev}{\dev u^i}\colon \A_0\to \A_0\otimes\A_0.
\end{equation}
Both the brackets define a structure of double Gerstenhaber algebra and they coincide on the generators. Indeed, the bracket is trivial on generators of $\A_0$ and $\frac{\dev}{\dev u^i}(u^j)=\delta_i^j1\otimes1=\lsb \theta_i,u^j\rsb$. On the other hand, in Van Den Bergh's double bracket we have 
\begin{align}
\Bigg\{\!\Bigg\{ \frac{\dev}{\dev u^i},\frac{\dev}{\dev u^j}\Bigg\}\!\Bigg\}&=\tau_{(23)}\left(\left(\frac{\dev}{\dev u^i}\otimes 1\right)\circ \frac{\dev}{\dev u^j}-\left(1\otimes \frac{\dev}{\dev u^j}\right)\circ\frac{\dev}{\dev u^i}\right)\\
&\qquad+\tau_{(12)}\left(\left(1\otimes \frac{\dev}{\dev u^i}\right)\circ\frac{\dev}{\dev u^j}-\left(\frac{\dev}{\dev u^j}\otimes 1\right)\circ\frac{\dev}{\dev u^i}\right),
\end{align}
where $\tau_{(12)}$ and $\tau_{(23)}$ denote permutations of the factors in $\A_0^{\,\otimes3}$. Both the terms in the RHS vanish because of the commutation of double derivatives \cite[Lemma 2.6]{dskv15}, so that the double bracket between the derivations coincides with $\lsb \theta_i,\theta_j\rsb=0$.
\end{remark}
\begin{proposition}\label{def:sch}
Let $a,b\in\hF_0$. The bracked defined by
\begin{equation}\label{eq:schdef}
[a,b]:=\tr m(\lsb a,b\rsb)
\end{equation}
is a well-defined bilinear map  $\hF_0^p\times\hF_0^q\to\hF_0^{p+q-1}$
\end{proposition}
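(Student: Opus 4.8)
The plan is to verify three things about the right-hand side of \eqref{eq:schdef}: that it has the claimed degree, that it is bilinear, and --- the only substantial point --- that it descends to the quotient spaces $\hF_0$, i.e. that $\tr m(\lsb a,b\rsb)$ depends only on the classes of $a$ and $b$ modulo graded commutators. The degree count is immediate: if $a\in\hA_0^p$ and $b\in\hA_0^q$, then since $\lsb-,-\rsb$ lowers the total $\theta$-degree by one, the element $\lsb a,b\rsb\in\hA_0\otimes\hA_0$ has total degree $p+q-1$, which is preserved both by $m$ and by the projection $\tr$; hence the output lies in $\hF_0^{p+q-1}$. Bilinearity is clear, since $\lsb-,-\rsb$, $m$ and $\tr$ are all linear. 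Everything thus reduces to well-definedness on the quotient, which amounts to showing that $[a,b]$ is unchanged when either $a$ or $b$ is replaced by itself plus a graded commutator $xy-(-1)^{|x||y|}yx$.

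First I would treat the second argument. Writing the graded commutator as $[b,c]=bc-(-1)^{|b||c|}cb$ (the commutator of Definition \ref{def:thetafield} defining $\hF_0$), the graded Leibniz property of Proposition \ref{thm:ds_prop}(ii) gives $m\lsb a,bc\rsb=b\,m\lsb a,c\rsb+(-1)^{(|a|-1)|c|}m\lsb a,b\rsb\,c$, together with the analogous expansion of $m\lsb a,cb\rsb$. I would then apply the graded cyclicity of the trace \eqref{eq:supertr} to each of the four resulting terms, moving the single factor $b$ (resp.\ $c$) past the degree-$(|a|+|c|-1)$ element $m\lsb a,c\rsb$ (resp.\ the degree-$(|a|+|b|-1)$ element $m\lsb a,b\rsb$). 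A short sign count shows the terms proportional to $\tr\bigl(m\lsb a,c\rsb\,b\bigr)$ cancel against each other, and likewise those proportional to $\tr\bigl(m\lsb a,b\rsb\,c\bigr)$, so that $\tr m\lsb a,\,bc-(-1)^{|b||c|}cb\rsb=0$. This is precisely the invariance of $[a,b]$ under adding a graded commutator in the second slot.

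For the first argument I would invoke the graded skewsymmetry of the double bracket, Proposition \ref{thm:ds_prop}(i): $\lsb b,a\rsb=-(-1)^{(|a|-1)(|b|-1)}\lsb a,b\rsb^\sigma$. A one-line computation with \eqref{eq:supertr} shows that $\tr m(B^\sigma)=\tr m(B)$ for every $B\in\hA_0\otimes\hA_0$ (the sign from $\sigma$ cancels the sign from cyclicity on each Sweedler summand), whence $\tr m\lsb b,a\rsb=-(-1)^{(|a|-1)(|b|-1)}\tr m\lsb a,b\rsb$ holds for all representatives $a,b\in\hA_0$. Combined with the second-argument invariance already established, this identity propagates invariance to the first slot: since $\tr m\lsb b,-\rsb$ is unchanged under adding a commutator in its last argument, $\tr m\lsb b,a\rsb$ is unchanged when $a\mapsto a+[x,y]$, and hence so is $[a,b]=\tr m\lsb a,b\rsb$ by the skewsymmetry identity. (Alternatively, the first slot could be handled directly from the right Leibniz property \eqref{eq:ds_leib_l}, but the inner bimodule $\star$ makes the bookkeeping heavier, so the skewsymmetry route is preferable.) This establishes that $[-,-]$ descends to a well-defined bilinear map $\hF_0^p\times\hF_0^q\to\hF_0^{p+q-1}$.

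The main obstacle is purely the sign bookkeeping in the second paragraph: one must track the $\theta$-degrees of the \emph{individual} Sweedler factors of $\lsb a,b\rsb$ and $\lsb a,c\rsb$ rather than only the total degrees $|a|,|b|,|c|$, since the signs produced by \eqref{eq:supertr} depend on the degrees of the left- and right-hand factors separately. Once these are organized the cancellation is mechanical, and the skewsymmetry argument then disposes of the first argument with essentially no further computation.
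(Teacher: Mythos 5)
Your proposal is correct and follows essentially the same route as the paper's proof: the second slot is handled by the graded Leibniz property combined with the graded cyclicity of the trace, and the first slot is reduced to the second via the skewsymmetry of the double Schouten bracket together with the identity $\tr m(A^\sigma)=\tr m(A)$. One minor remark: since the multiplication map $m$ is applied before taking the trace, each of the four terms is an honest product in $\hA_0$ and the cancellation only requires the total degrees $|a|,|b|,|c|$ (exactly as in your second paragraph), so the tracking of individual Sweedler factor degrees that you flag as the main obstacle is not actually needed.
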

\begin{proof}
The grading of the bracket is obvious, because the derivative with respect to $\theta$ is of degree $-1$, while all the other operations in the definition are of degree 0. We need to prove that such an operation is well-defined, namely that it is vanishes on elements of $[\hcA_0,\hcA_0]$. To do so, let's recall that the graded commutator in $\hcA_0$ is $[a,b]=a b-(-1)^{|a||b|}ba$. Then
\begin{align}
[a,bc-(-1)^{|b||c|}cb]&=\tr m(\lsb a,bc\rsb)-(-1)^{|b||c|}\tr m(\lsb a,cb\rsb)\\
&=\tr m\left(b\lsb a,c\rsb+(-1)^{|a||c|+|c|}\lsb a,b\rsb c\right.\\
&\quad\quad\left.-(-1)^{|b||c|}c\lsb a,b\rsb-(-1)^{|b||c|+|a||b|+|b|}\lsb a,c\rsb b\right).
\end{align}
Observing that $\tr a b=(-1)^{|a||b|}\tr b a$ we have then
\begin{align}
[a,bc-(-1)^{|b||c|}cb]&=\tr m\left(b\lsb a,c\rsb+(-1)^{|a||c|+|c|}\lsb a,b\rsb c-(-1)^{|b||c|+|a||c|+|b||c|+|c|}\lsb a,b\rsb c\right.\\
&\quad\quad\left.-(-1)^{|b||c|+|a||b|+|b|+|a||b|+|b||c|+|b|}b\lsb a,c\rsb\right)=0.
\end{align}
Note that the vanishing of the expression is due to the overall trace.
We also need to show that the bracket vanishes for elements of $[\hA_0,\hA_0]$ in its first entry. By definition we have
\begin{align}
[ab-(-1)^{|a||b|}ba,c]&=\tr m\left(\lsb ab-(-1)^{|a||b|}ba,c\rsb\right)\\
&=-(-1)^{(|a|+|b|-1)(|c|-1)}\tr m\left(\lsb c,ab-(-1)^{|a||b|}ba\rsb^\sigma\right),
\end{align}
because of the skewsymmetry of the double Schouten bracket. We can now go back to the computation for the commutator in the second entry because $\tr m(A^\sigma)=\tr m(A)$, which is straightforward given \eqref{eq:superswap} and \eqref{eq:supertr}. In conclusion, the bracket vanishes for elements of $[\hA_0,\hA_0]$ in either entry, which means that it is well-defined on $\hF_0$.
\end{proof}
\begin{definition}\label{def:sch-gen}
Let $[-,-]$ be a degree -1 bilinear bracket among polyvector fields. We call such a bracket a \emph{Schouten bracket} if it coincides with the commutator of vector fields among 1-vectors, with the action of a vector field on a functional when computed between a 0- and a 1-vector, and satisfies the following graded versions of skewsymmetry and Jacobi identity:
\begin{enumerate}[label=(\roman*)]
\item $[P,Q]=-(-1)^{(p-1)(q-1)}[Q,P]$
\item $[P,[Q,R]]=[[P,Q],R]+(-1)^{(p-1)(q-1)}[Q,[P,R]]$
\end{enumerate}
for a $p$-vector $P$, a $q$-vector $Q$ and a $r$-vector $R$.
\end{definition}
We are now going to prove that bracket \eqref{eq:schdef} fulfils the properties outlined in Definition \ref{def:sch-gen}, starting from the fact that it coincides with the action of a vector field on a local functionals when evaluated between elements of $\F_0$ and $\hF^1_0$. Indeed, if we compute $[X,f]$ using \eqref{eq:ds_master_ul} for $X=\tr X^i\theta_i$ and $f\in \F_0$, we obtain exactly the same expression as in \eqref{eq:evvfield-action}.

\begin{proposition}\label{thm:Sch_vf}
For any pair of vector fields $X,Y$, the bracket $[X,Y]$ defined as in \eqref{eq:schdef} coincides with the commutator of vector fields, namely produces a vector field whose action on the functional $f$ is $X(Y(f))-Y(X(f))$.
\end{proposition}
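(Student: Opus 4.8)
The plan is to compute the bracket \eqref{eq:schdef} directly on representatives and to show that the resulting element of $\hF_0$ is the $1$-vector field whose characteristics agree with the ultralocal specialisation ($n=0$, no shifts) of the commutator formula \eqref{eq:comm_vf}. Grading is on our side: $[X,Y]$ has $\theta$-degree $1+1-1=1$, so it is automatically a $1$-vector. Since the action of any $1$-vector $\tr Z^i\theta_i$ on a local functional $f\in\F_0$ reproduces the evolutionary action \eqref{eq:evvfield-action} (as was just observed for $[X,f]$), once the characteristics of $[X,Y]$ are identified with those of Lemma \ref{lem:comm_vf} the assertion that $[X,Y]$ acts as $f\mapsto X(Y(f))-Y(X(f))$ follows at once from that lemma.

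First I would fix the representatives $X=X^i\theta_i$ and $Y=Y^j\theta_j$ in $\hat{\A_0}$ (with implicit sums over $i,j$), both of degree $1$, and evaluate the four double derivatives entering the master formula \eqref{eq:ds_master_ul}. Using $\frac{\dev}{\dev\theta_k}\theta_i=\delta_{ik}\,1\otimes1$, $\frac{\dev}{\dev u^k}\theta_i=0$ and the graded Leibniz rule, one finds $\frac{\dev X}{\dev\theta_k}=X^k\otimes1$ and $\frac{\dev X}{\dev u^k}=(\dev_{u^k}X^i)'\otimes(\dev_{u^k}X^i)''\theta_i$, with the analogous formulae for $Y$. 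Applying the graded swap \eqref{eq:superswap} (recall that the two Sweedler factors carry $\theta$-degrees $0$ and $1$) gives $\big(\frac{\dev X}{\dev\theta_k}\big)^\sigma=1\otimes X^k$ and $\big(\frac{\dev X}{\dev u^k}\big)^\sigma=(\dev_{u^k}X^i)''\theta_i\otimes(\dev_{u^k}X^i)'$.

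Substituting into \eqref{eq:ds_master_ul} for $\lsb X,Y\rsb$ (with $|Y|=1$, so the second summand carries a minus sign) and performing the two graded bullet products \eqref{eq:superbullet} — all of whose Koszul signs happen to be trivial here, since in each product the factor carrying odd degree meets only degree-$0$ partners — I expect to obtain
\[
\lsb X,Y\rsb=\sum_{k,i}(\dev_{u^k}Y^i)'\otimes X^k(\dev_{u^k}Y^i)''\theta_i-\sum_{k,i}Y^k(\dev_{u^k}X^i)''\theta_i\otimes(\dev_{u^k}X^i)'.
\]
Applying the multiplication map $m$ and then the graded trace, I would invoke graded cyclicity \eqref{eq:supertr} to slide the degree-$0$ factor $(\dev_{u^k}X^i)'$ in the second term around the monomial (no sign, its degree being $0$), so that every term is brought to the canonical shape $\tr(\,\cdot\,\theta_i)$. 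After relabelling $j\to i$ this yields
\[
[X,Y]=\tr\Big[\sum_{k,i}\big((\dev_{u^k}Y^i)'X^k(\dev_{u^k}Y^i)''-(\dev_{u^k}X^i)'Y^k(\dev_{u^k}X^i)''\big)\theta_i\Big],
\]
i.e. the $1$-vector $\tr\,[X,Y]^i\theta_i$ whose characteristics are exactly the ultralocal case of \eqref{eq:comm_vf}; Lemma \ref{lem:comm_vf} then identifies its action on $f$ with $X(Y(f))-Y(X(f))$.

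The conceptual content is light; the main obstacle is purely bookkeeping. One must track the Koszul signs produced by $\sigma$, $\bullet$ and the graded trace \eqref{eq:supertr}, and — crucially — keep the $\theta_i$ factors in their correct noncommutative positions throughout the Sweedler notation, since a displaced $\theta$ or a dropped sign would spoil the match with \eqref{eq:comm_vf}. The cyclicity step that moves a density factor around the trace in order to expose the coefficient of $\theta_i$ is where I would be most careful, as it is precisely where the ``splitting'' of the density by the double derivative is reassembled into a genuine characteristic.
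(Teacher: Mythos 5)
Your proposal is correct and follows essentially the same route as the paper's proof: both evaluate the double Schouten bracket \eqref{eq:ds_master_ul} on the representatives $X^i\theta_i$, $Y^j\theta_j$, apply $m$ and the graded trace, use cyclicity (with no sign, the moved factor having degree $0$) to expose the coefficient of $\theta_j$, and identify the result with the ultralocal characteristics of Lemma \ref{lem:comm_vf}. Your sign bookkeeping (all Koszul signs trivial in the two swaps and two bullet products) matches the paper's computation.
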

\begin{proof}
Note that in the ultralocal setting a vector field $\{X^i\}$ acts on $f$ simply by
$$
\tr m\left( X^i\star\frac{\dev f}{\dev u^i}\right)=\tr\left(\left(\frac{\dev f}{\dev u^i}\right)'X^i\left(\frac{\dev f}{\dev u^i}\right)''\right),
$$
where the sum over $i$ is left implicit. From Lemma \ref{lem:comm_vf}, the commutator of vector fields has characteristics
\begin{equation}\label{eq:sch_vf_pf1}
[X,Y]^i=\left(\frac{\dev Y^i}{\dev u^j}\right)'X^j\left(\frac{\dev Y^i}{\dev u^j}\right)''-\left(\frac{\dev X^i}{\dev u^j}\right)'Y^j\left(\frac{\dev X^i}{\dev u^j}\right)''.
\end{equation}
The elements of $\hF_0$ corresponding to those vector fields are $\tr X^i\theta_i$ and $\tr Y^j\theta_j$; using formula \eqref{eq:ds_master_ul} we have
\begin{equation}
\begin{split}
[X^i\theta_i,Y^j \theta_j]&=\tr m\left[\left(\frac{\dev Y^j}{\dev u^i}\theta_j\right)\bullet (1\otimes X^i)-\left(Y^i\otimes 1\right)\bullet\left(\frac{\dev X^j}{\dev u^i}\theta_j\right)^\sigma\right]\\
&=\tr\left[\left(\frac{\dev Y^j}{\dev u^i}\right)'X^i\left(\frac{\dev Y^j}{\dev u^i}\right)''\theta_j-Y^i\left(\frac{\dev X^j}{\dev u^i}\right)''\theta_j\left(\frac{\dev X^j}{\dev u^i}\right)'\right].
\end{split}
\end{equation}
Using the trace operation to bring $\theta_j$ in the second term to the rightmost position we easily read the expression
\begin{equation}
[X^i\theta_i,Y^j\theta_j]=\tr [X,Y]^j\theta_j
\end{equation}
with $[X,Y]^j$ as in \eqref{eq:sch_vf_pf1}.
\end{proof}

\begin{proposition}\label{thm:Sch_skew}
The bracket \eqref{eq:schdef} is graded skewsymmetric as a Schouten bracket, namely 
\begin{equation}
[b,a]=-(-1)^{(|a|-1)(|b|-1)}[a,b]\label{eq:schskew}
\end{equation}
and fulfils the graded version of the Jacobi identity 
\begin{equation}
[a,[b,c]]=[[a,b],c]+(-1)^{(|a|-1)(|b|-1)}[b,[a,c]].\label{eq:schJac}
\end{equation}
\end{proposition}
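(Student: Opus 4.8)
The plan is to treat the two identities separately: the graded skewsymmetry \eqref{eq:schskew} will follow directly from the double bracket, and the graded Jacobi identity \eqref{eq:schJac} will be reduced to the double Jacobi identity of Proposition \ref{thm:dJac} by descending to the quotient $\hF_0$.

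For \eqref{eq:schskew} the argument is immediate. Applying $\tr m$ to the graded skewsymmetry of Proposition \ref{thm:ds_prop}(i) and using that $\tr m(A^\sigma)=\tr m(A)$ for every $A\in\hA_0\otimes\hA_0$ — a consequence of \eqref{eq:superswap} and \eqref{eq:supertr} already exploited in the proof of Proposition \ref{def:sch} — one finds
\begin{equation}
[b,a]=\tr m\left(\lsb b,a\rsb\right)=-(-1)^{(|a|-1)(|b|-1)}\tr m\left(\lsb a,b\rsb^\sigma\right)=-(-1)^{(|a|-1)(|b|-1)}[a,b],
\end{equation}
which is precisely \eqref{eq:schskew}.

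For the Jacobi identity \eqref{eq:schJac} the first step is to rewrite the three nested brackets purely in terms of the Loday-type multiplication $m\lsb -,-\rsb\colon\hA_0\times\hA_0\to\hA_0$. Since $m\lsb b,c\rsb\in\hA_0$ is an admissible representative of the class $[b,c]\in\hF_0$ and \eqref{eq:schdef} is well defined by Proposition \ref{def:sch}, we have $[a,[b,c]]=\tr m\lsb a,m\lsb b,c\rsb\rsb$, and analogously for the other terms. The graded Leibniz property of Proposition \ref{thm:ds_prop}(ii) then makes $m\lsb a,-\rsb$ a graded derivation of $\hA_0$, so that expanding $m\lsb b,c\rsb=\lsb b,c\rsb'\lsb b,c\rsb''$ and multiplying out expresses $\tr m\lsb a,m\lsb b,c\rsb\rsb$ as the trace of the total multiplication map $\mu=m(m\otimes 1)=m(1\otimes m)\colon\hA_0^{\otimes 3}\to\hA_0$ applied to the ``$L$'' and ``$R$'' insertions $\lsb a,\lsb b,c\rsb\rsb_L$ and $\lsb a,\lsb b,c\rsb\rsb_R$; the term $\tr m\lsb m\lsb a,b\rsb,c\rsb$ is handled the same way, using instead the right Leibniz property \eqref{eq:ds_leib_l} in the first argument.

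The core of the argument is then to feed the double Jacobi identity of Proposition \ref{thm:dJac},
\begin{equation}
\lsb a,\lsb b,c\rsb\rsb_L=\lsb\lsb a,b\rsb,c\rsb_L+(-1)^{(|a|-1)(|b|-1)}\lsb b,\lsb a,c\rsb\rsb_R,
\end{equation}
into these expressions and to check that, after applying $\tr\mu$, the Jacobiator $[a,[b,c]]-[[a,b],c]-(-1)^{(|a|-1)(|b|-1)}[b,[a,c]]$ collapses to $\tr\mu(\lsb a,b,c\rsb)=0$, where $\lsb a,b,c\rsb$ denotes the graded triple bracket that vanishes by Proposition \ref{thm:dJac}. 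The main obstacle, and the only genuinely delicate point, is the bookkeeping: each nested Loday bracket produces \emph{both} an $L$- and an $R$-insertion (with an internal sign depending on the degree of the Sweedler components), whereas the double Jacobi identity relates only specific insertions, so one must invoke the cyclicity of the graded trace \eqref{eq:supertr} together with $\tr m(A^\sigma)=\tr m(A)$ to identify the $L$- and $R$-type contributions under the trace and to match every sign produced by the permutations $\sigma$ and $\tau$ against the target sign $(-1)^{(|a|-1)(|b|-1)}$. This is the graded, ``doubled'' analogue of the sign- and index-matching already carried out in the ultralocal Poisson case in Proposition \ref{thm:poisdouble-ul}, and it is exactly the mechanism by which a double Gerstenhaber structure descends to a genuine Gerstenhaber bracket on the cyclic quotient, following \cite{dskv15,vdb}; in particular the Remark identifying \eqref{eq:ds_master_ul} with Van Den Bergh's double Schouten--Nijenhuis bracket guarantees that the underlying double bracket is of the required type, so that one may alternatively appeal to that identification to bypass the explicit sign computation.
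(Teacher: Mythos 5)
Your proposal is correct and takes essentially the same approach as the paper: the skewsymmetry argument (applying $\tr m$ to Proposition \ref{thm:ds_prop}(i) and using $\tr m(A^\sigma)=\tr m(A)$) is exactly the paper's, and your reduction of the Jacobiator to the vanishing graded triple bracket of Proposition \ref{thm:dJac}, via the left and right Leibniz properties expanded under the trace, is precisely the argument the paper carries out in Appendix \ref{app:Jacobi} (Theorem \ref{thm:schJac}), where the Jacobiator is shown to equal $\tr m\left((m\otimes 1)\lsb A,B,C\rsb-(-1)^{(|a|-1)(|b|-1)}(1\otimes m)\lsb B,A,C\rsb\right)$ and hence to vanish.
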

\begin{proof}
Let us first prove \eqref{eq:schskew}. From the definition and the graded supersymmetry of $\lsb a,b\rsb$ we have
$$
[b,a]=-(-1)^{(|a|-1)(|b|-1)}\tr m(\lsb a,b\rsb^\sigma).
$$
Let us assume that $\lsb a,b\rsb= A\otimes B$; then
\begin{equation}
\begin{split}
[b,a]&=-(-1)^{(|a|-1)(|b|-1)+|A||B|}\tr BA=-(-1)^{(|a|-1)(|b|-1)}\tr AB\\
&=-(-1)^{(|a|-1)(|b|-1)}[a,b].
\end{split}
\end{equation}
The proof for \eqref{eq:schJac} is long and left to Appendix \ref{app:Jacobi}.
\end{proof}

From Proposition \ref{thm:Sch_vf} and \ref{thm:Sch_skew} we conclude that the bracket \eqref{eq:schdef} is the \emph{Schouten bracket} for nonabelian ultralocal polyvector fields. 

\subsection{Schouten (lambda) brackets for the local case}\label{ssec:sch-l}
Having defined the Schouten bracket for ultralocal polyvector fields, we now address the more general case of difference ones. The underlying space of densities is the space of Laurent difference polynomials $\A$. Similarly to what we did in the previous section, we first introduce a double $\lambda$ bracket on $\hA$ and use it to define the Schouten bracket on the space $\hF$.

Let us define the degree -1 double $\lambda$ bracket $\hA\times\hA\to\hA\otimes\hA\lsb\lambda\rsb$ given by the formula
\begin{equation}\label{eq:ds_master}
\lsb a_{\lambda}b\rsb:=\sum_{i=1}^\ell\sum_{m,n}\left(\frac{\dev b}{\dev u^i_m}\bullet(\lambda\cS)^{m-n}\left(\frac{\dev a}{\dev \theta_{i,n}}\right)^\sigma+(-1)^{|b|}\frac{\dev b}{\dev \theta_{i,m}}\bullet(\lambda\cS)^{m-n}\left(\frac{\dev a}{\dev u^i_n}\right)^\sigma\right).
\end{equation}
Apart from the introduction of a grading, \eqref{eq:ds_master} is a multiplicative $\lambda$ bracket: in particular it satisfies the sesquilinearity property (see Definition \ref{def:dlbracket}).
\begin{proposition}
For the bracket defined in \eqref{eq:ds_master}, we have the following:
\begin{align}\label{eq:lsesq-dsb}
\lsb \cS a_{\lambda} b\rsb&=\lambda^{-1}\lsb a_{\lambda}b\rsb,\\ \label{eq:rsesq-dsb}
\lsb a_{\lambda} \cS b\rsb&=\lambda\cS\lsb a_{\lambda}b\rsb.
\end{align}
\end{proposition}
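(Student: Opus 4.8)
The plan is to reduce both identities to the single structural fact that the shift $\cS$ commutes, in a controlled way, with the three operations appearing in the definition \eqref{eq:ds_master}: the double derivatives $\dev/\dev u^i_n$ and $\dev/\dev \theta_{i,n}$, the graded swap $\sigma$, and the product $\bullet$. The crucial input is the commutation rule between double derivatives and the shift, already used in the proof of Lemma \ref{lem:comm_vf}, which in its $u$- and $\theta$-versions reads
\begin{equation}
\frac{\dev(\cS c)}{\dev u^i_n}=\cS\!\left(\frac{\dev c}{\dev u^i_{n-1}}\right),\qquad \frac{\dev(\cS c)}{\dev \theta_{i,n}}=\cS\!\left(\frac{\dev c}{\dev \theta_{i,n-1}}\right),
\end{equation}
where $\cS$ acts diagonally on each tensor factor. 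Because $\cS\theta_{i,n}=\theta_{i,n+1}$ preserves the $\theta$-degree, $\cS$ also commutes with $\sigma$, i.e.\ $(\cS X)^\sigma=\cS(X^\sigma)$, and it is an automorphism of the $\bullet$-product, $\cS(X\bullet Y)=(\cS X)\bullet(\cS Y)$; finally $\cS$ commutes with the formal variable $\lambda$, so that $\lambda\cS\,(\lambda\cS)^k=(\lambda\cS)^{k+1}$.

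For the left sesquilinearity \eqref{eq:lsesq-dsb} I would substitute $a\mapsto\cS a$ in \eqref{eq:ds_master}. Only the factors $(\dev a/\dev \theta_{i,n})^\sigma$ and $(\dev a/\dev u^i_n)^\sigma$ are affected; applying the commutation rule and $(\cS X)^\sigma=\cS(X^\sigma)$ turns, for instance, the first summand into $\frac{\dev b}{\dev u^i_m}\bullet(\lambda\cS)^{m-n}\cS\!\left(\frac{\dev a}{\dev \theta_{i,n-1}}\right)^\sigma$. Reindexing $n\mapsto n+1$ and using $(\lambda\cS)^{m-n-1}\cS=\lambda^{-1}(\lambda\cS)^{m-n}$ extracts exactly a factor $\lambda^{-1}$ and returns the corresponding summand of $\lsb a_{\lambda}b\rsb$; the second summand behaves identically, giving \eqref{eq:lsesq-dsb}. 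For the right sesquilinearity \eqref{eq:rsesq-dsb} I would substitute $b\mapsto\cS b$; the sign $(-1)^{|b|}$ is unchanged since $|\cS b|=|b|$, and now the affected factors are $\dev b/\dev u^i_m$ and $\dev b/\dev \theta_{i,m}$. The commutation rule replaces these by $\cS(\dev b/\dev u^i_{m-1})$ and $\cS(\dev b/\dev \theta_{i,m-1})$, and after reindexing $m\mapsto m+1$ the $\bullet$-automorphism property lets me pull one $\cS$ outside the whole summand while the identity $\lambda\cS\,(\lambda\cS)^{m-n}=(\lambda\cS)^{m-n+1}$ supplies the remaining $\lambda$, yielding $\lambda\cS$ times the corresponding summand of $\lsb a_{\lambda}b\rsb$, hence \eqref{eq:rsesq-dsb}.

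The computation is essentially bookkeeping, so the only real obstacle is keeping the grading signs and the various shifts consistent. In particular one must verify that passing $\cS$ through $\sigma$ and through $\bullet$ produces no extra sign (which is guaranteed precisely because $\cS$ preserves $\deg_\theta$), and that in the right-hand identity the single shift landing on the acted-upon argument combines with the power $(\lambda\cS)^{m-n}$ of the \emph{other} argument rather than being confined to one factor — this is exactly what the $\bullet$-automorphism property encodes. Once these compatibilities are recorded, both identities follow from the change of summation index.
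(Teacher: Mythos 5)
Your proof is correct and takes essentially the same route as the paper's: both hinge on the commutation rules $\dev(\cS f)/\dev u^i_n=\cS\bigl(\dev f/\dev u^i_{n-1}\bigr)$ and $\dev(\cS f)/\dev \theta_{i,n}=\cS\bigl(\dev f/\dev \theta_{i,n-1}\bigr)$, followed by a reindexing of the summation together with the identities $(\lambda\cS)^{m-n-1}\cS=\lambda^{-1}(\lambda\cS)^{m-n}$ for \eqref{eq:lsesq-dsb} and $\cS X\bullet(\lambda\cS)^{k}Y=\lambda\cS\bigl(X\bullet(\lambda\cS)^{k-1}Y\bigr)$ for \eqref{eq:rsesq-dsb}. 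Your explicit observations that $\cS$ commutes with $\sigma$ and acts as an automorphism of the $\bullet$-product (with no extra signs, since $\cS$ preserves $\deg_\theta$) merely spell out what the paper uses implicitly.
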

\begin{proof}
We recall that, for both the derivatives with respect to $u$'s and $\theta$'s, we have
$$
\frac{\dev \cS f}{\dev u^i_n}=\cS\frac{\dev f}{\dev u^i_{n-1}},\qquad\frac{\dev \cS f}{\dev \theta_{i,n}}=\cS\frac{\dev f}{\dev \theta_{i,n-1}}.
$$
Let us just consider the first of the two summands in \eqref{eq:ds_master}, since the behaviour is the same in both. For the expansion of \eqref{eq:lsesq-dsb} we have
\begin{equation}
\frac{\dev b}{\dev u^i_m}\bullet(\lambda\cS)^{m-n}\left(\frac{\dev \cS a}{\dev \theta_{i,n}}\right)^\sigma=\lambda^{-1}\frac{\dev b}{\dev u^i_m}\bullet(\lambda\cS)^{m-(n-1)}\left(\frac{\dev a}{\dev \theta_{i,n-1}}\right)^\sigma,
\end{equation}
and similarly for the second term. On the other hand, computing the bracket for \eqref{eq:rsesq-dsb} we obtain
\begin{equation}
\begin{split}
\frac{\dev\cS b}{\dev u^i_m}\bullet(\lambda\cS)^{m-n}\left(\frac{\dev a}{\dev\theta_{i,n}}\right)^\sigma&=\left(\cS\frac{\dev b}{\dev u^i_{m-1}}\right)\bullet(\lambda\cS)^{m-n}\left(\frac{\dev a}{\dev\theta_{i,n}}\right)^\sigma\\
&=\lambda\cS\left(\frac{\dev b}{\dev u^i_{m-1}}\bullet(\lambda\cS)^{m-1-n}\left(\frac{\dev a}{\dev\theta_{i,n}}\right)^\sigma\right).
\end{split}
\end{equation}
The same happens for the second term in \eqref{eq:ds_master}, giving us the full result \eqref{eq:rsesq-dsb}.
\end{proof}
\begin{proposition}\label{thm:property-ds}The bracket \eqref{eq:master} enjoys the following version of the properties for the $\lambda$ brackets:
\begin{enumerate}[label=(\roman*)]
\item Skewsymmetry
\begin{equation}\label{eq:skew-sch-diff}
\lsb b_{\lambda}a\rsb=-(-1)^{(|a|-1)(|b|-1)}{}_\to\lsb a_{(\lambda\cS)^{-1}}b\rsb^\sigma.\end{equation}
\item Left Leibniz property
\begin{equation} \label{eq:lLeib-ds}
\lsb a_\lambda bc\rsb=b\lsb a_\lambda c\rsb+(-1)^{(|a|-1)|c|}\lsb a_{\lambda}b\rsb c.\end{equation}
\item Right Leibniz property
\begin{equation}\label{eq:rLeib-ds}
\lsb ab_{\lambda}c\rsb=\lsb a_{\lambda\cS}c\rsb\star\left({}_\to b\right)+(-1)^{|a|(|c|-1)}\left({}_\to a\right)\star\lsb b_{\lambda\cS}c\rsb.\end{equation}
\item Jacobi identity
\begin{equation}
\label{eq:jac-scj-diff}
\lsb a_\lambda\lsb b_\mu c\rsb\rsb_L-(-1)^{(|a|-1)(|b|-1)}\lsb b_{\mu}\lsb a_{\lambda}c\rsb\rsb_R-\lsb\lsb a_{\lambda}b\rsb_{\lambda\mu}c\rsb_L=0.\end{equation}
\end{enumerate}

The notation $({}_\to b)$ (resp., $({}_\to a)$) used in \eqref{eq:rLeib-ds} means that the shift operators appearing in the double Schouten brackets act also on $b$ (resp. $a$). 
\end{proposition}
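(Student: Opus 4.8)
The plan is to mirror, property by property, the ultralocal computations of Proposition \ref{thm:ds_prop} and the subsequent right Leibniz proposition, using the sesquilinearity just established to absorb all shift operators into the $(\lambda\cS)^{m-n}$ factors of \eqref{eq:ds_master}. Observe that \eqref{eq:ds_master} is precisely the shifted (``vertex'') analogue of the ultralocal master formula \eqref{eq:ds_master_ul}: each summand carries the same graded $\bullet$-product structure, so the sign bookkeeping is inherited essentially verbatim, and the only genuinely new feature is the double summation over $m,n$ together with the factor $(\lambda\cS)^{m-n}$. The two Leibniz rules and the skewsymmetry then follow by tracking how this shift factor interacts with the graded $\bullet$, $\star$ and $\sigma$ operations; the Jacobi identity, being a genuinely long computation, is deferred to Appendix \ref{app:Jacobi} alongside its ultralocal counterpart (Proposition \ref{thm:dJac}).

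For the left Leibniz property \eqref{eq:lLeib-ds} I would expand $\dev_{u^i_m}(bc)$ and $\dev_{\theta_{i,m}}(bc)$ by the graded Leibniz rule $\dev_z(bc)=\dev_z(b)\,c+(-1)^{|z||b|}b\,\dev_z(c)$. The factor $(\lambda\cS)^{m-n}$ acts only on the (derivative of the) first argument $a$, hence is inert under this splitting, so that distributing the $\bullet$-product over the two summands and collecting the sign $(-1)^{(|a|-1)|c|}$ reproduces the computation in Proposition \ref{thm:ds_prop}(ii) line for line. This is the easiest of the four and requires almost no new input.

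The skewsymmetry \eqref{eq:skew-sch-diff} is the delicate step. Mirroring Proposition \ref{thm:ds_prop}(i), I would write out $\lsb b_\lambda a\rsb$ from \eqref{eq:ds_master}, rewrite each $\bullet$-product via \eqref{eq:superswapbull}, namely $[(a\otimes b)\bullet(c\otimes d)]^\sigma=(-1)^{(|a|+|b|)(|c|+|d|)}(c\otimes d)^\sigma\bullet(a\otimes b)^\sigma$, and use the parity counts $(z'_u+z''_u)=|z|$ and $(z'_\theta+z''_\theta)=|z|-1$ to collapse all accumulated signs into the single factor $(-1)^{(|a|-1)(|b|-1)}$, exactly as in the ultralocal case. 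The genuinely new bookkeeping is in the shift: the factor $(\lambda\cS)^{m-n}$ sitting between the two tensor legs must be transported across $\sigma$ and re-expressed in terms of the opposite variable. Using the commutation rules $\dev_{u^i_n}\cS=\cS\,\dev_{u^i_{n-1}}$ and $\dev_{\theta_{i,n}}\cS=\cS\,\dev_{\theta_{i,n-1}}$ recorded in the preceding sesquilinearity proof, this produces precisely the inverse spectral parameter $(\lambda\cS)^{-1}$ and the global shift flagged by the arrow ${}_\to$ in \eqref{eq:skew-sch-diff}. Carrying the shift and the sign bookkeeping simultaneously is the most error-prone part of the non-Jacobi portion.

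Once skewsymmetry and the left Leibniz rule are in hand, the right Leibniz property \eqref{eq:rLeib-ds} follows formally, exactly as noted in the remark after Definition \ref{def:dmpva}: rewriting $\lsb ab_\lambda c\rsb$ through skewsymmetry as a shift of ${}_\to\lsb c_{(\lambda\cS)^{-1}}ab\rsb^\sigma$, applying the left Leibniz rule to split the product $ab$ in the last slot, and skew-symmetrising each piece back, reproduces the two terms with the $\star$-products and the arrows ${}_\to a$, ${}_\to b$; alternatively one can verify it by a direct computation parallel to the ultralocal right Leibniz proposition. Finally, for the Jacobi identity \eqref{eq:jac-scj-diff} I would expand the three nested double $\lambda$ brackets through the master formula, separating the terms in which the outer derivative hits a factor of $a$, $b$ or $c$ from those in which it hits a factor produced by an inner double derivative; the latter ``second-derivative'' terms cancel by the commutativity of double derivatives (\cite[Lemma 3.7]{dskv15}), exactly as in the proof of Lemma \ref{lem:comm_vf}, while the former combine to zero after relabelling of the summation indices and use of sesquilinearity. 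This is the long computation to be carried out in Appendix \ref{app:Jacobi}, and it is the principal obstacle of the proposition.
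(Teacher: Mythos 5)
Your treatment of parts (i)--(iii) coincides with the paper's own proof: skewsymmetry is obtained by writing out $\lsb b_{\lambda}a\rsb$ from \eqref{eq:ds_master} and transporting the shift factor $\cS^{m-n}$ (together with $\lambda^{m-n}$) across the swap $\sigma$, the left Leibniz rule is the graded Leibniz expansion of the derivatives of $bc$ with the shift factor inert because it acts only on the derivatives of $a$, and the right Leibniz rule is deduced formally from these two rather than recomputed.

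The only divergence is in part (iv). The paper does not prove the Jacobi identity by a direct master-formula expansion for arbitrary elements: it proves the ultralocal identity (Proposition \ref{thm:dJac}) by induction on $\theta$-degree --- the initial cases being triples built from local functionals and $1$-vector fields (Lemmas \ref{lem:2v1f} and \ref{thm:Jac_vf}, where commutativity of double derivatives is the key cancellation), and the inductive step (Lemma \ref{thm:Sch_Jac_ind}) raising the degree by multiplication with a degree-$1$ element using only the Leibniz rules --- and then observes that the $\lambda$-version follows the same lines with \eqref{eq:lLeib-ds} in place of the ultralocal Leibniz property. Your direct expansion is a viable alternative, in the spirit of \cite[Theorem 2.8]{dskv15} adapted to the graded setting, but your description of the cancellation is slightly off: since the brackets that \eqref{eq:ds_master} assigns to generators (the $u$'s and $\theta$'s) are constant, every surviving term of the three nested brackets carries exactly one second derivative, and these cancel \emph{pairwise across the three Jacobi terms} via graded commutativity of double derivatives; there is no residual class of ``first-derivative'' terms to be killed by relabelling and sesquilinearity --- that mechanism belongs to the commutator computation of Lemma \ref{lem:comm_vf}, where the first-derivative terms survive to form the answer, not to the Jacobi identity. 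What the paper's induction buys is that the delicate graded sign bookkeeping is confined to low-degree elements plus a formal Leibniz step; your route would instead require carrying graded signs through second-derivative terms of elements of arbitrary degree, which is heavier but not wrong.
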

\begin{proof}
For the sake of compactness, we omit the symbol for the summation over all the indices $i=1,\ldots,\ell$ and $m,n\in \Z$. To prove \eqref{eq:skew-sch-diff}, we compute using \eqref{eq:ds_master}

\begin{equation}
\begin{split}
\lsb b_{\lambda}a\rsb&=-(-1)^{(|a|-1)(|b|-1)}\left(\left(\cS^{n-m}\frac{\dev b}{\dev u^i_m}\right)\bullet \left(\frac{\dev a}{\dev \theta_{i,n}}\right)^\sigma\right.\\
&\qquad\qquad\qquad\qquad\left.+(-1)^{|b|}\left(\cS^{n-m}\frac{\dev b}{\dev \theta_{i,m}}\right)\bullet \left(\frac{\dev a}{\dev u^i_n}\right)^\sigma\right)^\sigma\lambda^{n-m}\\
&=-(-1)^{(|a|-1)(|b|-1)}\lambda^{n-m}\cS^{n-m}\left(\frac{\dev b}{\dev u^i_m}\bullet \cS^{m-n}\left(\frac{\dev a}{\dev \theta_{i,n}}\right)^\sigma\right.\\
&\qquad\qquad\qquad\qquad\left.+(-1)^{|b|}\frac{\dev b}{\dev \theta_{i,m}}\bullet \cS^{m-n}\left(\frac{\dev a}{\dev u^i_n}\right)^\sigma\right)^\sigma.
\end{split}
\end{equation}
If we write, symbolically, $\lsb a_{\lambda}b\rsb=B'\otimes B''\lambda^{m-n}$, then in the last passage we can read $\lambda^{n-m}\cS^{n-m}\left(B'\otimes B''\right)^\sigma$, namely ${}_\to\lsb a_{(\lambda\cS)^{-1}}b\rsb$. We have therefore obtained \eqref{eq:skew-sch-diff} as claimed.

For \eqref{eq:lLeib-ds}, we have explicitly from \eqref{eq:ds_master} that
\begin{equation}
\begin{split}\lsb a_\lambda bc\rsb&=\left(\frac{\dev b}{\dev u^i_m}c+b\frac{\dev c}{\dev u^i_m}\right)\bullet\left(\lambda\cS\right)^{m-n}\left(\frac{\dev a}{\dev\theta_{i,n}}\right)^\sigma\\
&\qquad+(-1)^{|b|+|c|}\left(\frac{\dev b}{\dev\theta_{i,m}}c+(-1)^{|b|}b\frac{\dev c}{\dev\theta_{i,m}}\right)\bullet\left(\lambda\cS\right)^{m-n}\left(\frac{\dev a}{\dev u^i_n}\right)^\sigma\\
&=b\lsb a_{\lambda} c\rsb+(-1)^{|c|(|a|-1)}\left[\frac{\dev b}{\dev u^i_m}\bullet (\lambda\cS)^{m-n}\left(\frac{\dev a}{\dev\theta_{i,n}}\right)^\sigma\right]c\\&\qquad+(-1)^{|b|+|c|+|c||a|}\left[\frac{\dev b}{\dev\theta_{i,m}}\bullet(\lambda\cS)^{m-n}\left(\frac{\dev a}{\dev u^i_n}\right)^\sigma\right]c.
\end{split}
\end{equation}
Note that the set of sign rules \eqref{eq:superswap}, \eqref{eq:superinner}, \eqref{eq:superbullet} can be summarized as if the formulae followed the standard rule for a $\theta$-graded commutative product. This means that we produce a $(-1)^{|a||b|}$ factor whenever we exchange the position of two elements $a$ and $b$ following the definition of bullet, outer or inner module product, or we swap the factors in $\hA\otimes\hA$. Observe that the sign factor in the third term of the previous equality can be written as $(-1)^{|c|(|a|+1)+|b|}=(-1)^{|c|(|a|-1)+|b|}$, that we can collect the overall $(-1)^{|c|(|a|-1)}$ factor in the second and third terms and express them as $(-1)^{|c|(|a|-1)}\lsb a_{\lambda}b\rsb c$, namely the second term in the RHS of \eqref{eq:lLeib-ds}. 

Property \eqref{eq:rLeib-ds} follows from \eqref{eq:lLeib-ds} together with the skewsymmetry \eqref{eq:skew-sch-diff}.

The proof for \eqref{eq:jac-scj-diff} follows the same lines as the one for Proposition \ref{thm:dJac} in Appendix \ref{app:Jacobi}. The formal parameters $\lambda$ and $\mu$ play little role in the overall computations: note that in all the instances where the left Leibniz property for the double Schouten bracket (Part (ii) of Proposition \ref{thm:ds_prop}) is used, we replace it with the same property for the double Schouten $\lambda$ bracket \eqref{eq:lLeib-ds}.
\end{proof}

Similarly to what we observed in the previous section, the bracket \eqref{eq:ds_master} is a double $\lambda$ bracket satisfying a skewsymmetry and a double Jacobi identity with suitable and consistent grading. We call this bracket a \emph{double Schouten $\lambda$ bracket} and use it to define the Schouten bracket on the space $\hF$. As before, we give a well-defined bracket on it and we prove that it satisfies the properties of Definition \ref{def:sch-gen}.

\begin{proposition}\label{thm:schbr-l}
Given two elements $a,b\in\hF$, the bracket
\begin{equation}\label{eq:schdef-diff}
[a,b]:=\int\tr m\left(\lsb a_{\lambda}b\rsb\right)\big|_{\lambda=1}
\end{equation}
is a bilinear map $\hF^p\times\hF^q\to\hF^{p+q-1}$ that satisfies the graded skewsymmetry \eqref{eq:schskew} and the graded Jacobi identity \eqref{eq:schJac}.
\end{proposition}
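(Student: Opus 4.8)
The plan is to transport the four properties from the double Schouten $\lambda$ bracket $\lsb-_\lambda-\rsb$ on $\hA$ to the bracket $[-,-]$ on the quotient $\hF$, exactly as was done in the ultralocal case (Propositions \ref{def:sch} and \ref{thm:Sch_skew}), the only genuinely new features being the presence of the formal parameter $\lambda$ --- tamed by the sesquilinearity relations \eqref{eq:lsesq-dsb}--\eqref{eq:rsesq-dsb} once we set $\lambda=1$ --- and the shift operators, which are absorbed by the integral through $\int\tr\cS=\int\tr$. Bilinearity and the grading are immediate: $m$, $\tr$ and $\int$ are linear, while $\lsb-_\lambda-\rsb$ is bilinear and lowers the $\theta$-degree by one without $m$ or $\int\tr$ changing it, so $[a,b]\in\hF^{p+q-1}$ for $a\in\hF^p$, $b\in\hF^q$.

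First I would check that \eqref{eq:schdef-diff} descends to $\hF$, i.e.\ that it is insensitive to adding elements of $(\cS-1)\hA$ and $[\hA,\hA]$ in either slot. Invariance under $\cS$ is the content of the computation behind \eqref{eq:wellJac}: applying $\int\tr m(\cdot)|_{\lambda=1}$ to $\lsb\cS a_\lambda b\rsb=\lambda^{-1}\lsb a_\lambda b\rsb$ and to $\lsb a_\lambda\cS b\rsb=\lambda\cS\lsb a_\lambda b\rsb$ returns $[a,b]$ after the shift is eaten by the trace and $\lambda$ is set to $1$. Vanishing on commutators in the second slot follows the ultralocal argument of Proposition \ref{def:sch} \emph{mutatis mutandis}: one expands $[a,bc-(-1)^{|b||c|}cb]$ with the graded left Leibniz property \eqref{eq:lLeib-ds} and cancels the resulting terms using the graded commutativity of the trace \eqref{eq:supertr}. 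For the first slot I would reduce to the second via the skewsymmetry \eqref{eq:skew-sch-diff}, using the key identity
\begin{equation}
\int\tr m\left({}_\to\lsb c_{(\lambda\cS)^{-1}}x\rsb^\sigma\right)\big|_{\lambda=1}=\int\tr m\left(\lsb c_\lambda x\rsb\right)\big|_{\lambda=1},
\end{equation}
which holds because the trace absorbs the overall shift $\cS^{-p}$, the specialisation $\lambda=1$ removes the $\lambda^{-p}$, and the two graded sign factors produced by $\sigma$ and by \eqref{eq:supertr} cancel.

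The graded skewsymmetry \eqref{eq:schskew} is then a one-line consequence of \eqref{eq:skew-sch-diff}: applying $\int\tr m(\cdot)|_{\lambda=1}$ together with the displayed identity to $\lsb b_\lambda a\rsb=-(-1)^{(|a|-1)(|b|-1)}{}_\to\lsb a_{(\lambda\cS)^{-1}}b\rsb^\sigma$ yields $[b,a]=-(-1)^{(|a|-1)(|b|-1)}[a,b]$, mirroring the proof of \eqref{eq:schskew} in Proposition \ref{thm:Sch_skew}.

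The substantial step is the graded Jacobi identity \eqref{eq:schJac}, which I would derive from the double Jacobi identity \eqref{eq:jac-scj-diff} of Proposition \ref{thm:property-ds}. The strategy is to show that, after applying $m$ to collapse the tensor factors and then $\int\tr$, and after specialising $\lambda=\mu=1$ (sesquilinearity handling the combined parameter $\lambda\mu$ in the last term), the three summands $\lsb a_\lambda\lsb b_\mu c\rsb\rsb_L$, $\lsb b_\mu\lsb a_\lambda c\rsb\rsb_R$ and $\lsb\lsb a_\lambda b\rsb_{\lambda\mu}c\rsb_L$ reduce respectively to the iterated Schouten brackets $[a,[b,c]]$, $[b,[a,c]]$ and $[[a,b],c]$ on $\hF$. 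This reduction relies on the well-definedness just established (so that an inner bracket may be replaced by its class in $\hF$) and on carefully matching the decorations $_L$, $_R$, $\otimes_1$ and the placements of $m$ against the nested structure of the Schouten brackets, tracking the $\theta$-degree signs throughout. This bookkeeping, rather than any conceptual difficulty, is the main obstacle; it runs in complete parallel with the ultralocal derivation of \eqref{eq:schJac} in Appendix \ref{app:Jacobi}, the formal parameters playing only a passive role (as already noted for \eqref{eq:jac-scj-diff}), with each use of the ungraded left Leibniz rule replaced by its $\lambda$-bracket counterpart \eqref{eq:lLeib-ds}.
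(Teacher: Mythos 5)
Your proposal is correct and follows essentially the same route as the paper's proof: well-definedness on $\hF$ via the sesquilinearity relations \eqref{eq:lsesq-dsb}--\eqref{eq:rsesq-dsb} for shifts and the Leibniz/skewsymmetry properties for commutators, with the graded skewsymmetry and Jacobi identity then inherited from Proposition \ref{thm:property-ds} by repeating the ultralocal arguments of Proposition \ref{thm:Sch_skew} and Appendix \ref{app:Jacobi}. Your explicit identity $\int\tr m\left({}_\to\lsb c_{(\lambda\cS)^{-1}}x\rsb^\sigma\right)\big|_{\lambda=1}=\int\tr m\left(\lsb c_\lambda x\rsb\right)\big|_{\lambda=1}$ is a correct and useful spelling-out of a step the paper leaves implicit.
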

\begin{proof}
We prove that \eqref{eq:schdef-diff} is well-defined. To check that the result does not change if we perform cyclic permutations of $a$ and $b$, we can follow the proof of Proposition \ref{def:sch}, using the Leibniz properties \eqref{eq:lLeib-ds} and \eqref{eq:rLeib-ds}. In addition, we have to check that the same happens if we replace $a$ (resp. $b$) with $\cS a$ (resp. $\cS b$), since in $\hF$ the two are identified. 
From \eqref{eq:lsesq-dsb} we have
$$
\lsb \cS a_{\lambda}b\rsb\big|_{\lambda=1}=\lsb a_{\lambda}b\rsb\big|_{\lambda=1},
$$
so that the value of the bracket in $\hF$ does not change. Similarly, from \eqref{eq:rsesq-dsb} we have
$$
\lsb a_{\lambda}\cS b\rsb\big|_{\lambda=1}=\cS \lsb a_{\lambda} b\rsb\big|_{\lambda=1},
$$
which yields the same result as $\lsb a_{\lambda}b\rsb|_{\lambda=1}$ after the integration.

The skewsymmetry and the Jacobi identity for \eqref{eq:schdef-diff} follow from Proposition \ref{thm:property-ds} as in the analogue ultralocal case (compare with Proposition \ref{thm:Sch_skew} and Appendix \ref{app:Jacobi}).
\end{proof}

From Proposition \ref{thm:schbr-l} we can conclude that the bracket defined in \eqref{eq:schdef-diff} is the \emph{Schouten bracket} for nonabelian difference polyvector fields. Moreover, observe that the double Schouten bracket \eqref{eq:ds_master_ul} (and its corresponding Schouten bracket \eqref{eq:schdef}) can be regarded as a special case of the double Schouten $\lambda$ bracket \eqref{eq:ds_master}. This is the reason why we use the same notation for \eqref{eq:schdef} and \eqref{eq:schdef-diff}.
\subsection{Poisson bracket revisited}	\label{ssec:pois}
The identification of $\hF$ with the complex of polyvector fields and the introduction of a Schouten bracket \eqref{eq:schdef-diff} on it allows us to replace the definition of Poisson operator \eqref{eq:HamProp} with the standard language of Poisson geometry.

Let us consider a bivector $B$ of the form 
\begin{equation}\label{eq:Pbiv_def}
B=\sum_{p,\alpha_p}\int\tr \theta_i H^{(\alpha_p)ij}_{L}\theta_{j,p}H^{(\alpha_p)ij}_{R},
\end{equation}
defined, according to \eqref{eq:defBiv}, by the skewsymmetric operator
$$
H=\sum\left(\l_{H^{(\alpha_p)ij}_{L}}\r_{H^{(\alpha_p)ij}_{R}}\cS^{p}-\cS^{-p}\r_{H^{(\alpha_p)ji}_{L}}\l_{H^{(\alpha_p)ji}_{R}}\right).
$$

As in classical Poisson geometry, we say that $H$ is a Hamiltonian operator if it has two main properties (which are shared by double Poisson brackets and double Poisson $\lambda$ bracket, see \cite{dskv15}):
\begin{enumerate}
\item It defines a Lie algebra on $\F$, namely it defines a skewsymmetric bracket which fulfils the Jacobi identity (a Poisson bracket)
\item Defines an action of $\F$ on $\A$ by derivations (Hamiltonian vector fields).
\end{enumerate}
We can define and interpret both these structures using only the Schouten bracket and a bivector satisfying some constraints, starting from the action of $\F$ on $\A$.

\begin{proposition}
Let $B$ be a bivector and $F=\int\tr f$ a local functional. Then the evolutionary vector field associated to $F$ and produced by the operator $H$ is
\begin{equation}\label{eq:hamvf_sch}
X_F=-[B,F].
\end{equation}
\end{proposition}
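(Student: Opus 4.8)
The plan is to compute the Schouten bracket $[B,F]$ straight from its definition \eqref{eq:schdef-diff}, namely $[B,F]=\int\tr m(\lsb B_\lambda F\rsb)\big|_{\lambda=1}$, and then to recognise the outcome as a $1$-vector field whose characteristics are minus those prescribed by \eqref{eq:HamSyst_expl}. Since $B\in\hF^2$ and $F=\int\tr f\in\hF^0$, the result lives in $\hF^1$; I will massage it (by integration by parts) into the canonical form $\int\tr c^i\theta_i$ of \eqref{eq:pvect-gen}, so that $\{c^i\}$ can be read off as its characteristics. Throughout I use the density representative $\omega=\sum_{p,\alpha_p}\theta_i H^{(\alpha_p)ij}_L\theta_{j,p}H^{(\alpha_p)ij}_R$ of $B$, relying on Proposition \ref{thm:schbr-l} to guarantee that \eqref{eq:schdef-diff} does not depend on this choice.

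First I would expand $\lsb\omega_\lambda f\rsb$ with the master formula \eqref{eq:ds_master}, taking $a=\omega$ and $b=f$. Because $f$ has $\theta$-degree $0$ we have $\dev f/\dev\theta_{k,m}=0$ and $(-1)^{|f|}=1$, so only the first summand survives:
\[
\lsb\omega_\lambda f\rsb=\sum_{k,m,n}\frac{\dev f}{\dev u^k_m}\bullet(\lambda\cS)^{m-n}\Big(\frac{\dev\omega}{\dev\theta_{k,n}}\Big)^\sigma.
\]
The crucial step is the graded double derivative $\dev\omega/\dev\theta_{k,n}$. Since $\omega$ is bilinear in $\theta_{i,0}$ and $\theta_{j,p}$, the graded Leibniz rule produces exactly two families of terms: one from differentiating $\theta_{i,0}$ (forcing $n=0$), and one from differentiating $\theta_{j,p}$ (forcing $n=p$ and carrying a sign $-1$ incurred when the derivation passes the leading $\theta_{i,0}$). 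Applying $\sigma$, then $(\lambda\cS)^{m-n}$ as $\lambda^{m-n}\cS^{m-n}$ acting diagonally, then the $\bullet$-product with $\dev f/\dev u^k_m$ (whose sign factors are all trivial here, the Sweedler components being $\theta$-degree $0$), and finally the multiplication $m$, I obtain two explicit families of monomials, which I call $T_1$ (from $n=0$) and $-T_2$ (from $n=p$), so that $m(\lsb\omega_\lambda f\rsb)=T_1-T_2$.

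Next I simplify each family under $\int\tr$. Using the graded cyclicity \eqref{eq:supertr} I bring the two Sweedler components of $\dev f/\dev u^k_m$ adjacent; integrating by parts via $\int\tr\cS=\int\tr$ transfers the shift onto them, and the sum over $m$ reassembles the variational derivative through \eqref{eq:varder_expl}, $\sum_m\cS^{-m}m\big((\dev f/\dev u^k_m)^\sigma\big)=\delta f/\delta u^k$. A further integration by parts normalises the remaining $\theta$ to $\theta_{i,0}$, and a cyclic rearrangement then yields, at $\lambda=1$,
\[
\int\tr T_1=\sum_i\int\tr\Big(\sum_k (H_+^\dagger)^{ik}\frac{\delta f}{\delta u^k}\Big)\theta_i,\qquad \int\tr T_2=\sum_i\int\tr\Big(\sum_k (H_+)^{ik}\frac{\delta f}{\delta u^k}\Big)\theta_i,
\]
where $H_+=\sum\l_{H^{(\alpha_p)ij}_L}\r_{H^{(\alpha_p)ij}_R}\cS^p$ is the local summand of the skewsymmetric operator $H=H_+-H_+^\dagger$ associated with $B$ in \eqref{eq:Pbiv_def}. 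Subtracting and using $H_+^\dagger-H_+=-H$ gives
\[
[B,F]=\int\tr T_1-\int\tr T_2=-\sum_i\int\tr\Big(\sum_k H^{ik}\frac{\delta f}{\delta u^k}\Big)\theta_i,
\]
whose characteristics are $-X^i=-\sum_k H^{ik}\,\delta f/\delta u^k$. By \eqref{eq:HamSyst_expl} these are precisely minus the characteristics of the Hamiltonian vector field $X_F$, establishing $X_F=-[B,F]$.

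The main obstacle is entirely bookkeeping rather than conceptual: one must track the $\theta$-grading signs carefully through the operations $\sigma$, $\bullet$, $\star$ and the graded trace, and keep the several shift operators straight across the repeated integrations by parts. Once the reduction to $\delta f/\delta u^k$ is in place and the identity $H_+^\dagger-H_+=-H$ is recognised, the statement follows; verifying that no stray sign survives the graded cyclic manipulations is the only delicate point.
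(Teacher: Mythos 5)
Your proposal is correct and takes essentially the same route as the paper's proof: both expand $[B,F]$ through the master formula (only the $\dev f/\dev u$ summand survives since $\dev f/\dev\theta=0$), compute the graded $\theta$-derivative of the bivector density, reassemble $\delta f/\delta u^k$ via trace cyclicity and integration by parts, and read off the characteristics as $-\sum_k H^{ik}\left(\delta f/\delta u^k\right)$. Your splitting into the two families $T_1$, $T_2$ identified with $H_+^\dagger$ and $H_+$ is only a cosmetic repackaging of the paper's single computation, which carries both terms of the $\theta$-derivative together.
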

\begin{proof}
From the definition of Schouten bracket \eqref{eq:schdef-diff} and the master formula \eqref{eq:ds_master} we have
\begin{equation}\label{eq:hamvf-sch-pf1}
[B,\smallint\tr f]=\int\tr\left(\frac{\dev  f}{\dev u^i_m}\right)'\left(\cS^{m-n}\left(\frac{\dev B}{\dev \theta_{i,n}}\right)''\left(\frac{\dev B}{\dev \theta_{i,n}}\right)'\right)\left(\frac{\dev f}{\dev u^i_m}\right)''.
\end{equation}
The derivative of $B$ with respect to $\theta$ is 
\begin{equation}\label{eq:PoisBiv-pf4}
\left(\frac{\dev B}{\dev \theta_{l,n}}\right)^\sigma=\delta_{n,0} H^{(\alpha_p)lj}_{L}\theta_{j,p}H^{(\alpha_p)lj}_{R}\otimes 1 - H^{(\alpha_n)jl}_{R}\otimes \theta_j H^{(\alpha_n)jl}_{L}.
\end{equation}
After some elementary manipulations we can rewrite \eqref{eq:hamvf-sch-pf1} as
\begin{equation}\label{eq:hamvf-sch-pf2}
[B,\smallint\tr f]=\int\tr\left(H^{(\alpha_p)li}_{L}\theta_{i,p}H^{(\alpha_p)li}_{R}-\left(\cS^{-p}H^{(\alpha_p)il}_{R}\right)\theta_{i,-p}\left(\cS^{-p}H^{(\alpha_p)il}_{L}\right)\right)\frac{\delta f}{\delta u^l}
\end{equation}
Finally, normalising \eqref{eq:hamvf-sch-pf2} collecting $\theta_i$ we obtain the evolutionary vector field
\begin{equation}
[B,\smallint\tr f]=\int\tr\left[\left(\cS^{-p}H^{(\alpha_p)li}_{R}\right)\left(\cS^{-p}\frac{\delta f}{\delta u^l}\right)\left(\cS^{-p}H^{(\alpha_p)li}_{L}\right)-H^{(\alpha_p)il}_{L}\left(\cS^p\frac{\delta f}{\delta u^l}\right)H^{(\alpha_p)il}_{R}\right]\theta_i,
\end{equation}
which is exactly the evolutionary vector field of characteristics
\begin{equation}
X^i=-\sum_lH^{il}\left(\frac{\delta f}{\delta u^l}\right).
\end{equation}
If $H$ is a Hamiltonian operator, this is, up to the sign, the expression for the characteristic of a Hamiltonian vector field \eqref{eq:HamSyst_expl}. Because of this, we put the minus sign in \eqref{eq:hamvf_sch}.
\end{proof}

Given an evolutionary vector field as defined in \eqref{eq:hamvf_sch}, we use the Schouten bracket to define a bracket in $\F$, too. The operational definition of Poisson bracket \eqref{eq:defBra} we have used throughout the paper can be read, indeed, as
\begin{equation}\label{eq:bra-cl}
\{F,G\}=X_G(F),
\end{equation}
or
\begin{equation}\label{eq:defBra-sch0}
\{F,G\}=-[[B,G],F].
\end{equation}
We can call \eqref{eq:bra-cl} a Poisson bracket only if it satisfies skewsymmetry and Jacobi identity (or, equivalently, if $H$ is a Hamiltonian operator); however, any bivector $B$ can be used to define a bracket in $\F$ according to \eqref{eq:defBra-sch0}. 
\begin{proposition}\label{thm:skew}
The bracket \eqref{eq:defBra-sch0} is skewsymmetric.
\end{proposition}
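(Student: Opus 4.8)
The plan is to prove skewsymmetry purely from the Gerstenhaber-algebra structure of the Schouten bracket established in Proposition \ref{thm:schbr-l}, namely the graded skewsymmetry \eqref{eq:schskew} and the graded Jacobi identity \eqref{eq:schJac}, rather than expanding the master formula \eqref{eq:ds_master}. Unwinding the definition \eqref{eq:defBra-sch0}, proving $\{F,G\}=-\{G,F\}$ amounts to the identity $[[B,G],F]=-[[B,F],G]$ for the $0$-vectors $F,G\in\F$ and the bivector $B$, which carry the $\theta$-degrees $|F|=|G|=0$ and $|B|=2$.

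First I would record the elementary fact that the Schouten bracket of two local functionals vanishes: since $[-,-]\colon\hF^p\times\hF^q\to\hF^{p+q-1}$ and $\hF$ is non-negatively graded, $[F,G]\in\hF^{-1}=0$. (Equivalently, both summands of the master formula \eqref{eq:ds_master} contain a $\theta$-derivative of a $\theta$-free density and hence vanish.) Next I would apply the graded Jacobi identity \eqref{eq:schJac} with $a=B$, $b=F$, $c=G$. The prefactor is $(-1)^{(|B|-1)(|F|-1)}=(-1)^{(1)(-1)}=-1$ and the left-hand side is $[B,[F,G]]=[B,0]=0$, so \eqref{eq:schJac} collapses to
\begin{equation*}
0=[[B,F],G]-[F,[B,G]],\qquad\text{i.e.}\qquad [[B,F],G]=[F,[B,G]].
\end{equation*}

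Finally I would rewrite $[F,[B,G]]$ using graded skewsymmetry \eqref{eq:schskew}. Since $[B,G]$ is a $1$-vector and $F$ a $0$-vector, the sign exponent $(|[B,G]|-1)(|F|-1)=(1-1)(0-1)=0$ is trivial, giving $[F,[B,G]]=-[[B,G],F]$. Combining with the previous display yields $[[B,F],G]=-[[B,G],F]$, and therefore
\begin{equation*}
\{F,G\}=-[[B,G],F]=[[B,F],G]=-\{G,F\},
\end{equation*}
which is the claimed skewsymmetry. The only delicate point is the sign bookkeeping in \eqref{eq:schskew} and \eqref{eq:schJac}, but here it is benign: because $F$ and $G$ have $\theta$-degree $0$, every exponent of the form $(|\cdot|-1)(|F|-1)$ either vanishes or contributes the single sign $-1$ coming from $|B|=2$. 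Note that no appeal to the Jacobi identity for the Poisson bracket itself is needed, only to the Jacobi identity for the Schouten bracket; this is what makes skewsymmetry hold for an \emph{arbitrary} bivector $B$, independently of whether $H$ is Poisson.
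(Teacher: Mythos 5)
Your proof is correct and follows essentially the same route as the paper's: both arguments combine the graded Jacobi identity \eqref{eq:schJac} for the Schouten bracket, the vanishing of the bracket between two local functionals, and the graded skewsymmetry \eqref{eq:schskew} to convert $-[[B,G],F]$ into $[[B,F],G]$. The only difference is cosmetic — you start from $[B,[F,G]]=0$ and expand, while the paper expands $-[[B,G],F]$ directly via Jacobi — and your sign bookkeeping matches.
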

\begin{proof}
It simply follows from the Jacobi identity for the Schouten bracket. Indeed, we have
\begin{equation}
\{F,G\}=-[[B,G],F]=-[G,[B,F]]-[B,[G,F]]
\end{equation}
Then, using the fact that the Schouten bracket between local functionals vanishes and the skewsymmetry \eqref{eq:schskew}, we can conclude that
\[
\{F,G\}=-[G,[B,F]]=[[B,F],G]=-\{G,F\}.\qedhere
\]
\end{proof}
We can exploit the skewsymmetry of the Poisson bracket to define, equivalently to \eqref{eq:defBra-sch0},
\begin{equation}\label{eq:defBra-sch}
\{F,G\}=[[B,F],G].
\end{equation}
\begin{lemma}\label{lem:Jac}
The Jacobiator of the bracket \eqref{eq:defBra-sch} is
\begin{equation}\label{eq:lemjac}
\{F,\{G,H\}\}+\{G,\{H,F\}\}+\{H,\{F,G\}\}=-\frac12\left[\left[\left[\left[B,B\right],F\right],G\right],H\right].
\end{equation}
Then, the Jacobi identity is equivalent to
\begin{equation}\label{eq:jac-gen}
[[[[B,B],F],G],H]=0,\qquad\forall F,G,H\in\F.
\end{equation}
\end{lemma}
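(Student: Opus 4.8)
The plan is to derive \eqref{eq:lemjac} by applying the graded Jacobi identity \eqref{eq:schJac} and the graded skewsymmetry \eqref{eq:schskew} of the Schouten bracket (Proposition \ref{thm:schbr-l}) three times in succession, peeling off one functional at a time, without ever touching the master formula \eqref{eq:ds_master}. For brevity I would write $V_A:=[B,A]$ for a local functional $A$; this is a $1$-vector (degree $1$, being $-X_A$ in the notation of \eqref{eq:hamvf_sch}), and by \eqref{eq:defBra-sch} one has the two structural identifications $\{A,C\}=[[B,A],C]=[V_A,C]$ and $V_{\{A,C\}}=[B,\{A,C\}]$. The whole argument reduces to correct bookkeeping of the sign $(-1)^{(p-1)(q-1)}$ in \eqref{eq:schJac}: since $B$ has degree $2$, the fields $V_A$ degree $1$, and $F,G,H$ degree $0$, every sign except the very first one collapses to $+1$.

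First I would feed the triple $(B,B,F)$ into \eqref{eq:schJac}. Because the two outer entries coincide, the sign factor $(-1)^{(2-1)(2-1)}=-1$ and the identity collapses to $2[B,[B,F]]=[[B,B],F]$, that is $[[B,B],F]=2[B,V_F]$. Next I would apply \eqref{eq:schJac} to $(B,V_F,G)$, where the sign is now $+1$ since $\deg V_F=1$; this gives $[B,[V_F,G]]=[[B,V_F],G]+[V_F,[B,G]]$, and substituting $[V_F,G]=\{F,G\}$, $[B,G]=V_G$ and $[B,\{F,G\}]=V_{\{F,G\}}$ yields
\[
[[B,V_F],G]=V_{\{F,G\}}-[V_F,V_G],\qquad\text{hence}\qquad [[[B,B],F],G]=2\bigl(V_{\{F,G\}}-[V_F,V_G]\bigr).
\]

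The last step is to bracket with $H$. On one hand $[V_{\{F,G\}},H]=\{\{F,G\},H\}$; on the other, \eqref{eq:schJac} applied to $(V_F,V_G,H)$ (again with trivial sign) gives $[[V_F,V_G],H]=\{F,\{G,H\}\}-\{G,\{F,H\}\}$. Combining these,
\[
[[[[B,B],F],G],H]=2\bigl(\{\{F,G\},H\}-\{F,\{G,H\}\}+\{G,\{F,H\}\}\bigr).
\]
Finally I would invoke the skewsymmetry of the Poisson bracket (Proposition \ref{thm:skew}) to rewrite $\{\{F,G\},H\}=-\{H,\{F,G\}\}$ and $\{G,\{F,H\}\}=-\{G,\{H,F\}\}$; the right-hand side then becomes exactly $-2$ times the Jacobiator, producing \eqref{eq:lemjac}. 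The equivalence \eqref{eq:jac-gen} is then immediate, since $F,G,H$ are arbitrary.

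The computation is mechanical once the grading is fixed, so the only genuine point of care is justifying the structural identifications rather than the signs: that $[V_A,C]$ really equals the Poisson bracket $\{A,C\}$ (which is just \eqref{eq:defBra-sch}), that the Schouten bracket of two $1$-vectors $[V_F,V_G]$ is a legitimate $1$-vector whose further bracket with $H$ is governed by the same graded Jacobi identity (Proposition \ref{thm:Sch_vf} guarantees this is the commutator of evolutionary fields acting on $H$), and that intermediate brackets between $0$-vectors never appear. I expect these verifications, together with the single nontrivial sign in the $(B,B,F)$ step, to be the main—and only—obstacle.
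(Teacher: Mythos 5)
Your proof is correct and follows essentially the same route as the paper's: both iterate the graded Jacobi identity \eqref{eq:schJac} to peel off $F$, $G$, $H$ one at a time (the paper works with $X_A=-[B,A]$ where you use $V_A=[B,A]$, a purely notational difference), identify the intermediate brackets with Poisson brackets via \eqref{eq:bra-cl}/\eqref{eq:defBra-sch}, and finish with skewsymmetry. The only cosmetic divergence is in the last step, where the paper invokes the commutator-of-vector-fields interpretation of $[[X_F,X_G],H]$ (Proposition \ref{thm:Sch_vf}) while you apply \eqref{eq:schJac} a third time to $(V_F,V_G,H)$; these are equivalent manipulations.
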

\begin{proof}
Let us first consider the two innermost brackets on the RHS of \eqref{eq:jac-gen}. By the graded Jacobi identity for the Schouten bracket we have
\begin{equation}\label{eq:lemJac-pf1}
\left[\left[B,B\right],F\right]=2\left[B,\left[B,F\right]\right]\stackrel{\mathrm{\eqref{eq:hamvf_sch}}}{=}-2\left[B,X_F\right].
\end{equation}
Then
\begin{equation}\label{eq:lemJac-pf2}
\begin{split}
\left[\left[\left[B,B\right],F\right],G\right]&=-2\left[\left[B,X_F\right],G\right]=-2\left[B,\left[X_F,G\right]\right]+2\left[X_F,\left[B,G\right]\right]\\
&\stackrel{\mathrm{\eqref{eq:bra-cl}}}{=}-2\left[B,\left\{G,F\right\}\right]-2\left[X_F,X_G\right]=2X_{\{G,F\}}-2\left[X_F,X_G\right].
\end{split}
\end{equation}
Moving to the outermost bracket, we obtain
\begin{equation}\label{eq:lemJac-pf3}
\begin{split}
\left[\left[\left[\left[B,B\right],F\right],G\right],H\right]&=2X_{\{G,F\}}(H)-2X_F\left(X_G(H)\right)+2X_G\left(X_F(H)\right)\\
&=2\left(\left\{H,\left\{G,F\right\}\right\}-\left\{\left\{H,G\right\},F\right\}+\left\{\left\{H,F\right\},G\right\}\right).
\end{split}
\end{equation}
We then obtain the LHS of \eqref{eq:lemjac} using skewsymmetry. The vanishing of the RHS is the Jacobi identity for the bracket $\{-,-\}$, hence it is equivalent to \eqref{eq:jac-gen} as claimed. 
\end{proof}
We discussed the condition that $B$ must satisfy in order to define a Poisson bracket, which endows $\F$ with a Lie algebra structure. On the other hand, $B$ allows us to define evolutionary vector fields associated to local functionals by \eqref{eq:hamvf_sch}. If we want it to define an \emph{action} of $\F$ on $\A$ we must also ascertain that there is a Lie algebra morphism between the Lie algebra of local functionals $(\F,\{-,-\})$ and that of (evolutionary) vector fields $(\A,[-,-])$ (note that the we have already proved that the commutator of vector fields in $\A$ is equivalent to the Schouten bracket of 1-vectors in $\hF$ in Proposition \ref{thm:Sch_vf}). The condition is
\begin{equation}\label{eq:act-gen0}
X_{\{F,G\}}=-[X_F,X_G],
\end{equation}
which is equivalent (see \eqref{eq:lemJac-pf2}) to
\begin{equation}\label{eq:act-gen}
[[[B,B],F],G]=0.
\end{equation}
We saw that a generic bivector $B$ defines a skewsymmetric bracket, and how the properties we require from a Poisson bracket are expressed in terms of the Schouten bracket.
\subsubsection{Poisson bivectors} We introduced Poisson bivectors in Definition \ref{def:Pois}, but now we give a characterisation in terms of Schouten brackets, as in classical Poisson geometry.
\begin{theorem}\label{thm:PoisBiv}
Let $P$ be a bivector defined by the operator $H$. Then
\begin{equation}
2[P,P]=\mathbf{pr}_{H\Theta}P.
\end{equation}
Thus, $P$ is Poisson if and only if $[P,P]=0$.
\end{theorem}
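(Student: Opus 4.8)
The plan is to compute the Schouten self-bracket $[P,P]$ directly from its definition \eqref{eq:schdef-diff} and the double Schouten $\lambda$ bracket master formula \eqref{eq:ds_master}, and to recognise the outcome as the prolongation $\mathbf{pr}_{H\Theta}P$. Once the proportionality $2[P,P]=\mathbf{pr}_{H\Theta}P$ is in hand, the final claim is immediate: since $[P,P]$ and $\mathbf{pr}_{H\Theta}P$ differ by a nonzero constant, $[P,P]=0$ holds precisely when $\mathbf{pr}_{H\Theta}P=0$, which is the defining condition \eqref{eq:HamProp} for $P$ to be Poisson (Definition \ref{def:Pois}). I would carry out the computation in the local (difference) setting, the ultralocal case being the specialisation of \eqref{eq:ds_master_ul}.

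The crucial structural input is the $\theta$-derivative of the bivector density $B$ of $P$, already computed in \eqref{eq:PoisBiv-pf4}: up to shifts and signs, $\dev B/\dev\theta_{l,n}$ reassembles the characteristics $(H\Theta)^l=\sum_k H^{lk}\theta_k$ of the Hamiltonian prolongation. Feeding this into the master formula \eqref{eq:ds_master} for $\lsb B_\lambda B\rsb$, the first summand pairs $\dev B/\dev u^i_m$ (the $u$-derivative of the density) with $(\dev B/\dev\theta_{i,n})^\sigma$ (the characteristics) through the $\bullet$ product and the factor $(\lambda\cS)^{m-n}$. After applying $m$ and $\int\tr(-)\big|_{\lambda=1}$, this is exactly the chain-rule expression \eqref{eq:evvfield-action} for the evolutionary vector field of characteristics $H\Theta$ acting on $B$, that is $\mathbf{pr}_{H\Theta}P$.

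Next I would treat the two summands of \eqref{eq:ds_master} symmetrically. By the graded skewsymmetry \eqref{eq:skew-sch-diff} of the double Schouten $\lambda$ bracket, together with the coincidence of the two arguments, the second summand (built from $\dev B/\dev\theta_{i,m}$ and $\dev B/\dev u^i_n$) contributes, after $\int\tr m(-)\big|_{\lambda=1}$, the same quantity as the first; this is precisely where the overall numerical factor between $[P,P]$ and $\mathbf{pr}_{H\Theta}P$ is produced. Collecting the two contributions and comparing with the direct expansion of \eqref{eq:HamProp} yields the stated identity.

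The routine but delicate part, and the main obstacle, is the sign and shift bookkeeping: one must track the graded $\sigma$, $\bullet$ and inner/outer module signs \eqref{eq:superswap}--\eqref{eq:superbullet}, integrate by parts the shift operators $(\lambda\cS)^{m-n}$ under $\int\tr$ to free the variational derivatives, and exploit the cyclicity \eqref{eq:supertr} of the graded trace to bring the monomials of the schematic form $\theta_i f\,\theta_j g\,\theta_k h$ into a common cyclic order before summing. Exactly as in the ultralocal argument used in the proof of Proposition \ref{thm:poisdouble-ul}, it is this reorganisation under the trace that makes the terms combine in pairs and pins down the final constant. Having matched the coefficient, the equivalence ``$P$ Poisson $\iff[P,P]=0$'' follows at once.
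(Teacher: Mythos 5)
Your proposal is correct and follows essentially the same route as the paper: the paper likewise expands $[P,P]$ via the master formula \eqref{eq:ds_master}, proves as Lemma \ref{lem:PoisBiv-pf} (by the same trace/cyclicity argument you invoke through skewsymmetry) that the two summands coincide, and then identifies the remaining term with $\mathbf{pr}_{H\Theta}P$ by recognising the characteristics $\cS^m(H\Theta)^l$ sandwiched between the factors of $\dev_{u^l_m}H^{ij}_{L,R}$, exactly your chain-rule step built on \eqref{eq:PoisBiv-pf4}. The only difference is one of explicitness, not of method.
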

The quantity $[P,P]$ is, in general, a 3-vector which is called \emph{Schouten torsion of $P$}. Before proving our claim, let us start with a preliminary Lemma.
\begin{lemma}\label{lem:PoisBiv-pf}
Let $P$ be a bivector. Then
\begin{equation}\label{eq:PoisBiv-pf}
\int\tr m\left[\frac{\dev P}{\dev u^i_m}\bullet\cS^{m-n}\left(\frac{\dev P}{\dev \theta_{i,n}}\right)^\sigma\right]=\int\tr m\left[\frac{\dev P}{\dev \theta_{i,m}}\bullet\cS^{m-n}\left(\frac{\dev P}{\dev u^i_n}\right)^\sigma\right]
\end{equation}
\end{lemma}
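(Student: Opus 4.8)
The plan is to recognise the two sides of \eqref{eq:PoisBiv-pf} as the two summands produced by the master formula \eqref{eq:ds_master} for the double Schouten $\lambda$ bracket $\lsb P_\lambda P\rsb$ evaluated at $\lambda=1$, with $a=b=P$ and $(-1)^{|P|}=1$ (since $P$ is a bivector, $|P|=2$). The left-hand side is the part in which the $u$-derivative falls on the second copy of $P$ and the $\theta$-derivative on the first, whereas the right-hand side is the opposite assignment. Thus the content of the lemma is precisely that these two halves of $[P,P]=\int\tr m(\lsb P_\lambda P\rsb)\big|_{\lambda=1}$ coincide after applying $\int\tr m$, and I would prove it by transforming the left-hand integrand into the right-hand one through a chain of manipulations that are all invisible under $\int\tr m$.

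First I would invoke the identity $\int\tr m(X^\sigma)=\int\tr m(X)$, which is an immediate consequence of \eqref{eq:superswap} and \eqref{eq:supertr}, to rewrite the left-hand side as $\int\tr m\bigl[(A\bullet C)^\sigma\bigr]$, where $A=\dev P/\dev u^i_m$ and $C=\cS^{m-n}(\dev P/\dev\theta_{i,n})^\sigma$. Applying \eqref{eq:superswapbull} gives $(A\bullet C)^\sigma=(-1)^{|A||C|}C^\sigma\bullet A^\sigma$; here $A$ has $\theta$-degree $2$ and $C$ has $\theta$-degree $1$, so the sign $(-1)^{|A||C|}=(-1)^{2}$ is trivial. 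Since $\sigma$ is an involution that commutes with the (diagonal) shift, $C^\sigma=\cS^{m-n}\,\dev P/\dev\theta_{i,n}$, and the integrand becomes $\cS^{m-n}(\dev P/\dev\theta_{i,n})\bullet(\dev P/\dev u^i_m)^\sigma$. I would then shift the whole expression by $\cS^{\,n-m}$, which is legitimate because $m\circ\cS=\cS\circ m$ and $\int\tr\cS=\int\tr$, and because $\cS$ distributes over $\bullet$ without producing signs; this moves the shift off the $\theta$-derivative and onto the $u$-derivative, yielding $\dev P/\dev\theta_{i,n}\bullet\cS^{\,n-m}(\dev P/\dev u^i_m)^\sigma$. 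Finally, relabelling the dummy summation indices $m\leftrightarrow n$ (both run over all of $\Z$) reproduces exactly the right-hand integrand, which completes the identification.

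The only genuine obstacle is the grading bookkeeping: one must check that every sign emitted by $\sigma$, by the bullet product, and by the graded trace cancels. The decisive observation is that the two tensors being bulleted carry $\theta$-degrees $2$ and $1$, so the transposition sign in \eqref{eq:superswapbull} is even and vanishes; with this in hand, the remaining moves — involutivity of $\sigma$, commutation of $\sigma$ with $\cS$, shift-invariance of $\int\tr m$, and the relabelling of $m,n$ — are all sign-free and purely formal, so that no cancellation beyond the single trace identity $\int\tr m(X^\sigma)=\int\tr m(X)$ is ever required.
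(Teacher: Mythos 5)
Your proof is correct and follows essentially the same route as the paper's: both arguments swap the two factors of the bullet product under the graded trace, use shift invariance of $\int\tr$ to move $\cS^{m-n}$ from one derivative to the other, and finish by relabelling the dummy indices $m\leftrightarrow n$, with the key sign observations (all reducing to $|P|=2$) guaranteeing that no sign survives. The only difference is presentational: you package the sign bookkeeping into the identities $\int\tr m(X^\sigma)=\int\tr m(X)$ and \eqref{eq:superswapbull}, whereas the paper carries out the same cancellation explicitly in Sweedler components via a graded cyclic permutation.
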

\begin{proof}
The computations are essentially the same performed in the proof of \eqref{eq:skew-sch-diff} for the double Schouten bracket. Note that $|P|=2$, so that $|(\dev_u P)'|\equiv|(\dev_u P)''|$, $|(\dev_\theta P)'|+|(\dev_\theta P)''|=1$, and $|(\dev_\theta P)'||(\dev_\theta P)''|=0$. Then, from the definition of the graded version of the bullet product and the swap operation we have that the LHS of \eqref{eq:PoisBiv-pf} is
\begin{equation}
(-1)^{|(\dev_u P)''|}\int\tr \left(\frac{\dev P}{\dev u^i_m}\right)'\cS^{m-n}\left(\left(\frac{\dev P}{\dev \theta_{i,n}}\right)''\left(\frac{\dev P}{\dev \theta_{i,n}}\right)'\right)\left(\frac{\dev P}{\dev u^i_m}\right)''.
\end{equation}
Taking a graded cyclic permutations of the integrand and keeping into account the possible grading of each factor we obtain
\begin{gather}
(-1)^{|(\dev_u P)'||(\dev_u P)''|}\int\tr\left(\cS^{m-n}\left(\frac{\dev P}{\dev \theta_{i,n}}\right)'\right)\left(\frac{\dev P}{\dev u^i_m}\right)''\left(\frac{\dev P}{\dev u^i_m}\right)'\left(\cS^{m-n}\left(\frac{\dev P}{\dev \theta_{i,n}}\right)''\right)\\
=(-1)^{|(\dev_u P)'||(\dev_u P)''|}\int\tr \left(\frac{\dev P}{\dev \theta_{i,n}}\right)'\left(\cS^{n-m}\left(\frac{\dev P}{\dev u^i_m}\right)''\left(\frac{\dev P}{\dev u^i_m}\right)'\right)\left(\frac{\dev P}{\dev \theta_{i,n}}\right)''\\
=\int\tr m \frac{\dev P}{\dev \theta_{i,m}}\bullet\cS^{m-n}\left(\frac{\dev P}{\dev u^i_n}\right)^\sigma.\tag*{\mbox{\qedhere}}
\end{gather}
\end{proof}
\begin{proof}[Proof of Theorem \ref{thm:PoisBiv}]
Combining the master formula for the double Schouten bracket \eqref{eq:ds_master} with \eqref{eq:schdef-diff} we compute
\begin{equation}\label{eq:PoisBiv-pf1}
[P,P]=\int\tr m\left[\frac{\dev P}{\dev u^i_m}\bullet\cS^{m-n}\left(\frac{\dev P}{\dev \theta_{i,n}}\right)^\sigma+\frac{\dev P}{\dev \theta_{i,m}}\bullet\cS^{m-n}\left(\frac{\dev P}{\dev u^i_n}\right)^\sigma\right].
\end{equation}
Lemma \ref{lem:PoisBiv-pf} tells us that it is sufficient to compute just one of the two terms of \eqref{eq:PoisBiv-pf1}, namely
\begin{equation}\label{eq:PoisBiv-pf2}
[P,P]=2\int\tr m\left[\frac{\dev P}{\dev u^i_m}\bullet\cS^{m-n}\left(\frac{\dev P}{\dev \theta_{i,n}}\right)^\sigma\right].
\end{equation}
According to the definition of $P$ \eqref{eq:Pbiv_def}, we have
\begin{equation}
\frac{\dev P}{\dev u^l_m}=\theta_i \left(\dev_{u^l_m}H^{(\alpha_p)ij}_{L}\right)'\otimes\left(\dev_{u^l_m}H^{(\alpha_p)ij}_{L}\right)''\theta_{j,p} H^{(\alpha_p)ij}_{R}+\theta_i H^{(\alpha_p)ij}_{L}\theta_{i,p}\left(\dev_{u^l_m}H^{(\alpha_p)ij}_{R}\right)'\otimes\left(\dev_{u^l_m}H^{(\alpha_p)ij}_{R}\right)'',
\end{equation}
while we have already a formula for the derivative with respect to $\theta$ in \eqref{eq:PoisBiv-pf4}. The computation for \eqref{eq:PoisBiv-pf2}, then, produces
\begin{equation}\label{eq:PoisBiv-pf3}
\begin{split}
\frac12[P,P]&=-\int\tr\left[\theta_i\left(\dev_{u^l_m}H^{(\alpha_p)ij}_{L}\right)'\left(\cS^{m}\left(H^{(\beta_q)lk}_{L}\theta_{k,q}H^{(\beta_q)lk}_{R} \right.\right.\right.\\
&\qquad\quad\left.\left.\left.- \left(\cS^{-q}H^{(\beta_q)kl}_{R}\right)\theta_{k,-q}\left(\cS^{-q} H^{(\beta_q)kl}_{L}\right)\right)\right)\left(\dev_{u^l_m}H^{(\alpha_p)ij}_{L}\right)''\theta_{j,p}H^{(\alpha_p)ij}_{R}\right]\\
&+\quad\int\tr\left[\theta_iH^{(\alpha_p)ij}_{L}\theta_{j,p}\left(\dev_{u^l_m}H^{(\alpha_p)ij}_{R}\right)'\left(\cS^{m}\left(H^{(\beta_q)lk}_{L}\theta_{k,q}H^{(\beta_q)lk}_{R} \right.\right.\right.\\
&\qquad\qquad\left.\left.\left.- \left(\cS^{-q}H^{(\beta_q)kl}_{R}\right)\theta_{k,-q}\left(\cS^{-q} H^{(\beta_q)kl}_{L}\right)\right)\right)\left(\dev_{u^l_m}H^{(\alpha_p)ij}_{R}\right)''\right].
\end{split}
\end{equation}
Observe that the term which is ``sandiwched'' between the two factors of the derivative of $H^{ij}$ is of the form $\cS^m (H\Theta)^l$, for
\begin{equation}
(H\Theta)^l=H^{(\beta_q)lk}_{L}\theta_{k,q}H^{(\beta_q)lk}_{R}-\left(\cS^{-q} H^{(\beta_q)kl}_{R}\right)\theta_{k,-q}\left(\cS^{-p} H^{(\beta_q)kl}_{L}\right).
\end{equation}
On the other hand, by definition we have that \eqref{eq:HamProp} reads
\begin{equation}
\begin{split}
\mathbf{pr}_{H\Theta} P&=-\int\tr\theta_i \left(\dev_{u^l_m} H^{(\alpha_p)ij}_{L}\right)'\left(\cS^m (H\Theta)^l\right)\left(\dev_{u^l_m} H^{(\alpha_p)ij}_{L}\right)''\theta_{j,p}H^{(\alpha_p)ij}_{R}\\
&\quad+\int\tr \theta_i H^{(\alpha_p)ij}_{L}\theta_{j,p}\left(\dev_{u^l_m} H^{(\alpha_p)ij}_{R}\right)'\left(\cS^m (H\Theta)^l\right)\left(\dev_{u^l_m} H^{(\alpha_p)ij}_{R}\right)''=0.
\end{split}
\end{equation}
Then the vanishing of the Poisson property is equivalent to $[P,P]=0$ (or we can regard $2[P,P]$ as an alternative way to write $\mathbf{pr}_{H\Theta}P$). 
\end{proof}

\begin{proposition}\label{thm:pois-suf}
Let $P$ be a Poisson bivector. Then the bracket defined on $\F$ as in \eqref{eq:defBra-sch} is a Poisson bracket, namely
\begin{gather}
\{G,F\}=-\{F,G\},\\
\{F,\{G,H\}\}=\{\{F,G\},H\}+\{G,\{F,H\}\}.
\end{gather}
\end{proposition}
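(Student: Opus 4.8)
The plan is to deduce both identities directly from the structural results already proved for an arbitrary bivector, specialised to the case $B=P$. The key observation is that, by Theorem \ref{thm:PoisBiv}, the hypothesis that $P$ is a Poisson bivector is \emph{equivalent} to the vanishing of its Schouten torsion, $[P,P]=0$; this single fact is what removes the obstruction to the Jacobi identity, so the proposition will be essentially a corollary of what precedes it.

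First I would dispose of skewsymmetry. The bracket \eqref{eq:defBra-sch} is by construction the rewriting of \eqref{eq:defBra-sch0} obtained via skewsymmetry, and Proposition \ref{thm:skew} already establishes that \eqref{eq:defBra-sch0} defines a skewsymmetric bracket for \emph{any} bivector $B$ --- its argument uses only the graded Jacobi identity and graded skewsymmetry of the Schouten bracket (Proposition \ref{thm:schbr-l}) together with the vanishing of $[F,G]$ for two local functionals. Hence $\{G,F\}=-\{F,G\}$ needs no new input.

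For the Jacobi identity I would invoke Lemma \ref{lem:Jac}, which for a general bivector expresses the Jacobiator of \eqref{eq:defBra-sch} as
\[
\{F,\{G,H\}\}+\{G,\{H,F\}\}+\{H,\{F,G\}\}=-\tfrac12\,[[[[B,B],F],G],H].
\]
Setting $B=P$ and substituting $[P,P]=0$ from Theorem \ref{thm:PoisBiv}, the right-hand side vanishes identically, so the cyclic Jacobiator is zero for all $F,G,H\in\F$ (equivalently, condition \eqref{eq:jac-gen} holds trivially). Applying the skewsymmetry established above to rewrite $\{G,\{H,F\}\}=-\{G,\{F,H\}\}$ and $\{H,\{F,G\}\}=-\{\{F,G\},H\}$ then turns this vanishing into the derivation form $\{F,\{G,H\}\}=\{\{F,G\},H\}+\{G,\{F,H\}\}$ of the statement.

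In short, all the genuine computation has already been carried out in Lemma \ref{lem:Jac} and Theorem \ref{thm:PoisBiv}, so I expect no real obstacle here. The only point demanding a moment's care is the bookkeeping when passing from the symmetric (cyclic) Jacobiator of Lemma \ref{lem:Jac} to the derivation form in the statement, but this rearrangement is entirely controlled by the graded skewsymmetry \eqref{eq:schskew}.
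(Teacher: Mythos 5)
Your proposal is correct and follows essentially the same route as the paper: skewsymmetry comes from Proposition \ref{thm:skew} (valid for any bivector), and the Jacobi identity follows from Lemma \ref{lem:Jac} together with $[P,P]=0$, which Theorem \ref{thm:PoisBiv} identifies with the Poisson property. The paper's own proof is just a compressed version of exactly this argument.
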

\begin{proof}
Since $P$ is a bivector, the bracket is skewsymmetric (see Proposition \ref{thm:skew}). The Jacobi identity is equivalent to \eqref{eq:jac-gen} which follows from $[P,P]=0$.
\end{proof}
This means, in particular, that a Poisson operator (namely, an operator defining a Poisson bivector) is always Hamiltonian (namely, it defines a Poisson bracket). Finally, by the same property $[P,P]=0$ we have \eqref{eq:act-gen0}. 

\subsubsection{The Poisson cohomology}
It is a well-known fact in Poisson geometry that the adjoint action of the Poisson bivector defines a cochain complex on the space of polyvector field, the so-called Poisson-Lichnerowicz complex. The cohomology of this complex plays a crucial role both in the study of the Poisson manifolds themselves and for the theory of the integrable systems. Indeed, it characterizes the Casimir functions and the deformations of the Poisson bracket; moreover, the vanishing of the first cohomology group guarantees the integrability for a bi-Hamiltonian system (see \cite{dsk13,kra}). In this section, we choose to define the Poisson cohomology of $\F$ using a Poisson bivector and the notion of Schouten bracket, exactly as in the classical finite dimensional case or for commutative PDEs and D$\Delta$Es, without a direct reference to the underlying noncommutative structure. For the ultralocal case, Pichereau and Van de Weyer \cite{pvdw08,vdw} defined the double Poisson cohomology on $\hA_0$ and showed how this maps on the space $ \hF_0$.

The Schouten bracket we have defined on $\hF$ allows to define the adjoint action of a bivector $P$ on the space,
\begin{equation}\label{eq:adjact-def}
\mathrm{ad}_P\colon B\mapsto [P,B]\qquad\quad B\in \hF^p,\; \mathrm{ad}_P B\in \hF^{p+1}
\end{equation}

\begin{proposition}
Let $P$ be a Poisson bivector. Then $(\mathrm{ad}_P)^2=0$.
\end{proposition}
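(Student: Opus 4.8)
The plan is to reduce the claim to the graded Jacobi identity for the Schouten bracket \eqref{eq:schJac} (proved for the difference case in Proposition \ref{thm:schbr-l}) combined with the characterisation of Poisson bivectors via vanishing Schouten torsion (Theorem \ref{thm:PoisBiv}). This is exactly the abstract fact that, in any Gerstenhaber algebra, the adjoint action of a self-commuting degree-$2$ element squares to zero; the only thing to check is that the conventions and signs already fixed in the paper produce it.

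First I would unfold the definition \eqref{eq:adjact-def}: for $B\in\hF^p$,
\begin{equation}
(\mathrm{ad}_P)^2 B=\mathrm{ad}_P\left(\mathrm{ad}_P B\right)=[P,[P,B]].
\end{equation}
Next I would apply the graded Jacobi identity \eqref{eq:schJac}, which reads $[a,[b,c]]=[[a,b],c]+(-1)^{(|a|-1)(|b|-1)}[b,[a,c]]$, to the triple $a=b=P$, $c=B$. Since $P$ is a bivector, $|P|=2$, so the sign exponent is $(|P|-1)(|P|-1)=1$ and the sign is $-1$. This yields
\begin{equation}
[P,[P,B]]=[[P,P],B]-[P,[P,B]],
\end{equation}
and hence, moving the last term to the left-hand side,
\begin{equation}
2\,[P,[P,B]]=[[P,P],B].
\end{equation}

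Finally, I would invoke Theorem \ref{thm:PoisBiv}: since $P$ is a Poisson bivector its Schouten torsion vanishes, $[P,P]=0$, so the right-hand side above is $[0,B]=0$. Therefore $[P,[P,B]]=0$ for every $B\in\hF^p$, which is precisely $(\mathrm{ad}_P)^2=0$, and in particular $\mathrm{ad}_P$ is a differential of degree $+1$ on $\hF$ (the Poisson--Lichnerowicz differential). I do not expect any genuine obstacle here: the argument is three lines and relies only on results already established in the excerpt. The single point demanding care is the sign bookkeeping in the Jacobi identity, namely verifying that $|P|=2$ forces the factor $(-1)^{(|P|-1)^2}=-1$, so that the two copies of $[P,[P,B]]$ add rather than cancel and one legitimately obtains $2[P,[P,B]]=[[P,P],B]$; the grading conventions \eqref{eq:schskew}--\eqref{eq:schJac} make this unambiguous.
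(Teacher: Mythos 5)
Your proposal is correct and is essentially the paper's own proof: both apply the graded Jacobi identity \eqref{eq:schJac} with $a=b=P$, observe that $|P|=2$ forces the sign $(-1)^{(|P|-1)(|P|-1)}=-1$ so that $2[P,[P,B]]=[[P,P],B]$, and conclude from $[P,P]=0$ (the characterisation of Poisson bivectors in Theorem \ref{thm:PoisBiv}). The only cosmetic difference is ordering: the paper sets $[[P,P],B]=0$ first and then reads off $[P,[P,B]]=-[P,[P,B]]$, whereas you collect the two copies of $[P,[P,B]]$ before invoking $[P,P]=0$; the content is identical.
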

\begin{proof}
The proposition is an immediate consequence of the Jacobi identity for the Schouten bracket. Let $B$ be a $p$-vector field. We have
\begin{equation}
\left(\mathrm{ad}_P\right)^2 B=[P,[P,B]]=[[P,P],B]+(-1)^{1\cdot 1}[P,[P,B]],
\end{equation}
and the first term of the RHS vanishes because $P$ is a Poisson bivector. Then we have $[P,[P,B]]=-[P,[P,B]]=0$.
\end{proof}
This proposition allows us to call the adjoint action of $P$ the \emph{Poisson differential} and to denote it as $\ud_P$.
\begin{definition}
The space of local polyvector field $\hF$, endowed with a Poisson differential $\ud_P$, is the Poisson-Lichnerowicz complex of $(\F,P)$.
\begin{equation}
\begin{CD}
0 @>>> \F @>\ud_P>> \hF^1 @>\ud_P>> \hF^2\ @>\ud_P>>\cdots @>>> \cdots
\end{CD}
\end{equation}
The cohomology of the complex is called the \emph{Poisson cohomology} of $(\F,P)$, that is,
\begin{equation}
H(P,\F)=\bigoplus_{p=0}^\infty H^p(P,\F)=\frac{\ker \ud_P\colon\hF^p\to\hF^{p+1}}{\im \ud_P\colon\hF^{p-1}\to\hF^p}.
\end{equation}
\end{definition}

The computation of the Poisson cohomology, even in the commutative case, is a challenging task. For the commutative differential and difference case, see for instance \cite{ccs,CW19,lz11}; for the noncommutative ultralocal case,  several examples of double Poisson cohomology and their relation with the Poisson cohomology on the space of local functionals have been obtained and discussed \cite{akkn20, pvdw08, vdw}. An investigation of the Poisson cohomology for the local nonabelian case will be discussed in a forthcoming work. 
\section{Quasi-Poisson structures and Hamiltonian structures}\label{sec:quasi}
In this section we focus on ultralocal operators, which -- as we have seen in Section \ref{ssec:dPA} -- coincide with the class of operators used to describe ordinary differential equations.

In 2012, T. Wolf and O. Efimovaskaya investigated the integrability of a two-component system of ODEs proposed by Kontsevich \cite{WE12}:
\begin{equation}\label{eq:kont}
\begin{cases}
u_t=uv-uv^{-1}-v^{-1}&\\
v_t=-vu+vu^{-1}+u^{-1}
\end{cases}
\end{equation}
This system is integrable, possessing a Lax pair representation; this allows to compute its infinite series of conserved quantities and to find the corresponding hierarchy of symmetries. It can be cast in ``Hamiltonian'' form \eqref{eq:HamSyst_expl}, using the operator (first identified by Mikhailov and Sokolov in \cite{ms00})
\begin{equation}\label{eq:kont-op}
H=\begin{pmatrix}
\r_{u^2}-l_{u^2} & \l_{uv}+\l_u\r_v-\l_v\r_u+\r_{vu}\\
-\r_{uv}+\l_u\r_v-\l_v\r_u-\l_{vu} & \l_{v^2}-\r_{v^2}
\end{pmatrix}
\end{equation}
and the local functional
\begin{equation}\label{eq:h}
h=\frac12\tr\left(u+v+u^{-1}+v^{-1}+u^{-1}v^{-1}\right).
\end{equation}
Let us call $Q$ the bivector defined by the operator $H$,
\begin{equation}
Q=\frac12\tr\sum_{i,j}\left(\theta_i H^{ij}(\theta_j)\right).
\end{equation}
The bivector does not satisfy the Poisson condition \eqref{eq:HamProp}, namely $[Q,Q]\neq 0$. However, in their paper \cite[\S 3]{WE12}, Wolf and Efimovaskaya observe that $H$ enjoys the property
\begin{equation}\label{eq:pre-Ham}
\mathcal{L}_{X_h}(Q)=0,
\end{equation}
where $\mathcal{L}$ is the Lie derivative and $X_h$ is the ``Hamiltonian'' vector field $H(\delta h)$. This property for the operator $H$, which is system-dependent, allows us to employ it in most of the constructions which would normally involve a \emph{bona fide} Hamiltonian operator.

We recall that a \emph{conserved quantity} for the system defined by an evolutionary vector field $X$ is a functional $f$ such that $X(f)=0$. It is well known that the Lie derivative of a polyvector field can be written in terms of the Schouten bracket, by $\mathcal{L}_X(B)=[X,B]$. Hence, for a conserved functional $f$ we can write
$$\mathcal{L}_X(f)=[X,f]=0.$$
Similarly, a vector field $Y$ is a \emph{symmetry} of the system $X$ if and only if $[X,Y]=0$.

\begin{proposition}\label{thm:preH-1}
Let $Q$ be the bivector defined by an operator $H$, such that it satisfies \eqref{eq:pre-Ham} for the system $X_h$. Then $Q$ maps conserved quantities of the system into symmetries.
\end{proposition}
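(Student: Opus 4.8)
The plan is to read the statement through the Schouten-bracket formalism of Section \ref{ssec:pois} and reduce it to a single application of the graded Jacobi identity \eqref{eq:schJac}. Recall from \eqref{eq:hamvf_sch} that the bivector $Q$ sends a local functional $F$ to the evolutionary vector field $X_F=-[Q,F]$; accordingly, the phrase ``$Q$ maps conserved quantities into symmetries'' should be read as: whenever $f$ is a conserved quantity of the system $X_h$, the associated vector field $X_f=-[Q,f]$ is a symmetry of $X_h$. By the characterisation of symmetries recalled just before the statement, this amounts to establishing the vanishing of the Schouten bracket $[X_h,X_f]$.

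First I would rewrite the quantity to be computed as
\begin{equation}
[X_h,X_f]=-[X_h,[Q,f]],
\end{equation}
and expand the right-hand side with \eqref{eq:schJac}. Here $X_h$ is a $1$-vector, $Q$ is a $2$-vector and $f$ is a $0$-vector, so the parity prefactor in the Jacobi identity is $(-1)^{(|X_h|-1)(|Q|-1)}=(-1)^{0}=1$, and we obtain
\begin{equation}
[X_h,[Q,f]]=[[X_h,Q],f]+[Q,[X_h,f]].
\end{equation}

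It then remains to observe that both summands vanish, for independent reasons. The first involves $[X_h,Q]=\mathcal{L}_{X_h}(Q)$, which is precisely the left-hand side of the hypothesis \eqref{eq:pre-Ham}; hence $[X_h,Q]=0$ and $[[X_h,Q],f]=0$. The second involves $[X_h,f]=X_h(f)$, the action of the vector field $X_h$ on the functional $f$ (recall that the Schouten bracket of a $1$-vector with a $0$-vector reduces to this action, see Definition \ref{def:sch-gen}); since $f$ is by assumption conserved, $X_h(f)=0$, so $[Q,[X_h,f]]=0$. Combining the two, $[X_h,X_f]=0$, i.e.\ $X_f$ is a symmetry of $X_h$.

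I expect no genuine analytic difficulty: the entire content is structural. The only points requiring care are (i) fixing the correct reading of the map induced by $Q$ through \eqref{eq:hamvf_sch}, and (ii) checking that the parity prefactor in the Jacobi identity is trivial for the degree triple $(1,2,0)$, so that no stray sign spoils the cancellation. I would therefore treat the ``hard part'' as pure bookkeeping: ensuring that the two specialisations of the Schouten bracket at mixed degree — its reduction to the Lie derivative on the bivector and to the directional action on the functional — are invoked consistently.
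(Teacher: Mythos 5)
Your proof is correct and is essentially the paper's own argument: a single application of the graded Jacobi identity for the Schouten bracket, with one term killed by the hypothesis $\mathcal{L}_{X_h}(Q)=[X_h,Q]=0$ and the other by the conservation $X_h(f)=[X_h,f]=0$. The only (immaterial) difference is bookkeeping — the paper expands $[[Q,f],[Q,h]]$ with $Q$ in the outer slot, while you expand $[X_h,[Q,f]]$ keeping $X_h$ abstract — and your sign check for the degree triple $(1,2,0)$ is correct.
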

\begin{proof}
Let $f$ be a conserved quantity for the system $X_h$.
From \eqref{eq:hamvf_sch} we have then $-[[Q,h],f]=0$. The bivector $Q$ maps the conserved quantity $f$ into the vector field $X_f=-[Q,f]$; equivalently, the operator $H$ maps the conserved quantity into the characteristics of the vector field by $H(\delta f)$. By the Jacobi identity for the Schouten bracket we have
\begin{equation}
[X_f,X_h]=[[Q,f],[Q,h]]= [Q,[f,[Q,h]]]+[f,[Q,[Q,h]]=0,
\end{equation}
namely the vector field $X_f$ is a symmetry of the system $X_h$. Indeed, the first term of the RHS vanishes because $[f,[Q,h]]=-[[Q,h],f]=X_h(f)$ and $f$ is a conserved quantity of $X_h$, while the second one does because $[Q,[Q,h]]=-[[Q,h],Q]=\mathcal{L}_{X_h}(Q)=0$ by \eqref{eq:pre-Ham}.
\end{proof}
\begin{proposition}
Let $f$ and $g$ be conserved quantities for the system $X_h$. Then the bracket
\begin{equation}
\{f,g\}:=X_g(f)=[[Q,f],g]
\end{equation}
is a conserved quantity of the system, too.
\end{proposition}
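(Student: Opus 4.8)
The plan is to show that $\{f,g\}$ is annihilated by the evolutionary vector field $X_h=-[Q,h]$ of the system, i.e.\ that $X_h(\{f,g\})=0$. Writing $X_f=-[Q,f]$ and $X_g=-[Q,g]$ as in \eqref{eq:hamvf_sch}, and recalling that $\{f,g\}=X_g(f)$, this amounts to controlling $X_h\bigl(X_g(f)\bigr)$. I would not expand the iterated Schouten bracket from scratch, but rather reduce the claim to the two results already in hand: Proposition \ref{thm:preH-1}, which guarantees that $Q$ carries conserved quantities to symmetries, and Proposition \ref{thm:Sch_vf}, which identifies the Schouten bracket of two $1$-vectors with the commutator of the associated evolutionary vector fields.

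First I would invoke Proposition \ref{thm:preH-1} for $g$: since $g$ is conserved, the vector field $X_g$ is a symmetry of $X_h$, that is $[X_h,X_g]=0$. By Proposition \ref{thm:Sch_vf} this Schouten bracket acts on an arbitrary functional $\phi$ as the commutator $X_h\bigl(X_g(\phi)\bigr)-X_g\bigl(X_h(\phi)\bigr)$. Evaluating at $\phi=f$ gives
\begin{equation}
0=[X_h,X_g](f)=X_h\bigl(X_g(f)\bigr)-X_g\bigl(X_h(f)\bigr).
\end{equation}
Since $f$ is conserved, $X_h(f)=0$, so the second term vanishes and I am left with $X_h(X_g(f))=X_h(\{f,g\})=0$, which is precisely the assertion.

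Equivalently, one can run a direct computation mirroring the proof of Proposition \ref{thm:preH-1}: using $\{f,g\}=[[Q,f],g]$ and expanding $[X_h,[[Q,f],g]]$ with the graded Jacobi identity \eqref{eq:schJac} (all the inner objects $[Q,f]$, $[Q,g]$, $[Q,h]$ are odd $1$-vectors, so the grading signs collapse) produces $[[X_h,[Q,f]],g]+[[Q,f],[X_h,g]]$; the first summand vanishes because $[X_h,[Q,f]]=-[X_h,X_f]=0$ by Proposition \ref{thm:preH-1} applied to the conserved $f$, and the second vanishes because $[X_h,g]=X_h(g)=0$ by conservation of $g$. The computation is short, and the only point deserving care is the bookkeeping of the skewsymmetry and grading conventions of Definition \ref{def:sch-gen}. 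I would also stress the one conceptual subtlety: the quasi-Poisson hypothesis \eqref{eq:pre-Ham}, $\mathcal{L}_{X_h}(Q)=0$, does not enter this argument directly but only through Proposition \ref{thm:preH-1}; once the images of conserved quantities are known to be symmetries, the conclusion is purely formal, and no further appeal to the non-Poisson nature of $Q$ is required.
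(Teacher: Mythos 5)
Your proposal is correct and is essentially the paper's own argument: the ``equivalent direct computation'' you sketch --- expanding $[X_h,[[Q,f],g]]$ by the graded Jacobi identity for the Schouten bracket, killing the first term via Proposition \ref{thm:preH-1} applied to the conserved $f$ and the second via $X_h(g)=0$ --- is precisely the proof given in the paper. Your primary argument is just a mirror-image repackaging of the same ingredients (Proposition \ref{thm:preH-1} applied to $g$ instead of $f$, Proposition \ref{thm:Sch_vf} to read the vanishing bracket $[X_h,X_g]=0$ as a commutator of derivations, and conservation of $f$), so no genuinely different route or additional idea is involved.
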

\begin{proof}
We need to prove
\begin{equation}\label{eq:preh-th-1}
X_h\left(\{f,g\}\right)=0.
\end{equation}
Using the definition of ``Hamiltonian'' vector field and of bracket we rewrite the LHS of \eqref{eq:preh-th-1} as
\begin{equation}
-[[Q,h],[[Q,f],g]]
\end{equation}
which, because of the Jacobi identity for the Schouten bracket, is equal to
\begin{equation}
-[[[Q,h],[Q,f]],g]+[[Q,f],[[Q,h],g]].
\end{equation}
The first term vanishes because $[[Q,h],[Q,f]]=[X_h,X_f]$ and we have proved in Proposition \ref{thm:preH-1} that the vector field associated to a conserved quantity $f$ commutes with $X_h$. The second term vanishes, too, because $[[Q,h],g]=-X_h(g)$ and $g$ is a conserved quantity.
\end{proof}

Property \eqref{eq:pre-Ham} is sufficient to explain why $H$ maps conserved quantities into commuting symmetries; however, if we use the bivector $Q$ to define a bracket according to \eqref{eq:defBra-sch}, we obtain an operation which satisfies Jacobi identity, namley \emph{a Poisson brackets defined by a non-Poisson bivector}. The property identified by Wolf and Efimovskaya is not sufficient to guarantee this outcome.

A class of non-Poisson structures giving rise to Poisson brackets in some quotient space (as our space of local functional $\F$ is) was originally introduced by Alekseev, Kosmann-Schwarzbach and Meinrenken as quasi-Poisson manifolds \cite{akm02}. In the non-commutative case, Van Den Bergh \cite{vdb} introduced a twisted version of double Poisson algebras called \emph{double quasi-Poisson algebras}.

\subsection{Double quasi-Poisson algebras}
The definition of double quasi-Poisson algebra we present in this section is given in the form proposed by Fairon \cite{f19}. The notion has been introduced by Van Den Bergh \cite{vdb} in his seminal work on double Poisson algebras, but the more modern version is equivalent and requires less background material.

Let $\A_0$ be an associative but non commutative algebra whose identity admits a finite decomposition in terms of orthogonal idempotents,
$$
\mathbbm{1}_{\A_0}=\sum_{s=1}^n e_s,
$$
with $e_se_t=\delta_{st}e_s$. We can then regard $\A_0$ as a $\mathcal{B}$-algebra for $\mathcal{B}=\oplus_s \mathbb{K}e_s$. Let us now consider a $\mathcal{B}$-linear double bracket $\ldb-,-\rdb$ on $\A_0$; its associated triple bracket is
\begin{equation}\label{eq:tripleb-int}
\ldb -,-,-\rdb=\sum_{s=1}^3 \tau^s \ldb -,\ldb -,-\rdb\rdb_L\tau^{-s},
\end{equation}
which is an alternative way of writing \eqref{eq:tripleb}.
\begin{remark}
We are reproducing Fairon's definition, that allows $\A_0$ to have a decomposable unit. The standard example for this is the double quasi-Poisson algebra realised on the path algebra of a quiver; the identity in such an algebra is obtained as the sum of the ``stationary paths'' associated to each vertex of the quiver \cite{vdb}.
\end{remark}
\begin{definition}\label{def:qP}
We say that a $\mathcal{B}$-algebra $\A_0$, endowed with a $\mathcal{B}$-linear double bracket $\ldb-,-\rdb$, is a \emph{double quasi-Poisson bracket} if it satisfies
\begin{equation}\label{eq:qP-def}
\begin{split}
\ldb a,b,c\rdb&=\alpha\sum_s\left( ce_sa\otimes e_sb\otimes e_s-ce_sa\otimes e_s\otimes be_s-ce_s\otimes ae_sb\otimes e_s+ce_s\otimes ae_s\otimes be_s\right.\\
& \left.-e_sa\otimes e_sb\otimes e_s c +e_sa\otimes e_s\otimes be_sc+e_s\otimes ae_sb\otimes e_sc-e_s\otimes ae_s\otimes be_s c\right),
\end{split}
\end{equation}
for some $\alpha\neq 0$ and all triples $a,b,c\in \A_0$.
\end{definition}

The remarkable feature of double quasi-Poisson algebras is that, despite their triple bracket does not vanish, the bracket defined on the space $\F_0=\A_0/[\A_0,\A_0]$ is a Poisson bracket (in particular, it satisfies the Jacobi identity for any triple of entries). Moreover, since the vanishing of \eqref{eq:qP-def} is equivalent to the vanishing of the triple brackets among all the generators of $\A_0$ \cite{vdb}, we have a quick and explicit way to verify whether an ultralocal operator is quasi-Poisson (or, more precisely, defines the bracket of a double quasi-Poisson algebra).
\begin{theorem}[\cite{vdb}]\label{thm:qP-Ham}
Let $(\A_0,\ldb-,-\rdb)$ be a double quasi-Poisson algebra. Then the bracket defined on $\F_0$ as in \eqref{eq:PoisB-PA-def} is a Poisson bracket.
\end{theorem}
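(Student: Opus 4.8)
The plan is to verify separately the two properties that make $\{-,-\}$ a Poisson (Lie) bracket on $\F_0=\A_0/[\A_0,\A_0]$: skewsymmetry and the Jacobi identity, working throughout with the form $\{\tr f,\tr g\}=-\tr m\ldb f,g\rdb$ of \eqref{eq:PoisB-PA-def} and using freely the Leibniz property of the double bracket, its cyclic skewsymmetry $\ldb a,b\rdb=-\ldb b,a\rdb^\sigma$, and the cyclicity $\tr(xy)=\tr(yx)$.

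First I would dispose of well-definedness and skewsymmetry, both of which are short. That the bracket descends to $\F_0$ reduces to showing $\tr m\ldb f,[x,y]\rdb=0$: expanding $\ldb f,xy-yx\rdb$ by the Leibniz rule in the last (derivation) entry produces four terms that cancel in pairs under the cyclic trace, and the first slot is then handled by skewsymmetry. Skewsymmetry of $\{-,-\}$ itself follows from $\tr m\ldb b,a\rdb=-\tr m(\ldb a,b\rdb^\sigma)=-\tr m\ldb a,b\rdb$, the last step being cyclicity of the trace applied to the two swapped tensor factors. Note that none of this uses any Jacobi-type hypothesis, so it holds for an arbitrary double bracket.

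The substance of the theorem is the Jacobi identity, and the strategy is to reduce the Jacobiator to the triple bracket \eqref{eq:tripleb}. Lifting $\{\tr g,\tr h\}$ to $-m\ldb g,h\rdb\in\A_0$ and expanding $\ldb f,-\rdb$ as a derivation in Sweedler notation gives $\{\tr f,\{\tr g,\tr h\}\}=\tr\,m(m\otimes 1)\big(\ldb f,\ldb g,h\rdb\rdb_L+\ldb f,\ldb g,h\rdb\rdb_R\big)$, where $m(m\otimes 1)\colon\A_0^{\otimes 3}\to\A_0$ is the full multiplication. Summing the three cyclic terms and using that $\tr\,m(m\otimes 1)$ is invariant under the factor-permutation $\tau$, the cyclic sum of the $L$-contributions is exactly $\tr\,m(m\otimes 1)\ldb f,g,h\rdb$ by the first expression for the triple bracket in \eqref{eq:tripleb}; the $R$-contributions are then folded back onto the same expression using skewsymmetry of the double bracket and the second expression in \eqref{eq:tripleb}. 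Carrying this out — following the same manipulations as in the double-Poisson-to-Poisson correspondence of \cite{vdb,dskv15} — yields that the Jacobiator is a nonzero multiple of $\tr\,m(m\otimes 1)\ldb f,g,h\rdb$. I expect this index- and sign-bookkeeping to be the main obstacle: it is the only genuinely computational step, and the care lies in confirming that the $R$-terms recombine with the $L$-terms to reproduce precisely the (symmetrized) triple bracket and that the resulting constant is nonzero.

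Granting that reduction, the conclusion is immediate and is where the quasi-Poisson hypothesis finally enters. The triple bracket of a double quasi-Poisson algebra does not vanish, but is given by the right-hand side of \eqref{eq:qP-def}. Applying $m(m\otimes 1)$ collapses each monomial $x\otimes y\otimes z$ to the product $xyz$; using $e_se_t=\delta_{st}e_s$, the first four terms all reduce to $\alpha\sum_s c\,e_s a\,e_s b\,e_s$ with signs $+,-,-,+$ and the last four to $\alpha\sum_s e_s a\,e_s b\,e_s c$ with signs $-,+,+,-$, so each group cancels and $m(m\otimes 1)\ldb f,g,h\rdb=0$ identically. Hence the Jacobiator vanishes, and combined with skewsymmetry this shows that \eqref{eq:PoisB-PA-def} is a Poisson bracket on $\F_0$, as claimed.
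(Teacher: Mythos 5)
Your overall strategy -- skewsymmetry from the skewsymmetry of the double bracket, reduction of the Jacobiator to a triple-bracket expression, and then annihilating that expression by full multiplication using the quasi-Poisson form \eqref{eq:qP-def} -- is exactly the paper's strategy, and your well-definedness, skewsymmetry and final cancellation computations are all correct. However, the reduction identity you assert is false, and it is precisely the step you yourself flagged as the main obstacle. The Jacobiator is \emph{not} a nonzero multiple of $\tr m(m\otimes 1)\ldb f,g,h\rdb$. The correct identity (Van Den Bergh's Proposition 2.4.2 and Corollary 2.4.4, proved in graded form in the paper's Appendix \ref{app:Jacobi}) is
\begin{equation}
\{\tr a,\{\tr b,\tr c\}\}+\{\tr b,\{\tr c,\tr a\}\}+\{\tr c,\{\tr a,\tr b\}\}
=\tr m\left((m\otimes 1)\ldb a,b,c\rdb-(1\otimes m)\ldb b,a,c\rdb\right),
\end{equation}
a \emph{difference} of two terms, the second being the triple bracket with the first two arguments transposed: the $R$-contributions do not fold back onto $\ldb a,b,c\rdb$, they assemble into $\ldb b,a,c\rdb$. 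Your single-term claim fails already for the double bracket on $\R\langle u,v\rangle$ determined by $\ldb u,v\rdb=v\otimes u$, $\ldb v,u\rdb=-u\otimes v$, $\ldb u,u\rdb=\ldb v,v\rdb=0$ (skewsymmetric, so a genuine double bracket): here $\ldb u,v,u\rdb=u\otimes u\otimes v$, so $\tr m(m\otimes 1)\ldb u,v,u\rdb=\tr u^2v\neq 0$ in $\F_0$, while the Jacobiator computes directly to $\{\tr u,\{\tr v,\tr u\}\}+\{\tr v,\{\tr u,\tr u\}\}+\{\tr u,\{\tr u,\tr v\}\}=-\tr u^2v+0+\tr u^2v=0$. (The two-term formula is consistent with this, since $\ldb v,u,u\rdb=u\otimes v\otimes u$ and $\tr(uvu)=\tr u^2v$.) So any bookkeeping that lands on a single multiple of one triple bracket must contain a sign or permutation error.

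The good news is that the gap is easily repaired, because your last step proves more than you need: your cancellation shows that full multiplication $m(m\otimes 1)$ kills the right-hand side of \eqref{eq:qP-def} identically in $\A_0$, for \emph{every} triple of arguments and before any trace is taken. Since $m(1\otimes m)=m(m\otimes 1)$ as maps $\A_0^{\otimes 3}\to\A_0$, and the quasi-Poisson identity holds in particular for the transposed triple $(b,a,c)$, both terms of the correct two-term formula vanish separately, and the Jacobi identity follows exactly as in the paper's proof. You should therefore replace the asserted single-term reduction by the two-term identity above (citing or reproving it), after which your argument is complete.
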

\begin{proof}
The skewsymmetry of the bracket on $\F_0$ is guaranteed by the skewsymmetry of the double bracket on $\A_0$.
From \cite[Proposition 2.4.2 and Corollary 2.4.4]{vdb}, a graded version of which we prove as Equation \eqref{eq:sch_Jac_pf4} in Theorem \ref{thm:schJac} of Appendix \ref{app:Jacobi}, we have
\begin{multline}\label{eq:qP-Ham-pf1}
\{\tr a,\{\tr b,\tr c\}\}+\{\tr b,\{\tr c,\tr a\}\}+\{\tr c,\{\tr a,\tr b\}\}\\
=\tr m\left((m\otimes 1) \ldb a,b,c\rdb-(1\otimes m)\ldb b,a,c\rdb\right).
\end{multline}
From \eqref{eq:qP-def} we have that the RHS of \eqref{eq:qP-Ham-pf1} vanishes, so that the bracket $\{-,-\}$ satisfies the Jacobi identity and it is, hence, a Poisson bracket on $\F_0$. 
\end{proof}
Theorem \ref{thm:qP-Ham} states that a double quasi-Poisson algebra on $\A_0$, whose bracket \emph{does not} define a Poisson operator (in the sense of \eqref{eq:HamProp}), defines nevertheless a Poisson bracket on $\F_0$.

Definition \ref{def:qP} is more general than what we need in our discussion. The algebra $\A_0$ we consider is the space of Laurent polynomials in the generators $\{u^i\}$ and $\mathcal{B}=\R$ with $1$ as the only idempotent: we can henceforth drop the sum over the different idempotents $e_s$ from our formulae.

\begin{proposition}\label{thm:kont-quasi}
The double bracket defined by the skewsymmetric ultralocal operator \eqref{eq:kont-op} is the bracket of a double quasi-Poisson algebra.
\end{proposition}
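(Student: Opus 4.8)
The plan is to extract the double bracket attached to the operator \eqref{eq:kont-op} through the dictionary between ultralocal multiplication operators and double brackets set up in Section \ref{ssec:dPA} (see \eqref{eq:brackbiv} and Proposition \ref{thm:poisdouble-ul}), and then to verify the defining identity \eqref{eq:qP-def} of a double quasi-Poisson algebra directly on the generators. First I would write each entry $H^{ij}$ as a sum of terms $\l_f\r_g$ and use $\ldb u^i,u^j\rdb=\sum_\alpha H^{(\alpha)ji}_L\otimes H^{(\alpha)ji}_R$ to read off
\begin{align}
\ldb u,u\rdb &= 1\otimes u^2 - u^2\otimes 1, & \ldb v,v\rdb &= v^2\otimes 1 - 1\otimes v^2,\\
\ldb u,v\rdb &= u\otimes v - v\otimes u - vu\otimes 1 - 1\otimes uv, & \ldb v,u\rdb &= u\otimes v - v\otimes u + uv\otimes 1 + 1\otimes vu.
\end{align}
A preliminary sanity check is that these obey the cyclic skewsymmetry $\ldb a,b\rdb=-\ldb b,a\rdb^\sigma$, which is automatic since $H$ is skewsymmetric.

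Next, since \cite{vdb} guarantees (cf.\ the discussion after Definition \ref{def:qP}) that \eqref{eq:qP-def} holds for all triples of $\A_0$ as soon as it holds for all triples of generators, the task reduces to computing the triple brackets $\ldb u^i,u^j,u^k\rdb$ from \eqref{eq:tripleb} and comparing them with the right-hand side of \eqref{eq:qP-def}. Because $\mathcal{B}=\R$ has the single idempotent $e=1$, that right-hand side collapses to
\begin{align}
\ldb a,b,c\rdb &= \alpha\big(ca\otimes b\otimes 1 - ca\otimes 1\otimes b - c\otimes ab\otimes 1 + c\otimes a\otimes b\\
&\quad - a\otimes b\otimes c + a\otimes 1\otimes bc + 1\otimes ab\otimes c - 1\otimes a\otimes bc\big).
\end{align}
The cyclic skewsymmetry $\ldb a,b,c\rdb=\tau\ldb b,c,a\rdb$ reduces the independent cases to the four ``necklaces'' $uuu$, $vvv$, $uuv$ and $uvv$; for each representative I would expand \eqref{eq:tripleb} using the left/right Leibniz rules and the auxiliary operations $\ldb -,-\rdb_{L}$, $\ldb -,-\rdb_{R}$ together with the insertion product $\otimes_1$.

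As a model computation, expanding $\ldb u,u,u\rdb$ yields
\begin{equation}
\ldb u,u,u\rdb = u^2\otimes u\otimes 1 - u\otimes u^2\otimes 1 + 1\otimes u^2\otimes u - 1\otimes u\otimes u^2 - u^2\otimes 1\otimes u + u\otimes 1\otimes u^2,
\end{equation}
which coincides with the reduced right-hand side above evaluated at $a=b=c=u$ precisely for $\alpha=1$; the case $vvv$ is identical after $u\leftrightarrow v$. The remaining verifications are of the same nature, and I expect the main obstacle to be purely computational: the mixed triples $\ldb u,u,v\rdb$ and $\ldb u,v,v\rdb$ require expanding products of the longer brackets $\ldb u,v\rdb$ and $\ldb v,u\rdb$, with careful bookkeeping of the module operations $\star$, $\otimes_1$ and their signs. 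Once every necklace is matched against \eqref{eq:qP-def} with the same constant $\alpha=1$, Definition \ref{def:qP} is satisfied and the claim follows.
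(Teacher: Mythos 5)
Your proposal is correct and follows essentially the same route as the paper: read off the double bracket from \eqref{eq:kont-op} via \eqref{eq:fromOptoBr}/Proposition \ref{thm:poisdouble-ul}, invoke Van Den Bergh's reduction of \eqref{eq:qP-def} to triples of generators, exploit the $u\leftrightarrow v$ symmetry, and match the triple brackets against the right-hand side of \eqref{eq:qP-def} with $\alpha=1$ (your $\ldb u,u,u\rdb$ computation agrees term by term with the paper's). The only difference is one of completeness, not of method: the paper carries out the mixed case $\ldb u,u,v\rdb$ explicitly and disposes of $\ldb u,v,v\rdb$ and $\ldb v,v,v\rdb$ by the same symmetry, whereas you defer the mixed necklaces as ``purely computational''.
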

\begin{proof}
According to Proposition \ref{thm:poisdouble-ul}, the double bracket among the generators is
\begin{align}
\ldb u,u\rdb&=1\otimes u^2-u^2\otimes 1, & \ldb u,v\rdb&=-1\otimes uv+u\otimes v-v\otimes u-vu\otimes 1,\\
\ldb v,v\rdb&=v^2\otimes 1-1\otimes v^2, & \ldb v,u\rdb&=uv\otimes 1+u\otimes v-v\otimes u+1\otimes vu.
\end{align}

This double bracket is a special case of the quasi-Poisson bracket obtained in \cite[Theorem 3.5]{f19}, for $i=1$. The result we explicitly show is due to Massuyeau and Turaev \cite{mt}. In principle, we would have to compute the four triple bracket $\ldb u,u,u\rdb$, $\ldb u,u,v\rdb$, $\ldb u,v,v\rdb$, and $\ldb v,v,v\rdb$. However, given the apparent symmetry of the expression in the exchange of $u$ with $v$ we restrict ourselves to the first two ones.

Computing $\ldb u,u,u\rdb$ gives us
\begin{equation}\label{eq:kont-qP-pf1}
\begin{split}
\ldb u,u,u\rdb&=\left(1+\tau+\tau^2\right)\ldb u,\ldb u,u\rdb\rdb_L\\
&=-\left(1+\tau+\tau^2\right)\left(\ldb u,u^2\rdb\otimes 1\right)\\
&=\left(1+\tau+\tau^2\right)\left(u^2\otimes u\otimes 1-u\otimes u^2\otimes 1\right).
\end{split}
\end{equation}
On the other hand, the RHS of \eqref{eq:qP-def} is
\begin{equation}
\alpha\left(u^2\otimes u\otimes 1-u^2\otimes 1\otimes u-u\otimes u^2\otimes 1+u\otimes 1\otimes u^2+1\otimes u^2\otimes 1-1\otimes u\otimes u^2\right),
\end{equation}
which is equal to \eqref{eq:kont-qP-pf1} for $\alpha=1$. The same computation for $\ldb u,u,v\rdb$ gives us
\begin{equation}
\begin{split}
\ldb u,u,v\rdb&=1\otimes u^2\otimes v+v\otimes u\otimes u-u\otimes u\otimes v-v\otimes u^2\otimes 1+vu\otimes u\otimes 1\\&\quad+u\otimes 1\otimes uv-1\otimes u\otimes uv-vu\otimes 1\otimes u,
\end{split}
\end{equation}
which is, again, the RHS of \eqref{eq:qP-def} for $\alpha=1$.
\end{proof}
It follows from Theorem \ref{thm:qP-Ham} that the Mikhailov and Sokolov's operator \eqref{eq:kont-op} defines a Poisson bracket on $\F_0$ and it is, hence, Hamiltonian.

\subsection{Hamiltonian operators and quasi-Poisson bivectors}\label{ssec:HamBrack}
In Section \ref{sec:Ham_formalism} we have defined a Hamiltonian operator as an operator on $\A$, which induces a Lie algebra structure on $\F$ by means of the Poisson bracket. In Section \ref{ssec:pois} we showed the properties that a bivector must satisfy in order to do the same. Proposition \ref{thm:pois-suf} says that having a Poisson bivector is a sufficient condition. However, a bivector $B$ satisfying the condition
\begin{equation}\label{eq:qP-def0}
[[B,B],F]=0 \qquad\qquad\forall F\in\F,
\end{equation}
which appears to be less strict than $[B,B]=0$, still defines a Hamiltonian structure: indeed, condition \eqref{eq:qP-def0}  guarantees both the validity of the Jacobi identity by \eqref{eq:jac-gen} and the existence of the Lie algebra action by \eqref{eq:act-gen}.

Note that property \eqref{eq:pre-Ham} can be rewritten as $[[Q,Q],h]=0$ (we showed this in the proof of Proposition \ref{thm:preH-1}), namely as identity \eqref{eq:qP-def0} for a particular $h\in\F_0\subset\F$. Identity \eqref{eq:qP-def0} is a natural generalisation of the Poisson property, relaxing the condition and still obtaining a Hamiltonian structure. Indeed, we have the following theorem:
\begin{theorem}\label{thm:qp-gen}
Let $Q$ be a quasi-Poisson bivector. Then $[[Q,Q],f]=0$ for all $f\in\F_0$. Thus, $Q$ defines a Hamiltonian structure.
\end{theorem}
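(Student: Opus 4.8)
The conclusion that $Q$ ``defines a Hamiltonian structure'' is immediate once $[[Q,Q],f]=0$ is established for every $f\in\F_0$: this is exactly condition \eqref{eq:qP-def0}, which by \eqref{eq:jac-gen} yields the Jacobi identity for the bracket and by \eqref{eq:act-gen} the existence of the Lie algebra action. So the whole content of the theorem is the vanishing of the bivector $[[Q,Q],f]$. I would first record the equivalent reformulation $\mathcal{L}_{X_f}Q=\tfrac12[[Q,Q],f]$: from the graded Jacobi identity \eqref{eq:schJac} with $a=b=Q$, $c=f$ one gets $[Q,[Q,f]]=\tfrac12[[Q,Q],f]$, while \eqref{eq:hamvf_sch} and the graded skewsymmetry \eqref{eq:schskew} give $\mathcal{L}_{X_f}Q=[X_f,Q]=-[[Q,f],Q]=[Q,[Q,f]]$. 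Thus the statement says that \emph{every} Hamiltonian flow $X_f$ preserves $Q$, strengthening the single-functional property \eqref{eq:pre-Ham} of Wolf and Efimovskaya to all local functionals.

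The plan is to compute the three-vector $[Q,Q]$ explicitly and then contract it with $f$. By Theorem \ref{thm:PoisBiv} we have $2[Q,Q]=\mathbf{pr}_{H\Theta}Q$, and by the term-by-term correspondence of Proposition \ref{thm:poisdouble-ul} this three-vector equals, up to a nonzero scalar, the element of $\hF_0^3$ built from the triple bracket $\ldb-,-,-\rdb$ of the double bracket defining $Q$. Since $Q$ is quasi-Poisson and, by the remark preceding the theorem, we may take $\mathcal{B}=\R$ with the single idempotent $e=1$, this triple bracket is given by the right-hand side $R(a,b,c)$ of \eqref{eq:qP-def}. Substituting $R$ into \eqref{eq:pvect-gen} presents $[Q,Q]$ as an explicit sum of eight monomials of the form $\tr[T_1\theta_iT_2\theta_jT_3\theta_k]$, in each of which exactly one of the tensor slots $T_1,T_2,T_3$ is the identity.

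Next I would evaluate $[[Q,Q],f]=\tr m\,\lsb [Q,Q],f\rsb$ using the ultralocal double Schouten master formula \eqref{eq:ds_master_ul}. Because $f$ is a $0$-vector, $\dev f/\dev\theta_i=0$ and only the term $\sum_i\bigl(\dev f/\dev u^i\bigr)\bullet\bigl(\dev [Q,Q]/\dev\theta_i\bigr)^\sigma$ survives; computing $\dev[Q,Q]/\dev\theta_i$ amounts to deleting one $\theta$ from each monomial, the three possible deletions being identified through the cyclic symmetry $\ldb a,b,c\rdb=\tau\ldb b,c,a\rdb$. This produces a bivector density given by a sum over the eight terms of $R$, each with the variational derivative $\dev f/\dev u$ inserted into the slot vacated by the deleted $\theta$.

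The heart of the matter, and the step I expect to be the main obstacle, is showing that this bivector density vanishes after applying $\tr m$. The mechanism is telescoping: the identity factor carried by every term of $R$ collapses the inner multiplication $m$, and the eight contributions cancel in pairs once one uses cyclicity of the trace and relabels the summation indices $i,j,k$. This is the degree-two analogue of the cancellation already performed in the proof of Theorem \ref{thm:qP-Ham}, where the fully contracted combination $\tr m\bigl((m\otimes1)R(a,b,c)-(1\otimes m)R(b,a,c)\bigr)$ was shown to vanish; here two of the $\theta$'s remain free, so the bookkeeping of the graded signs coming from \eqref{eq:superswap}--\eqref{eq:superbullet}, and of the relative positions of $\theta_i$ and $\theta_j$ under the single trace, is the delicate part. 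Carrying out this pairwise cancellation term by term completes the proof, and the ``Hamiltonian'' assertion then follows from \eqref{eq:jac-gen} and \eqref{eq:act-gen} as explained above.
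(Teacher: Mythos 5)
Your setup coincides with the paper's proof: the reduction of the theorem to $[[Q,Q],f]=0$, the identification (via Theorem \ref{thm:PoisBiv} and the term-by-term correspondence of Proposition \ref{thm:poisdouble-ul}) of $[Q,Q]$, up to the scalar $-\tfrac23$, with the trivector built from the quasi-Poisson triple bracket \eqref{eq:qP-def}, and the contraction with $f$ through the master formula \eqref{eq:ds_master_ul}, in which only $\sum_l\bigl(\dev f/\dev u^l\bigr)\bullet\bigl(\dev[Q,Q]/\dev\theta_l\bigr)^\sigma$ survives, are all exactly the paper's steps. The gap is in the final step, which you yourself flag as the main obstacle: the vanishing is \emph{not} a pairwise cancellation of the eight contributions achievable by cyclicity of the trace and relabelling of $i,j,k$. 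What actually happens (see \eqref{eq:qp-gen-pf0}) is that the contraction terms pair up \emph{within} each of four $\theta$-monomial families into commutators: each family acquires an inner factor $\pm\bigl(u^l f''_l f'_l - f''_l f'_l u^l\bigr)$, where $\dev f/\dev u^l=f'_l\otimes f''_l$, i.e. $\pm[u^l,\delta f/\delta u^l]$ sandwiched among the surviving $\theta$'s and $u$'s. These paired terms do not cancel, and trace cyclicity together with relabelling of $i,j,k$ cannot make them cancel across families, because the $f$-dependent insertion sits in different positions relative to $\theta_i,\theta_j$ in each family and is untouched by index relabelling.

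The missing ingredient is the identity $\sum_l\bigl(u^l f''_l f'_l-f''_l f'_l u^l\bigr)=0$ (sum over $l$ and over the Sweedler terms of the double derivative), i.e. the vanishing of $\sum_l[u^l,\delta f/\delta u^l]$ in $\A_0$. Its proof is where the real telescoping occurs: expanding $f$ into monomials $(u^{i_1})^{\pm1}\cdots(u^{i_d})^{\pm1}$ as in \eqref{eq:qp-gen-pf1}, each of $\sum_l u^lf''_lf'_l$ and $\sum_l f''_lf'_lu^l$ is the sum of all cyclic rotations of the monomial, so their difference cancels term by term. With this identity each of the four families in \eqref{eq:qp-gen-pf0} vanishes separately, for every $f$, and the proof closes. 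Note that your analogy with Theorem \ref{thm:qP-Ham} misleads precisely here: in that degree-zero case the full contraction $\tr m\bigl((m\otimes1)\ldb a,b,c\rdb-(1\otimes m)\ldb b,a,c\rdb\bigr)$ makes the monomials of \eqref{eq:qP-def} literally coincide and cancel combinatorially, with no reference to the structure of $a,b,c$; once two $\theta$'s remain free, that combinatorial cancellation is obstructed, and the vanishing genuinely uses the cyclic-derivative identity above, i.e. the structure of $f$ itself.
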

\begin{proof}
From the computation performed in the proof of Proposition \ref{thm:poisdouble-ul} and Lemma \ref{lem:PoisBiv-pf}, we have that then
\begin{equation}\label{eq:qp-gen-pf-00}
[Q,Q]=-\frac23\sum_{i,j,k}\tr\left(\theta_i\ldb u^j, u^k, u^i \rdb'\theta_j\ldb u^j, u^k, u^i \rdb''\theta_k\ldb u^j, u^k, u^i \rdb'''\right),
\end{equation}
where we extend Sweedler's notation to elements of $A^{\otimes 3}$ by $\ldb u,u,u\rdb=\ldb u,u,u\rdb'\otimes\ldb u,u,u\rdb''\otimes\ldb u,u,u\rdb'''$. For convenience, we drop the sum over the indices from Equation \eqref{eq:qp-gen-pf-00}. From the quasi-Poisson property \eqref{eq:qP-def} for $\alpha=1$, after changing the indices and reordering the terms we have
\begin{equation}
[Q,Q]=-\frac23\tr\left(3\theta_i\theta_ju^ju^k\theta_ku^i-3\theta_i\theta_ju^j\theta_ku^ku^i+\theta_iu^i\theta_ju^j\theta_ku^k-\theta_iu^j\theta_ju^k\theta_ku^i\right).
\end{equation}
A direct computation with \eqref{eq:ds_master_ul} gives
\begin{equation}\label{eq:qp-gen-pf0}
\begin{split}
-\frac12[[Q,Q],f]=\tr&\left[\theta_i\theta_ju^j\left(u^lf_l''f_l'-f''_lf'_lu^l\right)u^i+\theta_iu^i\theta_ju^j\left(f''_lf'_lu^l-u^lf''_lf'_l\right)\right.\\
&\left.+\theta_iu^j\theta_j\left(f''_lf'_lu^l-u^lf''_lf'_l\right)u^i+\theta_iu^iu^j\theta_j\left(u^lf''_lf'_l-f''_lf'_lu^l\right)\right],
\end{split}
\end{equation}
where we denote $\dev_{u^l} f=f'_l\otimes f''_l$. The main point of the proof is showing that the terms in the brackets vanish, when summed over all $l$'s and all the terms in the double derivative of $f$.

Let $f$ be a linear combination of monomials of the form
\begin{equation}
f=(u^{i_1})^{\pm 1}(u^{i_2})^{\pm 1}\cdots(u^{i_d})^{\pm 1}.
\end{equation}
Then
\begin{equation}\label{eq:qp-gen-pf1}
\begin{split}
\frac{\dev f}{\dev u^l}&={\sum_{s=1}^d}'\delta_{l,i_s}(u^{i_1})^{\pm 1}\cdots (u^{i_{s-1}})^{\pm 1}\otimes (u^{i_{s+1}})^{\pm 1}\cdots (u^{i_{d}})^{\pm 1}\\
&\quad -{\sum_{s=1}^d}''\delta_{l,i_s}(u^{i_1})^{\pm 1}\cdots (u^{i_{s-1}})^{\pm 1}(u^{i_s})^{-1}\otimes (u^{i_s})^{-1}(u^{i_{s+1}})^{\pm 1}\cdots (u^{i_{d}})^{\pm 1},
\end{split}
\end{equation}
where with $\sum'$ we denote the sum over all the factors $u^{i_s}$ with power $+1$ and $\sum''$ the one over all the factors with power $-1$. The expressions within the parentheses in the RHS of \eqref{eq:qp-gen-pf0} have then the form
\begin{equation}
u^l f_l''f_l'-f_l''f_l'u^l=\sum_{s=1}^d\left[(u^{i_{s+1}})^{\pm1}\cdots(u^{i_d})^{\pm1}(u^{i_1})^{\pm1}\cdots (u^{i_s})^{\pm1}-(u^{i_s})^{\pm1}(u^{i_{s+1}})^{\pm1}\cdots(u^{i_{s-1}})^{\pm1} \right],
\end{equation}
where distinguishing between the two different forms of the partial derivative \eqref{eq:qp-gen-pf1} is no longer needed. The four expressions $(u^l f_l''f_l'-f_l''f_l'u^l)$ (resp. $(f_l''f_l'u^l- u^l f_l''f_l')$) vanish when summing over all the cyclic permutations of the monomial $f$, therefore proving the vanishing of $[[Q,Q],f]$ for any $f$.
\end{proof}

We provided a geometric interpretation for double quasi-Poisson algebras; we have showed that the operator \eqref{eq:kont-op} is not Poisson (having a nonvanishing triple bracket, see Proposition \ref{thm:kont-quasi}), but it is quasi-Poisson as for Proposition \ref{thm:kont-quasi}. Then we can conclude that it is Hamiltonian. Moreover, we showed that property \eqref{eq:pre-Ham}, noted by Wolf and Efimovskaya, is in fact a consequence of the stronger property \eqref{eq:qP-def0}. However, note that the notion of quasi-Poisson algebra is defined on $\A_0$ and not on $\A$, since it is given in terms of double Poisson brackets and \emph{not $\lambda$ brackets}. Noncommutative Hamiltonian systems of ODEs defined in terms of quasi-Poisson brackets have been studied in the last few years by several authors \cite{ar18,aros20,cf17,cf20}; we are not aware of any example of non-Poisson operators defining Poisson brackets, and hence being labelled Hamiltonian, for systems of PDEs or of D$\Delta$Es. However, should a differential or difference operator exist such that it fulfills \eqref{eq:qP-def0}, we could still call the corresponding bivector a quasi-Poisson one. 

\section{Nonabelian Hamiltonian operators for difference systems}\label{sec:Ham-ex}
In this section we will present several examples  of nonabelian Hamiltonian structures, applying the results we we have presented in the previous sections. All the results are described using the bivector formalism recalled in Section \ref{sec:Ham_formalism}, which is better known among the Integrable Systems community. However, most of the computations were performed using the Schouten bracket described in our ``geometric'' setting  of Section \ref{sec:geom}.

The operators we discuss, some of which are not previously known, contribute to the study of nonabelian differential-difference integrable systems. More in detail, we investigate \emph{scalar} ultralocal and local Hamiltonian operators. In the ultralocal case, we show that all the Hamiltonian structures coincide with Hamiltonian structures for nonabelian ODEs, for which we proved in Section \ref{sec:dPA} and \ref{sec:quasi} the relation with double Poisson algebras and quasi-Poisson algebras. We then study local Hamiltonian structures and present a class of nonlocal ones; finally, we provide an answer to a question left open in our recent work \cite{cw19-2}, exhibiting the Hamiltonian structures for, respectively, the nonabelian Kaup, Ablowitz-Ladik, and Chen-Lee-Liu lattices.

	\subsection{Scalar ultralocal and local Hamiltonian operators}

	A scalar ($\ell=1$) ultralocal (see Definition \ref{def:ul}) skewsymmetric operator must be of the form
	$$
	K=\sum_\alpha \left(\l_{f^{(\alpha)}}\r_{g^{(\alpha)}}-\r_{f^{(\alpha)}}\l_{g^{(\alpha)}}\right),
	$$
	with $f^{(\alpha)}, g^{(\alpha)} \in \A$. We have the following Lemma:
	\begin{lemma}\label{lem:ul}
	A necessary condition for a skewsymmetric scalar ultralocal operator $K$ to be Poisson is that $f^{(\alpha)}=f^{(\alpha)}(u)$ and $g^{(\alpha)}=g^{(\alpha)}(u)$, namely the operator must multiply on the left and on the right for polynomials of $u$'s only.
	\end{lemma}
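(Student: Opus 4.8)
The plan is to read the Poisson condition off the bivector of Definition \ref{def:Pois} and then extract a contradiction from the terms carrying the \emph{most shifted} basic univectors. First I would make the bivector explicit. For a single summand $\l_{f}\r_{g}-\r_{f}\l_{g}$ one has $K\theta=f\theta g-g\theta f$, and since $\theta f$ and $\theta g$ are both $\theta$-odd, the graded cyclicity \eqref{eq:tr} of the trace gives $\int\tr(\theta g\theta f)=-\int\tr(\theta f\theta g)$, so the two halves of the antisymmetric combination add up and
\[
P=\tfrac12\int\tr(\theta K\theta)=\sum_\alpha\int\tr\left(\theta f^{(\alpha)}\theta g^{(\alpha)}\right),\qquad\theta=\theta_0 .
\]

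Next I would compute $\mathbf{pr}_{K\Theta}P$, the left-hand side of the Poisson property \eqref{eq:HamProp}. Because $\mathbf{pr}_{K\Theta}$ is a graded derivation with $\mathbf{pr}_{K\Theta}(\theta_{i,n})=0$ and $\mathbf{pr}_{K\Theta}(u_n)=\cS^n(K\theta)$, it only differentiates the coefficients $f^{(\alpha)},g^{(\alpha)}$. Using \eqref{eq:evvfield-action}, each differentiation $\dev/\dev u_n$ substitutes $\cS^n(K\theta)$, which carries the univector $\theta_n$; hence $\mathbf{pr}_{K\Theta}P\in\hF^3$ is a sum of monomials containing the two original $\theta_0$'s together with one further $\theta_n$, where $n$ runs over the shifts appearing in the supports of the coefficients.

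The core is then a leading-shift analysis. Suppose, for contradiction, that some coefficient genuinely depends on a shifted variable, and set $M:=\max\{\,n : u_n\text{ appears in some }f^{(\alpha)},g^{(\alpha)}\,\}$. After applying the reflection $u_n\mapsto u_{-n}$ (an algebra automorphism sending $\cS\mapsto\cS^{-1}$; since $(\cS^{-1}-1)\A=(\cS-1)\A$ it preserves the quotients defining $\F$ and $\hF$, hence the Poisson property) I may assume $M\geq 1$. The univector of maximal index that can occur in $\mathbf{pr}_{K\Theta}P$ is then $\theta_M$, produced only by differentiating a coefficient with respect to $u_M$ and inserting $\cS^M(K\theta)$, whose coefficients $\cS^M f^{(\beta)},\cS^M g^{(\beta)}$ reach the variable index $2M$. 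I would isolate the part of $\mathbf{pr}_{K\Theta}P$ carrying \emph{simultaneously} the maximal univector $\theta_M$ and the maximal variable $u_{2M}$ — which forces two coefficients each to contain $u_M$ — and show this part is a nonzero expression built from $\dev f^{(\alpha)}/\dev u_M$ and $\dev g^{(\alpha)}/\dev u_M$, so its vanishing forces those derivatives to vanish, contradicting the definition of $M$. Thus every coefficient lies in $\A_0$, i.e.\ $f^{(\alpha)}=f^{(\alpha)}(u)$ and $g^{(\alpha)}=g^{(\alpha)}(u)$.

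The delicate point, and the main obstacle, is that $\hF$ is a quotient by $(\cS-1)$ and by graded commutators, so one cannot simply ``read off the coefficient of $\theta_M$'': shifting and cyclic permutation reshuffle the univector indices, and the extremal terms could a priori cancel among themselves or against the contributions coming from the $g$-derivatives. The argument must therefore be phrased in terms of a genuinely shift- and permutation-invariant datum — the \emph{joint} extremal pair $(\theta_M,u_{2M})$, which no other contribution can reach — and one must check that, modulo the trace identifications, the surviving quadratic expression in the top derivatives cannot vanish unless those derivatives do. An alternative, possibly cleaner, route passes through the equivalent double multiplicative PVA of Theorem \ref{thm:doublePVA-Poiss}: an ultralocal operator has the $\lambda$-independent bracket $\ldb u_\lambda u\rdb=\sum_\alpha\bigl(f^{(\alpha)}\otimes g^{(\alpha)}-g^{(\alpha)}\otimes f^{(\alpha)}\bigr)$, and in the double Jacobi identity the powers of $\lambda$ and $\mu$ decouple on the left-hand side while on the right-hand side they are tied to $\lambda\mu$ by sesquilinearity through the shifts of the coefficients; matching these powers then forces those shifts to be trivial.
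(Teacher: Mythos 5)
Your overall strategy coincides with the paper's proof of Lemma \ref{lem:ul}: both arguments normalise the bivector to $\sum_\alpha\int\tr(\theta f^{(\alpha)}\theta g^{(\alpha)})$ and then perform an extremal-shift analysis on $\mathbf{pr}_{K\Theta}P$, using the fact that a univector $\theta_n$ with $n\neq 0$ rigidifies the trace monomials. Your reflection $u_n\mapsto u_{-n}$ replaces the paper's symmetric treatment of the minimal negative index, and your joint invariant $(\theta_M,u_{2M})$ is a slightly sharper bookkeeping device than the paper's case split between $p$ (maximal shift in the $f$'s) and $q$ (maximal shift in the $g$'s).

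There is, however, a genuine gap, and it sits exactly at the step you defer to a ``check''. The implication you announce --- that the extremal part is a nonzero expression in the top derivatives, so its vanishing forces $\dev f^{(\alpha)}/\dev u_M=\dev g^{(\alpha)}/\dev u_M=0$ --- is false as stated. Take $f^{(\beta)}=g^{(\beta)}$ for every $\beta$, with $f^{(\beta)}$ depending on $u_M$, $M\geq 1$: then $K=\sum_\beta\left(\l_{f^{(\beta)}}\r_{f^{(\beta)}}-\r_{f^{(\beta)}}\l_{f^{(\beta)}}\right)=0$ because left and right multiplications commute, and every inserted block $\cS^M f^{(\beta)}\theta_M\cS^M g^{(\beta)}-\cS^M g^{(\beta)}\theta_M\cS^M f^{(\beta)}$ vanishes identically, so your extremal part is zero while the top derivatives are not; no contradiction with the definition of $M$ arises. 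The same degeneracy occurs whenever the list of coefficients contains a pair $(f,g)$ together with $(g,f)$. The missing idea is the dichotomy the paper extracts from its expression \eqref{eq:lemul-pf}: since a word $\theta A\,\theta B\,\theta_M C$ with $M\neq 0$ admits no nontrivial graded-cyclic symmetry, distinct such words are independent in $\hF^3$, hence the only cancellation available inside the extremal part is the pairwise one between the two skewsymmetric halves of $\cS^M(K\theta)$, and this forces $f^{(\beta)}=g^{(\beta)}$ --- i.e.\ it forces the operator to degenerate to zero, not the derivatives to vanish. The correct conclusion of the extremal analysis is therefore ``either no coefficient contains a shifted variable, or $K$ is trivial'', and without stating and proving this dichotomy the argument does not close. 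Your alternative route through the double multiplicative PVA suffers from the same problem: matching the powers of $\lambda$, $\mu$ and $\lambda\mu$ in the double Jacobi identity yields precisely the same family of equations, which the degenerate representations above satisfy with nontrivial shifts, so bidegree bookkeeping alone cannot ``force the shifts to be trivial''.
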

	\begin{proof}
	The Poisson property in this case reads
	\begin{equation}
	\sum\mathbf{pr}_{K\theta}\int\tr\theta f^{(\alpha)}\theta g^{(\alpha)}=0.
	\end{equation}
	Let
	\begin{align}
	f^{(\alpha)}&=c_{(\alpha)} u_{i^{(\alpha)}_1}\ldots u_{i^{(\alpha)}_r}&\text{and}&&g^{(\alpha)}&=u_{j^{(\alpha)}_1}\ldots u_{j^{(\alpha)}_s}
	\end{align}
	for constants $c_{(\alpha)}$. We define $p=\max_{l,\alpha}\{i^{(\alpha)}_l\}$ and $q=\max_{l,\alpha}\{j^{(\alpha)}_l\}$.

	Let us first consider the case $p>q>0$ (or, equivalently, $q>p>0$ switching the role of $p$ and $q$ in the proof). Then, the only terms in $\mathbf{pr}_{K\theta}P$ including $\theta_p$ is of the form
	\begin{multline}\label{eq:lemul-pf}
	\sideset{}{'}\sum_{\alpha,\beta, i^{(\alpha)}_l=p}\int\tr c_{(\alpha)}\left((\cS^p g^{(\beta)}) u_{i^{(\alpha)}_{l+1}}\cdots u_{i^{(\alpha)}_r}\theta g^{(\beta)} \theta u_{i^{(\alpha)}_1}\cdots u_{i^{(\alpha)}_{l-1}}(\cS^p f^{(\beta)})\right.\\\left.-(\cS^p f^{(\beta)}) u_{i^{(\alpha)}_{l+1}}\cdots u_{i^{(\alpha)}_r}\theta g^{(\beta)} \theta u_{i^{(\alpha)}_1}\cdots u_{i^{(\alpha)}_{l-1}}(\cS^p g^{(\beta)})\right)\theta_p,
	\end{multline}
	where the sum runs for all $\beta$ and for $\alpha$ such that $f^{(\alpha)}$ depends on $u_p$ and for the indices $l$ such that $i^{(\alpha)}_l=p$. Note that the presence of $\theta_p$, for $p\neq 0$, fixes the position of all the terms with respect to the cyclic permutations, and hence that the expression can vanish only if $f^{(\beta)}=g^{(\beta)}$. If $p=q\neq0$ the picture is similar, and in the sum there will be present additional terms with expression multiplying (from both the left and the right) the expression $\theta f^{(\alpha)}\theta$. An analogue result holds if we consider the variables with the minimum negative index; the only way for the expression \eqref{eq:lemul-pf} to vanish without requiring $f^{(\alpha)}=g^{(\alpha)}$ is by allowing $p=q=0$, so that we can exploit the cyclic permutations of the products. 
	\end{proof}
	Lemma \ref{lem:ul} implies that all the scalar ultralocal Poisson operators \emph{in the differential-difference setting} are Poisson operators for nonabelian \emph{ordinary differential equations}, too. In Section \ref{ssec:dPA} we discussed the equivalence between Poisson structures for nonabelian ODEs and the notion of double Poisson algebras. The classification results for the latter ones provide an equivalent classification of ultralocal Poisson operators: we can then provide a list of ultralocal Poisson operators based on \cite{dskv15, ms00, p16, vdb}.
	
	\begin{theorem}\label{thm:ul-scal}
	(1) All the scalar Hamiltonian ultralocal operators are of the form
	\begin{equation}\label{eq:ul-scal}
	H=\alpha\c_u+\beta\c_{u^2}+\gamma\left(\l_{u^2}\r_u-\l_u\r_{u^2}\right)
	\end{equation}
	These operators are Poisson if and only if $\beta^2-\alpha\gamma=0$.
	
	(2) The Poisson operators $H_1$ and $H_2$ (with their respective constants $\alpha_i$, $\beta_i$,$\gamma_i$) form a bi-Hamiltonian pair if and only if $2\beta_1\beta_2-\alpha_2\gamma_1-\alpha_1\gamma_2=0$.
	\end{theorem}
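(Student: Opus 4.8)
The plan is to work throughout in the language of double Poisson brackets, which by Proposition \ref{thm:poisdouble-ul} is equivalent to the bivector formulation in the ultralocal case and is far better adapted to one-generator computations. First I would note that Lemma \ref{lem:ul} (together with the quasi-Poisson classification of Section \ref{sec:quasi} for the Hamiltonian-but-not-Poisson members) confines the problem to operators that multiply only by polynomials in $u=u_0$, so everything lives on $\A_0$. The assertion that \emph{every} such Hamiltonian operator has the form \eqref{eq:ul-scal} is then the classification of double (quasi-)Poisson brackets on a single generator, which I would import from \cite{dskv15,ms00,p16,vdb}. Translating through $\l_f\mapsto f\otimes 1$, $\r_g\mapsto 1\otimes g$, the three building blocks correspond to the skewsymmetric double bracket
\begin{equation}
\ldb u,u\rdb=\alpha\left(u\otimes1-1\otimes u\right)+\beta\left(u^2\otimes1-1\otimes u^2\right)+\gamma\left(u^2\otimes u-u\otimes u^2\right).
\end{equation}

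For the Poisson criterion in part (1), I would compute the triple bracket $\ldb u,u,u\rdb=(1+\tau+\tau^2)\ldb u,\ldb u,u\rdb\rdb_L$ directly, evaluating $\ldb u,u\rdb$ and $\ldb u,u^2\rdb$ on the three summands above via the Leibniz rule $\ldb u,u^n\rdb=\sum_{k=0}^{n-1}u^k\ldb u,u\rdb u^{n-1-k}$. The outcome I expect, after the cyclic symmetrization and collection of terms, is
\begin{equation}
\ldb u,u,u\rdb=\left(\beta^2-\alpha\gamma\right)T_0,\qquad T_0=(1+\tau+\tau^2)\left(u^2\otimes u\otimes1-u\otimes u^2\otimes1\right)\neq0.
\end{equation}
In particular the pure $\alpha^2$ and $\gamma^2$ contributions vanish identically (they are the triple brackets of $\c_u$ and of $\l_{u^2}\r_u-\l_u\r_{u^2}$, already known to be Poisson), while the $\beta^2$ and $\alpha\gamma$ parts assemble into the displayed discriminant. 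Since by Proposition \ref{thm:poisdouble-ul} the operator is Poisson if and only if this triple bracket vanishes, and $T_0\neq0$, the operator is Poisson exactly when $\beta^2-\alpha\gamma=0$. The main obstacle is purely the bookkeeping: one must verify that all cross terms (the $\alpha\beta$ and $\beta\gamma$ terms as well as the $\alpha\gamma$ and $\beta^2$ ones) organize into a scalar multiple of the \emph{single} $3$-vector $T_0$. It is precisely this collapse to a rank-one dependence on $(\alpha,\beta,\gamma)$ that makes part (2) immediate.

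For part (2), the family \eqref{eq:ul-scal} is a linear space, so the pencil $H_1+\lambda H_2$ is again of the form \eqref{eq:ul-scal} with parameters $(\alpha_1+\lambda\alpha_2,\beta_1+\lambda\beta_2,\gamma_1+\lambda\gamma_2)$. Two Poisson operators form a bi-Hamiltonian pair precisely when every member of this pencil is Poisson, equivalently when $[P_1,P_2]=0$; since by the part (1) computation $[P,P]$ is, up to the nonzero $T_0$-multiple fixed by Theorem \ref{thm:PoisBiv}, governed by the scalar $\beta^2-\alpha\gamma$, applying the part (1) criterion to the pencil shows it is Poisson for all $\lambda$ if and only if the discriminant
\begin{equation}
(\beta_1+\lambda\beta_2)^2-(\alpha_1+\lambda\alpha_2)(\gamma_1+\lambda\gamma_2)=(\beta_1^2-\alpha_1\gamma_1)+\lambda\left(2\beta_1\beta_2-\alpha_1\gamma_2-\alpha_2\gamma_1\right)+\lambda^2(\beta_2^2-\alpha_2\gamma_2)
\end{equation}
vanishes identically in $\lambda$. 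The constant and $\lambda^2$ coefficients are zero because $H_1$ and $H_2$ are themselves Poisson, so the surviving condition is the vanishing of the middle coefficient, namely $2\beta_1\beta_2-\alpha_2\gamma_1-\alpha_1\gamma_2=0$. This is exactly the polarization of the quadratic form $\beta^2-\alpha\gamma$, so part (2) follows from the part (1) computation with no further work.
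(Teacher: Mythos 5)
Your proposal is correct and follows essentially the same route as the paper: reduction to $\A_0$ via Lemma \ref{lem:ul}, importing the classification statement of part (1) from the literature on double (quasi-)Poisson algebras \cite{vdb,p16,f19}, verifying the Poisson criterion through the equivalence of Proposition \ref{thm:poisdouble-ul}, and obtaining part (2) by polarizing the discriminant $\beta^2-\alpha\gamma$ along the pencil $H_1+\lambda H_2$, whose constant and quadratic coefficients vanish because $H_1,H_2$ are Poisson. The only (minor) difference is that where the paper checks $\beta^2-\alpha\gamma=0$ via \eqref{eq:HamProp} and defers the full computation to Powell \cite{p16}, you carry out the equivalent triple-bracket computation explicitly; your expected collapse $\ldb u,u,u\rdb=(\beta^2-\alpha\gamma)T_0$ is indeed correct, since the $\alpha^2$, $\gamma^2$, $\alpha\beta$ and $\beta\gamma$ contributions cancel under the cyclic sum while the $\beta^2$ and $\alpha\gamma$ contributions produce $+T_0$ and $-T_0$ respectively.
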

	\begin{proof}
From Lemma \ref{lem:ul} we know that we must investigate only operators without shifted variables. Then, for part (1) we can rely on the result due to Van Der Bergh \cite{vdb} in the context of double Poisson algebras. It can easily verified by \eqref{eq:HamProp} that condition $\beta^2-\alpha\gamma=0$ is necessary and sufficient for the Poisson property. Powell \cite{p16} gives a full proof of this fact. However, for any value of the constants the operator corresponds to a double quasi-Poisson algebra \cite{f19}.

On the other hand, $2\beta_1\beta_2-\alpha_2\gamma_1-\alpha_1\gamma_2=0$ is equivalent to the Poisson property for $H_1-\lambda H_2$, for any $\lambda$, giving (2).
\end{proof}

	\begin{definition}
	We call a difference operator \emph{local} if its entries (or itself in the scalar case) are Laurent polynomials in $\cS$. 	\end{definition}

	Because of the skewsymmetry requirement, a scalar operator is of order $(-N,N)$ for some $0<N<\infty$.
	
	The Poisson property imposes very rigid constraints on the form of such operators, so that very few of them are known in the literature: in two components, for instance, the only nonconstant local (in particular, not ultralocal) Hamiltonian operator we know is the first Hamiltonian structure of the Toda lattice \eqref{eq:HToda1}. By solving the equation \eqref{eq:HamProp} for a scalar operator of order $(-1,1)$ we have found a new class of examples, that are novel at the best of our knowledge.
	
	\begin{theorem}
	For $\ell=1$, all the Poisson operators of order $(-1,1)$ are, up to linear transformations of the dependent variable $u$, of one of the following forms
	\begin{align}
	H_1&=\l_{uu_1}\r_{u_1u}\cS-\cS^{-1}\r_{uu_1}\l_{u_1u},\\
	H_{\mathrm{c}}&=\cS-\cS^{-1}.
	\end{align}
	
	\end{theorem}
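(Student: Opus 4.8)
The plan is to solve the defining identity \eqref{eq:HamProp} directly for the most general skewsymmetric scalar difference operator of order $(-1,1)$. First I would parametrise such an operator: skewsymmetry $K^\dagger=-K$ forces the $\cS^{-1}$--part to be minus the adjoint of the $\cS$--part, so that $K=\sum_\alpha\left(\l_{a_\alpha}\r_{b_\alpha}\cS-\cS^{-1}\r_{a_\alpha}\l_{b_\alpha}\right)+K_0$, with $K_0$ an ultralocal skewsymmetric operator and $a_\alpha,b_\alpha\in\A$ Laurent polynomials. Computing the bivector \eqref{eq:defBiv} and using the cyclicity and graded symmetry \eqref{eq:tr} of $\int\tr$ to absorb the $\cS^{-1}$ contribution, it collapses to $P=\int\tr\sum_\alpha\theta\,a_\alpha\theta_1 b_\alpha+\tfrac12\int\tr\theta K_0\theta$, so that the unknowns are reduced to the pairs $(a_\alpha,b_\alpha)$ and the datum $K_0$.

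Next I would apply $\mathbf{pr}_{K\Theta}$ to $P$ and sort the resulting $3$--vector by the shift indices of the $\theta$'s it carries. Since $\mathbf{pr}_{K\Theta}(u_n)=\cS^n(K\theta)$ shifts indices by the orders present in $K$, a variable $u_N$ with $N\ge 2$ inside some $a_\alpha$ or $b_\alpha$ produces a monomial containing $\theta_{N+1}$, whose extremal index fixes all cyclic positions and which therefore cannot be matched by any other contribution---exactly the mechanism of Lemma \ref{lem:ul}. The analogous statement on the lowest index, together with an extremal--power argument excluding the inverse generators, should confine every $a_\alpha,b_\alpha$ to ordinary polynomials in $u$ and $u_1$ alone. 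I would then grade the condition by the spread of the $\theta$--indices: the purely ultralocal piece $\mathbf{pr}_{K_0\Theta}\tfrac12\int\tr\theta K_0\theta$ sits at index $0$ and must vanish on its own, so by Theorem \ref{thm:ul-scal} $K_0$ is an ultralocal Poisson operator, while the cross term between $K_0$ and the shift part imposes a compatibility that I expect to force $K_0=0$.

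With $K_0=0$ and $a_\alpha,b_\alpha\in\R\langle u,u_1\rangle$, the problem becomes finite once a degree bound is established: expanding each coefficient into monomials, an extremal--degree version of the same cancellation argument shows that total degree exceeding two again produces uncancellable top terms, so only degrees $0,1,2$ survive. At this point \eqref{eq:HamProp} is a finite system of quadratic relations among finitely many constant coefficients, which I would solve explicitly. Discarding the trivial solution, the degree--two branch yields $a=uu_1$, $b=u_1u$, namely $H_1$, and the degree--zero branch yields $a=b=1$, namely $H_{\mathrm{c}}$; the residual constants are normalised away by the linear reparametrisations $u\mapsto\lambda u$ (and affine shifts), under which both operators transform covariantly, giving the two canonical representatives in the statement.

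The main obstacle is the second step: handling $\mathbf{pr}_{K\Theta}P$ for coefficients of a priori unbounded degree, and proving rigorously that the extremal--index and extremal--degree monomials cannot cancel, which is where essentially all of the casework lives. Here the equivalent language of double multiplicative PVAs (Theorem \ref{thm:doublePVA-Poiss}) is a useful bookkeeping device: writing $\ldb u_\lambda u\rdb=A\lambda+B+C\lambda^{-1}$ via \eqref{eq:fromOptoBr} and imposing the double Jacobi identity, the powers of the formal variables $\lambda,\mu$ separate the contributions of different shift indices automatically, turning the combinatorial cancellation into the vanishing of the individual Laurent coefficients of a polynomial identity in $\lambda$ and $\mu$.
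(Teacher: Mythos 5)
Your proposal is correct and follows essentially the same route as the paper's own proof: parametrise the skewsymmetric operator as a $\cS$-part plus an ultralocal part, use the Lemma \ref{lem:ul}-type extremal-index argument to confine the coefficients to $u,u_1$ (resp.\ $u$), split the condition \eqref{eq:HamProp} by the $\theta$-shift patterns $(\theta,\theta_1,\theta_2)$, $(\theta,\theta_1,\theta_1)$, $(\theta,\theta,\theta)$, conclude that the ultralocal part must be Poisson on its own and is then excluded by the cross terms, and normalise the surviving coefficients (the paper gets $(\lambda u+\mu)(\lambda u_1+\mu)\otimes(\nu u_1+\rho)(\nu u+\rho)$ directly, your finite quadratic system plus affine reparametrisation yields the same two representatives). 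The only cosmetic difference is your suggestion to organise the computation via the double multiplicative PVA coefficients of $\lambda,\mu$, which the paper has already shown to be an equivalent bookkeeping of the same cancellations.
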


	\begin{proof}
	The skewsymmetry condition implies that a candidate Hamiltonian operator must be of the form
	$$
	H=\sum\left(\l_{H^{(\alpha_1)}_L}\r_{H^{(\alpha_1)}_R}\cS-\cS^{-1}\r_{H^{(\alpha_1)}_L}\l_{H^{(\alpha_1)}_R} + \l_{H^{(\alpha_0)}_L}\r_{H^{(\alpha_0)}_R}-\r_{H^{(\alpha_0)}_L}\l_{H^{(\alpha_0)}_R}\right).
	$$
	The Poisson bivector associated to the operator is
	$$
	P=\int\tr\sum\left(\theta H^{(\alpha_1)}_L\theta_1 H^{(\alpha_1)}_R+\theta H^{(\alpha_0)}_L\theta H^{(\alpha_0)}_R\right),
	$$
	and a computation similar to the one performed in the proof of Lemma \ref{lem:ul} shows that $H^{(\alpha_1)}_L=H^{(\alpha_1)}_L(u,u_1)$, $H^{(\alpha_1)}_R=H^{(\alpha_1)}_R(u,u_1)$, $H^{(\alpha_0)}_L=H^{(\alpha_0)}_L(u)$, and $H^{(\alpha_0)}_R=H^{(\alpha_0)}_R(u)$. Moreover, a necessary condition emerging from comparing the terms of the expression \eqref{eq:HamProp} for expressions containing $(\theta, \theta_1, \theta_2)$ is that $H^{(\alpha_1)}_L$ (respectively, $H^{(\alpha_1)}_R$) must be all equal (or at least proportional) and $H_L=(\lambda u+\mu)(\lambda u_1+\mu)$ and $H_R=(\nu u_1+\rho)(\nu u+\rho)$. Note in particular that we obtain $H_c$ for $\lambda=\nu=0$. The conditions $\lambda=\nu$ and $\mu=\rho$ come from the vanishing of the terms with $(\theta,\theta_1,\theta_1)$. The terms containing $(\theta,\theta,\theta)$ in the identity can come only from the ultralocal term, which must on its own be Poisson: they can be then either $\c_u$ or $\l_u\r_u^2-\l_u^2\r_u$, but not a linear combination of these two. We then obtain the statement checking case by case. In particular, $H_1$ and $H_\mathrm{c}$ are not compatible.
	\end{proof}
This class of examples can be extended to an arbitrary operator of order $(-N,N)$, closely resembling the so-called multiplicative Poisson $\lambda$-bracket of general type defined in \cite{dskvw19}. It is easy to verify that
\begin{equation}
H_p=\l_{uu_p}\r_{u_pu}\cS^p-\cS^{-p}\l_{u_pu}\r_{uu_p}
\end{equation}
is Hamiltonian for any $p>0$ and that any linear combination of $H_p$, for different $p$'s, is Hamiltonian too. Note, however, that the condition for the nonabelian case is much more rigid: the form of the operators in the commutative case depends on an arbitrary function of the variable $u$ (see \cite{dskvw19}).


	\subsection{Nonlocal Hamiltonian operators}
	Similarly to the differential case, many systems, whose Hamiltonian structures are local in the Abelian case, are Hamiltonian only with respect to nonlocal Hamiltonian operators in the noncommutative case. In this section we recall some results already presented in \cite{cw19-2}, and then we exhibit new two-components ($\ell=2$) Hamiltonian structures which reduce to ultralocal brackets in the commutative case. In particular, they provide the Hamiltonian structures for the Nonabelian Ablowitz-Ladik, Chen-Lee-Liu and Kaup lattices which had not been constructed before.
	
	The prototypical example of nonlocal Hamiltonian structures reducing to local ones in the commutative case is the Hamiltonian structure of nonabelian Volterra chain, that we presented in \cite{cw19-2} (the sign difference is due to the opposite definition of $\c_u$):
		\begin{equation}\label{eq:hamVolt}
		H_V=\r_u\cS\r_u-\l_u\cS^{-1}\l_u-\r_u\c_u-\c_u(1-\cS)^{-1}\c_u.
		\end{equation}
		
		Note that the last two terms can be written in a form which is skewsymmetric at sight, namely $H_V=\r_u\cS\r_u-\l_u\cS^{-1}\l_u-H_0^{(\mathrm{sc})}$ with
		\begin{equation}\label{eq:Hsc0}
		H_0^{(\mathrm{sc})}=\frac{1}{2}\left(\a_u\c_u-\c_u(1+\cS)(\cS-1)^{-1}\c_u\right).
		\end{equation}
		\begin{proposition}\label{thm:Hsc0}
		The operator $H_0^{(\mathrm{sc})}$ is Poisson, and therefore Hamiltonian.
		\end{proposition}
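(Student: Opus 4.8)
The plan is to verify the Poisson property \eqref{eq:HamProp} for $H_0^{(\mathrm{sc})}$ directly, or equivalently (by Theorem \ref{thm:PoisBiv}) to show that the Schouten torsion $[P,P]$ of the associated bivector $P=\frac12\int\tr\big(\theta\,H_0^{(\mathrm{sc})}\theta\big)$ vanishes; the final clause ``therefore Hamiltonian'' is then immediate from Proposition \ref{thm:pois-suf}. Before computing I would record two preliminary simplifications. First, since $\a_u\c_u=(\l_u+\r_u)(\l_u-\r_u)=\l_{u^2}-\r_{u^2}=\c_{u^2}$, the local part of \eqref{eq:Hsc0} is simply $\frac12\c_{u^2}$. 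Second, the skewsymmetry asserted ``at sight'' follows from the adjoint computation $\c_u^\dagger=-\c_u$ together with the fact that, viewed as a function of $\cS$, the middle factor satisfies $\big[(1+\cS)(\cS-1)^{-1}\big]^\dagger=(1+\cS^{-1})(\cS^{-1}-1)^{-1}=-(1+\cS)(\cS-1)^{-1}$, so that both the local term $\frac12\c_{u^2}$ and the nonlocal term are separately skewsymmetric.

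The key technical device is to introduce the potential $\phi:=(\cS-1)^{-1}\c_u\theta$, so that $\c_u\theta=(\cS-1)\phi=\phi_1-\phi$, writing $\phi_1:=\cS\phi$. Using graded cyclicity and shift invariance of $\int\tr$, together with $\int\tr\phi^2=0$ (as $\phi$ is odd), the bivector collapses to the compact form $P=-\tfrac12\int\tr(u^2\theta^2)-\tfrac12\int\tr(\phi\,\phi_1)$, where the first summand is the bivector of $\frac12\c_{u^2}$ and the second encodes the nonlocal tail. I would then compute the characteristic, obtaining after a short manipulation $\xi:=H_0^{(\mathrm{sc})}\theta=u\theta u-\theta u^2-u\phi+\phi u$, and observe that, since $\mathbf{pr}_\xi$ commutes with $\cS$ and annihilates $\theta$, it acts on the potential by $\mathbf{pr}_\xi\phi=(\cS-1)^{-1}(\xi\theta+\theta\xi)=:\eta$, with $\eta_1:=\cS\eta=\mathbf{pr}_\xi\phi_1$.

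With these in hand, applying $\mathbf{pr}_\xi$ to the two summands gives $\mathbf{pr}_\xi P=-\tfrac12\int\tr\big((\xi u+u\xi)\theta^2\big)-\tfrac12\int\tr(\eta\phi_1-\phi\eta_1)$. The plan is to organise these terms by their degree of nonlocality in $\phi,\eta$ and to show that each group vanishes. The terms containing $\eta$ are ``doubly nonlocal'' because $\xi$ itself involves $\phi$; these I would reduce by substituting $\eta_1-\eta=\xi\theta+\theta\xi$ and $\phi_1-\phi=\c_u\theta$ and integrating by parts, which telescopes the $(\cS-1)^{-1}$ factors away. The residual terms linear in $\phi$ cancel after a shift, and the purely local remainder reduces to an ultralocal $3$-vector that one checks vanishes using the graded cyclic identities of the trace (for instance $\int\tr\theta u\theta u=0$).

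The main obstacle is precisely this last organisation step, and I expect it to be genuinely delicate rather than routine. Note that $\frac12\c_{u^2}$ is \emph{not} Poisson in isolation: by Theorem \ref{thm:ul-scal} (with $\alpha=\gamma=0$, $\beta=1$, so $\beta^2-\alpha\gamma=1\neq0$) it is only quasi-Poisson, hence $\mathbf{pr}_{\xi_0}P_0\neq0$ for the local pieces $\xi_0,P_0$. Consequently the proof cannot proceed by treating the local and nonlocal parts separately: the nonzero ultralocal ``anomaly'' of $\frac12\c_{u^2}$ must be cancelled exactly by the local residue produced when the nonlocal tail telescopes. Controlling this interplay — and, in particular, keeping the chosen expansion direction of $(\cS-1)^{-1}$ consistent throughout, as flagged in the discussion preceding Theorem \ref{thm:doublePVA-Poiss} — is the crux. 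As an independent cross-check I would pass to the associated double $\lambda$ bracket $\ldb u_\lambda u\rdb$ and verify the double Jacobi identity of Definition \ref{def:dmpva} as a formal Laurent series in $\lambda,\mu$; there the same telescoping reappears as the cancellation of the geometric series generated by $\tfrac{1}{\lambda\cS-1}$.
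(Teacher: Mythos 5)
Your proposal is correct and follows essentially the same route as the paper's own proof: your potential $\phi=(\cS-1)^{-1}\c_u\theta$ is exactly the paper's nonlocal variable $\rho$, your characteristic $\xi=u\theta u-\theta u^2-u\phi+\phi u$ and bivector $P=-\tfrac12\int\tr\left(u^2\theta^2\right)-\tfrac12\int\tr\left(\phi\,\phi_1\right)$ coincide with the paper's, and your telescoping of the $(\cS-1)^{-1}$ factors via shift-invariance of $\int\tr$ is precisely the mechanism the paper implements through the prolongation rule \eqref{eq:prvNL-1} and the identity $\int\tr\left(\rho_1\rho_1\rho_1-\rho\rho\rho\right)=0$, which converts the nonlocal cubic terms into $-\tfrac13\int\tr(u\theta-\theta u)^3$ so that the remainder cancels by graded cyclicity. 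Your added observation that the local piece $\tfrac12\c_{u^2}$ is not Poisson on its own (so local and nonlocal contributions cannot be treated separately) is accurate and consistent with the paper, which likewise carries out the cancellation globally rather than term by term.
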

		\begin{proof}
		In \cite[Proposition 7]{cw19-2} we proved that \eqref{eq:hamVolt} is a Poisson operator. Let us introduce the nonlocal variable
		$$
		\rho=(\cS-1)^{-1}(u\theta-\theta u)
		$$
		with the useful identity
		\begin{equation}\label{eq:rhoid}
		\rho_1:=\cS\rho=\rho+(u\theta-\theta u).
		\end{equation}
		The characteristics of the formal vector field $H\theta:=H_0^{(\mathrm{sc})}\theta$ is
		\begin{equation}
		H\theta=u\theta u-\theta u^2+\rho u-u\rho
		\end{equation}
		and its associated bivector is
		\begin{equation}
		2P=\int\tr\left(-u^2\theta\theta+\rho_1\rho\right).		
		\end{equation}
		To compute \eqref{eq:HamProp} we need to obtain an explicit form for the prolongation of the formal vector field applied to the nonlocal terms. For a generic formal vector field of characteristics $V$ (and degree 1 in $\theta$) we have, as illustrated with more detail in \cite{cw19-2},
		\begin{align}\label{eq:prvNL-1}
\mathbf{pr}_V\int\tr \rho_1 \rho&=\int\tr\left[\rho\mathbf{pr}_V(\rho_1)-\rho_1\mathbf{pr}_V(\rho)\right]\\
&=\tr\int\left[\rho\mathbf{pr}_V(\rho_1-\rho_{-1})\right]=\int\tr\left[\rho\mathbf{pr}_V\left((1+\cS^{-1})(u\theta-\theta u)\right)\right]\\
&=\int\tr\left[\left((1+\cS)\rho\right)\mathbf{pr}_V(u\theta-\theta u)\right]\\
&=\int\tr\left[2 \rho\mathbf{pr}_V(u\theta -\theta u)+(u\theta-\theta u)\mathbf{pr}_V(u\theta-\theta u)\right].
\end{align}
An explicit and a bit tedious computation gives us
\begin{equation}
\mathbf{pr}_{H\theta}P=\int\tr\left[\rho(u\theta-\theta u)^2+\rho^2(u\theta-\theta u)-u\theta^2 u^2\theta+\theta^2 u\theta u^2\right],
\end{equation} 
The identity $\int\tr\left(\rho_1\rho_1\rho_1-\rho\rho\rho\right)=0$ implies
\begin{equation}
\int\tr\left[\rho(u\theta-\theta u)^2+\rho^2(u\theta-\theta u)\right]=-\frac13\int\tr(u\theta-\theta u)^3
\end{equation}
from which we obtain
\begin{equation}
\mathbf{pr}_{H\theta}P=\int\tr\left[-\frac13(u\theta-\theta u)^3-u\theta^2 u^2\theta+\theta^2 u\theta u^2\right],
\end{equation}
which vanishes. Hence, $H_0^{(sc)}$ is a Poisson operator.
\end{proof}

In \cite{cw19-2} we used the result for \eqref{eq:hamVolt} to proof by induction that the operator
		\begin{equation}
		H_{NIB}=\sum_{i=1}^p \left(\r_u\cS^i\r_u-\l_u\cS^{-i}\l_u\right)
	-H^{(\mathrm{sc})}_0	\end{equation}
		is the Hamiltonian operator for the Narita-Itoh-Bogoyavlensky lattice. Note that also in this case the Hamiltonian operator is the sum of an operator which is not Hamiltonian, but reduces to the Hamiltonian one for the corresponding commutative system, and of a Hamiltonian operator vanishing in the commutative case.
				
		\subsubsection{``Null'' Hamiltonian operators}\label{ssec:H0}
		A similar pattern as the one we have just observed can be also found in two-component systems. In \cite{cw19-2} we studied the nonabelian 2D Toda system. Its first Hamiltonian structure is local, but the second one -- obtained applying the recursion operator to the first structure -- is nonlocal and reduces to the standard one in the commutative case. Similarly to the nonabelian Volterra Hamiltonian structure, that operator can be regarded as the direct promotion of the Abelian Hamiltonian operator to the noncommutative case (which is not Hamiltonian) plus a Hamiltonian operator, vanishing in the commutative case.
		
\begin{theorem}\label{thm:H02}
The operator
\begingroup
\begin{equation}\label{eq:H02}
		H=\left(\begin{array}{ccc} \r_u\c_u-\c_u(\cS-1)^{-1}\c_u && \r_u\c_v-\c_u(\cS-1)^{-1}\c_v\\ \c_v\r_u-\c_v(\cS-1)^{-1}\c_u&\phantom{12}& \c_v\r_v-\c_v(\cS-1)^{-1}\c_v\end{array}\right)
		\end{equation}
\endgroup
is Poisson.
\end{theorem}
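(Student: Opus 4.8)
The plan is to establish the Poisson property in the form of Definition \ref{def:Pois}, i.e.\ $\mathbf{pr}_{H\Theta}P=0$ for the bivector $P$ attached to $H$ by \eqref{eq:defBiv}; equivalently, by Theorem \ref{thm:PoisBiv}, one may verify that the Schouten torsion $[P,P]$ vanishes. The whole computation is the two-component, nonlocal counterpart of the scalar one carried out for $H_0^{(\mathrm{sc})}$ in Proposition \ref{thm:Hsc0}. As a preliminary step I would record that $H$ is skewsymmetric, so that $P$ is well defined: writing $u=u^1,\,v=u^2$ and splitting $H$ into its local block and the nonlocal term $-\c_{u^i}(\cS-1)^{-1}(\c_u\theta+\c_v\zeta)$, neither piece is skewsymmetric on its own, but using $\c_f^\dagger=-\c_f$ together with $\bigl((\cS-1)^{-1}\bigr)^\dagger=-\cS(\cS-1)^{-1}$ one checks that the factor $\cS$ produced by adjoining the nonlocal kernel exactly compensates the failure of the local block (for the $(1,1)$ entry, for instance, the identity $-\c_u\l_u+\r_u\c_u=-\c_u^2$ does the job), whence $H^\dagger=-H$.

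Following Proposition \ref{thm:Hsc0} I would introduce the single nonlocal variable
\[
\xi=(\cS-1)^{-1}\bigl(\c_u\theta+\c_v\zeta\bigr)=(\cS-1)^{-1}\bigl([u,\theta]+[v,\zeta]\bigr),\qquad \cS\xi=\xi+[u,\theta]+[v,\zeta].
\]
In terms of $\xi$ the characteristics of $H\Theta$ read $(H\Theta)^1=\bigl([u,\theta]+[v,\zeta]\bigr)u-[u,\xi]$ and $(H\Theta)^2=[v,\theta u+\zeta v]-[v,\xi]$, so that $2P=\int\tr\bigl(\theta\,(H\Theta)^1+\zeta\,(H\Theta)^2\bigr)$ separates into a local density, polynomial in $u,v,\theta,\zeta$, and a purely nonlocal part. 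Using the graded cyclicity of the trace and $\tr(\xi\xi)=0$, the nonlocal part collapses to the compact form $\int\tr\,\xi_1\xi$ with $\xi_1:=\cS\xi$, in exact analogy with the term $\int\tr\rho_1\rho$ of the scalar case.

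To evaluate $\mathbf{pr}_{H\Theta}P$ I would treat the two parts separately. The prolongation of the local density is a finite, if lengthy, computation. For the nonlocal part I would use the integration-by-parts device of \eqref{eq:prvNL-1}, writing $\mathbf{pr}_{H\Theta}\int\tr\xi_1\xi$ in terms of $\mathbf{pr}_{H\Theta}(\xi_1-\xi_{-1})$ and then replacing it, via the telescoping identity $\cS\xi-\xi=[u,\theta]+[v,\zeta]$, by the local quantity $\mathbf{pr}_{H\Theta}\bigl([u,\theta]+[v,\zeta]\bigr)=[(H\Theta)^1,\theta]+[(H\Theta)^2,\zeta]$. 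This reduces every nonlocal contribution to a trace of local densities in $u,v,\theta,\zeta$ and the single nonlocal symbol $\xi$.

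The main obstacle is the final cancellation. After the reduction one is left with a sum of trace densities of three types: those cubic in $\xi$ (fed back by the terms $[u,\xi],[v,\xi]$), the mixed $\xi$-$u$-$v$ terms, and the purely local cubic terms. The cubic-in-$\xi$ part I would dispose of by the graded telescoping identity $\int\tr(\xi_1^3-\xi^3)=0$, the exact analogue of $\int\tr(\rho_1^3-\rho^3)=0$ used in Proposition \ref{thm:Hsc0}; expanding $\xi_1=\xi+[u,\theta]+[v,\zeta]$ trades these for traces that are local and match the remaining terms. The genuine difficulty, relative to the scalar case, is the bookkeeping: the inhomogeneity $[u,\theta]+[v,\zeta]$ is now a sum of two commutators, so the expansions produce many more cross terms, and since $\theta,\zeta,\xi$ are all odd every cyclic rearrangement under the graded trace carries a sign that must be tracked with care. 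Checking that the purely local cubic remainder vanishes --- the two-component, mixed generalisation of the scalar identity $\int\tr\bigl(-\tfrac13(u\theta-\theta u)^3-u\theta^2u^2\theta+\theta^2u\theta u^2\bigr)=0$ --- is where the bulk of the work lies; alternatively, the same cancellations can be organised more systematically by computing $[P,P]$ directly from the master formula \eqref{eq:ds_master} for the double Schouten $\lambda$ bracket, which is the route we actually follow.
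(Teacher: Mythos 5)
Your proposal is correct and is essentially the paper's own proof: the paper likewise verifies $\mathbf{pr}_{H\Theta}P=0$ by introducing the nonlocal variables $\rho=(\cS-1)^{-1}(u\theta-\theta u)$ and $\sigma=(\cS-1)^{-1}(v\zeta-\zeta v)$, writing the nonlocal part of $2P$ as $\int\tr\left(\rho_1+\sigma_1\right)\left(\rho+\sigma\right)$, prolonging it with the devices \eqref{eq:prvNL-1}--\eqref{eq:prvNL-2}, and cancelling the result through telescoped trace identities such as $\int\tr(\sigma_1\rho_1\rho_1-\sigma\rho\rho)=0$. Your only departure is the economical bundling $\xi=\rho+\sigma$ (your characteristics and the collapse of the nonlocal part to $\int\tr\xi_1\xi$ agree exactly with the paper's \eqref{H0-op} and \eqref{H0-biv}), which makes the two-component computation formally identical to the scalar Proposition \ref{thm:Hsc0}, whereas the paper instead splits the Schouten torsion into $(\theta,\zeta)$-homogeneous components and disposes of the pure-$\theta$ and pure-$\zeta$ parts by noting that the diagonal entries of $H$ are copies of $H_0^{(\mathrm{sc})}$.
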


\begin{proof}
We show that the bivector defined by $H$ is Poisson; we denote $\theta$ and $\zeta$ the basic univectors corresponding, respectively, to $u$ and $v$.

Let $(\rho,\sigma)$ be the nonlocal variables 
\begin{align}
(\cS-1)^{-1}(u\theta-\theta u)&=\rho,&(\cS-1)^{-1}(v\zeta-\zeta v)&=\sigma,
\end{align} 	
and write the characteristics of the formal bivector $H\Theta$ as
\begin{equation}\label{H0-op}
H\binom{\theta}{\zeta}=\left(\begin{array}{c}-\theta u^2+u\theta u - u\rho+\rho u-\zeta v u + v\zeta u + \sigma u-u \sigma\\
-\theta u v + v\theta u + \rho v-v \rho-\zeta v^2+v\zeta v + \sigma v-v \sigma\end{array}\right).
\end{equation}
The bivector $P$ is then written as
\begin{equation}\label{H0-biv}
\begin{split}
2P&=\int\tr \Theta^\dagger H \Theta\\
&=\int\tr\left(-\theta\theta u^2-\zeta \zeta v^2+u\theta v \zeta+\zeta v \theta u-u\theta\zeta v-v\zeta\theta u\right.\\
&\qquad\qquad+\left(\rho_1+\sigma_1\right)\left(\rho+\sigma\right)\Big),
\end{split}
\end{equation}
where $\rho_1=\cS\rho$ and so on.

The condition \eqref{eq:HamProp} that must be checked for $P$ and $H\theta$ is an element of $\hF$ of degree 3 in $\theta$ and $\zeta$: indeed, the Schouten torsion of a bivector is a trivector. Because of the symmetry of the bracket in the exchange of $(u,\theta)$ and $(v,\zeta)$, it is sufficient that the homogeneous components respectively of degree 3 in $\theta$, and degree 2 in $\theta$ and 1 in $\zeta$ vanish. Moreover, the vanishing of the coefficient in degree 3 in $\theta$ (resp.~$\zeta$) is equivalent to the claim that the $(1,1)$ (resp.~$(2,2)$) entry of $H$, is Poisson in its own right. Indeed, they are both the same as $H_0^{\mathrm{(sc)}}$ given in equation \eqref{eq:Hsc0}, which is Poisson according to Proposition \ref{thm:Hsc0}.

Let us denote by $P^{\theta\theta}$, $P^{\theta\zeta}$, $P^{\zeta\zeta}$ the homogeneous components of $P$. Similarly, we denote $H\theta^\theta$, $H\theta^\zeta$, $H\zeta^\theta$, $H\zeta^\zeta$ the linear components in $\theta$ and $\zeta$ of each component of $H\Theta$.

We need to compute the terms of degree 2 in $\theta$ and degree 1 in $\zeta$ (the vice versa is the same by the symmetry of $H$), that are obtained by
\begin{equation}\label{eq:SchH0qqz}
\mathbf{pr}_{H\theta^\zeta}P^{\theta\theta}+\mathbf{pr}_{H\theta^\theta+H\zeta^\theta}P^{\theta\zeta}.
\end{equation}
Similarly to identity \eqref{eq:prvNL-1}, we also have
\begin{multline}\label{eq:prvNL-2}
\mathbf{pr}_V \int\tr\left(\sigma_1 \rho+\rho_1\sigma\right)=\\
\int\tr\left[2\sigma\mathbf{pr}_V(u\theta-\theta u)+(v\zeta-\zeta v)\mathbf{pr}_V(u\theta-\theta u)\right.\\
+\left.2\rho\mathbf{pr}_V(v\zeta-\zeta v)+(u\theta-\theta u)\mathbf{pr}_V(v\zeta-\zeta v)\right].
\end{multline}
The computation is then longer but similar to the one performed in the proof of Proposition \ref{thm:Hsc0}. We obtain purely local terms, terms with a single nonlocal variable and terms with two nonlocal variables. The latter ones are
\begin{equation}\label{eq:SchH0qqz-doublenonloc}
2\int\tr\left[\sigma\rho(u\theta-\theta u)+\sigma(u\theta-\theta u)\rho+(v\zeta-\zeta v)\rho\rho\right],
\end{equation}
which can be rewritten, from $\int\tr(\sigma_1\rho_1\rho_1-\sigma\rho\rho)=0$ and $\sigma_1=\sigma+v\zeta-\zeta v$, $\rho_1=\rho+u\theta-\theta u$, as 
\begin{multline}
\text{\eqref{eq:SchH0qqz-doublenonloc}}=-\int\tr\left[\sigma(u\theta-\theta u)(u\theta-\theta u)+\rho(u\theta-\theta u)(v\zeta-\zeta v)\right.\\
\left.+\rho(v\zeta-\zeta v)(u\theta-\theta u)+(v\zeta-\zeta v)(u\theta-\theta u)(u\theta-\theta u)\right].
\end{multline}
These terms cancel with the remaining ones in the expression \eqref{eq:SchH0qqz}. By symmetry, the same happens for the terms quadratic in $\zeta$ and linear in $\theta$, fulfilling \eqref{eq:HamProp}
\end{proof}

The operator $H$ in \eqref{eq:H02} vanishes in the Abelian case. It would be interesting to investigate the existence of integrable equations defined in terms of it, because they would not have a commutative counterpart. However, as we anticipated the second Hamiltonian structure of the nonabelian Toda system (see \cite[Section 3.3]{cw19-2}) is given by
\begin{equation}\label{eq:HToda1}
H^{(2)}_{Toda}=\left(\begin{array}{ccc}\l_u\cS^{-1}\l_u-\r_u\cS\r_u&\quad& \l_u\r_v-\r_u\cS\r_v\\
-\l_v\r_u+\l_v\cS^{-1}\l_u&\quad& \cS^{-1}\l_u-\r_u\cS\end{array}\right)+H.
\end{equation}

The operator \eqref{eq:H02} is not the only two-component Hamiltonian operator vanishing when we assume that the variables do commute. We also have
\begin{theorem}\label{thm:H02-tilde}
The operator
\begin{multline}\label{eq:H02-tilde}
\tilde{H}=\left(\begin{array}{cc}-\c_{u^2} & \l_u\r_v-\l_v\r_u-\l_{uv}+\r_{vu}\\
\r_v\l_u-\r_u\l_v+\r_{uv}-\l_{vu} & \c_{v^2}\end{array}\right)\\
+\begin{pmatrix}\a_u\\-\a_v\end{pmatrix}\cS(\cS-1)^{-1}\begin{pmatrix}\c_u & \c_v\end{pmatrix}-\begin{pmatrix}\c_u\\ \c_v\end{pmatrix}(\cS-1)^{-1}\begin{pmatrix}\a_u & -\a_v\end{pmatrix}
\end{multline}
is Poisson.
\end{theorem}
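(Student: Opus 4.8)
The plan is to verify the Poisson property \eqref{eq:HamProp} directly for the bivector $P$ associated to $\tilde H$ by \eqref{eq:defBiv}; equivalently, by Theorem \ref{thm:PoisBiv}, to show $[P,P]=0$. The computation follows the template of Theorem \ref{thm:H02} and Proposition \ref{thm:Hsc0}. I would first record that $\tilde H$ is skewsymmetric (a short check using $\c_f^\dagger=-\c_f$, $\a_f^\dagger=\a_f$, $\l_f^\dagger=\r_f$ and $(\cS(\cS-1)^{-1})^\dagger=-(\cS-1)^{-1}$), so that $P$ is well defined. Since the two rank-one nonlocal tails share the same inverse $(\cS-1)^{-1}$, I introduce two nonlocal univectors,
\begin{align*}
\rho&=(\cS-1)^{-1}\left(\c_u\theta+\c_v\zeta\right), & \sigma&=(\cS-1)^{-1}\left(\a_u\theta-\a_v\zeta\right),
\end{align*}
with the shift identities $\rho_1-\rho=\c_u\theta+\c_v\zeta$ and $\sigma_1-\sigma=\a_u\theta-\a_v\zeta$ (cf.\ \eqref{eq:rhoid}). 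In these variables the characteristic $\tilde H\Theta$ becomes a local part plus $\bigl(\a_u\rho_1-\c_u\sigma,\,-\a_v\rho_1-\c_v\sigma\bigr)$, and $2P=\int\tr\Theta^\dagger\tilde H\Theta$ splits into a purely local part and a nonlocal part which, after using cyclicity of the trace together with $\sigma_1-\sigma$ and $\rho_1-\rho$, reduces to a quadratic expression in $\rho,\sigma$.

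Next I would organise $\mathbf{pr}_{\tilde H\Theta}P$ by its tridegree in $(\theta,\zeta)$. The top components $\theta^3$ and $\zeta^3$ decouple: the $\theta^3$ part is exactly the Poisson condition for the scalar operator obtained by restricting $\tilde H$ to the $(u,\theta)$ sector, namely $-\c_{u^2}+\a_u\cS(\cS-1)^{-1}\c_u-\c_u(\cS-1)^{-1}\a_u$. Using $\a_u\c_u=\c_{u^2}$ and $\cS(\cS-1)^{-1}=1+(\cS-1)^{-1}$, the ultralocal terms cancel and this collapses to $G:=\a_u(\cS-1)^{-1}\c_u-\c_u(\cS-1)^{-1}\a_u$. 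I would prove $G$ is Poisson by a computation of the same type as Proposition \ref{thm:Hsc0}: introduce the scalar nonlocal univectors $(\cS-1)^{-1}\c_u\theta$ and $(\cS-1)^{-1}\a_u\theta$, prolong the associated bivector, and cancel the surviving nonlocal remainder via the cubic trace identity $\int\tr(\rho_1^3-\rho^3)=0$ and its companions. The $\zeta^3$ component is the same operator with $u\mapsto v$ up to an overall sign, which preserves the Poisson property, so it vanishes as well.

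The substantive work is the mixed component. Unlike the operator of Theorem \ref{thm:H02}, $\tilde H$ is not invariant under the bare exchange $(u,\theta)\leftrightarrow(v,\zeta)$ (the terms $\l_{uv}$ and $\r_{vu}$ obstruct it); it is invariant only if this exchange is combined with reversal of the order of the noncommutative products, under which $\l_{uv}\leftrightarrow\r_{vu}$. I would use this combined symmetry to relate the $\theta^2\zeta$ and $\theta\zeta^2$ contributions, reducing the independent check to one mixed tridegree, say $\theta^2\zeta$ (alternatively one checks both, since they are structurally identical). This is computed as $\mathbf{pr}_{(\tilde H\Theta)^{\zeta}}P^{\theta\theta}+\mathbf{pr}_{(\tilde H\Theta)^{\theta}}P^{\theta\zeta}$, where the prolongation of the nonlocal terms is handled by the analogues of \eqref{eq:prvNL-1} and \eqref{eq:prvNL-2}. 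Grouping the result by the number of nonlocal factors, the purely local terms cancel by cyclicity, the terms with a single $\rho$ or $\sigma$ cancel against each other, and the terms carrying two nonlocal variables are reduced to local ones through identities such as $\int\tr(\rho_1\rho_1\rho_1-\rho\rho\rho)=0$ and $\int\tr(\sigma_1\rho_1\rho_1-\sigma\rho\rho)=0$, after which everything cancels.

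I expect the main obstacle to be precisely this last step in the mixed component: the simultaneous presence of the commutator-type variable $\rho$ and the anticommutator-type variable $\sigma$ (absent in Theorem \ref{thm:H02}, where both nonlocal variables were of commutator type) generates a larger family of double-nonlocal monomials, and the correct cubic trace identities must be chosen so that, after telescoping $\rho_1-\rho$ and $\sigma_1-\sigma$ back into local expressions, the remaining local terms match and cancel. Careful sign bookkeeping from the graded trace \eqref{eq:tr} and the graded bullet product is where errors are most likely, so I would confirm the final cancellation independently through the Schouten-bracket formalism of Section \ref{sec:geom}.
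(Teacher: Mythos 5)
Your strategy is essentially the paper's own: check skewsymmetry of the local and nonlocal parts, introduce nonlocal univectors to resolve the $(\cS-1)^{-1}$ tails, split the torsion $\mathbf{pr}_{\tilde H\Theta}\tilde P$ by degree in $(\theta,\zeta)$, and close each sector with cubic trace identities such as $\int\tr(\rho_1\rho_1\rho_1-\rho\rho\rho)=0$ and $\int\tr(\rho_1\rho_1\mu_1-\rho\rho\mu)=0$. The differences are organisational: the paper works with the four single-sided variables $\lambda=(\cS-1)^{-1}u\theta$, $\mu=(\cS-1)^{-1}v\zeta$, $\rho=(\cS-1)^{-1}\theta u$, $\sigma=(\cS-1)^{-1}\zeta v$ rather than your two commutator/anticommutator combinations, which makes $\tilde H\Theta$ and $\tilde P$ very compact (see \eqref{eq:H02-tilde-op}--\eqref{eq:H02-tilde-biv}), and it verifies the $\theta^3$ sector by direct computation; your reduction of that sector to the Poisson property of the scalar operator $G=\a_u(\cS-1)^{-1}\c_u-\c_u(\cS-1)^{-1}\a_u$ is correct and amounts to the same calculation, carried out in the style of Proposition \ref{thm:Hsc0}.

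However, one concrete claim in your plan is false: the exchange $(u,\theta)\leftrightarrow(v,\zeta)$ composed with reversal of products is \emph{not} a symmetry of $\tilde H$, nor of $\tilde P$. Conjugation by an antiautomorphism $\phi$ sends $\l_f\mapsto\r_{\phi(f)}$ and $\r_f\mapsto\l_{\phi(f)}$, and the exchange-plus-reversal map fixes both $uv$ and $vu$; hence it sends the $(1,2)$ block $\l_u\r_v-\l_v\r_u-\l_{uv}+\r_{vu}$ to $\r_v\l_u-\r_u\l_v-\r_{uv}+\l_{vu}$, whereas the actual $(2,1)$ block is $\r_v\l_u-\r_u\l_v+\r_{uv}-\l_{vu}$: the bilinear terms match, but the $\l_{uv},\r_{vu}$-type terms come out with the opposite sign. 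Equivalently, at the level of the bivector, $\int\tr(\theta u\zeta v-\theta v\zeta u)$ is anti-invariant under this map while $\int\tr(-\theta uv\zeta+\theta\zeta vu)$ is invariant, so $\tilde P$ is neither invariant nor anti-invariant, and the $\theta^2\zeta$ and $\theta\zeta^2$ components of the torsion cannot be traded against each other this way. Your fallback of computing both mixed sectors directly is therefore not optional but necessary; note that the paper's own appeal to the situation of Theorem \ref{thm:H02} (where the bare exchange genuinely is a symmetry) likewise does not transfer verbatim to $\tilde H$, precisely because of the sign asymmetry between $-\c_{u^2}$ and $+\c_{v^2}$ and between the $\l_{uv},\r_{vu}$ terms. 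The $\zeta^3$ sector, by contrast, needs no symmetry at all: it is the scalar Poisson condition for the $(2,2)$ entry, which is $-G$ with $u\mapsto v$, hence Poisson by your scalar argument. With that correction the outline goes through and reproduces the paper's proof.
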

\begin{proof}
As for Theorem \ref{thm:H02} this reduces to cumbersome computations. Note that both the local and nonlocal parts are skewsymmetric at sight, since $(A(\cS-1)^{-1}B)^\dagger=-B^\dagger \cS (\cS-1)^{-1}A^\dagger$, with $\c_u^\dagger=-\c_u$, $\a_u^\dagger=\a_u$.
We introduce nonlocal variables
\begin{align*}
\lambda&=(\cS-1)^{-1}u\theta,&\mu=(\cS-1)^{-1}v\zeta,\\
\rho&=(\cS-1)^{-1}\theta u,&\sigma=(\cS-1)^{-1}\zeta v,
\end{align*}
and explicitly compute
\begin{equation}\label{eq:H02-tilde-op}
\tilde{H}\begin{pmatrix}\theta\\ \zeta\end{pmatrix}=2\begin{pmatrix} \lambda u-u\rho+u\mu-\sigma u\\ \theta uv-vu\theta+\rho v-v\lambda+v\sigma-\mu v
\end{pmatrix}.
\end{equation}
Then, its associated bivector is
\begin{equation}\label{eq:H02-tilde-biv}
\tilde{P}=\int\tr\left(\rho\rho_1-\lambda\lambda_1+\mu\mu_1-\sigma\sigma_1-2 u\theta\sigma+2\theta u \mu\right).
\end{equation}
To compute \eqref{eq:HamProp} we need, as before, some additional identities. We have
\begin{align}\label{eq:id-1}
\mathbf{pr}_V\int\tr\rho\rho_1&=-\int\tr 2\rho\mathbf{pr}_V(\theta u)+\theta u\mathbf{pr}_V(\theta u),\\\notag
\mathbf{pr}_V\int\tr\theta u\mu&=\int\tr\left[\mu\mathbf{pr}_V(\theta u)-\theta u\mathbf{pr}_V((\cS-1)^{-1}v\zeta)\right]\\\notag
&=\int\tr\left[\mu\mathbf{pr}_V(\theta u)+\rho_1\mathbf{pr}_V(v\zeta)\right]\\\label{eq:id-2}
&=\int\tr\left[\mu\mathbf{pr}_V(\theta u)+(\theta u+\rho)\mathbf{pr}_V(v\zeta)\right],
\end{align}
and similarly for the other nonlocal variables. As previously, we only need to check the vanishing of the expression for $\theta \theta \theta$ and $\theta\theta\zeta$ (and their corresponding nonlocal variables).

For the $\theta\theta\theta$ part, the direct computation gives
\begin{equation}
\mathbf{pr}_{H\theta^\theta} \tilde{P}^{\theta\theta}=2\int\tr\left(\rho\rho\theta u-\lambda\lambda u \theta +\rho\theta u\theta u-\lambda u\theta u \theta\right).
\end{equation}
Using $\int\tr(\rho_1 \rho_1 \rho_1-\rho\rho\rho)=0$ and $\rho_1=\rho+\theta u$, we have $\int\tr\rho\rho \theta u=\int\tr(-\rho\theta u \theta u -\tfrac{1}{3}\theta u\theta u\theta u)$, and similarly for $\lambda$. Then
\begin{equation}
\mathbf{pr}_{H\theta^\theta} \tilde{P}^{\theta\theta}=\frac{2}{3}\int\tr(u\theta u\theta u\theta-\theta u \theta u \theta u)=0.
\end{equation}
For the $\theta\theta \zeta$ part the picture is similar: we use the identities $\int\tr (\rho_1\rho_1\mu_1-\rho\rho\mu)=0$ and $\int\tr (\lambda_1\lambda_1\sigma_1-\lambda\lambda\sigma)=0$ to simplify the expression and verify that it vanishes.
\end{proof}

The operator $\tilde{H}$ of Theorem \ref{thm:H02-tilde} can be combined with ultralocal, non-Hamiltonian operators to give a one-parameter family of Hamiltonian operators which provide the Hamiltonian structures for the nonabelian Ablowitz-Ladik and Chen-Lee-Liu integrable equations. They reduce to ultralocal brackets when the variables are assumed to commute.

\begin{theorem}\label{thm:Ham0pJP}
The operator
\begin{equation}\label{eq:OpJP}
H_\alpha=\left(\begin{array}{cc}0 & -2\r_{vu}+\alpha\\
2\l_{vu}-\alpha & 0\end{array}\right)+\tilde{H}
\end{equation}
is Poisson for any $\alpha\in\R$ (and, therefore, Hamiltonian).
\end{theorem}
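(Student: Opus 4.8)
The plan is to establish the Poisson property through the bivector criterion of Theorem \ref{thm:PoisBiv}: writing $P_\alpha$ for the bivector attached to $H_\alpha$ via \eqref{eq:defBiv}, it suffices to show that the Schouten torsion $[P_\alpha,P_\alpha]$ vanishes. The structural observation I would exploit is that $H_\alpha$ splits as $H_\alpha=\tilde H+L+\alpha\,C$, where $\tilde H$ is the operator of Theorem \ref{thm:H02-tilde}, $L=\left(\begin{smallmatrix}0&-2\r_{vu}\\ 2\l_{vu}&0\end{smallmatrix}\right)$ is ultralocal, and $C=\left(\begin{smallmatrix}0&1\\-1&0\end{smallmatrix}\right)$ is constant. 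Since $H\mapsto P_H$ is linear and the Schouten bracket is symmetric on bivectors (by \eqref{eq:schskew} with $|a|=|b|=2$ the sign is $+1$), I would expand
\[
[P_\alpha,P_\alpha]=\bigl([\tilde P,\tilde P]+2[\tilde P,P_L]+[P_L,P_L]\bigr)+2\alpha\,[\tilde P+P_L,P_C]+\alpha^2[P_C,P_C],
\]
with $\tilde P,P_L,P_C$ the bivectors of $\tilde H,L,C$. Here $[\tilde P,\tilde P]=0$ by Theorem \ref{thm:H02-tilde}, while $P_C=\int\tr\theta\zeta$ carries no dependence on $u,v$, so $[P_C,P_C]=0$ exactly as in the example following Definition \ref{def:Pois}. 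Matching powers of $\alpha$, the theorem reduces to the two identities $[P_L,P_L]+2[\tilde P,P_L]=0$ (the $\alpha^0$ part) and $[\tilde P+P_L,P_C]=0$ (the $\alpha^1$ part).

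For the $\alpha$-linear identity I would use the polarisation of the identity in Theorem \ref{thm:PoisBiv}, namely $4[P_0,P_C]=\mathbf{pr}_{C\Theta}P_0+\mathbf{pr}_{H_0\Theta}P_C$, where $P_0:=\tilde P+P_L$ is the bivector of $H_0:=\tilde H+L$. Because $P_C$ contains neither $u,v$ nor the nonlocal variables, every prolongation annihilates it, so $\mathbf{pr}_{H_0\Theta}P_C=0$ and the condition collapses to $\mathbf{pr}_{C\Theta}P_0=0$ with characteristics $C\Theta=(\zeta,-\theta)$; geometrically this is the invariance of $P_0$ under the constant (``canonical'') flow generated by $C$. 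I would then compute $\mathbf{pr}_{C\Theta}P_0$ on the explicit form \eqref{eq:H02-tilde-biv} of $\tilde P$ together with $P_L=-2\int\tr\theta\zeta\,vu$, using $\mathbf{pr}_{C\Theta}u=\zeta$, $\mathbf{pr}_{C\Theta}v=-\theta$ and the induced action on $\lambda,\mu,\rho,\sigma$, reducing the resulting nonlocal expressions with the telescoping relations $\rho_1=\rho+\theta u$, $\lambda_1=\lambda+u\theta$, $\mu_1=\mu+v\zeta$, $\sigma_1=\sigma+\zeta v$ and the identities \eqref{eq:id-1}--\eqref{eq:id-2}.

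The $\alpha$-independent identity $[P_L,P_L]+2[\tilde P,P_L]=0$ is precisely the Poisson property of $H_0=\tilde H+L$, and I would prove it by mirroring the proof of Theorem \ref{thm:H02-tilde} while discarding the already-vanishing contribution $[\tilde P,\tilde P]$. Since $L$ (like $C$) has vanishing diagonal, it does not affect the pure $\theta\theta\theta$ and $\zeta\zeta\zeta$ components of $\mathbf{pr}_{H_0\Theta}P_0$, which therefore still vanish by Theorem \ref{thm:H02-tilde}; following the same exchange symmetry $(u,\theta)\leftrightarrow(v,\zeta)$ it then remains to check the mixed components (two $\theta$'s and one $\zeta$, and its mirror). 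These receive the new contributions of $L$ through its off-diagonal multiplication by $vu$, and I would verify their cancellation against the cross term $2[\tilde P,P_L]$ using the cubic trace identities of type $\int\tr(\rho_1^3-\rho^3)=0$ used in Theorem \ref{thm:H02-tilde}.

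The main obstacle throughout is the bookkeeping of the nonlocal terms: prolonging the local fields $C\Theta$ and $L\Theta$ through the nonlocal bivector $\tilde P$ produces expressions in $\lambda,\mu,\rho,\sigma$ and, when the prolongation hits a nonlocal variable, genuinely new double-nonlocal combinations, all of which must telescope back to local densities before trace cyclicity and the exchange symmetry can be applied to cancel them. Confirming that the residual local $\theta\theta\zeta$ and $\theta\zeta\zeta$ pieces cancel for both identities — and in particular that no net $\alpha$-dependent obstruction survives — is the delicate point. Once this is done, $[P_\alpha,P_\alpha]=0$ for every $\alpha\in\R$, so $H_\alpha$ is Poisson and, by Proposition \ref{thm:pois-suf}, Hamiltonian.
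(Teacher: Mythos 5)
Your proposal is correct and is essentially the paper's own argument: the paper likewise splits $H_\alpha$ into $\tilde H$ plus an ultralocal part (written there as a two-parameter operator $\check H$ with $\beta\r_{vu}$ in place of $-2\r_{vu}$), uses that $\tilde P$ is Poisson to reduce the torsion to $\mathbf{pr}_{\tilde H\Theta}\check P+\mathbf{pr}_{\check H\Theta}\tilde P+\mathbf{pr}_{\check H\Theta}\check P=0$, and verifies this by the same computation with the identities \eqref{eq:id-1}--\eqref{eq:id-2}. Your grading by powers of $\alpha$ (and the polarisation step reducing the $\alpha$-linear part to $\mathbf{pr}_{C\Theta}P_0=0$) is only a bookkeeping refinement of that computation: the paper's single residual $-\int\tr\left[\alpha(\beta+2)u\theta\zeta\theta+\beta(\beta+2)u\theta u\theta\zeta v\right]$ contains exactly your two identities as its $\alpha^1$ and $\alpha^0$ coefficients, both vanishing at $\beta=-2$.
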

\begin{proof}
We can consider a generic operator depending on two parameters
\begin{equation}
\check{H}=\begin{pmatrix} 0 & \beta\r_{vu}+\alpha\\-\beta\l_{vu}-\alpha & 0\end{pmatrix}
\end{equation}
with the associated bivector
\begin{equation}
\check{P}=2\int\tr\left(\beta vu\theta\zeta+\alpha\theta\zeta\right).
\end{equation}
The Poisson property \eqref{eq:HamProp} for $\check{P}+\tilde{P}$ reduces, since $\tilde{P}$ is Poisson on its own, to
\begin{equation}
\mathbf{pr}_{\tilde{H}}\check{P}+\mathbf{pr}_{\check{H}}\tilde{P}+\mathbf{pr}_{\check{H}}\check{P}=0.
\end{equation}
Relying on the identities \eqref{eq:id-1}, \eqref{eq:id-2} and the analogue ones for the remaining nonlocal terms in $\tilde{P}$, we can perform a straightforward computation that gives us
\begin{equation}
-\int\tr\left[\alpha(\beta+2)u\theta\zeta\theta+\beta(\beta+2)u\theta u\theta \zeta v\right]=0,
\end{equation}
which is satisfied if and only if $\beta=-2$, for any value of $\alpha$.
\end{proof}

By shift of variables, from $H_\alpha$ we can obtain a further Hamiltonian operator, linear in the variables $(u,v)$.
\begin{theorem}\label{thm:HK_Hamproof}
The operator
\begin{multline}\label{eq:HKaup}
H=\left(\begin{array}{cc}\c_u & \l_v+\r_u\\
-\r_v-\l_u & -\c_v\end{array}\right)\\
-\begin{pmatrix}1\\-1\end{pmatrix}\cS(\cS-1)^{-1}\begin{pmatrix}\c_u & \c_v\end{pmatrix}+\begin{pmatrix}\c_u\\ \c_v\end{pmatrix}(\cS-1)^{-1}\begin{pmatrix}1 & -1\end{pmatrix}
\end{multline}
is Poisson and, therefore, Hamiltonian. 
\end{theorem}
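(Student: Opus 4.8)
The plan is to realise $H$ as a scaling limit of the Poisson operators $H_\alpha$ of Theorem \ref{thm:Ham0pJP}, using two elementary facts. First, a constant shift of the dependent variables $u\mapsto u+a$, $v\mapsto v+b$ with $a,b\in\R$ is an automorphism of $\A$ commuting with $\cS$; having trivial Fréchet derivative, it leaves the basic univectors $\theta,\zeta$ unchanged and merely relabels the bivector $P$ and its Schouten torsion $[P,P]$, so it sends Poisson operators to Poisson operators. Second, a constant rescaling $H\mapsto cH$ preserves the Poisson property, since $[cP,cP]=c^2[P,P]$. Because $H_\alpha$ is Poisson for \emph{every} $\alpha\in\R$, any operator obtained from it by such a shift followed by a rescaling is again Poisson.

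First I would apply to $H_\alpha$ the shift $u\mapsto u+R$, $v\mapsto v+R$, set $\alpha=2R^2$, and multiply the result by $-\tfrac{1}{2R}$; call the outcome $H^{(R)}$. The relevant substitution rules are $\c_{u+R}=\c_u$, $\a_{u+R}=\a_u+2R$ and $\c_{(u+R)^2}=\c_{u^2}+2R\,\c_u$, together with the analogous identities for $v$ and the expansions of the quadratic off-diagonal entries of $\tilde H$. The choice $\alpha=2R^2$ is tailored to cancel the $O(R^2)$ constant-operator contributions produced by $-2\r_{(v+R)(u+R)}$ and $2\l_{(v+R)(u+R)}$ in the off-diagonal blocks. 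After the factor $-\tfrac1{2R}$, the genuinely quadratic part of $\tilde H$ is $O(1/R)$ and drops out, whereas the $O(R)$ linear part survives: for instance the $(1,1)$ entry $-\c_{u^2}-2R\,\c_u$ becomes $\c_u+O(1/R)$, and in the $(1,2)$ entry the surviving terms reorganise, via $\r_v+\c_v=\l_v$, into $\l_v+\r_u$. In the nonlocal part the dominant term $2R\begin{pmatrix}1\\-1\end{pmatrix}$ of $\begin{pmatrix}\a_u\\-\a_v\end{pmatrix}$ (and of its transpose) reproduces, after multiplication by $-\tfrac1{2R}$, exactly the constant vectors $\begin{pmatrix}1\\-1\end{pmatrix}$ and $\begin{pmatrix}1 & -1\end{pmatrix}$ of \eqref{eq:HKaup}. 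One checks entrywise that $H^{(R)}\to H$ coefficientwise as $R\to\infty$.

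To conclude I would pass to the limit. Each $H^{(R)}$ is Poisson, so its bivector satisfies $[P^{(R)},P^{(R)}]=0$. The Schouten torsion is computed from the master formula \eqref{eq:ds_master} and is a polynomial, hence continuous, expression in the local multiplication coefficients of the operator, the nonlocal factor $(\cS-1)^{-1}$ being structurally fixed and untouched by the shift. Since the coefficients of $H^{(R)}$ converge to those of $H$, continuity gives $[P,P]=\lim_{R\to\infty}[P^{(R)},P^{(R)}]=0$, so $H$ is Poisson; by Proposition \ref{thm:pois-suf} it is therefore Hamiltonian.

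The main obstacle is twofold: performing the entrywise limit cleanly — in particular tracking the cancellation of the $O(R^2)$ terms secured by $\alpha=2R^2$ and verifying that the surviving linear terms assemble precisely into the local block of \eqref{eq:HKaup} (using $\r_v+\c_v=\l_v$ and $-\l_v+\c_v=-\r_v$) — and justifying rigorously that the Poisson condition is stable under the $R\to\infty$ limit despite the presence of the nonlocal operator $(\cS-1)^{-1}$. Should one prefer to avoid the limiting argument altogether, $H$ can be shown Poisson by a direct check of \eqref{eq:HamProp} following verbatim the scheme of Theorem \ref{thm:H02-tilde}: introduce the nonlocal variables $\lambda=(\cS-1)^{-1}u\theta$, $\mu=(\cS-1)^{-1}v\zeta$, $\rho=(\cS-1)^{-1}\theta u$, $\sigma=(\cS-1)^{-1}\zeta v$, form the bivector, and reduce the resulting degree-three density to local terms via $\int\tr(\rho_1\rho_1\rho_1-\rho\rho\rho)=0$ and its mixed analogues; this route is longer but strictly parallel to the proofs already given in this section.
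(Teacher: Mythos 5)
Your proposal is correct and is essentially the paper's own proof: the paper applies the same constant shift to $H_\alpha$ ($u\mapsto u-\eta$, $v\mapsto v-\eta$), which yields the exact operator identity $H_{\alpha-2\eta^2}+2\eta H$, and then invokes Theorem \ref{thm:Ham0pJP}. The only difference is the final step: where you rescale by $-\tfrac{1}{2R}$ and take $R\to\infty$ (the continuity point you yourself flag as an obstacle), the paper keeps $\eta$ as a formal parameter, expands the torsion bilinearly as $\mathbf{pr}_{H_{\alpha-2\eta^2}\Theta}P_{\alpha-2\eta^2}+2\eta\bigl(\mathbf{pr}_{H_{\alpha-2\eta^2}\Theta}P+\mathbf{pr}_{H\Theta}P_{\alpha-2\eta^2}\bigr)+4\eta^2\,\mathbf{pr}_{H\Theta}P$, and observes that the even part in $\eta$ (degree $2$, i.e.\ the Poisson condition for $H$) must vanish separately from the odd part (degrees $1$ and $3$) --- a purely polynomial coefficient comparison, which is also the rigorous form of your limit argument, since your $H^{(R)}=H-\tfrac{1}{2R}H_0$ is an exact identity and bilinearity of the Schouten bracket alone finishes the proof without any topology on the space of functionals.
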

\begin{proof}
By the constant coordinate change $u\mapsto u-\eta$, $v\mapsto v-\eta$ in $H_\alpha$, we obtain
$\tilde H=H_{\alpha-2\eta^2}+2\eta H$.
Let us denote $P$ the bivector defined by $H$ and $P_{\alpha-2\eta^2}$ the one defined by $H_{\alpha-2\eta^2}$. Note that the latter is Poisson, as proved in Theorem \ref{thm:Ham0pJP} for any value of $\alpha$. Condition \eqref{eq:HamProp} for $\tilde{H}$ gives
\begin{equation}
0=\mathbf{pr}_{H_{\alpha-2\eta^2}\Theta} P_{\alpha-2\eta^2}+2\eta\left(\mathbf{pr}_{H_{\alpha-2\eta^2}\Theta} P+\mathbf{pr}_{H\Theta}P_{\alpha-2\eta^2}\right)+4\eta^2\mathbf{pr}_{H\Theta}P.
\end{equation}
The first term vanishes because $H_\alpha$ is Poisson for any value of the constant. Moreover, since the second term is either linear or of degree three in $\eta$, its vanishing is independent from the one of the third one, which corresponds to the Poisson property for $H$.
\end{proof}
\subsection{Hamiltonian structures for integrable nonabelian difference systems}\label{ssec:HamStr-List}
In this section we provide a Hamiltonian formulation for three nonabelian systems we introduced in \cite{cw19-2}. Their Hamiltonian structures belong to the class discussed in Section \ref{ssec:H0} 
\subsubsection{Nonabelian Kaup Lattice}
The nonabelian Kaup system (see \cite{ay94} for the Abelian version)
\begin{eqnarray*}
		\left\{ {\begin{array}{l} u_t=(u_1-u) (u+v) \\ v_t= (u+v) (v-v_{-1}) \end{array} } \right.
	\end{eqnarray*}
is Hamiltonian with respect to the structure $H$ given in \eqref{eq:HKaup} with Hamiltonian functional
\begin{equation}\label{eq:FKaup}
F=\int\tr\left(u_1v-uv\right).
\end{equation}
Note that the Hamiltonian functional has the same form of the one for the Abelian case, see \cite{kmw13}.
\subsubsection{Nonabelian Ablowitz-Ladik Lattice}
The nonabelian Ablowitz-Ladik lattice (first introduced in \cite{al} for the Abelian case)
\begin{equation}
		\left\{ {\begin{array}{l} u_t=\alpha(u_1-u_1vu)+\beta(uvu_{-1}-u_{-1}) \\ v_t= \alpha(v u v_{-1}-v_{-1})+\beta(v_1-v_1uv) \end{array} } \right.\qquad\alpha,\beta\in\R
	\end{equation}
is Hamiltonian with respect to the operator $H_{2}$ ($H_\alpha$ of equation \eqref{eq:OpJP} with $\alpha=2$) and Hamiltonian functional
\begin{equation}
G=\frac12\int\tr\left(\alpha u_1v-\beta uv_1\right).
\end{equation}
\subsubsection{Nonabelian Chen-Lee-Liu lattice}
The nonabelian Chen-Lee-Liu lattice (see \cite{ts02} for the Abelian case and \cite{kmw13} for its Abelian Hamiltonian formulation)
\begin{eqnarray*}
		\left\{ {\begin{array}{l} u_t=(u_1-u)(1+vu) \\ v_t= (1+vu)(v-v_{-1}) \end{array} } \right.
	\end{eqnarray*}
is Hamiltonian with respect to the operator $-H_{-2}$ ($H_\alpha$ of equation \eqref{eq:OpJP} with $\alpha=-2$) and Hamiltonian functional $\frac12 F$ with $F$ as in \eqref{eq:FKaup}.

\section{Discussion and further work}

The main purpose of this paper was investigating the notion of Hamiltoninan structure for noncommutative systems, in particular focussing on the differential-difference case (when these structures are given by difference operators).

In the literature, one finds two main approaches to the problem: on the one hand, one can require the existence of an operator defining a ``Poisson bracket'' (more precisely, a Lie bracket on the space of local functionals and an action of these by derivations, namely the Hamiltonian vector fields), see for example \cite{Kup00,OS98}: this is the most commonly adopted notion among the researchers active in Integrable Systems. On the other hand, one can define a suitable algebraic structure on the space of noncommutative local densities, and prove that this produces a Poisson bracket (in the sense above) among local functionals. This is the basic idea leading to the definition of double Poisson algebras \cite{vdb}, double Poisson vertex algebras \cite{dskv15}, and multiplicative double Poisson vertex algebra (see Section \ref{ssec:dmPVA}). In this paper, we show how the classical (and somehow geometric, in the sense that it exploits the language and machinery widely used in Poisson geometry) notion of Poisson bivector, and of Schouten brackets between polyvector fields, can be tailored to the functional  nonabelian case (functional polyvector fields for Abelian differential systems are very well known and long-established, see for instance \cite{O93}, and we have introduced them for Abelian differential-difference systems in \cite{CW19}). The Schouten brackets we defined in Section \ref{sec:geom} unify the two aforementioned languages, as well as the standard language of Poisson geometry.

One of the fundamental and basic notions of classical (in particular, commutative) Hamiltonian and Poisson structures is that the existence of a Poisson bracket is equivalent to the existence of a bivector (concretely, of an operator) with vanishing Schouten torsion (for a bivector $P$, this means  $[P,P]=0$), so that the bracket it defines is skewsymmetric (because a bivector is skewsymmetric by definition) and fulfils Jacobi identity (because of the Schouten condition). While this equivalence has been widely believed to exist in the nonabelian setting, too, we can now conclude that this is not the case -- the quasi-Poisson structure of Kontsevich's system being a clear counterexample. As demonstrated in this paper, the notions of double Poisson (vertex) algebras and of Poisson bivectors (defined by suitable multiplicative or difference operators) on the space $\A$ of noncommutative (difference) Laurent polynomials are equivalent, but it is possible to define Hamiltonian structures with weaker assumptions.

An obvious way to do so is introducing the notion of double \emph{quasi}-Poisson algebras. They are the noncommutative analogue of quasi-Poisson manifolds \cite{akm02}, which are endowed with a non-Poisson bivector, and yet the bracket it defines satisfies the Jacobi identity on the orbit space of a group action. In our noncommutative case, we can regard the cyclic permutations (the quotient with respect to which constitutes our ``trace operation'') as such a group action.

In Section \ref{ssec:pois} we have reviewed, in abstract terms, the relation between (Poisson) bivectors and Poisson brackets; in particular, we showed that the Jacobi identity is equivalent to
$$
[[[[P,P],f],g],h]=0
$$
for any triple of local functionals $f,g,h$, and that the Lie algebra morphism between vector fields and local functionals is guaranteed by
$$
[[[P,P],f],g]=0
$$
for any pairs of local functionals. Both this conditions are satistified if $P$ is a Poisson bivector, namely $[P,P]=0$, but the condition is not necessary as showed in Theorem \ref{thm:qp-gen}. Identifying the quasi-Poisson property as being equivalent to $[[P,P],f]=0$ for all $f$ clearly shows that a non-Poisson bivector can indeed define a Poisson bracket on an appropriate space.

There are also examples of cases when an operator defines a Hamiltonian structure even if it is not skewsymmetric itself; in \cite{ar15}, the author exhibits the double bracket
\begin{align*}
\ldb u,v\rdb&=-vu\otimes 1, & \ldb v,u\rdb&=uv\otimes 1,&\ldb u,u\rdb =\ldb v,v\rdb&=0,
\end{align*}
which can be used to define a different Hamiltonian structure for the Kontsevich system \eqref{eq:kont}. The bracket $\{a,b\}:=m(\ldb a,b\rdb)$ (note that this is not yet a ``Poisson bracket'' as the one we define in \eqref{eq:PoisB-PA-def}, since it is not defined between local functionals because of the lack of the trace operation) satisfies the Loday property, which implies the Jacobi identity when computed on local functionals. At the same time, the double bracket is not skewsymmetric in $\A_0\otimes\A_0$ but defines a skewsymmetric bracket on $\F_0$. Such a bracket is neither double Poisson nor double quasi-Poisson, and yet it defines a Poisson bracket.

This example, as well as the results on double quasi-Poisson algebras, motivates us to pursue further work in the direction of finding a more general class of operators which can give rise to Hamiltonian structures (then characterising integrable systems, maybe in the bi-Hamiltonian flavour). One may drop the skewsymmetry requirement, as Arthamonov showed, or -- keeping the description in terms of functional bivectors, that are necessarily skewsymmetric -- look for structures which fail to be quasi-Poisson ($[[P,P],f]\neq 0$) but can still play the same role (possibly by $[[[P,P],f],g]=0$), which seems to be the necessary condition to guarantee the existence of a ``Poisson action".

\section*{Acknowledgements}
	The paper is supported by the EPSRC grants	EP/P012698/1, and EP/V050451/1  under the small grant scheme. Both authors gratefully acknowledge the financial support. The authors are thankful to the referees for their many suggestions, which greatly improved the quality of the manuscript; in particular, the content of Remark 1 was given by one of the reviewers, who also  helped to streamline the proofs of Lemma 1 and Proposition 11 providing some derivation rules. JPW would like to thank A.~V.~Mikhailov for useful discussions. MC would like to thank M.~Fairon and D.~Valeri for the many useful and enjoyable discussions. 
	
\appendix
\section{Equivalence between Schouten property and Jacobi identity for double multiplicative PVAs}\label{app:equivPVAJac}
In this Appendix we show, as claimed in the proof of Theorem \ref{thm:doublePVA-Poiss}, that the Poisson property for a scalar difference operator is equivalent to the Jacobi identity of the corresponding double multiplicative PVA.
Since we are in the scalar case, we denote the only degree 1 variable as $\theta$ and $\cS^p\theta=\theta_p$.
We consider a skewsymmetric scalar difference operator of the form \eqref{eq:diffop-dmpva-pf1}, namely
$$
H=\sum \l_{H^{(\alpha_p)}_L}\r_{H^{(\alpha_p)}_R}\cS^{p}
$$
and its corresponding $\lambda$ bracket
\begin{equation}
\ldb u_{\lambda} u\rdb=\sum H^{(\alpha_p)}_L\otimes H^{(\alpha_p)}_R\lambda^p.
\end{equation}
Note that the skewsymmetry of the operator (and of the bracket) implies
\begin{equation}
\sum H^{(\alpha_p)}_L\otimes H^{(\alpha_p)}_R\lambda^p=-\sum\left(\cS^{-p} H^{(\alpha_p)}_R\right)\otimes\left(\cS^{-p} H^{(\alpha_p)}_L\right)\lambda^{-p}.
\end{equation}

We obtain the three terms of the double Jacobi identity by a straightforward computation. In analogy with \eqref{eq:JacLin_pf2} and denoting $\dev_{u_m}$ as $\dev_m$, we write
\begin{align}\label{eq:dpva1}
\ldb u_{\lambda}\ldb u_{\mu} u\rdb_{(\alpha_p)}\rdb_{(\beta_q),L}&=\left(\dev_m H^{(\alpha_p)}_L\right)'\left(\cS^m H^{(\beta_q)}_L\right)\otimes\left(\cS^m H^{(\beta_q)}_R\right)\left(\dev_m H^{(\alpha_p)}_L\right)''\otimes H^{(\alpha_p)}_R\lambda^{m+q}\mu^p,\\\label{eq:dpva2}
\ldb u_{\mu}\ldb u_{\lambda} u\rdb_{(\alpha_p)}\rdb_{(\beta_q),R}&=H^{(\alpha_p)}_L\otimes \left(\dev_m H^{(\alpha_p)}_R\right)'\left(\cS^m H^{(\beta_q)}_L\right)\otimes\left(\cS^m H^{(\beta_q)}_R\right)\left(\dev_m H^{(\alpha_p)}_R\right)''\lambda^p\mu^{m+q},\\\label{eq:dpva3}
\ldb\ldb u_{\lambda}u\rdb_{(\alpha_p),\lambda\mu}u\rdb_{(\beta_q),L}&=H^{(\beta_q)}_L\left(\cS^{q-m}\dev_m H^{(\alpha_p)}_L\right)''\otimes \left(\cS^{q-m}H^{(\alpha_p)}_R\right)\otimes \left(\cS^{q-m}\dev_m H^{(\alpha_p)}_L\right)'H^{(\beta_q)}_R\lambda^{p+q-m}\mu^{q-m} .
\end{align}
On the other hand, to compute \eqref{eq:HamProp} we have
\begin{equation}
P=\sum P^{(\alpha_p)}=\frac12\int\tr\sum \theta H^{(\alpha_p)}_L\theta_p H^{(\alpha_p)}_R,
\end{equation}
from which we obtain
\begin{equation}\label{eq:diffop-fmpva-pf3}
\begin{split}
-2\mathbf{pr}_{H^{(\beta_q)}\theta}P^{(\alpha_p)}&=\int\tr\left[\theta\left(\dev_m H^{(\alpha_p)}_L\right)'\left(\cS^m H^{(\beta_q)}_L\right)\theta_{q+m}\left(\cS^{m}H^{(\beta_q)}_R\right)\left(\dev_m H^{(\alpha_p)}_L\right)''\theta_p H^{(\alpha_p)}_R\right.\\
&\qquad\left.-\theta H^{(\alpha_p)}_L\theta_{p}\left(\dev_m H^{(\alpha_p)}_R\right)'\left(\cS^m H^{(\beta_q)}_L\right)\theta_{q+m}\left(\cS^m H^{(\beta_q)}_R\right)\left(\dev_m H^{(\alpha_p)}_R\right)''\right].
\end{split}
\end{equation}
As for the ultralocal case, the trace operation allows us to rewrite the RHS of \eqref{eq:diffop-fmpva-pf3} replacing it with all the cyclic permutations of its factors; moreover, the integral operation allows us to ``normalise'' each of the monomials we obtain by imposing the first $\theta$ to be taken without shifts.This gives us the six terms
\begin{align}\label{eq:diffop-fmpva-pf4}
3\text{\eqref{eq:diffop-fmpva-pf3}}&=\int\tr\left[\theta\left(\dev_m H^{(\alpha_p)}_L\right)'\left(\cS^m H^{(\beta_q)}_L\right)\theta_{q+m}\left(\cS^{m}H^{(\beta_q)}_R\right)\left(\dev_m H^{(\alpha_p)}_L\right)''\theta_p H^{(\alpha_p)}_R\right.\\
&\qquad-\theta\left(\cS^{-p}\dev_m H^{(\alpha_p)}_R\right)'\left(\cS^{m-p}H^{(\beta_q)}_L\right)\theta_{q+m-p}\left(\cS^{m-p}H^{(\beta_q)}_R\right)\left(\cS^{-p}\dev_m H^{(\alpha_p)}_R\right)''\theta_{-p}\left(\cS^{-p}H^{(\alpha_p)}_L\right)\\
&\qquad-\theta H^{(\alpha_p)}_L\theta_p\left(\dev_m H^{(\alpha_p)}_R\right)'\left(\cS^m H^{(\beta_q)}_L\right)\theta_{q+m}\left(\cS^m H^{(\beta_q)}_R\right)\left(\dev_m H^{(\alpha_p)}_R\right)''\\
&\qquad+\theta \left(\cS^{-p}H^{(\alpha_p)}_R\right)\theta_{-p}\left(\cS^{-p}\dev_m H^{(\alpha_p)}_L\right)'\left(\cS^{m-p}H^{(\beta_q)}_L\right)\theta_{q+m-p}\left(\cS^{m-p} H^{(\beta_q)}_R\right)\left(\cS^{-p}\dev_m H^{(\alpha_p)}_L\right)''\\
&\qquad - \theta\left(\cS^{-q} H^{(\beta_q)}_R\right)\left(\cS^{-q-m}\dev_m H^{(\alpha_p)}_R\right)''\theta_{-q-m}\left(\cS^{-q-m}H^{(\alpha_p)}_L\right)\theta_{p-q-m}\left(\cS^{-q-m}\dev_m H^{(\alpha_p)}_R\right)'\left(\cS^{-q} H^{(\beta_q)}_L\right)\\
&\qquad+\left.\theta\left(\cS^{-q}H^{(\beta_q)}_R\right)\left(\cS^{-q-m}\dev_m H^{(\alpha_p)}_L\right)''\theta_{p-q-m}\left(\cS^{-q-m}H^{(\alpha_p)}_R\right)\theta_{-q-m}\left(\cS^{-q-m}\dev_m H^{(\alpha_p)}_L\right)'\left(\cS^{-q}H^{(\beta_q)}_L\right)\right].
\end{align}
Note that these six terms can be paired; we can actually show that they cancel in such pairs, thanks to the skewsymmetry of the operator $H$ and the property $\cS\dev_m=\dev_{m+1}\cS$. Let us consider the first two lines: we can move the shift operators inside the derivatives in the second line, obtaining
\begin{equation}
-\theta\left(\dev_{m-p} \cS^{-p}H^{(\alpha_p)}_R\right)'\left(\cS^{m-p} H^{(\beta_q)}_L\right)\theta_{q+(m-p)}\left(\cS^{m-p}H^{(\beta_q)}_R\right)\left(\dev_{m-p} \cS^{-p}H^{(\alpha_p)}_R\right)''\theta_{-p}\left(\cS^{-p}H^{(\alpha_p)}_L\right).
\end{equation}
By using the skewsymmetry for $H^{(\alpha_p)}_{L,R}$ (which allows us to replace $\cS^{-p}H^{(\alpha_p)}_R\theta_{-p}\cS^{-p}H^{(\alpha_p)}_L$ with $-H^{(\alpha_p)}_L\theta_pH^{(\alpha_p)}_R$) and relabelling the indices, we obtain
\begin{equation}
\theta\left(\dev_m H^{(\alpha_p)}_L\right)'\left(\cS^m H^{(\beta_q)}_L\right)\theta_{q+m}\left(\cS^{m}H^{(\beta_q)}_R\right)\left(\dev_m H^{(\alpha_p)}_L\right)''\theta_p H^{(\alpha_p)}_R,
\end{equation}
which is another copy of the first term. We do the same for the term of the forth line and obtain the term on the third one. For the term in the fifth line, first we exploit the skewsymmetry inside the argument of the derivatives, obtaining
\begin{multline}
\theta\left(\cS^{-q} H^{(\beta_q)}_R\right)\left(\cS^{-q-m}\dev_m\cS^{-p} H^{(\alpha_p)}_L\right)''\theta_{-q-m}\left(\cS^{-q-m-p}H^{(\alpha_p)}_R\right)\theta_{-p-q-m}\left(\cS^{-q-m}\dev_m \cS^{-p}H^{(\alpha_p)}_L\right)'\left(\cS^{-q} H^{(\beta_q)}_L\right)\\
=\theta\left(\cS^{-q} H^{(\beta_q)}_R\right)\left(\cS^{-q-(m+p)}\dev_{m+p} H^{(\alpha_p)}_L\right)''\theta_{-q-m}\left(\cS^{-q-(m+p)}H^{(\alpha_p)}_R\right)\theta_{-q-(m+p)}\left(\cS^{-q-(m+p)}\dev_{m+p} H^{(\alpha_p)}_L\right)'\left(\cS^{-q} H^{(\beta_q)}_L\right)\\
=\theta\left(\cS^{-q} H^{(\beta_q)}_R\right)\left(\cS^{-q-m'}\dev_{m'} H^{(\alpha_p)}_L\right)''\theta_{-q-m'+p}\left(\cS^{-q-m'}H^{(\alpha_p)}_R\right)\theta_{-q-m'}\left(\cS^{-q-m'}\dev_{m'} H^{(\alpha_p)}_L\right)'\left(\cS^{-q} H^{(\beta_q)}_L\right).
\end{multline}
Then, using the skewsymmetry again, we obtain
\begin{equation}
-\theta H^{(\beta_q)}_L\left(\cS^{q-m}\dev_m H^{(\alpha_p)}_L\right)''\theta_{p+q-m}\left(\cS^{q-m}H^{(\alpha_p)}_R\right)\theta_{q-m}\left(\cS^{q-m}\dev_m H^{(\alpha_p)}_L\right)'H^{(\beta_q)}_R.
\end{equation}
Repeating this passage for the last line, our final result is
\begin{equation}\label{eq:diffop-fmpva-pf5}
\begin{split}
-3\mathbf{pr}_{H^{(\beta_q)}\theta} P^{(\alpha_p)}=\int\tr&\left[\theta\left(\dev_m H^{(\alpha_p)}_L\right)'\left(\cS^m H^{(\beta_q)}_L\right)\theta_{q+m}\left(\cS^{m}H^{(\beta_q)}_R\right)\left(\dev_m H^{(\alpha_p)}_L\right)''\theta_p H^{(\alpha_p)}_R\right.\\
&-\theta H^{(\alpha_p)}_L\theta_p\left(\dev_m H^{(\alpha_p)}_R\right)'\left(\cS^m H^{(\beta_q)}_L\right)\theta_{q+m}\left(\cS^m H^{(\beta_q)}_R\right)\left(\dev_m H^{(\alpha_p)}_R\right)''\\
&\left.-\theta H^{(\beta_q)}_L\left(\cS^{q-m}\dev_m H^{(\alpha_p)}_L\right)''\theta_{p+q-m}\left(\cS^{q-m}H^{(\alpha_p)}_R\right)\theta_{q-m}\left(\cS^{q-m}\dev_m H^{(\alpha_p)}_L\right)'H^{(\beta_q)}_R\right].
\end{split}
\end{equation}
Comparing \eqref{eq:diffop-fmpva-pf5} with the expression for the double Jacobi identity (see Definition \ref{def:dmpva}, explicitly obtained by $\text{\eqref{eq:dpva1}}-\text{\eqref{eq:dpva2}}-\text{\eqref{eq:dpva3}}$), we observe that each term of \eqref{eq:diffop-fmpva-pf5} is made of three elementary pieces kept separated by the $\theta$ variables, similarly to the fact that each of the summands in the double Jacobi identity is an element of $\A\otimes\A\otimes A$.  Moreover, they exactly match each of them line by line: for instance, the three factors in \eqref{eq:dpva1} are $\left(\dev_m H^{(\alpha_p)}_L\right)'\left(\cS^m H^{(\beta_q)}_L\right)$, $\left(\cS^{m}H^{(\beta_q)}_R\right)\left(\dev_m H^{(\alpha_p)}_L\right)''$, and $H^{(\alpha_p)}_R$. Finally, the degree of $\lambda$ and $\mu$ in each terms of the double Jacobi identity matches the number of shift of, respectively, the second and the third $\theta$'s in \eqref{eq:diffop-fmpva-pf5} (the order of the $\theta$'s in the expression is fixed by ``normalizing'' them leaving the first one without shifts). 

If we compare the results we computed with the structure of \eqref{eq:JacLin_pf2} and \eqref{eq:JacLin_pf1} respectively, we see that each of the summands for the Poisson property of the operator and of the double Jacobi identity for the $\lambda$ brackets coincide, making them equivalent. In particular, the vanishing of the former one is equivalent to the vanishing of the latter one, as claimed in Theorem \ref{thm:doublePVA-Poiss}.

\section{Graded Jacobi identity for the Schouten bracket}\label{app:Jacobi}
In this Appendix we prove the second half of Proposition \ref{thm:Sch_skew} of Section \ref{ssec:sch-ul},  namely that the bracket we have defined in \eqref{eq:schdef} satisfies the graded version of the Jacobi identity \eqref{eq:schJac}; together with Proposition \ref{thm:Sch_skew} and \ref{thm:Sch_vf}, this means that it is a \emph{bona fide} Schouten bracket. Part (iv) of Proposition \ref{thm:property-ds} of Section \ref{ssec:sch-l} is proved along the same lines: we do not repeat the proof because this one has the advantage of being more compact because of the single index on the generators and the absence of the formal variables $\lambda$ and $\mu$.

The proof is in two main steps: first, we prove a version of the vanishing of the graded triple bracket associated to a double Schouten bracket, similarly to \eqref{eq:tripleb}. To do so, we prove that it vanishes for local functionals and vector fields (namely for elements of degree 1); then we show by induction that it holds true for elements of arbitrary degree. Secondly, we prove (similarly to \cite[Corollary 2.4.4]{vdb}) that the Jacobi identity for the Schouten bracket follows from the vanishing of the graded triple bracket.

Let us start with the graded Jacobi-like identity for the double Schouten bracket. We introduce the notation
\begin{gather}
\lsb a,b\otimes c\rsb_L=(-1)^{(|a|-1)|c|}\lsb a,b\rsb\otimes c,\qquad \lsb a,b\otimes c\rsb_R=b\otimes\lsb a,c\rsb,\\ \label{eq:notationappB}
\lsb a\otimes b,c\rsb_L=\lsb a,c\rsb\otimes_1 b,\qquad\lsb a\otimes b,c\rsb_R=(-1)^{(|c|-1)|a|}a\otimes_1\lsb b,c\rsb.
\end{gather}
which will be instrumental in writing the Jacobi identity for the Schouten bracket, similarly to the one we introduced for the Jacobi identity of double $\lambda$ brackets in Section \ref{ssec:dmPVA}.
\begin{lemma}\label{lem:grad-id}
We have the following identities:
\begin{align}\label{eq:grad-id1}
\lsb b,\lsb a,c\rsb\rsb_R&=-(-1)^{(|a|-1)(|c|-1)}\tau\left(\lsb b,\lsb c,a\rsb\rsb_L\right),\\ \label{eq:grad-id2}
\lsb \lsb a,b\rsb,c\rsb_L&=-(-1)^{(|c|-1)(|a|+|b|)}\tau^2\left(\lsb c,\lsb a,b\rsb\rsb_L\right).
\end{align}
\end{lemma}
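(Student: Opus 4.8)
Both identities are purely statements about the graded-sign bookkeeping of the double Schouten bracket, so the plan is to expand each side in Sweedler notation and verify that the accumulated signs agree. The only ingredients I need are: the degree of the bracket, namely $|\lsb a,b\rsb|=|a|+|b|-1$; the graded skewsymmetry $\lsb b,a\rsb=-(-1)^{(|a|-1)(|b|-1)}\lsb a,b\rsb^\sigma$ of Proposition~\ref{thm:ds_prop}(i); the definitions \eqref{eq:notationappB} of the $L$- and $R$-contractions; and the graded permutation rules \eqref{eq:superswap}--\eqref{eq:superinner} giving $\sigma$, $\tau$ and $\otimes_1$. Conceptually these are the graded analogues of the ungraded cyclic relations $\ldb a,b,c\rdb=\tau\ldb b,c,a\rdb$ recalled in Section~\ref{ssec:dPA}, and the proof is the corresponding sign computation.

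For \eqref{eq:grad-id1} I would write $\lsb a,c\rsb=A'\otimes A''$, so that by definition the left-hand side is simply $A'\otimes\lsb b,A''\rsb$, with no sign. On the right-hand side I would first apply graded skewsymmetry to turn $\lsb c,a\rsb$ into $-(-1)^{(|a|-1)(|c|-1)+|A'||A''|}A''\otimes A'$; then the $L$-contraction against the outer bracket produces a factor $(-1)^{(|b|-1)|A'|}$; and finally $\tau$ produces $(-1)^{|A'|(|b|+|A''|-1)}$, since $\lsb b,A''\rsb$ has degree $|b|+|A''|-1$. Collecting the prefactor $-(-1)^{(|a|-1)(|c|-1)}$ in front of $\tau$ together with all of these, the total exponent on $A'\otimes\lsb b,A''\rsb$ reduces modulo $2$ to $|A'|(2|b|-2)\equiv 0$; hence the right-hand side equals $A'\otimes\lsb b,A''\rsb$ and \eqref{eq:grad-id1} follows.

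For \eqref{eq:grad-id2} I would write $\lsb a,b\rsb=B'\otimes B''$. The left-hand side $\lsb B',c\rsb\otimes_1 B''$ reorders, by the $\otimes_1$ rule, to $(-1)^{|E''||B''|}E'\otimes B''\otimes E''$ with $\lsb B',c\rsb=E'\otimes E''$. On the right-hand side, the $L$-contraction inside $\lsb c,\lsb a,b\rsb\rsb_L$ gives $(-1)^{(|c|-1)|B''|}\lsb c,B'\rsb\otimes B''$, and then $\tau^2$ rotates $\lsb c,B'\rsb$ into the middle slot. To compare, I would use graded skewsymmetry once more to rewrite $\lsb B',c\rsb$ as $-(-1)^{(|c|-1)(|B'|-1)+|F'||F''|}F''\otimes F'$, where $\lsb c,B'\rsb=F'\otimes F''$, so that both sides are carried by the single tensor $F''\otimes B''\otimes F'$. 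The difference of the two accumulated exponents collapses to $(|c|-1)\big[(|B'|-1)-(|a|+|b|)-|B''|\big]$; substituting the degree constraint $|B'|+|B''|=|a|+|b|-1$ shows the bracketed quantity equals $-2-2|B''|$, which is even, so the two exponents coincide and \eqref{eq:grad-id2} holds.

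The computation has no structural subtlety; \emph{the only obstacle is sign bookkeeping}, and the crucial mechanism in both cases is that the parities of the Sweedler factors are tied together by $|\lsb\cdot,\cdot\rsb|=|\cdot|+|\cdot|-1$. It is precisely this degree relation that forces the residual exponents, $|A'|(2|b|-2)$ in the first identity and $(|c|-1)(-2-2|B''|)$ in the second, to vanish modulo $2$. I would therefore organise the calculation so that every $\sigma$, $\tau$, $\otimes_1$ and Leibniz sign is gathered into a single exponent before the degree constraint is invoked, which keeps the verification transparent and minimises the chance of a sign slip.
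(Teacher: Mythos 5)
Your proposal is correct and follows essentially the same route as the paper's proof: both verify the identities by expanding in Sweedler notation, invoking the graded skewsymmetry of the double Schouten bracket once on the inner bracket, and then cancelling the accumulated $\sigma$, $\tau$, contraction and Leibniz signs using the degree relation $|\lsb a,b\rsb|=|a|+|b|-1$. The only difference is cosmetic: the paper transforms the left side of \eqref{eq:grad-id1} step by step into the right side and dismisses \eqref{eq:grad-id2} as "a similar computation", whereas you reduce both sides of each identity to a common tensor and compare exponents, carrying out the second identity explicitly.
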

\begin{proof}
Straightforward computation. For instance, let $(ca)'=\left|\lsb c,a\rsb'\right|$ and $(ca)''=\left|\lsb c,a\rsb''\right|$. Then for \eqref{eq:grad-id1} we have
\begin{equation}
\begin{split}
\lsb b,\lsb a,c\rsb\rsb_R&=-(-1)^{(|a|-1)(|c|-1)}\lsb b,\lsb c,a\rsb^\sigma\rsb_R\\
&=-(-1)^{(|a|-1)(|c|-1)+(ca)'(ca)''}\lsb c,a\rsb''\otimes \lsb b,\lsb c,a\rsb'\rsb'\otimes\lsb b,\lsb c,a\rsb'\rsb''\\
&=-(-1)^{(|a|-1)(|c|-1)+(ca)'(ca)''+(ca)''(|b|+(ca)'+1)}\tau\left(\lsb b,\lsb c,a\rsb'\rsb'\otimes\lsb b,\lsb c,a\rsb'\rsb'\otimes\lsb c,a\rsb''\right)\\
&=-(-1)^{(|a|-1)(|c|-1)+(ca)''(|b|+1)+(ca)''(|b|+1)}\tau\left(\lsb b,\lsb c,a\rsb\rsb_L\right)\\
&=-(-1)^{(|a|-1)(|c|-1)}\tau\left(\lsb b,\lsb c,a\rsb\rsb_L\right).
\end{split}
\end{equation}
A similar computation yields \eqref{eq:grad-id2}.
\end{proof}

We prove the graded Jacobi identity for the double Schouten bracket by induction. The statement is given in the following Proposition:
\begin{proposition}\label{thm:dJac}
Let $\lsb -,-\rsb$ be the double Schouten bracket defined in \eqref{eq:ds_master_ul}. Then the following identity holds true for any $a,b,c$ in $\hat{\A}_0$:
\begin{equation}\label{eq:dJac-ul}
\lsb a,b,c\rsb:=\lsb a,\lsb b,c\rsb\rsb_L-(-1)^{(|a|-1)(|b|-1)}\lsb b,\lsb a,c\rsb\rsb_R-\lsb \lsb a,b\rsb,c\rsb_L=0.
\end{equation}
\end{proposition}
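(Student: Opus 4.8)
The plan is to prove the identity \eqref{eq:dJac-ul} by induction on the total $\theta$-degree plus $u$-degree of the entries, reducing the general case to the case where $a$, $b$, $c$ are all generators (either a $u^i$ or a $\theta_i$) by repeatedly applying the Leibniz and skewsymmetry properties established in Proposition \ref{thm:ds_prop} and its corollary \eqref{eq:ds_leib_l}. The bracket $\lsb a,b,c\rsb$ as defined in \eqref{eq:dJac-ul} is built from $\lsb -,\lsb -,-\rsb\rsb$ with the $L/R$ decorations, so the first preparatory step is to establish that $\lsb -,-,-\rsb$ is itself a triple derivation in each of its three arguments. The Leibniz rule in the third slot follows directly from Part (ii) of Proposition \ref{thm:ds_prop}; the Leibniz rule in the first and second slots is the genuinely delicate bookkeeping, and this is where Lemma \ref{lem:grad-id} does the work: the identities \eqref{eq:grad-id1} and \eqref{eq:grad-id2} let me re-express the $R$- and $L$-decorated iterated brackets as cyclic permutations $\tau$, $\tau^2$ of a single canonical object $\lsb -,\lsb -,-\rsb\rsb_L$, so that $\lsb a,b,c\rsb$ acquires the manifestly cyclic form analogous to \eqref{eq:tripleb}, and the derivation property in every slot becomes symmetric to verify.

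First I would rewrite, using Lemma \ref{lem:grad-id}, the three terms of \eqref{eq:dJac-ul} as
\begin{equation}
\lsb a,b,c\rsb=\lsb a,\lsb b,c\rsb\rsb_L+(-1)^{(|a|-1)(|b|-1)+(|a|-1)(|c|-1)}\tau\lsb b,\lsb c,a\rsb\rsb_L+(-1)^{(|c|-1)(|a|+|b|)}\tau^2\lsb c,\lsb a,b\rsb\rsb_L,
\end{equation}
so that the graded-cyclic symmetry $\lsb a,b,c\rsb = (\text{sign})\,\tau\lsb c,a,b\rsb\tau^{-1}$ becomes transparent. This symmetry is the key structural fact: it means that once I prove the derivation property in the third slot, I automatically get it in the other two, and it means the base case need only be checked for one cyclic representative of each unordered triple of generators. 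With the triple-derivation property in hand, the induction is the standard one: if \eqref{eq:dJac-ul} holds whenever one slot, say $c$, is a generator, then writing a general $c=c_1c_2$ and using the Leibniz rule in the third slot together with the graded sign rules \eqref{eq:superswap}--\eqref{eq:superbullet} reduces the degree, and symmetrically for the first two slots via the cyclic identity.

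The base case is the computational heart: I must verify $\lsb a,b,c\rsb=0$ when each of $a,b,c$ is one of the generators $u^i$ or $\theta_i$. Here I would use that $\lsb -,-\rsb$ is trivial on pairs of generators except for the pairing $\frac{\dev}{\dev u^i}(u^j)=\delta^j_i 1\otimes 1=\lsb\theta_i,u^j\rsb$ recorded in the Remark after Proposition \ref{thm:ds_prop}, and that $\lsb\theta_i,\theta_j\rsb=0=\lsb u^i,u^j\rsb$. Because each bracket of a generator pair is either zero or a constant $1\otimes1$, every iterated bracket $\lsb a,\lsb b,c\rsb\rsb_L$ with three generators vanishes outright (the inner bracket is a constant, killed by the outer derivation, or is already zero), so the triple bracket is identically zero on generators. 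The main obstacle is not any single hard idea but the disciplined tracking of the Koszul signs through the $\sigma$, $\tau$, $\star$, $\otimes_i$ and $\bullet$ operations defined in \eqref{eq:superswap}--\eqref{eq:superbullet}: I expect the verification of the derivation-in-the-first-slot property, where skewsymmetry \eqref{eq:skew_dsb} and the right-Leibniz rule \eqref{eq:ds_leib_l} must be combined, to be the step most prone to sign errors, and Lemma \ref{lem:grad-id} is precisely the device that contains this difficulty by converting awkward $R$-brackets into $L$-brackets of cyclically permuted arguments.
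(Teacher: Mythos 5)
Your strategy is sound, and it is genuinely different from the paper's. The paper proves \eqref{eq:dJac-ul} by induction on the $\theta$-degree with \emph{nontrivial} base cases: it verifies the identity directly for triples of elements of degree at most one with arbitrary polynomial coefficients (Lemma \ref{lem:2v1f} for two vector fields and a functional, Lemma \ref{thm:Jac_vf} for three vector fields), which forces it to confront second-derivative terms and to invoke the commutation of double derivatives from \cite[Lemma 2.6]{dskv15}; the inductive step (Lemma \ref{thm:Sch_Jac_ind}) then multiplies one entry by a degree-one element, using only the Leibniz rules and Lemma \ref{lem:grad-id}. Your route --- prove that $\lsb -,-,-\rsb$ is a graded triple derivation, exhibit its cyclic symmetry via Lemma \ref{lem:grad-id} (your displayed cyclic rewriting is correct), and then observe that it vanishes on triples of generators because $\lsb u^i,u^j\rsb=\lsb \theta_i,\theta_j\rsb=0$ while $\lsb \theta_i,u^j\rsb=\delta_i^j\,1\otimes 1$ is killed by any further bracket --- is the graded analogue of Van Den Bergh's argument for triple brackets \cite{vdb}, which the paper itself quotes (in the ungraded setting) when reducing the double Poisson condition to generators. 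What your route buys is that the second-derivative cancellations of the paper's base cases disappear entirely: the only inputs are the formal properties of Proposition \ref{thm:ds_prop}, \eqref{eq:ds_leib_l} and Lemma \ref{lem:grad-id}, and the base case is a one-line check. What it costs is that the triple-derivation property must be proved for multiplication by elements of \emph{arbitrary} degree (or at least by both kinds of generators), whereas the paper only needs, and only proves, the case of a degree-one multiplicand.

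The one claim you must retract is that the Leibniz rule in the third slot ``follows directly'' from Part (ii) of Proposition \ref{thm:ds_prop}. It does not: expanding each of the three summands of $\lsb a,b,cc'\rsb$ by the left Leibniz rule produces, besides the wanted terms, cross terms of the type $\lsb a,c\rsb\lsb b,c'\rsb$ sitting in various tensor positions, and these are not zero individually --- they cancel \emph{between} the first and second summands only after the prefactor $-(-1)^{(|a|-1)(|b|-1)}$ is taken into account. That cancellation, with its Koszul signs, is precisely the computation the paper performs in the proof of Lemma \ref{thm:Sch_Jac_ind}(i), and it is the computational core of the whole proposition; without it your reduction to generators has no foundation. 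Conversely, the first- and second-slot rules, which you single out as the delicate bookkeeping, are immediate once the third-slot rule and the cyclic form are in hand, since conjugation by $\tau$ transports the Leibniz rule around the slots. So your proof becomes complete once you (i) carry out the cross-term cancellation for a general multiplicand, and (ii) do the (genuinely trivial) check on generators; as written, the step you call ``direct'' is exactly the one that is not.
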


We first need to prove the initial cases of the induction, namely that identity \eqref{eq:dJac-ul} holds true if we consider local functionals ($0$-vectors) and $1$-vector fields.

It is obvious from \eqref{eq:ds_master_ul} that the bracket between two local functionals is 0: the Jacobi identity is then satisfied for triples of local functionals and for two local functionals and a 1-vector field (in this latter case, because the bracket of a 1-vector field with a local functional is a local functional, and hence the further bracket vanishes). We need to add to the initial cases the identity among two vector fields and a local functional, as well as the one among three vector fields. This is the result of the following two lemmas.

\begin{lemma}\label{lem:2v1f}
Let $X$ and $Y$ be local vector fields (elements of degree 1 in $\hat{\F}_0$) and $f$ a local functional. Then
\begin{equation}\label{eq:2v1f-1}
\lsb X,\lsb Y,f\rsb\rsb_L-\lsb Y,\lsb X,f\rsb\rsb_R=\lsb\lsb X,Y\rsb,f\rsb_L.
\end{equation}
\end{lemma}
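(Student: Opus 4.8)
The plan is to recognise that \eqref{eq:2v1f-1} is nothing but the vanishing of the graded triple bracket $\lsb X,Y,f\rsb$ from \eqref{eq:dJac-ul} in the special case $|X|=|Y|=1$, $|f|=0$. Since $|X|-1=|Y|-1=0$, every graded sign occurring in \eqref{eq:dJac-ul} and in the auxiliary notation \eqref{eq:notationappB} collapses to $+1$, so the lemma is equivalent to the identity $\lsb X,\lsb Y,f\rsb\rsb_L-\lsb Y,\lsb X,f\rsb\rsb_R-\lsb\lsb X,Y\rsb,f\rsb_L=0$ in $\hat{\A}_0^{\otimes3}$, which I would verify by a direct computation modelled on the proof of Lemma \ref{lem:comm_vf}.

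The key preliminary computation is the shape of a bracket between a $1$-vector and a $0$-vector. Writing $X=X^i\theta_i$ and $Y=Y^j\theta_j$ with $X^i,Y^j\in\A_0$, the master formula \eqref{eq:ds_master_ul} gives, for any $Z=Z^k\theta_k$ and any $g\in\A_0$,
\begin{equation}
\lsb Z,g\rsb=\sum_k\frac{\dev g}{\dev u^k}\bullet(1\otimes Z^k)=\sum_k\left(\frac{\dev g}{\dev u^k}\right)'\otimes Z^k\left(\frac{\dev g}{\dev u^k}\right)'',
\end{equation}
while $\lsb g,h\rsb=0$ for $g,h\in\A_0$ and $\lsb W\theta_i,g\rsb=\left(\frac{\dev g}{\dev u^i}\right)'\otimes W\left(\frac{\dev g}{\dev u^i}\right)''$ for $W\in\A_0$. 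The essential structural point, which I would stress, is that in each such bracket the left argument contributes only through its \emph{undifferentiated} characteristic, so no $u$-derivative ever falls on $X^i$ or $Y^j$ while they occupy the left slot.

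With these formulae I would expand the three terms. The term $\lsb X,\lsb Y,f\rsb\rsb_L$ produces only contributions in which a second $u$-derivative lands on a factor of $\frac{\dev f}{\dev u}$, the coefficient $Y^i$ staying untouched. Applying the graded Leibniz property of Proposition \ref{thm:ds_prop} to $\lsb Y,\lsb X,f\rsb\rsb_R$ splits it into a piece in which the derivative hits $X^i$ (a \emph{first-order} term carrying an undifferentiated $Y^k$) and a second piece which is again a second-derivative-of-$f$ term. Finally, in $\lsb X,Y\rsb$ the summand coming from differentiating $Y$ carries its $\theta$ in the \emph{second} tensor slot, so its first slot is $\theta$-free and brackets to zero against $f$; only the summand differentiating $X$ survives, and, after accounting for the $\otimes_1$ insertion prescribed in \eqref{eq:notationappB}, $\lsb\lsb X,Y\rsb,f\rsb_L$ reproduces precisely the first-order piece of $\lsb Y,\lsb X,f\rsb\rsb_R$.

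Assembling the combination, the first-order terms then cancel between $-\lsb Y,\lsb X,f\rsb\rsb_R$ and $-\lsb\lsb X,Y\rsb,f\rsb_L$, leaving only the two families of second-derivative-of-$f$ terms coming from $\lsb X,\lsb Y,f\rsb\rsb_L$ and from $\lsb Y,\lsb X,f\rsb\rsb_R$. After relabelling the summation indices, these two families are exactly of the form $(\dev_{u^k}\otimes 1)\circ\frac{\dev f}{\dev u^i}$ versus $(1\otimes\dev_{u^i})\circ\frac{\dev f}{\dev u^k}$, sandwiched between the fixed factors built from $X$ and $Y$, and they cancel by the commutation of double derivatives, \cite[Lemma 3.7]{dskv15}, in the same way that the second-derivative terms were disposed of in Lemma \ref{lem:comm_vf} (cf. \eqref{eq:comm_vf_pf5}). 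I expect the main obstacle to be purely organisational: keeping the three tensor slots and the $\otimes_1$ placement aligned so that the first-order terms match slot by slot and the residual second-derivative terms present themselves in the precise shape annihilated by the double-derivative commutation. The sign bookkeeping, by contrast, is trivial here since all the relevant $\theta$-degrees are $0$ or $1$.
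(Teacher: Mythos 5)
Your proposal is correct and takes essentially the same route as the paper's own proof: the paper likewise expands the three terms via the master formula \eqref{eq:ds_master_ul} for $X=X^p\theta_p$, $Y=Y^q\theta_q$, notes that the single first-derivative-of-$X$ term produced by $\lsb Y,\lsb X,f\rsb\rsb_R$ is exactly the one coming from $\lsb\lsb X,Y\rsb,f\rsb_L$ (the other summand of $\lsb X,Y\rsb$ dying because its first tensor slot is $\theta$-free), and disposes of the remaining second-derivative-of-$f$ terms by the commutation of double derivatives exactly as in Lemma \ref{lem:comm_vf}. Your sign analysis (all graded factors collapsing to $+1$ in degrees $(1,1,0)$) and your use of the bracket formula $\lsb Z,g\rsb=\sum_k(\dev_{u^k}g)'\otimes Z^k(\dev_{u^k}g)''$ match the paper's computation.
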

\begin{proof}
Let us take for example $X=X^p\theta_p$ and $Y=Y^q\theta_q$ (the computation is similar and yields the same result if we consider more complicated forms for the 1-st order elements). For the LHS of \eqref{eq:2v1f-1} we have
\begin{multline}
\left(\frac{\dev(\dev_pf)'}{\dev u^q}\right)'\otimes X^q\left(\frac{\dev(\dev_pf)'}{\dev u^q}\right)''\otimes Y^p(\dev_pf)''-(\dev_p f)'\otimes(\dev_q X^p)'\otimes Y^q(\dev_q X^p)''(\dev_p f)''\\
-(\dev_q f)'\otimes X^q \left(\frac{\dev(\dev_qf)''}{\dev u^p}\right)'\otimes Y^p\left(\frac{\dev(\dev_qf)''}{\dev u^p}\right)'',
\end{multline}
where the sake of compactness, we adopt the shorthand notation
$$
(\dev_p Z^q)^{',''}:=\left(\frac{\dev Z^m}{\dev u^p}\right)^{',''}.
$$
On the RHS of \eqref{eq:2v1f-1} we obtain
\begin{equation}
-(\dev_p f)'\otimes(\dev_q X^p)'\otimes Y^q(\dev_q X^p)''(\dev_p f)''.
\end{equation}
The terms with first derivatives only immediately cancel. For the terms with the second derivatives, we are in a similar situation as \eqref{eq:comm_vf_pf4} in the proof of Lemma \ref{lem:comm_vf} -- here we don't have shifted variables and the expression is in $\A_0\otimes\A_0\otimes \A_0$ rather than in $\A_0$, but the terms are of the same form:
$$
\left(\frac{\dev(\dev_pf)'}{\dev u^q}\right)'\otimes X^q\left(\frac{\dev(\dev_pf)'}{\dev u^q}\right)''\otimes Y^p(\dev_pf)''-(\dev_q f)'\otimes X^q \left(\frac{\dev(\dev_qf)''}{\dev u^p}\right)'\otimes Y^p\left(\frac{\dev(\dev_qf)''}{\dev u^p}\right)''.
$$
They vanish for the same reason explained in Lemma \ref{lem:comm_vf}.
\end{proof}

\begin{lemma}\label{thm:Jac_vf}
Let $X= X^i\theta_i$, $Y=Y^j\theta_j$, $Z=\theta_kZ^k$ \emph{(note the different position of $\theta$: we do this to show that it is not relevant)} be densities of 1-vector fields. Then
\begin{equation}\label{eq:Jac_vf_th}
\lsb X,Y,Z\rsb=\lsb X,\lsb Y,Z\rsb\rsb_L-\lsb Y,\lsb X,Z\rsb\rsb_R-\lsb\lsb X,Y\rsb,Z\rsb_L=0.
\end{equation}
\end{lemma}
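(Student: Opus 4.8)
The plan is to establish \eqref{eq:Jac_vf_th} by a direct expansion that parallels, term by term, the computation of the commutator of vector fields in Lemma~\ref{lem:comm_vf}. The first step is to record how the double derivatives act on a density of a $1$-vector field. Since the coefficients $X^i,Y^j,Z^k$ lie in $\A_0$ and carry $\theta$-degree $0$, the graded Leibniz rule collapses to
\begin{gather}
\frac{\dev X}{\dev u^l}=\frac{\dev X^i}{\dev u^l}\,\theta_i,\qquad \frac{\dev X}{\dev\theta_l}=X^l\otimes 1,\\
\frac{\dev Z}{\dev u^l}=\theta_k\,\frac{\dev Z^k}{\dev u^l},\qquad \frac{\dev Z}{\dev\theta_l}=1\otimes Z^l,
\end{gather}
so that $\dev_\theta$ merely strips off a coefficient (placing it in the tensor slot dictated by the position of the $\theta$), while $\dev_u$ differentiates it. Substituting these into the master formula \eqref{eq:ds_master_ul} gives closed expressions for $\lsb Y,Z\rsb$, $\lsb X,Z\rsb$ and $\lsb X,Y\rsb$, each a sum of terms carrying a single first derivative of one coefficient.

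Next I would apply the master formula a second time to compute the three nested brackets $\lsb X,\lsb Y,Z\rsb\rsb_L$, $\lsb Y,\lsb X,Z\rsb\rsb_R$ and $\lsb\lsb X,Y\rsb,Z\rsb_L$, using the identities of Lemma~\ref{lem:grad-id} to rewrite the $R$-nested and the doubly-left-nested brackets in the same tensor ordering (via $\tau$ and $\tau^2$) as the first one. The terms produced fall into two families exactly as in Lemma~\ref{lem:comm_vf}: \emph{second-derivative} terms, in which the outer derivation lands on a factor already produced by an inner double derivative (so that one coefficient is differentiated twice, $\dev\dev X^i$ and the like), and \emph{first-derivative} terms, in which the two derivations hit the coefficients of two distinct vector fields.

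I would then dispatch the two families separately. The second-derivative contributions cancel for precisely the reason invoked in Lemma~\ref{lem:comm_vf}: after being cast in the form \eqref{eq:comm_vf_pf5}, the innermost bracket is the difference of two iterated double derivatives, which vanishes by the commutation of double derivatives \cite[Lemma~3.7]{dskv15}; the only change here is the presence of the extra $\theta$'s and a third tensor slot, which are inert under that mechanism. The first-derivative contributions cancel pairwise: the term of $\lsb X,\lsb Y,Z\rsb\rsb_L$ in which $X$ differentiates $Y^j$ matches the corresponding term of $\lsb\lsb X,Y\rsb,Z\rsb_L$, while the term in which $X$ differentiates $Z^k$ matches a term of $\lsb Y,\lsb X,Z\rsb\rsb_R$, and symmetrically; after relabelling of the summation indices these coincide up to sign.

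The heart of the argument -- and the step I expect to be the main obstacle -- is the sign bookkeeping. Because $\theta$ carries degree $1$, every use of $\sigma$, of the bullet product \eqref{eq:superbullet}, of the inner product \eqref{eq:superinner}, of $\otimes_1$, and of the $L/R$ reorderings \eqref{eq:notationappB} introduces Koszul signs, and the pairwise cancellation of the first-derivative terms holds only once all of these are reconciled. The structural cancellation is identical to that of Lemma~\ref{lem:comm_vf}, so the real content of the proof is to check that the graded signs of each matching pair agree, with Lemma~\ref{lem:grad-id} serving as the tool that makes this verification systematic. Finally, I would note that this lemma, together with Lemma~\ref{lem:2v1f} and the vanishing of brackets of local functionals, furnishes the base cases of the induction on $\theta$-degree that proves the full double Jacobi identity of Proposition~\ref{thm:dJac}.
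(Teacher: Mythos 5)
Your proposal follows essentially the same route as the paper's proof: expand the three nested brackets via the master formula, cancel the first-derivative terms pairwise after relabelling indices (your pairing --- the $X$-differentiates-$Y$ terms against $\lsb\lsb X,Y\rsb,Z\rsb_L$, the $X$-differentiates-$Z$ terms against $\lsb Y,\lsb X,Z\rsb\rsb_R$, and symmetrically --- is exactly the paper's), and kill the second-derivative terms by the commutation of double derivatives from \cite{dskv15}. The paper simply executes the Koszul-sign bookkeeping that you defer, and it does work out as you expect; the only minor discrepancy is that the relevant reference for the commutation of double derivatives in this step is \cite[Lemma 2.6]{dskv15}.
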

\begin{proof}
A direct computation gives
\begin{align}\label{eq:Jac_vf_pf1}
\lsb X,\lsb Y,Z\rsb\rsb_L&=\theta_k\left(\frac{\dev (\dev_m Z^k)'}{\dev u^l}\right)'\otimes X^l\left(\frac{\dev (\dev_m Z^k)'}{\dev u^l}\right)''\otimes Y^m(\dev_m Z^k)''\\
&\quad-(\dev_lX^k)''\theta_k\otimes(\dev_l X^k)'(\dev_m Z^l)'\otimes Y^m(\dev_m Z^k)''\\
&\quad-\left(\frac{\dev (\dev_m Y^k)''}{\dev u^l}\right)'\otimes X^l\left(\frac{\dev (\dev_m Y^k)''}{\dev u^l}\right)''\theta_k\otimes(\dev_mY^k)'Z^m\\
&\quad+(\dev_mY^l)''(\dev_lX^k)''\theta_k\otimes(\dev_l X^k)'\otimes(\dev_m Y^l)'Z^m,
\end{align}
\begin{align}\label{eq:Jac_vf_pf2}
\lsb Y,\lsb X,Z\rsb\rsb_R&=\theta_k(\dev_m Z^k)'\otimes(\dev_l X^m)'\otimes Y^l(\dev_l X^m)''(\dev_m Z^k)''\\
&\quad+\theta_k(\dev_m Z^k)'\otimes X^m\left(\frac{\dev (\dev_m Z^k)''}{\dev u^l}\right)'\otimes Y^l\left(\frac{\dev (\dev_m Z^k)''}{\dev u^l}\right)''\\
&\quad-(\dev_m X^k)''\theta_k\otimes\left(\frac{\dev (\dev_m X^k)'}{\dev u^l}\right)'\otimes Y^l\left(\frac{\dev (\dev_m X^k)'}{\dev u^l}\right)''Z^m\\
&\quad-(\dev_m X^k)''\theta_k\otimes(\dev_m X^k)'(\dev_l Z^m)'\otimes Y^l(\dev_l Z^m)''
\end{align}
and
\begin{align}\label{eq:Jac_vf_pf3}
\lsb \lsb X,Y\rsb,Z\rsb_L&=-\left(\frac{\dev (\dev_m Y^k)'}{\dev u^l}\right)''\otimes X^m(\dev_m Y^k)''\theta_k\otimes\left(\frac{\dev (\dev_m Y^k)'}{\dev u^l}\right)'Z^l\\
&\quad-\theta_k(\dev_lZ^k)'\otimes(\dev_mX^l)'\otimes Y^m(\dev_m X^l)''(\dev_l Z^k)''\\
&\quad+(\dev_lY^m)''(\dev_m X^k)''\theta_k\otimes(\dev_m X^k)'\otimes(\dev_lY^m)'Z^l\\
&\quad+\left(\frac{\dev (\dev_m X^k)''}{\dev u^l}\right)''\theta_k\otimes(\dev_m X^k)'\otimes Y^m\left(\frac{\dev (\dev_m X^k)''}{\dev u^l}\right)'Z^l.
\end{align}
We observe that, taking the three expression with the signs given in \eqref{eq:Jac_vf_th}, the terms with only first derivatives vanish (for instance, the second line of \eqref{eq:Jac_vf_pf1} cancels out with (minus) the fourth line of \eqref{eq:Jac_vf_pf2} upon the exchange of the indices $(l,m)$).
For the terms with second derivatives we have, for instance, that \eqref{eq:Jac_vf_pf1} and \eqref{eq:Jac_vf_pf2} yield
\begin{equation}
\left(\theta_k\otimes X^l\otimes Y^m\right)\left(\left(\frac{\dev}{\dev u^l}\otimes 1\right)\circ\frac{\dev Z^k}{\dev u^m}-\left(1\otimes\frac{\dev}{\dev u^m}\right)\circ\frac{\dev Z^k}{\dev u^l}\right),
\end{equation}
and the terms in the bracket vanish because of the commutation of the double derivatives \cite[Lemma 2.6]{dskv15}.
\end{proof}
Similar computations shows that the identity holds true regardless of the relative position of $\theta$ inside $X$, $Y$, or $Z$.

Finally, let \eqref{eq:dJac-ul} be our inductive hypothesis. We show that raising the degree of the vector fields in the Jacobi identity by multiplying an entry by a 1-vector gives us a version of the identity with signs in accordance with our claim.

\begin{lemma}\label{thm:Sch_Jac_ind}
Given $x$ of degree $1$, we have
\begin{enumerate}[label=(\roman*)]
\item $\lsb a,\lsb b,cx\rsb\rsb_L-(-1)^{(|a|-1)(|b|-1)}\lsb b,\lsb a,cx\rsb\rsb_R-\lsb\lsb a,b\rsb,cx\rsb_L=0$;
\item $\lsb a,\lsb bx,c\rsb\rsb_L-(-1)^{((|a|-1)|b|}\lsb bx,\lsb a,c\rsb\rsb_R-\lsb\lsb a,bx\rsb,c\rsb_L=0$;
\item $\lsb ax,\lsb b,c\rsb\rsb_L-(-1)^{|a|(|b|-1)}\lsb b,\lsb ax,c\rsb\rsb_R-\lsb\lsb ax,b\rsb,c\rsb_L=0$.
\end{enumerate}
\end{lemma}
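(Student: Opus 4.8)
The plan is to prove each of (i)--(iii) by expanding the three constituent double brackets with the graded Leibniz rules and then reassembling the outcome into two copies of the double triple bracket \eqref{eq:dJac-ul}: one in which $x$ plays the role of an inert spectator, so that it vanishes by the inductive hypothesis $\lsb a,b,c\rsb=0$; and one in which $c$ (resp.\ $b$, $a$) is the spectator, so that it reduces to $\lsb a,b,x\rsb=0$, a triple of strictly smaller total degree already covered. Throughout I would set $|x|=1$ and retain the conventions \eqref{eq:superswap}--\eqref{eq:superbullet} together with the tensor notation \eqref{eq:notationappB}.

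I would begin with part (i), where the product sits in the terminal slot. Using the graded Leibniz property (Proposition \ref{thm:ds_prop}(ii)) in the form $\lsb y,cx\rsb=c\lsb y,x\rsb+(-1)^{|y|-1}\lsb y,c\rsb x$, I would expand $\lsb b,cx\rsb$, $\lsb a,cx\rsb$ and the inner factor of $\lsb\lsb a,b\rsb,cx\rsb_L$, and then push the surviving outer brackets $\lsb a,-\rsb_L$ and $\lsb b,-\rsb_R$ through the left multiplier $c$ and the right multiplier $x$ using the left Leibniz rule once more. Collecting the monomials in which both the inner and the outer bracket differentiate $c$ (with $x$ a right spectator) gives exactly the three terms of $\lsb a,b,c\rsb$ multiplied on the right by $x$; their sum is $0$ by hypothesis. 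Collecting the monomials that differentiate $x$ (with $c$ a left spectator) reassembles $\lsb a,b,x\rsb$, which vanishes since $(a,b,x)$ has lower degree. The prefactor $(-1)^{(|a|-1)(|b|-1)}$ on the middle term must emerge untouched from the regrouping; this is where the degree-$1$ hypothesis on $x$ is used, the Leibniz sign produced by moving $x$ past a bracket cancelling against the Koszul sign of the tensor swap.

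For parts (iii) and (ii) the product is raised in a non-terminal slot. Part (iii) I would treat in complete analogy with (i), replacing the left Leibniz rule by the right Leibniz property \eqref{eq:ds_leib_l}, $\lsb ax,c\rsb=\lsb a,c\rsb\star x+(-1)^{|a|(|c|-1)}a\star\lsb x,c\rsb$, which splits each occurrence of $ax$ in a first entry into an $a$-spectator piece and an $x$-spectator piece; the two collections again give $\lsb a,b,c\rsb$ and $\lsb a,b,x\rsb$. Part (ii), with the product in the middle, is most economically reduced to the two cases just settled: I would use graded skewsymmetry (Proposition \ref{thm:ds_prop}(i)) to move $bx$ into a terminal entry and then the rearrangement identities \eqref{eq:grad-id1}--\eqref{eq:grad-id2} of Lemma \ref{lem:grad-id} to convert the resulting $\lsb-,-\rsb_R$ and $\lsb-,-\rsb_L$ decorations into the pattern of (i)/(iii), the $\tau$ and $\tau^2$ cyclic factors being precisely what produces the prescribed signs $(-1)^{(|a|-1)|b|}$ and $(-1)^{|a|(|b|-1)}$.

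The main obstacle, as in Proposition \ref{thm:ds_prop}(i), is the sign bookkeeping rather than any structural difficulty: because $\lsb-,-\rsb$ is a derivation \emph{from the right} and every permutation of tensor factors, every $\bullet$ and $\star$ action carries a Koszul sign, the real content is checking that the spectator element factors out with sign exactly $+1$ so that the two regroupings can be identified with $\lsb a,b,c\rsb$ and $\lsb a,b,x\rsb$. I expect this to be handled by the repeated substitutions $(z_u'+z_u'')=|z|$ and $(z_\theta'+z_\theta'')=|z|-1$ that collapse the accumulated exponents modulo $2$. Once Lemma \ref{thm:Sch_Jac_ind} is in place, Proposition \ref{thm:dJac} follows by induction on the total polynomial degree of the triple: any entry of degree $\ge 2$ is written as a product of a degree-$1$ generator and a lower-degree factor, and the lemma lowers the degree, the base cases being the vanishing for triples of local functionals together with Lemmas \ref{lem:2v1f} and \ref{thm:Jac_vf}.
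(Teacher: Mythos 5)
Your strategy for (i) -- expand all three terms with the graded Leibniz rules, regroup into instances of the inductive hypothesis, then conclude Proposition \ref{thm:dJac} by induction on degree -- is exactly the paper's strategy, and your induction scheme at the end matches the paper's. But your two-bin accounting of the expansion has a concrete gap: the terms do \emph{not} split only into ``both brackets hit $c$'' pieces (reassembling into $\lsb a,b,c\rsb\, x$) and ``$x$ gets differentiated'' pieces (reassembling into $c\,\lsb a,b,x\rsb$). The first two terms of the identity also produce \emph{cross terms} of the form $\lsb a,c\rsb\lsb b,x\rsb$: in $\lsb a,\lsb b,cx\rsb\rsb_L$ the inner bracket hits $x$ and the outer bracket hits the spectator $c$, while in $\lsb b,\lsb a,cx\rsb\rsb_R$ the inner bracket hits $c$ and the outer bracket hits the spectator $x$. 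These fit into neither reassembled triple bracket and are not covered by any inductive hypothesis; they carry coefficients $(-1)^{(|a|-1)|b|}$ and $(-1)^{|a|-1}$ respectively, and they cancel against each other precisely because the middle term of the identity is weighted by $-(-1)^{(|a|-1)(|b|-1)}$. (The third term $\lsb\lsb a,b\rsb,cx\rsb_L$ produces no cross term, so the cancellation must occur between the first two.) This cancellation is the one place where the prescribed Jacobi sign is genuinely tested, and the paper's proof checks it explicitly (``the vanishing of the two summands with two double brackets'') before collecting $c(\cdots)$ and $(\cdots)x$; without it, your regrouping claim is false as stated.

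On (ii) and (iii) your route diverges mildly from the paper's. The paper uses Lemma \ref{lem:grad-id} to recast (i) in the manifestly cyclically symmetric form \eqref{eq:dJac-ind-sym} and then obtains \emph{both} (ii) and (iii) as cyclic permutations of (i), with no further computation. You instead propose a second direct expansion for (iii) via the right Leibniz property \eqref{eq:ds_leib_l} -- workable, but messier, since $\star$ inserts $x$ into inner tensor slots and the whole cross-term bookkeeping must be redone -- together with a skewsymmetry-plus-Lemma-\ref{lem:grad-id} reduction for (ii), which is essentially the paper's idea. Adopting the cyclic-symmetry argument for both cases saves you the second computation entirely.
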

\begin{proof}
The proof is essentially computational, exploiting the skewsymmetry and the Leibniz property for the double Schouten bracket. We show here the detailed derivation of (i). For the first term we have
\begin{equation}
\begin{split}
\lsb a,\lsb b,cx\rsb\rsb_L&=\lsb a,c\lsb b,x\rsb\rsb_L+(-1)^{|b|-1}\lsb a,\lsb b,c\rsb x\rsb_L\\
&=\lsb a,c\lsb b,x\rsb'\otimes\lsb b,x\rsb''\rsb_L+(-1)^{|b|-1}\lsb a,\lsb b,c\rsb'\otimes\lsb b,c\rsb''x\rsb_L\\
&=(-1)^{(|a|-1)(bx)''}\lsb a,c\lsb b,x\rsb'\rsb\otimes \lsb b,x\rsb''\\
&\quad+(-1)^{|b|-1+(|a|-1)((bc)''+1)}\lsb a,\lsb b,c\rsb'\rsb\otimes \lsb b,c\rsb''x,
\end{split}
\end{equation}
where we denote $(bx)''=|\lsb b,x\rsb''|$ and $(bc)''=|\lsb b,c\rsb''|$. Using once again the Leibniz property and the definitions of $\lsb-,-\rsb_L$ we obtain
\begin{equation}
\begin{split}
\lsb a,\lsb b,cx\rsb\rsb_L&=c\lsb a,\lsb b,x\rsb\rsb_L+(-1)^{(|a|-1)|b|}\lsb a,c\rsb\lsb b,x\rsb\\
&\quad +(-1)^{|a|+|b|}\lsb a,\lsb b,c\rsb\rsb_L x.
\end{split}
\end{equation}
Similar computations for the remaining terms give
\begin{align}
\lsb b,\lsb a,cx\rsb\rsb_R&=c\lsb b,\lsb a,x\rsb\rsb_R+(-1)^{|a|-1}\lsb a,c\rsb\lsb b,x\rsb\\
&\quad+(-1)^{|a|+|b|}\lsb b,\lsb a,c\rsb\rsb_R x,\\
\lsb \lsb a,b\rsb,cx\rsb_L&=c\lsb\lsb a,b\rsb,x\rsb_R+(-1)^{|a|+|b|}\lsb \lsb a,b\rsb,c\rsb_Lx. 
\end{align}
Upon multiplication of the second term by $-(-1)^{(|a|-1)(|b|-1)}$ and of the third by $(-1)$ we observe the vanishing of the two summands with two double brackets and can collect
\begin{multline}
c\left(\lsb a,\lsb b,x\rsb\rsb_L-(-1)^{(|a|-1)(|b|-1)}\lsb b,\lsb a,x\rsb\rsb_R-\lsb \lsb a,b\rsb,x\rsb_L\right)\\
+\left(\lsb a,\lsb b,c\rsb\rsb_L-(-1)^{(|a|-1)(|b|-1)}\lsb b,\lsb a,c\rsb\rsb_R-\lsb \lsb a,b\rsb,c\rsb_L\right)x.
\end{multline}
The terms in the two parentheses vanish by inductive hypothesis, proving our statement by induction. To prove the remaining two identities, we can observe that, due to Lemma \ref{lem:grad-id}, the first one can be alternatively  written as
\begin{multline}\label{eq:dJac-ind-sym}
(-1)^{(|a|-1)|c|}\lsb a,\lsb b,cx\rsb\rsb_L+(-1)^{(|a|-1)(|b|-1)}\tau\left(\lsb b,\lsb cx,a\rsb\rsb_L\right)\\
+(-1)^{(|b|-1)|c|}\tau^2\left(\lsb cx,\lsb a,b\rsb\rsb_L\right)=0.
\end{multline}
This form is apparently cyclically symmetric: then (2) and (3) can be easily brought to the form \eqref{eq:dJac-ind-sym} by the suitable cyclic permutation.
\end{proof}
Lemma \ref{thm:Sch_Jac_ind}, together with the initial cases (the obvious ones, Lemma \ref{lem:2v1f} and Lemma \ref{thm:Jac_vf}), completes the proof by induction of Proposition \ref{thm:dJac}. We can now move to the main result, holding true for the ``actual'' Schouten bracket. 
\begin{theorem}\label{thm:schJac}
Let $A$, $B$ and $C$ be, respectively, $a$-, $b$-, and $c$-vector fields. Then their Schouten bracket \eqref{eq:schdef} satisfies the Jacobi identity
\begin{equation}\label{eq:schJac_th}
[A,[B,C]]=[[A,B],C]+(-1)^{(a-1)(b-1)}[B,[A,C]].
\end{equation}
\end{theorem}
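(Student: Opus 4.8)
The plan is to follow Van Den Bergh's strategy \cite{vdb} and reduce the graded Jacobi identity \eqref{eq:schJac_th} for the Schouten bracket \eqref{eq:schdef} to the vanishing of the \emph{double} triple bracket $\lsb A,B,C\rsb$ of \eqref{eq:dJac-ul}, which we have already established in Proposition \ref{thm:dJac}. Concretely, I would prove the graded analogue of the quasi-Poisson identity \eqref{eq:qP-Ham-pf1}, namely an equality of the form
\begin{equation}\label{eq:sch_Jac_pf4}
[A,[B,C]]-[[A,B],C]-(-1)^{(a-1)(b-1)}[B,[A,C]]=\tr m\left((m\otimes 1)\lsb A,B,C\rsb-(1\otimes m)\lsb B,A,C\rsb\right),
\end{equation}
after which both double triple brackets on the right vanish by Proposition \ref{thm:dJac} and \eqref{eq:schJac_th} follows at once. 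Since we have already verified in Lemmas \ref{lem:2v1f} and \ref{thm:Jac_vf} that the identity holds when $A,B,C$ are local functionals or $1$-vector fields, the computation below may be read either as a direct derivation of \eqref{eq:sch_Jac_pf4} or, if preferred, as the inductive step that raises the degree.

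First I would expand $[A,[B,C]]$. Here $[B,C]=\tr m\lsb B,C\rsb$ is the class in $\hF_0$ of the representative $m\lsb B,C\rsb=\lsb B,C\rsb'\lsb B,C\rsb''\in\hat{\A}_0$, and applying $\lsb A,-\rsb$ to this product via the graded left Leibniz property (Proposition \ref{thm:ds_prop}(ii)) splits it into two contributions, in which $\lsb A,-\rsb$ acts on the first, respectively the second, tensor factor of $\lsb B,C\rsb$. After the outer multiplication and trace, these two pieces are precisely $\tr m(m\otimes1)$ applied to $\lsb A,\lsb B,C\rsb\rsb_L$ and to $\lsb A,\lsb B,C\rsb\rsb_R$, up to the graded signs recorded in \eqref{eq:superswap}--\eqref{eq:superbullet}. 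The term $[B,[A,C]]$ is treated identically with the roles of $A$ and $B$ exchanged.

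Next I would expand $[[A,B],C]$, this time pushing $\lsb-,C\rsb$ through the product $m\lsb A,B\rsb$ by means of the right Leibniz property \eqref{eq:ds_leib_l}. Collecting the three contributions, I would use graded skewsymmetry (Proposition \ref{thm:ds_prop}(i)) together with the rotation identities of Lemma \ref{lem:grad-id}, which convert the $R$-type and the outer $L$-type nested brackets into cyclic rotations of $\lsb\cdot,\lsb\cdot,\cdot\rsb\rsb_L$, and the cyclicity of the graded trace \eqref{eq:supertr} to absorb the permutations $\tau$. This is exactly the bookkeeping that reorganises everything into the right-hand side of \eqref{eq:sch_Jac_pf4}, after which Proposition \ref{thm:dJac} finishes the argument.

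The hard part will be twofold. On the one hand, all the manipulations take place after the passage to the quotient $\hF_0$, so I must make sure each step is independent of the chosen representative of $[B,C]$; this rests on the cyclic invariance of $\tr m$ established in Proposition \ref{def:sch} and, for the cancellation of the ``second-derivative'' terms, on the commutation of double derivatives \cite[Lemma 2.6]{dskv15} already invoked in Lemmas \ref{lem:2v1f} and \ref{thm:Jac_vf}. On the other hand, the delicate point is tracking the graded signs so that the three terms assemble into exactly the contractions $(m\otimes1)\lsb A,B,C\rsb$ and $(1\otimes m)\lsb B,A,C\rsb$ of \eqref{eq:sch_Jac_pf4}, and not into some other reordering; I expect this to be the main obstacle, and I would handle it by the same sign-matching and index-normalisation technique used in the proof of Proposition \ref{thm:dJac}, treating $\lambda=\mu=1$ to confirm that the local case of Section \ref{ssec:sch-l} specialises consistently.
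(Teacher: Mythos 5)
Your proposal takes essentially the same route as the paper's own proof: expand the three terms of the Jacobiator via the graded Leibniz properties, reassemble them into the graded Van Den Bergh identity \eqref{eq:sch_Jac_pf4} (the paper does this bookkeeping with skewsymmetry on the last term and $m\circ(m\otimes 1)=m\circ(1\otimes m)$ rather than Lemma \ref{lem:grad-id}), and conclude from Proposition \ref{thm:dJac}. The only slip is that your right-hand side omits the factor $(-1)^{(a-1)(b-1)}$ in front of $(1\otimes m)\lsb B,A,C\rsb$, which the paper's identity carries; since both triple brackets vanish by Proposition \ref{thm:dJac}, this does not affect the conclusion.
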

\begin{proof}
First, we rewrite \eqref{eq:schJac_th} from its definition in terms of double Schouten brackets.
\begin{equation}
\tr m\left(\lsb A,\tr m\lsb B,C\rsb \rsb-(-1)^{(a-1)(b-1)}\lsb B,\tr m  \lsb A,C\rsb \rsb-\lsb \tr m\lsb A,B\rsb,C\rsb\right)=0.
\end{equation}
Since we have learnt in Proposition \ref{def:sch} that the Schouten bracket is well-defined, we can drop the trace operation inside the brackets, and focus on
\begin{equation}\label{eq:sch_Jac_pf1}
\tr m\left(\lsb A, m\lsb B,C\rsb \rsb-(-1)^{(a-1)(b-1)}\lsb B, m  \lsb A,C\rsb \rsb-\lsb  m\lsb A,B\rsb,C\rsb\right).
\end{equation}
The quantity \eqref{eq:sch_Jac_pf1} can be written, using the usual Sweedler's notation, as
\begin{equation}
\tr m\left(\lsb A,\lsb B,C\rsb'\lsb B,C\rsb''\rsb-(-1)^{(a-1)(b-1)}\lsb B,\lsb A,C\rsb'\lsb A,C\rsb''\rsb-\lsb\lsb A,B\rsb'\lsb A,B\rsb'',C\rsb\right).
\end{equation}
By the Leibniz properties for the double Schouten bracket, this is in turn
\begin{multline}
\tr m\left(\lsb B,C\rsb'\lsb A,\lsb B,C\rsb''\rsb+(-1)^{(a-1)(bc)''}\lsb A,\lsb B,C\rsb'\rsb\lsb B,C\rsb''\right.\\
-(-1)^{(a-1)(b-1)}\lsb A,C\rsb'\lsb B,\lsb A,C\rsb''\rsb-(-1)^{(a-1+(ac)'')(b-1)}\lsb B,\lsb A,C\rsb'\rsb\lsb A,C\rsb''\\\left.
-\lsb\lsb A,B\rsb',C\rsb\star\lsb A,B\rsb''-(-1)^{(ab)'(c-1)}\lsb A,B\rsb'\star\lsb \lsb A,B\rsb'',C\rsb\right).
\end{multline}
We now recall the definitions of $\lsb-,-\rsb_{L,R}$ which allow, together with the multiplication map, to rewrite the previous expression as
\begin{multline}\label{eq:sch_Jac_pf2}
\tr m\left((m\otimes 1)\lsb A,\lsb B,C\rsb\rsb_R+(1\otimes m)\lsb A,\lsb B,C\rsb\rsb_L\right.\\
-(-1)^{(a-1)(b-1)}(m\otimes 1)\lsb B,\lsb A,C\rsb\rsb_R-(-1)^{(a-1)(b-1)}(1\otimes m)\lsb B,\lsb A,C\rsb\rsb_L\\
\left.-(m\otimes 1)\lsb\lsb A,B\rsb,C\rsb_L-(-1)^{(ab)'(c-1)}\lsb A,B\rsb'\star\lsb \lsb A,B\rsb'',C\rsb\right).
\end{multline}
The last term requires a closer inspection. By definition we have
\begin{equation}
\begin{split}
\lsb A,B\rsb'\star\lsb\lsb A,B\rsb'',C\rsb&=(-1)^{(ab)'((ab)''c)'}\lsb \lsb A,B\rsb'',C\rsb'\otimes \lsb A,B\rsb'\lsb \lsb A,B\rsb'',C\rsb''\\
&=(-1)^{(ab)'(((ab)''c)'+((ab)''c)'')}(1\otimes m)\lsb \lsb A,B\rsb'',C\rsb'\otimes \lsb \lsb A,B\rsb'',C\rsb''\otimes_1\lsb A,B\rsb'\\
&=(-1)^{(ab)'(ab)''+(ab)'(c-1)}(1\otimes m)\lsb \lsb A,B\rsb''\otimes\lsb A,B\rsb',C\rsb_L\\
&=-(-1)^{(a-1)(b-1)+(ab)'(c-1)}(1\otimes m)\lsb \lsb B,A\rsb,C\rsb_L. 
\end{split}
\end{equation}
This means that \eqref{eq:sch_Jac_pf2} is finally equal to
\begin{multline}\label{eq:sch_Jac_pf3}
\tr m\left((m\otimes 1)\lsb A,\lsb B,C\rsb\rsb_R+(1\otimes m)\lsb A,\lsb B,C\rsb\rsb_L\right.\\
-(-1)^{(a-1)(b-1)}(m\otimes 1)\lsb B,\lsb A,C\rsb\rsb_R-(-1)^{(a-1)(b-1)}(1\otimes m)\lsb B,\lsb A,C\rsb\rsb_L\\
\left.-(m\otimes 1)\lsb A,B\rsb,C\rsb_L+(-1)^{(a-1)(b-1)}(1\otimes m)\lsb \lsb B,A\rsb,C\rsb_L\right..
\end{multline}
Now, from $m(m\otimes 1)(a\otimes b\otimes c)=m(1\otimes m)(a\otimes b\otimes c)$ we can rearrange the terms in \eqref{eq:sch_Jac_pf3} and obtain
\begin{multline}\label{eq:sch_Jac_pf4}
\tr m\left((m\otimes 1)\left(\lsb A,\lsb B,C\rsb\rsb_L-(-1)^{(a-1)(b-1)}\lsb B,\lsb A,C\rsb\rsb_R-\lsb \lsb A,B\rsb,C\rsb_L\right)\right.\\
\left. -(-1)^{(a-1)(b-1)}(1\otimes m)\left(\lsb B,\lsb A,C\rsb\rsb_L-(-1)^{(a-1)(b-1)}\lsb A,\lsb B,C\rsb\rsb_R-\lsb\lsb B,A\rsb\rsb C\rsb_L\right)\right).
\end{multline}
Comparing this expression with \eqref{eq:dJac-ul}, we observe that it is
\begin{equation}
\tr m\left((m\otimes 1)\lsb A,B,C\rsb-(-1)^{(a-1)(b-1)}(1\otimes m)\lsb B,A,C\rsb\right),
\end{equation}
which vanishes because of Proposition \ref{thm:dJac}.
\end{proof}

\end{document}